\theoremstyle{definition}
\newtheorem{defn}{\protect\definitionname}
\theoremstyle{plain}
\newtheorem{thm}{\protect\theoremname}
\theoremstyle{plain}
\newtheorem{assumption}{\protect\assumptionname}
\theoremstyle{plain}
\newtheorem{lem}{\protect\lemmaname}
\providecommand{\assumptionname}{Assumption}
\providecommand{\definitionname}{Definition}
\providecommand{\lemmaname}{Lemma}
\providecommand{\theoremname}{Theorem}
\begin{document}
\begin{singlespacing}
\title{Network Data}

\maketitle
\medskip{}

\begin{center}
{\large{}(prepared for the }\emph{\large{}Handbook of Econometrics}{\large{},
Volume 7A)}{\large\par}
\par\end{center}

\medskip{}

\begin{center}
{\large{}Bryan S. Graham}\footnote{{\footnotesize{}Department of Economics, University of California
- Berkeley, 530 Evans Hall \#3380, Berkeley, CA 94720-3880 and National
Bureau of Economic Research, }{\footnotesize{}\uline{e-mail:}}{\footnotesize{}
\href{http://bgraham@econ.berkeley.edu}{bgraham@econ.berkeley.edu},
}{\footnotesize{}\uline{web:}}{\footnotesize{} \url{http://bryangraham.github.io/econometrics/}.
Financial support from NSF grants SES \#1357499 and SES \#1851647
is gratefully acknowledged. I am grateful for comments provided by
the co-editors and other participants at a conference held at the
University of Chicago in August of 2017. Portions of the material
presented below benefited from conversations with Peter Bickel, Michael
Jansson and Jim Powell. I am especially grateful to Eric Auerbach,
Seongjoo Min, Chris Muris, Fengshi Niu and Konrad Menzel, as well
as an anonymous referee, for written feedback which greatly improved
the chapter. All the usual disclaimers apply.}}
\par\end{center}

\begin{center}
\medskip{}
\textsc{\large{}Initial Draft: June 2017, This Draft: September 2019}{\large\par}
\par\end{center}
\begin{abstract}
Many economic activities are embedded in \emph{networks}: sets of
agents and the (often) rivalrous relationships connecting them to
one another. Input sourcing by firms, interbank lending, scientific
research, and job search are four examples, among many, of networked
economic activities. Motivated by the premise that networks' structures
are consequential, this chapter describes econometric methods for
analyzing them. I emphasize (i) dyadic regression analysis incorporating
unobserved agent-specific heterogeneity and supporting causal inference,
(ii) techniques for estimating, and conducting inference on, summary
network parameters (e.g., the degree distribution or transitivity
index); and (iii) empirical models of strategic network formation
admitting interdependencies in preferences. Current research challenges
and open questions are also discussed.
\end{abstract}
\end{singlespacing}

\thispagestyle{empty} 

\pagebreak{}

\setcounter{page}{1}

\tableofcontents{}

\pagebreak{}

\section{Introduction and summary}

Many economic activities are embedded in \emph{networks}: sets of
agents and the (often) rivalrous relationships connecting them to
one another. Firms generally buy and sell inputs not in anonymous
markets, but via bilateral contracts \citep{Kranton_Minehart_AER01}.
In addition to public listings, individuals gather information about
job opportunities from friends and acquaintances \citep{Granovetter_AJS73}.
We similarly poll friends for information about new products, books,
movies and so on \citep[e.g., ][]{Jackson_Rogers_BE07,Banerjee_et_al_Sci13,Kim_et_al_Lancet2015}.
Banks generally meet reserve requirements through peer-to-peer interbank
lending. The structure of this interbank lending network has profound
implications for the vulnerability of the financial system to large
negative shocks \citep{Bech_Atalay_PhyA10,Gofman_JFE17}. Additional
examples abound \citep[cf., ][]{Jackson_et_al_JEL17}.

\begin{figure}
\caption{\label{fig: World-Trade-Network-1928}World Trade Network in 1928}

\begin{centering}
\includegraphics[scale=0.35]{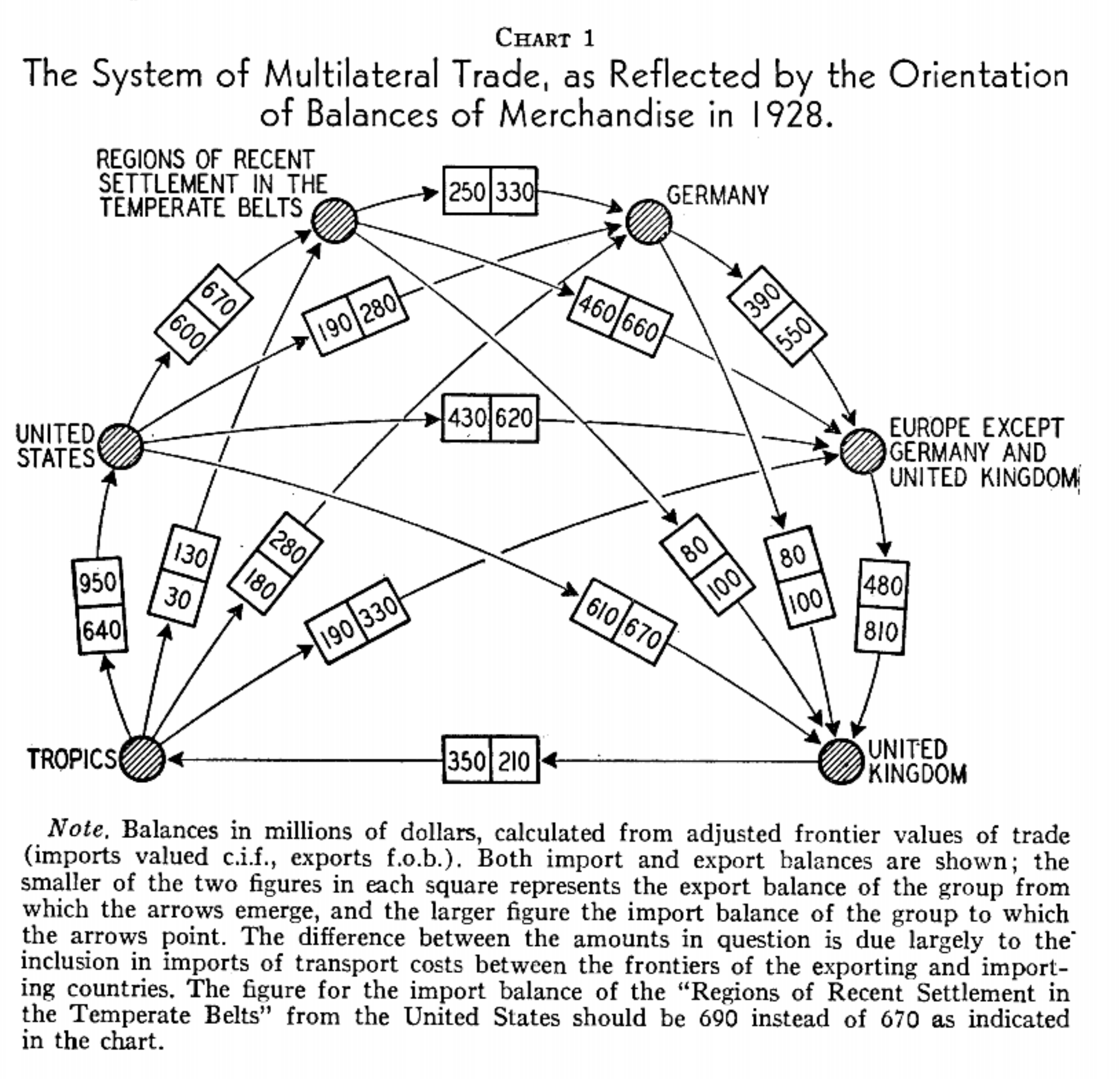}
\par\end{centering}
\uline{Notes:} This figured appeared in Folke Hilgerdt's 1943 \emph{American
Economic Review }article ``The case for multilateral trade''. The
figure shows aggregate trade balances between selected large countries
and different regions of the world. The paper includes a narrative
discussion of how the patterns of trade depicted in weighted digraph
drawn in the figure developed historically.

\uline{Source:} Reproduced from \citet[Chart 1]{Hilgerdt_AER43}.
\end{figure}

Although important exceptions exists, some highlighted below, economists
historically avoided the study networks (see Figure \ref{fig: World-Trade-Network-1928}).\footnote{In contrast our colleagues in sociology studied networks from the
outset of their discipline in its modern form. The monograph by \citet{Wasserman_Faust_Bk94}
provides a somewhat dated introduction to this literature. See also
\citet{Granovetter_AJS85}.} This is now changing, very quickly, and for several reasons. First,
starting in the 1990s economic theorists applied the tools of game
theory to formally study network formation \citep[e.g., ][]{Jackson_Wolinsky_JET96}.
In the resulting models agents add, maintain, and subtract links in
order to maximize utility, with the realized network satisfying a
pairwise stability equilibrium condition.\footnote{Other equilibrium concepts have been explored as well \citep[cf., ][]{Bloch_Jackson_IJGT06}.}
Second, in parallel to this theoretical work, a lively empirical and
methodological literature on peer group and neighborhood effects also
arose \citep[e.g., ][]{Manski_ReStud93,Brock_Durlauf_RES01,Graham_EM08,Angrist_LE14}.
Finally, largely driven by questions in empirical industrial organization,
econometricians made substantial progress on the econometric analysis
of games \citep[cf., ][]{Bajari_et_al_WC13,dePaula_ARE13}. Each of
these literatures serve as foundations for material introduced below.

Outside of economics, two key initiators have been (i) the increasing
availability of datasets with natural graph theoretic structure (see
below for examples) and (ii) innovations in applied probability and
theoretical statistics pertaining to random graph models \citep[e.g., ][]{Diaconis_Janson_RM08}.
These innovations provide a foundation upon which recent work in statistics
and machine learning on networks is largely based.

A consequence of these developments is the emergence of a small methodological
literature on the econometrics of networks. Empirical applications
with substantial network content, spurred largely by access to new
datasets, arose more quickly \citep[e.g., ][]{Fafchamps_Minten_OEP02,deWeerdt_IAP04,Conley_Udry_AER10,Atalay_et_al_PNAS11,Acemoglu_etal_EM12,Banerjee_et_al_Sci13,Barrot_Sauvagnat_QJE16}.
Furthermore, these applications now span the major fields of our discipline.
Nevertheless many open questions in the econometrics of networks remain.
In this chapter I attempt to provide an account of recent progress
as well as make suggestions for future research. My audience is both
econometricians and empirical researchers.

I divide my discussion into five parts. The discussion draws from
recent contributions to the analysis of networks made in probability,
econometrics, and statistics (including machine learning); approximately
in that order. After an initial outline of recent empirical research
with a network dimension in economics, Section \ref{sec: Basic-probability-tools}
introduces some basic probability tools that will prove useful for
what follows. Several of these tools are of quite recent origin. Next,
in Sections \ref{sec: dyadic_regression} to \ref{sec: heterogeneity}
I turn to the analysis of dyadic regression models. Such models go
back, at least, to the pioneering work of \citet[Appendix VI]{Tinbergen_SWE62}
on gravity trade models. Although dyadic regression is a core empirical
method in international trade, as well as in certain areas of political
science and development economics, a coherent inferential foundation
for empirical practice is only now emerging. My discussion, in addition
to covering methods of inference, discusses how to incorporate unobserved
heterogeneity into dyadic regression models (Section \ref{sec: heterogeneity}).
Here I appropriate and extend insights from panel data \citep{Chamberlain_ReStud80,Chamberlain_HBE84,Chamberlain_LALMD85,Hahn_Newey_EM04,Arellano_Hahn_WC07}.
This section also sketches out how to answer causal questions in dyadic
settings.

Section \ref{sec: Statistics} turns to the large network properties
of several common network statistics. I focus on so-called \emph{network
moments}, or the frequencies with which certain low order subgraph
configurations (e.g., triangles $\vcenter{\hbox{\includegraphics[scale=0.125]{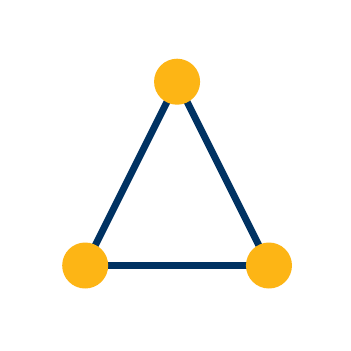}}}$)
occur within a network. Subgraph counts, in the form of the triad
census, were introduced by \citet{Holland_Leinhardt_AJS70} almost
a half-century ago. Recent developments in probability and statistics
have substantially improved our understanding of these counts \citep[e.g., ][]{Diaconis_Janson_RM08,Bickel_et_al_AS11}. 

Subgraph counts may be of direct interest, but also serve as the building
blocks of several popular network statistics, such as transitivity
or moments of the degree distribution. \citet{Jackson_et_al_JEL17}
survey the mapping between different network statistics and economic
phenomena and questions. My interest in network moments also stems
from their value as inputs into structural model estimation in a manner
akin to the way sample moments are paired with model moments in the
simulated method of moments \citep[e.g., ][]{Gourieroux_et_al_JAE93}.
This idea is developed in Section \ref{sec: Strategic-models}.

The discussion of dyadic regression in Sections \ref{sec: dyadic_regression}
to \ref{sec: heterogeneity} rules out interdependencies in link formation.
In dyadic models the utility two agents generate by forming a link
is invariant to the presence or absence of links elsewhere in the
network. Beginning with the seminal work of \citet{Jackson_Wolinsky_JET96},
the relaxation of this assumption is a central preoccupation of both
theoretical and econometric researchers. When link formation decisions
are interdependent, inefficient network structures may occur in equilibrium,
making policy analysis interesting. Empirical network formation models
allowing for interdependencies are also challenging to study. In a
typical model many equilibrium network configurations can arise for
any given parameter value; such models are incomplete \citep[e.g., ][]{Tamer_ReStud03}.
In principle, standard tools developed in the context of economic
games between a small number of agents apply. Practically speaking
such methods are computationally infeasible in the many agent context
of networks. Recent research proposes a variety of ways of getting
around this conundrum.

Economists' interest in networks stems from the belief that their
structure is consequential. For example, \citet{Loury_ARI02} argues
that differences in social networks across Blacks and Whites drives,
in part, racial inequality \citep[cf., ][]{Graham_JEL18}. \citet{Acemoglu_etal_EM12}
argue that the Leontief input-output structure of the economy shapes
technology shock propagation. \citet{Alatas_et_al_AER16} show that
network structure influences the flow and aggregation of information
within rural villages. Theorists also study the interplay between
network structure and agent behavior \emph{on} that structure \citep{Jackson_Yariv_HSE11,Jackson_Zenou_HBGT15}.
Methodological research relating network structure to economic outcomes
builds-upon the line of peer effects research initiated by \citet{Manski_ReStud93}.
The paper by \citet{Bramoulle_et_al_JOE09} is a nice, and influential,
example of recent work along such lines.

This survey, however, does not review methods for the empirical analysis
of behavior on networks. Instead I focus on modeling their \emph{formation}.
My motivation for this emphasis is two-fold. First, \citet{Blume_et_al_HSE2011}
already survey work at the intersection of peer group effect identification
and networks \citep[cf., ][]{Blume_et_al_JPE15,dePaula_WC17}. Second,
the current state of research in this area suggests that a better
understanding of how networks form is a prerequisite for more credible
research on their consequences. 

Current research on the effects of network structure on outcomes largely
treats it as exogenously given (although this is not always made explicit).
This decision is one reason why research on peer effects and networks
remains controversial a quarter century after Manski's foundational
paper.\footnote{For example, \citet[p. 81]{Jackson_et_al_JEL17} argue that endogenous
network formation, the tendency for the unobserved drivers of link
formation and the behavior of interest to the econometrician to covary,
poses a key challenge to ``accurately estimating interactive effects
in networked settings''.} The focus maintained here, on \emph{formation}, therefore seems to
be a natural one. Ultimately, of course, the goal is to study the
formation of networks and their consequences jointly, but such an
integrated treatment remains largely aspirational at this stage. Although,
\citet{Goldsmith-Pinkham_Imbens_JBES13} provide one recent ``proof
of possibilities'' example of such an integrated approach. \citet{Qu_Lee_JOE15},
\citet{Auerbach_JMP16}, \citet{Badev_arXiv17}, and \citet{Johnson_Moon_INET17}
represent other steps in this direction.

\section{Examples, questions and notation}

The analysis of datasets with natural graph theoretic structure has
a long history in the other social sciences \citep[e.g.,][]{Moreno_Book34},
and more recently emerged as an area of focus within the statistics
and machine learning community \citep[e.g., ][]{Goldenberg_etal_FTML09,Kolaczyk_NetBook09}.
Although we were late adopters, interest in these types of datasets
now also extends across virtually all fields of economics. Nevertheless,
as already noted, appropriate methods for the analysis of network
data are not widely available. Ad hoc and/or heuristically motivated
approaches to estimation and inference abound in empirical work. Networks
are characterized by complex dependencies across agents, as well as
other difficult modeling, estimation and inferential challenges. These
challenges are just starting be understood and solved. Before discussing
methods for the analysis of network data, I briefly introduce some
recent examples of empirical network research in economics. These
examples also serve to introduce some basic notation.

\subsection{Empirical analysis of trade flows}

Figure \ref{fig: World-banana-trade} visually depicts international
trade in bananas, a widely-eaten tropical fruit, in 2015. Each dot
or \emph{node} in the figure corresponds to a country. If, for example,
Honduras, exports at least 50,000 tons of bananas to the United States,
then there exists a \emph{directed edge} $\vcenter{\hbox{\includegraphics[scale=0.125]{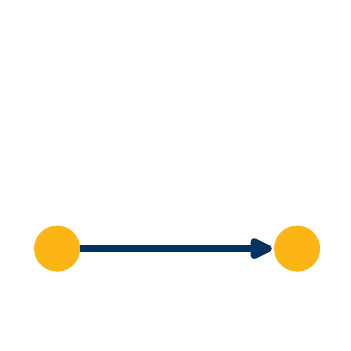}}}$
from Honduras to the United States.\footnote{In constructing this network, I binarized the underlying trade flow
data to determine edge placement.} The exporting country (left node) is called the \emph{tail} of the
edge, while the importing country (right node) is its \emph{head.
}The set of all such exporter-importer relationships forms $G\left(\mathcal{V},\mathcal{E}\right)$,
a directed network or \emph{digraph} defined on $N=\left|\mathcal{V}\right|$
vertices or agents (here countries). The set $\mathcal{V}=\left\{ 1,\ldots,N\right\} $
includes all agents (countries) in the network and $\mathcal{E}\subseteq\mathcal{V}\times\mathcal{V}$
the set of all directed links (exporter-importer relationships of
50,000 tons or greater) among them.\footnote{Here $\mathcal{U}\times\mathcal{V}$ denotes the Cartesian product
of the set $\mathcal{U}$ and $\mathcal{V}$ (i.e, $\mathcal{U}\times\mathcal{V}=\left\{ \left(u,v\right)\thinspace:\thinspace u\in\mathcal{U},\thinspace v\in\mathcal{V}\right\} $).} Let $N$ be the \emph{order} of the digraph and $\left|\mathcal{E}\right|$
its \emph{size}. In what follows nodes may be equivalently referred
to as vertices, agents, individuals, countries and so on depending
on the context. Likewise edges may be called links, friendships, ties,
arcs, relationships and so on.

There are $N=220$ countries in the banana network and hence up to
$2\tbinom{220}{2}=48,180$ directed trading relationships among them.
How might an econometrician model the presence or absence of a trading
relationship from country $i$ to $j$? Over fifty years ago \citet[Appendix VI]{Tinbergen_SWE62}
introduced gravity models, suitable for data of the type shown in
Figure \ref{fig: World-banana-trade}. In a gravity model trade between
two countries, a \emph{dyad} in network parlance, is modeled as a
function of exporter and importer attributes (e.g., their gross domestic
products), as well as dyad-specific covariates (e.g., physical distance
between them). Generalizations of Tinbergen's approach are workhorses
of modern empirical trade research \citep[e.g., ][]{SantosSilva_Tenreyro_RESTAT06,Helpman_et_al_QJE08,Anderson_AR11}.

Their ubiquity notwithstanding, serious open questions remain about
how to estimate, and conduct inference on, the parameters of gravity
trade models. Questions of particular interest here include how to
account for the dependence across dyads sharing a country in common,
how to incorporate country-specific (correlated) unobserved heterogeneity,
and how to formalize causal policy effects in dyadic settings. As
an example of the latter challenge, consider the effects of participation
in multi-lateral trading agreements, such as the General Agreement
on Tariffs and Trade (GATT) or its successor, the World Trade Organization
(WTO), on trade flows. Does trade increase across participating countries
\citep{Rose_AER04,Helpman_et_al_QJE08}? While a mature literature
on program evaluation suitable for single agent settings now exists
\citep[cf., ][]{Heckman_Vytlacil_HBE07,Imbens_Wooldridge_JEL09},
a networked counterpart has yet to emerge.

\subsection{Corporate governance}

Next consider the affiliation network of (corporate board) directors
and firms. This bipartite network $B\left(\mathcal{\mathcal{U}},\mathcal{V},\mathcal{E}\right)$
consists of two sets of agents, the set of possible directors, $\mathcal{U}$,
and the set of firms, $\mathcal{V}$. Edges, $\mathcal{E}$, match
directors to firms (i.e., corporate boards), and hence may only run
between $\mathcal{V}$ and $\mathcal{U}$. A longstanding interest
among corporate governance researchers centers on the implications
of so-called board interlocks. When a single director sits on multiple
corporate boards, then these corporations have interlocking directorates
\citep{Dooley_AER69}. Interlocking directorships may facilitate collusion
and other anti-competitive activities as well as, perhaps more positively,
the diffusion of innovations in corporate governance \citep{Davis_ASM91,Davis_CG96}. 

Figure \ref{fig: Board-Interlocks} plots the one-mode projection
of the directors-to-firms bipartite network for S\&P 1,500 firms in
2016. This projection generates an \emph{undirected} network $G\left(\mathcal{V},\mathcal{E}\right)$
on the set of all firms, with an edge between any two firms sharing
at least one director in common (i.e., with interlocking corporate
boards). Large firms in United States are inter-connected via overlapping
corporate board membership. On average firms share at least one board
member in common with four other firms and over 80 percent of S\&P
1,500 firms form a giant connected component of board interlocks.
The board interlock network is also highly \emph{transitive}: two
firms are much more likely to share a director in common, if they
also share one in common with a third firm.

\citet{Chu_Davis_AJS16} and \citet{Gualdani_JOE19} provide recent
analyses of board interlocks as well as references to earlier work.

\subsection{Production networks}

\citet{Atalay_et_al_PNAS11} study the production network of the United
States economy. The sale and purchase of intermediate inputs between
firms joins virtually all publicly traded corporations in the United
States into one giant buyer-supplier network. 

\citet{Serpa_Krishnan_MS17} present evidence of productivity spillovers
across firms linked together via supply chain relationships \citep[cf., ][]{Acemoglu_et_al_NBERMacro16}.
\citet{Acemoglu_etal_EM12} study the effect of the Leontief input-output
structure of the US economy on shock propagation. Their analysis suggests
that idiosyncratic technology shocks to critical input suppliers may
have macro-level effects.

\citet{Bernard_et_al_JPE18}, using detailed supply-chain data from
Japan, show how lowering supplier search costs allows firms to source
inputs more efficiently, in turn lowering marginal production costs.
The rich supply-chain data underlying the analysis of \citet{Bernard_et_al_JPE18}
is emblematic of the increasing availability of detailed supply chain
network data from different countries \citep[e.g., ][]{Dhyne_et_al_WP15}.
These datasets have the potential to dramatically improve our understanding
of, for example the sources of heterogeneity in productivity across
firms \citep[e.g., ][]{Atalay_et_Al_AER14} and the upstream and downstream
implications of (horizontal) mergers \citep[e.g., ][]{Fee_Thomas_JFE04,Bhattacharya_Nain_JFE11,Ahern_Harford_JoF14},
among many other areas of industrial organization and regulation policy.

\pagebreak{}

\begin{landscape}

\begin{figure}
\caption{World Trade in Bananas, 2015 \label{fig: World-banana-trade}}

\begin{centering}
\includegraphics{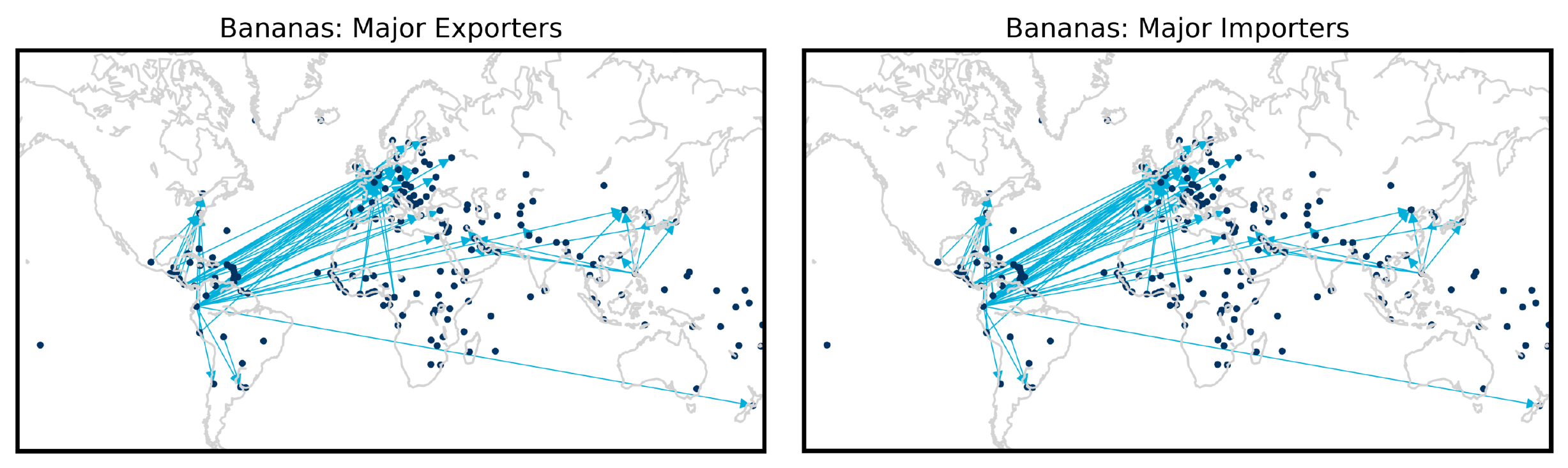}
\par\end{centering}
\textsc{\small{}\uline{Source:}}{\small{} BACI-CEPII International
Trade Database \citep[cf., ][]{Gaulier_et_al_CEPII10,DeBenedictis_GE14}
and author's calculations.}{\small\par}

\textsc{\small{}\uline{Notes:}}{\small{} International trade of
bananas in 2015 (HS6 code 080390). Each node in the figure represents
a country (nodes are positioned at capital cities) and an edge between
two nodes indicates the presence of at least 50,000 tons of directed
banana flows (the head of each directed edge corresponds to the importing
nation). In the left-hand panel node size is proportional to the total
exports of bananas by the relevant nation, while in the right it is
proportional to its total imports.}{\small\par}
\end{figure}

\end{landscape}

\pagebreak{}

\subsection{Research collaboration}

\citet{Jaffe_AER86}, in a classic study, presented evidence of research
and development (R\&D) spillovers across technologically adjacent
firms \citep{Bloom_et_al_EM13,Acemoglu_Akcigit_Kerr_PNAS16}. Such
spillovers provide a motivation for firms to undertake collaborative
R\&D, a tendency which has increased over time \citep{Hagedoorn_RP02,Tomasello_et_al_ICC17}.
\citet{Konig_et_al_RESTAT19} model the formation of R\&D partnerships
across firms theoretically and empirically, exploring the implications
of network structure for optimal R\&D subsidy policies. The structure
of spillovers across firms, as well as the mechanisms whereby they
form R\&D partnerships, determines optimal policies.

\citet{Ductor_et_al_RESTAT14} study collaboration and research output
among and across economists. \citet{Newman_PNAS01} explores collaboration
networks in the various sciences.

\subsection{Risk-sharing across households}

A classic question in development economics is whether households
efficiently share risk through informal agreements \citep{Townsend_EM94,Udry_ReStud94}.
Recently economists have directly collected information on risk-sharing
relationships across households. For example, \citet{deWeerdt_IAP04}
collected data on risk-sharing links across households in a village
in Tanzania and empirically modeled the determinants of these links
\citep[cf., ][]{Fafchamps_Lund_JDE03,Fafchamp_Gubert_JDE07}. \citet{Ambrus_et_al_AER14}
investigate how the precise structure of links across households determines
the amount of risk that can be insured, as well as the form of second
best, more local, network structures.

Network structure now informs many other areas of development economics,
including research on technology adoption and program take-up in rural
settings \citep[e.g., ][]{Banerjee_et_al_Sci13,Kim_et_al_Lancet2015},
the productivity of small traders and firms \citep[e.g., ][]{Fafchamps_Minten_OEP02},
and post-migration employment outcomes \citep{Beaman_ReStud11,Munshi_QJE03},
among other examples.

\pagebreak{}

\begin{landscape}

\begin{figure}
\caption{United States Corporate Board Interlocks, 2016\label{fig: Board-Interlocks}}

\begin{centering}
\includegraphics{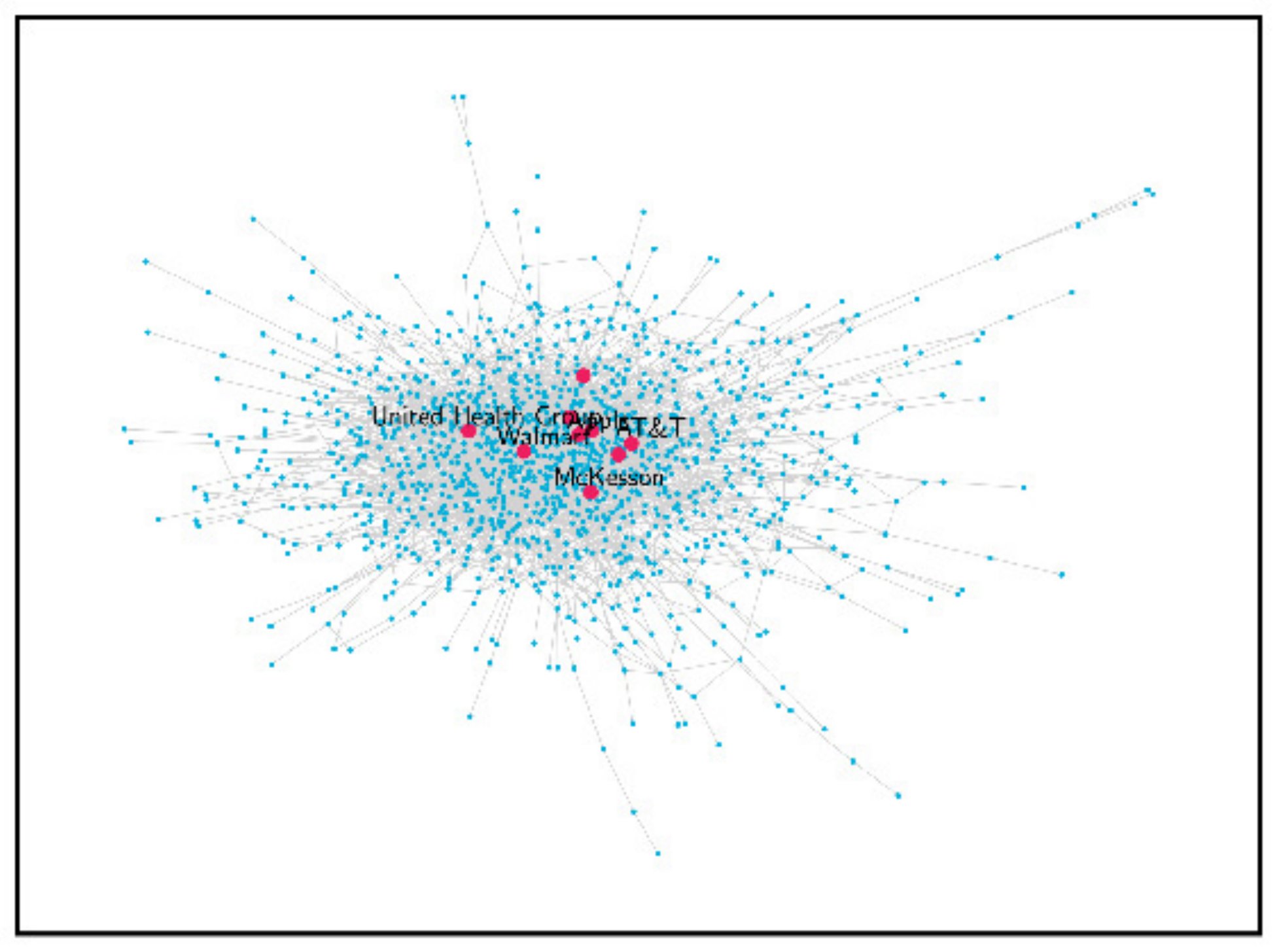}
\par\end{centering}
\textsc{\small{}\uline{Source:}}{\small{} Wharton Research Data
Services (WRDS) - Institutional Shareholder Services (ISS) Directors
dataset and author's calculations \citep[cf., ][]{Chu_Davis_AJS16}.}{\small\par}

\textsc{\small{}\uline{Notes:}}{\small{} The figure plots the largest
connected component of the corporate board interlock network in 2016
among S\&P 1,500 firms. The top 10 Fortune 500 firms in 2016 are the
larger `Rose Garden' colored nodes. A total of 1,216 firms belong
to the largest connected component. See \citet[p. 124 - 127]{Newman_NetBook10}
for details on how to construct one-mode projections of bipartite
graphs.}{\small\par}
\end{figure}

\end{landscape}

\pagebreak{}

\subsection{Insurer-provider and referral networks for healthcare}

Many features of the health care market naturally map into graphs.
For example, physicians may have admitting privileges across multiple
hospitals, insurers typically offer preferential terms to selected
networks of providers, and doctors vary in the intensity with which
they refer patients to one another.\footnote{\citet{Barnett_et_al_HSR11} and \citet{An_et_al_SIM18} use patient
referral patterns to map out relationships among physicians.}

The welfare and economic implications of these networks are likely
immense, given the magnitude of the health care sector in the United
States economy. \citet{Ho_AER09} represents one attempt to grapple
with the network structure of healthcare markets. 

\subsection{Employment search}

\citet{Loury_Ioannides_JEL04} survey the substantial literature on
the interplay between social networks and job acquisition, a topic
that has fascinated both sociologists and economists at least since
\citet{Granovetter_AJS73}. The growing availability of longitudinal
register data from various countries provides an opportunity to study
the interface between networks and inequality in the labor market
more carefully. 

For example, \citet{Saygin_et_al_IZA14} use the Austrian Social Security
Database to construct a co-worker network for middle aged workers
in Austria. A co-worker is anyone who an individual has ever worked
with previously. They find that the structure of these co-worker networks
predict the ease with which workers find employment after establishment
closures (i.e., mass layoffs). This paper provides a nice example
of how new data may facilitate the re-visiting of a classic networks
question \citep[cf.,][]{Hensvik_Skans_JOLE16}. 

\subsection{Questions}

The examples outlined above represent only a small sample of recent
appearances of network data in empirical economic research.\footnote{\citet{dePaula_WC17} and \citet{Jackson_et_al_JEL17} provide additional
references.} What do we hope to learn from this growing body of research? As noted
in the introduction, empirical research on networks can usefully be
divided between that which studies the \emph{consequences} of networks
and that which studies their \emph{formation}. The premise of this
chapter is that network linkages across agents are consequential.
That is, I take as given that networks are important venues for shock
propagation, information diffusion, learning and various types of
peer interactions. Maintaining this premise justifies my focus on
the econometric modeling of network formation.

An analogy with the development of single agent models of discrete
choice is useful. \citet{McFadden_FinE74}, in a pioneering paper,
initiated a research program on identifying and estimating random
utility models of discrete choice. Empirical application, computation,
semiparametric identification and estimation, the inclusion of unobserved
choice attributes, and allowing for strategic behavior, all have been
important accomplishments of this research program. These econometric
models are, in turn, routinely used in virtually all areas of economics. 

The goal here is analogous. Relational data are ubiquitous in economics,
but econometric models for such data are not. The goal, therefore,
is to develop models for these data, preferably with (i) strong microeconomic
foundations, (ii) that allow for unobserved agent-level heterogeneity,
and (iii) incorporate interdependencies in preferences over links.
Also required are feasible methods of estimation and inference (and
in this area interesting and challenging questions are abundant).
The availability of econometric methods for network analysis will,
in turn, allow for counterfactual policy and welfare analysis. How
would a particular horizontal merger affect upstream supply chain
structure? What is the effect on trade flows of Eurozone membership?
Could a school principal increase friendships across races, or raise
average achievement, by structuring classrooms under her purview differently?

Some readers may wish to skip Section \ref{sec: Basic-probability-tools}
initially and instead start with Sections \ref{sec: dyadic_regression}
to \ref{sec: heterogeneity}. They could then return to Section \ref{sec: Basic-probability-tools}
before tackling Sections \ref{sec: Statistics} and \ref{sec: Strategic-models}.
Graph theoretic concepts and notation appears throughout the chapter.
While many terms and definitions are formally defined, others are
not. Missing definitions can be found in any basic graph theory textbook.

\section{\label{sec: Basic-probability-tools}Basic probability tools: random
graphs, graphons, graph limits and sampling}

This section provides an informal introduction to key ideas from the
applied probability literature on exchangeable random graphs. The
main concepts are (i) exchangeable random graphs and their representation,
(ii) subgraph densities or network moments, (iii) limits of sequences
of exchangeable random graphs, and (iv) sampling. These ideas underlie
a substantial share of recent research on the statistics of networks
\citep[e.g., ][]{Airoldi_et_al_JMLR08,Diaconis_Holmes_Janson_IM08,Bickel_Chen_PNAS09,Bickel_et_al_AS11,Bhamidi_et_al_AAP11,Chatterjee_et_al_AAP11,Olhede_Wolfe_PNAS14,Orbanz_Roy_IEEE15,Gao_et_al_AS15}. 

Much of this statistics work has been motivated by research questions
in computational biology and neuroscience \citep[e.g., ][]{Picard_et_al_JCB08}.
Link formation in these settings is not driven by purposeful agents.
Consequently this research may initially appear rather distant from
the concerns of econometricians. Nevertheless my view is that recent
developments in probability and statistics have much to offer econometricians
interested in networks (and also vice-versa, although making this
second argument this is not on my agenda here). 

The basic concepts introduced in this section appear frequently in
later portions of the chapter.

\subsection{Notation}

Let $G\left(\mathcal{V},\mathcal{E}\right)$ be a finite undirected
network or graph defined on $N=\left|\mathcal{V}\left(G\right)\right|$
\emph{vertices} or agents; here $\mathcal{V}\left(G\right)=\left\{ 1,\ldots,N\right\} $
denotes the set of all agents in the network.\footnote{If $\mathbb{X}$ is a set, then $\left|\mathbb{X}\right|$ denotes
the cardinality of that set. If $\mathbf{X}$ is a matrix of reals,
then $\left|\mathbf{X}\right|$ equals its (element-wise) absolute
value.} Any two agents may be connected or not. The set of such links is
recorded in the \emph{edge} list $\mathcal{E}\left(G\right)=\left\{ \left(i,j\right),\left(k,l\right),\ldots\right\} $,
consisting of the (unordered) indices of all connected agent pairs.
Call $N$ the \emph{order} of the network and $\left|\mathcal{E}\left(G\right)\right|$
its \emph{size. }We can represent $G\left(\mathcal{V},\mathcal{E}\right)$
by the $N\times N$ adjacency matrix $\mathbf{D}=\left[D_{ij}\right]_{i,j\in\mathcal{V}\left(G\right)}$
with $ij^{th}$ element
\[
D_{ij}=\left\{ \begin{array}{cl}
1, & \left(i,j\right)\in\mathcal{E}\left(G\right)\\
0, & \text{otherwise}
\end{array}\right..
\]

For an undirected network, with self-ties or loops ruled out, such
that $D_{ii}=0$ for $i\in\mathcal{V}\left(G\right)$, $\mathbf{D}$
is a symmetric binary matrix with a diagonal of structural zeros.
I focus on undirected networks initially, but also present some results
for directed networks and bipartite networks. Specific notation for
these special cases will be introduced as needed.

In settings where it is useful to emphasize the order of $G$, I use
the notation $G_{N}$. This is especially useful when considering
sequences of graphs. Let $\left(i,j\right)\in\mathcal{E}\left(G\right)$
be an edge in $G$; sometimes I will abbreviate $\left(i,j\right)$
as $ij$. The complete graph on $p$ vertices is denoted by $K_{p}$.

Following \citet{Jackson_NetBook08}, let $G-ij$ denote the network
obtained by deleting edge $ij$ from $G$ (if present), and $G+ij$
the network one gets after adding this link. Let $\mathbf{D}\pm ij$
denote the adjacency matrix associated with the network obtained by
adding/deleting edge $\left(i,j\right)$ from $G$. Let $\mathbb{D}_{N}$
denote the set of all $2^{\binom{N}{2}}$ possible adjacency matrices
and $\mathbb{I}_{N}$ the set of all possible $N$-dimensional binary
vectors.

Let $N\left(i\right)=\left\{ j\in\mathcal{V}\thinspace:\thinspace ij\in\mathcal{E}\right\} $
be the set of agent $i$'s \emph{neighbors: }agents to which she is
directly linked. The \emph{degree} of agent $i$ is given by the cardinality
of this set. Equivalently agent $i$'s degree may be computed by summing
the elements of the $i^{th}$ row of the adjacency matrix. Let $\iota_{N}$
be an $N\times1$ vector of ones. The vector $\mathbf{D}_{+}=\mathbf{D}\iota_{N}$
is called the \emph{degree sequence} of the network (typically we
re-arrange the order of agents such that the elements of this vector
are in ascending order).

I informally call a network dense\emph{ }if its size, or number of
edges, is ``close to'' $N^{2}$ and sparse if its size is ``close
to'' $N$. More precisely a sequence of graphs is \emph{sparse} in
the limit if the number of edges in it grows linearly with $N$, \emph{dense}
if this growth is quadratic. 

There are $n\overset{def}{\equiv}\tbinom{N}{2}=\frac{1}{2}N\left(N-1\right)$
pairs of agents, or \emph{dyads}, in a network consisting of $N$
agents. Triples, quadruples and quintuples of agents are call \emph{triads},
\emph{tetrads} and \emph{pentads} respectively. A tuple of 17 agents,
which arises rather rarely in everyday empirical work, is evidently
called a \emph{septendecuple}. Not having formally studied Latin,
I offer the reader no guidance on pronunciation.

Let $\sum_{i<j}$ be shorthand for $\sum_{i=1}^{N-1}\sum_{j=i+1}^{N}$
with $\sum_{i<j<k}$ similarly defined. The \emph{density} of a network,
\[
P_{N}\left(\vcenter{\hbox{\includegraphics[scale=0.1250]{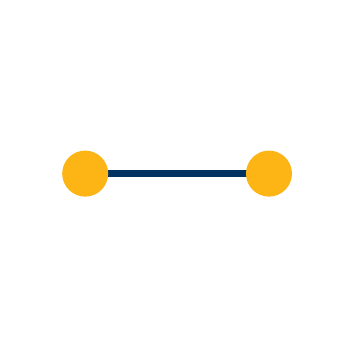}}}\right)\overset{def}{\equiv}\hat{\rho}_{N}\overset{def}{\equiv}\frac{2}{N\left(N-1\right)}\sum_{i<j}D_{ij},
\]
equals the proportion of connected dyads. Let $D_{i+}$ be the $i^{th}$
element of the degree sequence. \emph{Average degree,} 
\[
\hat{\lambda}_{N}\overset{def}{\equiv}\left(N-1\right)\hat{\rho}_{N}\overset{def}{\equiv}\frac{1}{N}\sum_{i=1}^{N}D_{i+},
\]
equals the average number of links per agent in the network.

In what follows random variables are (generally) denoted by capital
Roman letters, specific realizations by lower case Roman letters and
their support by blackboard bold Roman letters. That is $Y$, $y$
and $\mathbb{Y}$ respectively denote a generic random draw of, a
specific value of, and the support of, $Y$. The abbreviations i.i.d.,
CLT , LLN and GGP stand for, respectively, ``independent and identically
distributed'', ``central limit theorem'', ``law of large numbers''
and ``graph generating process''. For the vector $\mathbf{b}$,
$\left\Vert \mathbf{b}\right\Vert _{2}$ denotes the Euclidean norm;
for the matrix $\mathbf{B}$, $\left\Vert \mathbf{B}\right\Vert _{F}$
denotes the Frobenius norm. I use $I_{N}$ to denote the $N\times N$
identity matrix. $\mathbb{N}$ denotes the set of natural numbers
and $\left[Y_{ij}\right]_{i,j\in\mathbb{N}}$ an infinite two-dimensional
array with $ij^{th}$ element $Y_{ij}$.

I use the big-Omega notation $X_{N}=\Omega\left(Y_{N}\right)$ to
denote that $X_{N}=O\left(Y_{N}\right)$ \emph{and} $Y_{N}=O\left(X_{N}\right)$.
The notation $\overset{D}{=}$ denotes equality in distribution, $\overset{def}{\equiv}$
a mathematical definition. Let $\theta$ be some parameter value in
the space $\Theta$. Let $S_{N}\left(\theta\right)$ be some statistic
indexed by this parameter with population value $\theta_{0}$. I let
$S_{N}=S_{N}\left(\theta_{0}\right)$ denote the statistic evaluated
at $\theta=\theta_{0}$. To economize on space I sometimes abbreviate
$\Pr\left(\left.Y=y\right|X=x\right)$ as $\Pr\left(\left.Y=y\right|x\right)$
or $\Pr\left(\left.y\right|x\right)$ and similarly for $\mathbb{E}\left[\left.Y\right|x\right]$,
$\mathbb{V}\left(\left.Y\right|x\right)$ etc.

\subsection{Exchangeable random graphs}

Initially assume the unavailability of agent-specific covariates,
making it natural to assume that agents are exchangeable (models with
covariates, and a correspondingly weaker notion of exchangeability,
feature in Sections \ref{sec: dyadic_regression}, \ref{sec: policy_analysis},
\ref{sec: heterogeneity} and \ref{sec: Strategic-models}). Let $\pi:\left\{ 1,\ldots,N\right\} \mapsto\left\{ 1,\ldots,N\right\} $
be a permutation of the node labels of $G\left(\mathcal{V},\mathcal{E}\right)$
and $\Pi$ the set of all such permutations. The random graph $G$
is \emph{jointly exchangeable} if
\begin{equation}
\left[D_{ij}\right]\overset{D}{=}\left[D_{\pi\left(i\right)\pi\left(j\right)}\right]\label{eq: finite_joint_exchangeability}
\end{equation}
for every permutation $\pi\in\Pi$.

In settings where node labels have no meaning, exchangeability is
an implication of \emph{a priori} researcher belief (and hence a natural
modeling assumption). Consider a researcher analyzing the adjacency
matrix associated with a set of friendship links among adolescents
in a high school \citep[e.g.,][]{Currarini_et_al_EM09}, in the absence
of node-specific covariates, there is no reason to change one's modeling
approach after simultaneously applying a particular reshuffling of
agents to \emph{both} the rows and columns of $\mathbf{D}$ \citep[cf.,][]{Rubin_JES81}.
Put differently, when node labels have no meaning, the probability
attached to any isomorphism of $G$ should be the same as that attached
to $G$ itself.

There are many interesting statistics of $\mathbf{D}$ which are invariant
to simultaneous row and column permutations. Examples include a network's
density, diameter and triangle $(\vcenter{\hbox{\includegraphics[scale=0.125]{triangle}}})$
count. A family of such statistics, network moments, is introduced
below. Exchangeability suggests that a statistical model should attach
different probabilities to networks with different values of such
(permutation invariant) statistics, but the same probability to two
networks which are isomorphic (which will share common values of any
permutation invariant statistic).

\subsubsection*{An exchangeable model with strategic interaction}

Most extant models of network formation satisfy condition (\ref{eq: finite_joint_exchangeability}).
As an example, which will help to fix some ideas, consider the model
of strategic network formation with bilateral transfers studied by
\citet{Graham_Pelican_BookCh2020}. Let $\nu_{i}\thinspace:\thinspace\mathbb{D}_{N}\rightarrow\mathbb{R}$
be a utility function for agent $i$, which maps networks into utility.
Define the marginal utility of edge $ij$ for agent $i$ as
\begin{equation}
MU_{ij}\left(\mathbf{D}\right)=\left\{ \begin{array}{cc}
\nu_{i}\left(\mathbf{D}\right)-\nu_{i}\left(\mathbf{D}-ij\right) & \text{if}\thinspace D_{ij}=1\\
\nu_{i}\left(\mathbf{D}+ij\right)-\nu_{i}\left(\mathbf{D}\right) & \text{if}\thinspace D_{ij}=0
\end{array}\right..\label{eq: marginal_utility}
\end{equation}

From \citet{Bloch_Jackson_IJGT06}, a network is \emph{pairwise stable
with transfers} if the following condition holds.
\begin{defn}
\label{def: Pairwise-stability}(\textsc{Pairwise stability with Transfers)
}The network $G\left(\mathcal{V},\mathcal{E}\right)$ is pairwise
stable with transfers if \\
(i) $\forall\left(i,j\right)\in\mathcal{E}\left(G\right),\thinspace MU_{ij}\left(\mathbf{D}\right)+MU_{ji}\left(\mathbf{D}\right)\geq0$\\
(ii) $\forall\left(i,j\right)\notin\mathcal{E}\left(G\right),\thinspace MU_{ij}\left(\mathbf{D}\right)+MU_{ji}\left(\mathbf{D}\right)<0$
\end{defn}
If the network in hand is a pairwise stable one, then any links actually
present generate (weakly) positive utility (on net for the two agents
on each side of a link). Unobserved links, in contrast, would not
generate net positive utility if present.

\citet{Graham_Pelican_BookCh2020} focus on a general family of parametric
utility functions which includes, among others, the specification
\begin{equation}
\nu_{i}\left(\mathbf{\left.d\right|}\mathbf{A},\mathbf{B},\mathbf{V^{*}};\gamma_{0}\right)=\sum_{j}d_{ij}\left[A_{i}+B_{j}+\gamma_{0}\left(\sum_{k}d_{ik}d_{jk}\right)-V_{ij}^{*}\right]\label{eq: utility_function}
\end{equation}
with $\mathbf{V^{*}}=\left[V_{ij}^{*}\right]$ , $\mathbf{A}=\left[A_{i}\right]$
and $\mathbf{B}=\left[B_{i}\right]$. Under (\ref{eq: utility_function}),
assuming $\gamma_{0}>0$, dyad $\left\{ i,j\right\} $ will generate
more utility when forming a link if they already share many links
or ``friends'' in common (i.e., if $\sum_{k}d_{ik}d_{jk}$ is large).
Here $A_{i}$ and $B_{j}$ are agent-specific ``extroversion'' and
``popularity'' parameters, the effect of which is to generate degree
heterogeneity \citep[cf.,][]{Graham_EM17}. The term $V_{ij}^{*}$
is an idiosyncratic dyad-specific utility shifter. \citet{Graham_Pelican_BookCh2020}
leave the joint distribution of $\mathbf{A}$ and $\mathbf{B}$ unrestricted,
but here I will assume that $\left\{ \left(A_{i},B_{i}\right)\right\} _{i=1}^{N}$
is an i.i.d. sequence which is independent of $\left\{ \left(V_{ij}^{*},V_{ji}^{*}\right)\right\} _{i,j\in\left\{ 1,\ldots,N\right\} ,i<j}$,
also assumed i.i.d.

When the utility function is of the form given in (\ref{eq: utility_function})
the marginal utility agent $i$ gets from a link with $j$ is
\[
MU_{ij}\left(\left.\mathbf{d}\right|\mathbf{A},\mathbf{B},\mathbf{V^{*}};\gamma_{0}\right)=A_{i}+B_{j}+\gamma_{0}\left(\sum_{k}d_{ik}d_{jk}\right)-V_{ij}^{*}.
\]
Pairwise stability then implies, conditional on the realizations of
$\mathbf{A}$, $\mathbf{B}$, $\mathbf{V^{*}},$ and the value of
externality parameter, $\gamma_{0}$, that the observed network must
satisfy, for $i=1,\ldots,N-1$ and $j=i+1,\ldots,N$
\begin{equation}
D_{ij}=\mathbf{1}\left(U_{i}+U_{j}+2\gamma_{0}\left(\sum_{k}D_{ik}D_{jk}\right)\geq V_{ij}\right)\label{eq: link_rule_incomplete}
\end{equation}
with $U_{i}=A_{i}+B_{i}$ and $V_{ij}=V_{ij}^{*}+V_{ji}^{*}$. Equation
(\ref{eq: link_rule_incomplete}) defines a system of $\tbinom{N}{2}=\frac{1}{2}N\left(N-1\right)$
nonlinear simultaneous equations. Any solution to this system --
and there will typically be multiple ones -- constitutes a pairwise
stable (with transfers) network.\footnote{Note that in this example existence of an equilibrium is easy to show
using Tarski's \citeyearpar{Tarski_PJM55} fixed point theorem.}

As written, model (\ref{eq: link_rule_incomplete}) is incomplete
\citep[cf.,][]{dePaula_ARE13}. Even if we assume that the observed
network is a pairwise stable one, we have not specified a mechanism
for selecting, when there are multiple ones, a specific equilibrium
configuration. To complete the model, following the more careful development
in \citet{Pelican_Graham_WP2019}, let $\mathcal{N}_{\mathbf{d}}\left(\mathbf{V};\mathbf{U},\gamma\right)$
equal the probability that configuration $\mathbf{D}=\mathbf{d}$
is selected. If $\mathbf{d}$ is not an equilibrium -- given $\mathbf{U}$,
$\mathbf{V}$ and $\gamma$ -- then $\mathcal{N}_{\mathbf{d}}\left(\mathbf{V};\mathbf{U},\gamma\right)=0$.
If $\mathbf{d}$ is the unique equilibrium then $\mathcal{N}_{\mathbf{d}}\left(\mathbf{V};\mathbf{U},\gamma\right)=1$.
If $\mathbf{d}$ is one of several equilibria, then $0\leq\mathcal{N}_{\mathbf{d}}\left(\mathbf{V};\mathbf{U},\gamma\right)\leq1$
etc.

For $\mathbb{D}_{N}$ the net of all $N\times N$ undirected adjacency
matrices, we have that $\sum_{\mathbf{d}\in\mathbb{D}_{N}}\mathcal{N}_{\mathbf{d}}\left(\mathbf{V};\mathbf{U},\gamma\right)=1$.
The conditional likelihood of observing network wiring $\mathbf{D}=\mathbf{d}$
is therefore
\[
\Pr\left(\left.\mathbf{D}=\mathbf{d}\right|\mathbf{U};\gamma\right)=\int_{\mathbf{v}\in\mathbb{R}^{n}}\mathcal{N}_{\mathbf{d}}\left(\mathbf{v};\mathbf{U},\gamma\right)f_{\mathbf{V}}\left(\mathbf{v}\right)\mathrm{d}\mathbf{v}.
\]
The $\tbinom{N}{2}$ equilibrium conditions (\ref{eq: link_rule_incomplete})
indicate that if $\mathbf{d}=\left[d_{ij}\right]$ is an equilibrium,
then so is $\mathbf{d}_{\pi}\overset{def}{\equiv}\left[d_{\pi\left(i\right)\pi\left(j\right)}\right]$.
Hence as long as the equilibrium selection mechanism is also invariant
to index permutations, as is natural to require, condition (\ref{eq: finite_joint_exchangeability})
holds.

Under the null of no strategic interaction, $\gamma=0$, the likelihood
simplifies to
\begin{equation}
\Pr\left(\left.\mathbf{D}=\mathbf{d}\right|\mathbf{U};0\right)=\int_{\mathbf{v}\in\mathbb{R}^{n}}\mathcal{N}_{\mathbf{d}}\left(\mathbf{v};\mathbf{U},0\right)f_{\mathbf{V}}\left(\mathbf{v}\right)\mathrm{d}\mathbf{v}\label{eq: graham_pelican_null_likelihood}
\end{equation}
with
\begin{align*}
\mathcal{N}_{\mathbf{d}}\left(\mathbf{v};\mathbf{U},0\right)= & \prod_{i=1}^{N-1}\prod_{j=i+1}^{N}\mathbf{1}\left(U_{i}+U_{j}\geq v_{ij}\right)^{d_{ij}}\\
 & \times\mathbf{1}\left(U_{i}+U_{j}<v_{ij}\right)^{1-d_{ij}}.
\end{align*}
Since $\left\{ \left(V_{ij}\right)\right\} _{i,j\in\left\{ 1,\ldots,N\right\} ,i<j}$
is i.i.d., if we further assume that $f_{V_{12}}\left(v\right)=e^{v}/\left[1+e^{v}\right]^{2}$,
the logistic density, explicitly evaluating the integral in (\ref{eq: graham_pelican_null_likelihood})
yields
\begin{equation}
\Pr\left(\left.\mathbf{D}=\mathbf{d}\right|\mathbf{U};0\right)=\prod_{i=1}^{N-1}\prod_{j=i+1}^{N}\left[\frac{\exp\left(U_{i}+U_{j}\right)}{1+\exp\left(U_{i}+U_{j}\right)}\right]^{d_{ij}}\left[\frac{1}{1+\exp\left(U_{i}+U_{j}\right)}\right]^{1-d_{ij}},\label{eq: beta_model_likelihood}
\end{equation}
which is the likelihood associated with the so-called $\beta$-model
of \citet{Frank_MSH97} and \citet{Chatterjee_et_al_AAP11}.

A feature of the $\beta$-model is that links form independently\emph{
conditional} on the latent agent-specific effects $\left\{ U_{i}\right\} _{i=1}^{N}$.
Equation (\ref{eq: beta_model_likelihood}) consists of a product
of $\tbinom{N}{2}$ conditionally independent likelihood contributions.

Evidently, this conditional independence structure is not typically
a feature of the model when $\gamma>0$, such that strategic interaction
is present. To see why by means of a simple example, consider a network
consisting of just three homogenous agents (i.e., $U_{1}=U_{2}=U_{3}=0$).
Initially assume that both $V_{12}$ and $V_{13}$ are less then zero,
but that $0<V_{23}\leq2\gamma_{0}$. This corresponds to edges $\left(1,2\right)$
and $\left(1,3\right)$ generating so much intrinsic utility that
they will form irrespective of what other edges may or may not be
present in the network. In contrast, the intrinsic utility attached
to edge $\left(2,3\right)$ falls in an intermediate range: the edge
forms if edges $\left(1,2\right)$ and $\left(1,3\right)$ are present
-- such that agents $2$ and $3$ share agent $1$ as a friend in
common -- and does not form if they are absent. This configuration
of utility shocks is depicted in the left-hand panel of Figure \ref{fig: CID_counter_example}.
The unique equilibrium outcome in this case is a triangle $\left(\vcenter{\hbox{\includegraphics[scale=0.125]{triangle}}}\right)$
network.

If, instead, $V_{12}$ and $V_{13}$ are both greater than $2\gamma$,
such that the $\left(1,2\right)$ and $\left(1,3\right)$ edges never
form because of their low intrinsic utility (again irrespective of
what other edges may or may not be present in the network), then the
$\left(2,3\right)$ edge will not form either. This scenario is depicted
in the right-hand panel of Figure \ref{fig: CID_counter_example}.
The unique equilibrium outcome in this case is an empty $\left(\vcenter{\hbox{\includegraphics[scale=0.125]{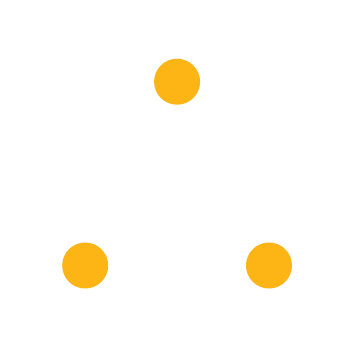}}}\right)$
network.

This simple example shows that $D_{23}$ need not vary independently
of $D_{12}$ and $D_{13}$ conditional on $\left(U_{1},U_{2},U_{3}\right)$
in the presence of strategic interaction ($\gamma>0$) . Such conditional
independence \emph{is} a feature of the $\beta$-model ($\gamma=0$).
While the model is exchangeable both when $\gamma>0$ and when $\gamma=0$,
the conditional independence of edges only obtains under the no strategic
interaction null.

\begin{figure}
\caption{Dependent link formation\label{fig: CID_counter_example}}

\begin{centering}
\includegraphics{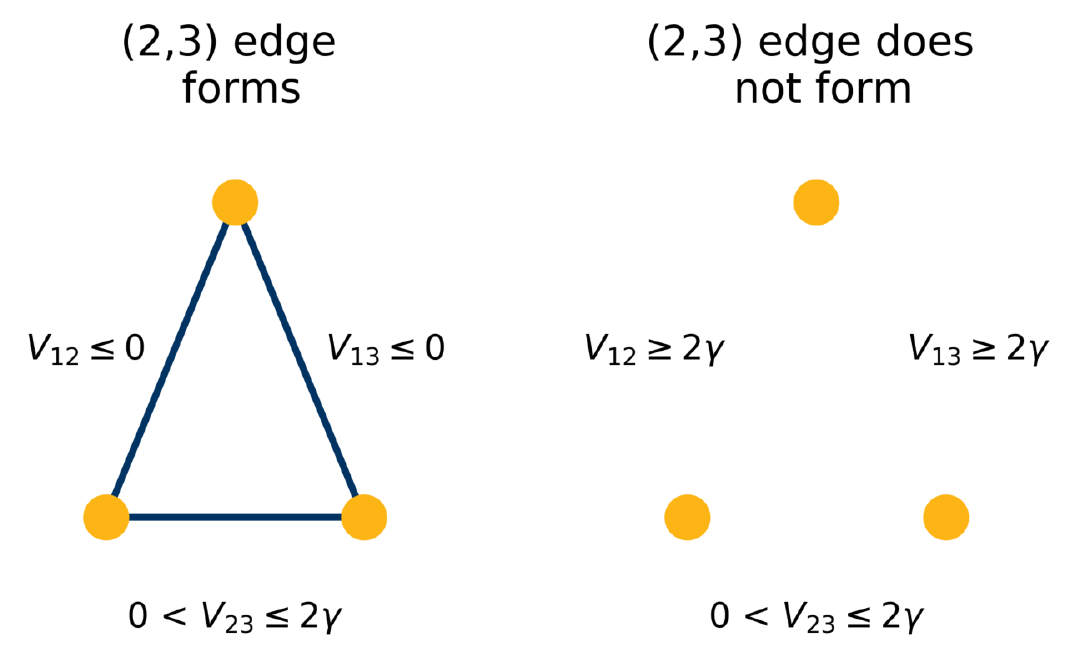}
\par\end{centering}
\uline{Notes:} Both panels depict the unique pairwise stable equilibrium
associated with the shown triple of dyad-level utility shifters $V_{12}$,
$V_{13}$ and $V_{23}$ and agent-level heterogeneity parameters $U_{1}$,
$U_{2}$ and $U_{3}$ identically equal to zero. In both panels the
realized value of $V_{23}$ is the same, but whether $D_{23}=1$ or
$0$ varies with the realized values of $V_{12}$ and $V_{13}$. If
$V_{12}$ and $V_{13}$ are sufficiently low, then $D_{23}=1$; if
they are sufficiently high, then $D_{23}=0$. Links are not conditionally
independent given $\left\{ U_{i}\right\} _{i=1,2,3}$.
\end{figure}

\subsection{Conditionally independent dyad (CID) models and the graphon}

Having established that a network probability model should satisfy
the joint exchangeability condition (\ref{eq: finite_joint_exchangeability}),
it is important to articulate classes of models that do so. One such
family of models, suggested by the last example, are conditionally
independent dyad (CID) models \citep{Chandrasekhar_Book15,Shalizi_LN16}.
In these models each agent is characterized by an unobserved latent
attribute, $U_{i}$. The $N$ agents in the network in hand are viewed
as independent random draws from some population, such that the $\left\{ U_{i}\right\} _{i=1}^{N}$
are independently and identically distributed. Conditional on the
agent-specific latent variables $\mathbf{U}=\left(U_{1},\ldots,U_{N}\right)'$
edges form independently with
\[
\left.D_{ij}\right|U_{i},U_{j}\sim\mathrm{Bernoulli}\left(h\left(U_{i},U_{j}\right)\right),
\]
for every dyad $\left\{ i,j\right\} $ with $i<j$. Here $h\left(u,v\right)=h\left(v,u\right)$
for all $\left(u,v\right)\in\mathbb{U}\times\mathbb{U}$ is a symmetric
edge probability function. In anticipation of results to come, call
this function a \emph{graphon}: short for \textbf{\uline{graph}}
functi\textbf{\uline{on}}.

Conditional on the latent agent-specific effects the likelihood of
the network is
\[
\Pr\left(\left.\mathbf{D}=\mathbf{d}\right|\mathbf{U}=\mathbf{u}\right)=\prod_{i<j}h\left(u_{i},u_{j}\right)^{d_{ij}}\left[1-h\left(u_{i,}u_{j}\right)\right]^{1-d_{ij}}.
\]
Unconditional on $\mathbf{U}$, the likelihood equals
\begin{equation}
\Pr\left(\mathbf{D}=\mathbf{d}\right)=\int\cdots\int\left\{ \prod_{i<j}h\left(u_{i},u_{j}\right)^{d_{ij}}\left[1-h\left(u_{i,}u_{j}\right)\right]^{1-d_{ij}}\right\} \prod_{i=1}^{N}f_{U}\left(u_{i}\right)\mathrm{d}u_{i},\label{eq: CEI_integrated}
\end{equation}
where $f_{U}\left(u\right)$ is the density of $U$. Importantly (\ref{eq: CEI_integrated})
allows for dependence across dyads which share agents in common. Independence
holds only conditional on the latent agent attributes \citep{Graham_EM17}.
Similar independence restrictions play a prominent role in the econometrics
of panel data \citep{Chamberlain_HBE84,Arellano_Honore_HBE01}.

It is an easy exercise to show that (\ref{eq: CEI_integrated}) is
compatible with the finite joint exchangeability restriction (\ref{eq: finite_joint_exchangeability}).

The $\beta$-model, introduced above, belongs to the family of CID
models with a graphon of
\[
h\left(u,v\right)=\frac{\exp\left(u+v\right)}{1+\exp\left(u+v\right)}.
\]
Random threshold graphs \citep[e.g., ][]{Diaconis_Holmes_Janson_IM08}
are also members of this family with graphon
\[
h\left(u,v\right)=\mathbf{1}\left(F_{U}\left(u\right)+F_{U}\left(v\right)\geq\alpha\right),
\]
and $F_{U}\left(u\right)$ the CDF of $U$.

It is important to realize that CID models constitute only a subset
of all jointly exchangeable random graph models when $N$ -- the
number of agents in the network -- is finite. As shown by means of
the example introduced above, strategic interaction in link formation
can induce dependence across elements of the adjacency matrix that
evidently cannot be eliminated by conditioning (see Figure \ref{fig: CID_counter_example}
above). Although not all exchangeable models are CID ones, this family
of models plays an outsized role in extant large sample theory for
networks.

\subsection{Aldous-Hoover representation theorem and the graphon}

Joint exchangeability imposes more structure on the network probability
distribution when there are an\emph{ infinite} number of agents. Specifically,
if we strengthen (\ref{eq: finite_joint_exchangeability}) to hold
for any permutation of a finite number of the indices of the infinite
sequence $\mathbb{N}=\left\{ 1,2,3,\ldots\right\} $, we have a generalization
of \citet{deFinetti_AN1931} type exchangeability of an infinite sequence,
appropriate for infinite random graphs. In independent work \citet{Aldous_JMA81}
and \citet{Hoover_WP79} showed the following representation result
for infinite random adjacency matrices \citep[cf., ][]{Kallenberg_PSIP05}.
\begin{thm}
\textsc{\label{thm: Aldous-Hoover}(Aldous-Hoover)} A random adjacency
matrix $\left[D_{ij}\right]_{i,j\in\mathbb{N}}$ is jointly exchangeable
if and only if there is a measurable function $g:\left[0,1\right]^{4}\rightarrow\left\{ 0,1\right\} $
such that
\[
\left[D_{ij}\right]\overset{D}{=}\left[g\left(\alpha,U_{i},U_{j},V_{ij}\right)\right]
\]
for $\alpha$, $\left\{ U_{i}\right\} _{i\in\mathbb{N}}$ , and $\left\{ V_{ij}\right\} _{i,j\in\mathbb{N},i<j}$
independently and identically distributed $\mathcal{U}\left[0,1\right]$
random variables with $V_{ij}=V_{ji}$.
\end{thm}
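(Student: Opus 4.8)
The plan is to treat the statement as a biconditional and handle the two directions separately, spending essentially all effort on the substantive ``only if'' direction. The ``if'' direction is routine: suppose $[D_{ij}] \overset{D}{=} [g(\alpha, U_i, U_j, V_{ij})]$. For any permutation $\pi$ of finitely many indices, $\{U_{\pi(i)}\}$ has the same law as $\{U_i\}$ (both i.i.d.\ $\mathcal{U}[0,1]$), and $\{V_{\pi(i)\pi(j)}\}$ has the same law as $\{V_{ij}\}$ because permuting preserves both the i.i.d.\ structure and the symmetry $V_{ij}=V_{ji}$; since $\alpha$ is untouched and $g$ is a fixed measurable map, substituting indices shows the array's law is permutation-invariant, which is exactly the infinite-array version of (\ref{eq: finite_joint_exchangeability}).

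For the ``only if'' direction I would build the representation as a two-dimensional, hierarchical analog of de Finetti's theorem, extracting the three sources of randomness $\alpha$, $\{U_i\}$ and $\{V_{ij}\}$ in layers. \emph{First}, joint exchangeability renders the array's law a mixture of extremal (ergodic) exchangeable laws; an ergodic-decomposition argument lets me realize the mixing index as a single random variable, coded as $\mathcal{U}[0,1]$ via a Borel isomorphism, and this is $\alpha$. \emph{Second}, conditional on $\alpha$ I run a de Finetti argument along rows: the rows $(D_{i1},D_{i2},\ldots)_i$ form an exchangeable sequence of sequences, yielding an agent-specific latent variable $U_i$ (a limit of row empirical quantities, again coded into $[0,1]$) that absorbs all row-$i$-specific dependence, with symmetry of the array forcing the same $U_i$ to serve column $i$. \emph{Third}, conditional on $\alpha$ and all the $\{U_i\}$, each entry $D_{ij}$ is a conditional Bernoulli; the standard randomization (transfer) lemma then represents it as $g(\alpha,U_i,U_j,V_{ij})$ for an independent $V_{ij}\sim\mathcal{U}[0,1]$, and taking $V_{ij}=V_{ji}$ together with a choice of $g$ symmetric in its two middle arguments reproduces the undirected constraint $D_{ij}=D_{ji}$.

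The main obstacle is the conditional independence underpinning the third layer: that, given $\alpha$ and the full collection $\{U_i\}$, the cells $D_{ij}$ are mutually independent across dyads. This is the genuinely two-dimensional phenomenon with no one-dimensional counterpart, and it is exactly where the full strength of \emph{joint} exchangeability (rather than mere row-exchangeability) must be used. Establishing it rigorously calls for a reverse-martingale / shell-sigma-field argument in the style of Kallenberg: one shows the relevant conditional laws are measurable with respect to the sigma-field generated by $(\alpha,U_i,U_j)$ and that all cross-dyad correlations vanish after this conditioning. The remaining delicacy is bookkeeping in the coding step --- arranging the Borel isomorphisms so that a \emph{single} measurable $g$ serves every cell simultaneously, rather than a dyad-dependent family of functions --- which is handled by coding all the Polish-space latent elements uniformly into $[0,1]$.
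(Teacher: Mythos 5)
The paper itself contains no proof of this theorem: it is stated as a classical result, attributed to Aldous (1981) and Hoover (1979) with Kallenberg (2005) as a further reference, so there is no in-paper argument to compare yours against. Measured instead against the canonical literature proof, your outline has the right architecture --- it is essentially the Aldous/Kallenberg route: an ergodic-decomposition step producing the mixing variable $\alpha$, per-index latent variables $U_{i}$, a final randomization (transfer) step producing $V_{ij}$ and a single coding function $g$, with symmetry of $g$ in its middle arguments plus $V_{ij}=V_{ji}$ handling undirectedness --- and your ``if'' direction is correct and complete as written. You also correctly locate the crux: that conditional on $\alpha$ and the full collection $\left\{ U_{i}\right\} $ the dyad entries are independent, the genuinely two-dimensional fact with no one-dimensional analog.

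Two cautions, though, on whether this constitutes a proof rather than a roadmap. First, your second layer is looser than it reads: the rows of a jointly exchangeable array, viewed as a sequence of elements of $\left\{ 0,1\right\} ^{\mathbb{N}}$, are exchangeable, but de Finetti then delivers only a random \emph{directing measure} on sequence space; compressing that object into a single $\left[0,1\right]$-valued $U_{i}$ that simultaneously plays the row-$i$ and column-$i$ roles, and that meshes coherently with the column variables $U_{j}$, is precisely where Aldous's coding lemmas do real work --- it does not follow from ``a limit of row empirical quantities'' plus symmetry of the array. Second, by deferring the conditional-independence step to ``a reverse-martingale / shell-sigma-field argument in the style of Kallenberg,'' you have named the hard theorem rather than proved it; everything substantive in Aldous--Hoover lives in that step, and without executing it the proposal is a correct and well-organized summary of the known proof strategy rather than a self-contained argument. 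Given that the paper under review simply cites the result, your sketch is a reasonable account of what the cited proof does, but it should not be mistaken for an independent derivation.
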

Here $\alpha$ is a mixing parameter, analogous to the one appearing
in de Finetti's \citeyearpar{deFinetti_AN1931} classic representation
theorem for exchangeable binary sequences.\footnote{To make the connection with \citet{deFinetti_AN1931} transparent
\citet[Lemma 1.5]{Aldous_JMA81} also shows that an infinite sequence
$\left\{ Y_{i}\right\} _{i=1}^{\infty}$ is exchangeable if and only
if there exists a measurable function $f$ such that $\left[Y_{i}\right]\overset{D}{=}\left[f\left(\alpha,U_{i}\right)\right]$.} Theorem \ref{thm: Aldous-Hoover} implies that if network agents
are exchangeable for all $N$, then we can proceed `as if' edges formed
according to a CID model or a mixture of such models.

Exploiting the fact that the elements of $\mathbf{D}$ are binary,
we can simplify Theorem \ref{thm: Aldous-Hoover} as follows. Averaging
over $V_{ij}$ yields
\begin{align*}
h\left(\alpha,u_{i},u_{j}\right) & \overset{def}{\equiv}\int_{0}^{1}g\left(\alpha,u_{i},u_{j},v\right)\mathrm{d}v
\end{align*}
from which we get the more convenient representation, for $i<j$,
\begin{equation}
\left[D_{ij}\right]\overset{D}{=}\left[\mathbf{1}\left(V_{ij}\leq h\left(\alpha,U_{i},U_{j}\right)\right)\right].\label{eq: Aldous-Hoover-DGP}
\end{equation}

This is, of course, just a conditional edge independence model (or,
more precisely, a mixture of such models). In what follows I focus
on inference which conditions on the empirical distribution of the
data; consequently $\alpha$ can often safely be ignored. When this
is the case I suppress the $\alpha$ argument in the graphon, writing
$h\left(U_{i},U_{j}\right)$. See \citet{Bickel_Chen_PNAS09} and
\citet{Menzel_arXiv17} for additional discussion.

Theorem \ref{thm: Aldous-Hoover} motivates an approach to nonparametric
modeling of\emph{ large} networks that proceeds `as if' links form
independently conditional on the agent-specific latent variables $\mathbf{U}=\left(U_{1},\ldots,U_{N}\right)'$.
This is convenient because CID models induce a very particular dependence
structure across the rows and columns of the network adjacency matrix.

Consider, without loss of generality, agents $1$, $2$ and $3$.
In a CID model $D_{12}$ and $D_{13}$ may covary; the dyads $\left\{ 1,2\right\} $
and $\left\{ 1,3\right\} $ share the agent $1$ in common and hence
both links form, in part, based on the value of $U_{1}$. However
$D_{12}$ and $D_{13}$ vary independently conditional on $U_{1}$,
$U_{2}$ and $U_{3}$ (hence the conditionally independent dyad nomenclature).
Links involving pairs of dyads which share no agents in common, for
example $D_{12}$ and $D_{34}$, form independently.

The structured pattern of dependence, independence and conditional
independence associated with CID models facilitates the development
of LLNs and CLTs that can be applied to statistics of the adjacency
matrix. A group of statistics for which some large network distribution
theory is available are network moments.

\subsection{Network moments\label{subsec: Network-moments}}

Almost fifty years ago \citet{Holland_Leinhardt_AJS70} suggested
that a network's architecture could be usefully summarized by its
average local structure. Agent exchangeability, in conjunction with
Theorem \ref{thm: Aldous-Hoover}, also motivates an approach to network
modeling based on the frequency of low order subgraph configurations
(i.e., the number of edges, two stars, triangles, squares, k-stars
etc). 

Consider, for example, the set of all $\tbinom{N}{3}$ \emph{triads}
-- unordered triples of agents -- in a network; what fraction of
these triads take two-star $\vcenter{\hbox{\includegraphics[scale=0.125]{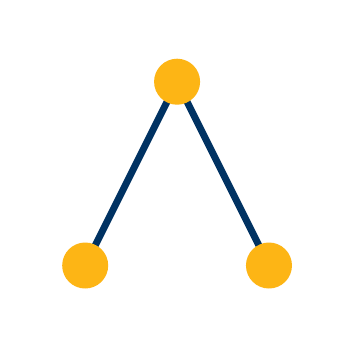}}}$
or triangle $\vcenter{\hbox{\includegraphics[scale=0.125]{triangle}}}$
configurations? These frequencies, called \emph{network moments} by
\citet{Bickel_et_al_AS11}, feature prominently in research by sociologists
\citep*[e.g.,][]{Granovetter_AJS73,Coleman_AJS88,Gould_Fernandez_SM89}
and computational biologists \citep*[e.g.,][]{Milo_el_al_Sci02,Przulj_et_al_BI04};
albeit in the context of two largely independent and desynchronized
literatures.

In economics, network moments play an increasingly important role
in empirical research as well. Examples include \citet{Jackson_et_al_AER12},
who explore, theoretically and empirically, how different triad configurations
can support infrequent favor exchange between agents; \citet{Atalay_et_al_PNAS11},
who calibrate a model of buyer-seller networks to the US economy by
modeling its degree distribution\footnote{Below I show that network moments and moments of the degree distribution
are closely connected.}; and \citet*{dePaula_et_al_EM18}, who present conditions under which
(a variant of) network moments (partially) identify preferences in
a structural model of strategic network formation.

Network moments, in addition to being important summary statistics
for graphs, play an important role in (i) the distribution theory
for dyadic regression discussed in Sections \ref{sec: dyadic_regression}
and \ref{sec: policy_analysis}, (ii) understanding the degree distribution
and (iii) structural model estimation. The material which follows
is dense.

\subsubsection*{Subgraphs and isomorphisms}

The exact sense in which a network is summarized by its moments can
be made precise using the graphon, as introduced above, and the notion
of a graph limit, which will be introduced below \citep{Diaconis_Janson_RM08,Lovasz_AMS12}.
First we require a formal definition of a subgraph. There are two
definitions used by empirical network researchers.
\begin{defn}
\label{def: Partial-Subgraph}(\textsc{\uline{Partial Subgraph}})
Let $\mathcal{V}\left(S\right)\subseteq\mathcal{V}\left(G\right)$
be any subset of the vertices of $G$ and $\mathcal{E}\left(S\right)\subseteq\mathcal{E}\left(G\right)\cap\mathcal{V}\left(S\right)\times\mathcal{V}\left(S\right)$,
then $S=\left(\mathcal{V}\left(S\right),\mathcal{E}\left(S\right)\right)$
is a \emph{partial subgraph} of $G$.
\end{defn}
A partial subgraph $S$ of $G$ consists of a subset of agents in
$G$ and a \emph{subset} of all edges among $\mathcal{V}\left(S\right)$
also appearing in $G$. Counts of partial subgraphs are often referred
to as \emph{network motif }counts \citep[e.g.,][]{Milo_el_al_Sci02},
although this terminology is not used consistently. The two star motif
$S=\vcenter{\hbox{\includegraphics[scale=0.125]{twostar}}}$ is
a partial subgraph of $G=\vcenter{\hbox{\includegraphics[scale=0.125]{triangle}}}$.
Note that in this example $S$ does not include the edge between agents,
numbered clockwise from the top, $2$ and $3$.
\begin{defn}
\label{def: Induced-Subgraph}(\textsc{\uline{Induced Subgraph}})
Let $\mathcal{V}\left(S\right)\subseteq\mathcal{V}\left(G\right)$
be any subset of the vertices of $G$ and $\mathcal{E}\left(S\right)=\mathcal{E}\left(G\right)\cap\mathcal{V}\left(S\right)\times\mathcal{V}\left(S\right)$,
then $S=\left(\mathcal{V}\left(S\right),\mathcal{E}\left(S\right)\right)$
is an \emph{induced subgraph} of $G$.
\end{defn}
An induced subgraph $S$ includes \emph{all} edges in $G$ connecting
any two agents in $\mathcal{V}\left(S\right)$. Although $S=\vcenter{\hbox{\includegraphics[scale=0.125]{twostar}}}$
is a partial subgraph of $G=\vcenter{\hbox{\includegraphics[scale=0.125]{triangle}}}$,
it is not an induced one. Counts of induced subgraphs are often referred
to as \emph{graphlet }counts \citep[e.g.,][]{Przulj_et_al_BI04},
although again not consistently so.

Consider two graphs, $R$ and $S$, of the same order. Let $\varphi\thinspace:\mathcal{\thinspace V}\left(R\right)\rightarrow\mathcal{V}\left(S\right)$
be a bijection from the nodes of $R$ to those of $S$. The bijection
$\varphi\thinspace:\thinspace\mathcal{V}\left(R\right)\rightarrow\mathcal{V}\left(S\right)$
\emph{maintains adjacency} if for every dyad $i,j\in\mathcal{V}\left(R\right)$
if $\left(i,j\right)\in\mathcal{E}\left(R\right)$, then $\left(\varphi\left(i\right),\varphi\left(j\right)\right)\in\mathcal{E}\left(S\right)$;
it \emph{maintains non-adjacency} if for every dyad $i,j\in\mathcal{V}\left(R\right)$
if $\left(i,j\right)\notin\mathcal{E}\left(R\right)$, then $\left(\varphi\left(i\right),\varphi\left(j\right)\right)\notin\mathcal{E}\left(S\right)$.
If the bijection maintains both adjacency and non-adjacency we say
it \emph{maintains structure}.
\begin{defn}
\label{def: Graph-Isomorphism}(\textsc{\uline{Graph Isomorphism}})
The graphs $R$ and $S$ are \emph{isomorphic} if there exists a structure-maintaining
bijection $\varphi\thinspace:\thinspace\mathcal{V}\left(R\right)\rightarrow\mathcal{V}\left(S\right)$.
\end{defn}
In what follows I use the notation $R\cong S$ to denote that ``$R$
is isomorphic to $S$.''

Two special families of motifs/graphlets will play a prominent role
in the analysis of network summary statistics presented in Section
\ref{sec: Statistics} below. First, a $p$\emph{-cycle} is $p^{th}$
order graphlet with nodes labeled (or relabeled) such that its edges
form a cycle:
\[
\mathcal{E}\left(S\right)=\left\{ \left(i_{1},i_{2}\right),\left(i_{2},i_{3}\right),\ldots,\left(i_{p},i_{1}\right)\right\} .
\]
A $p$-cycle is a connected graphlet with $p$ edges on $p$ nodes.
As one transverses a $p$-cycle graphlet no vertex is crossed more
than once except for the first/last one. Important examples of $p$-cycles
are triangles ($S=\vcenter{\hbox{\includegraphics[scale=0.1250]{triangle}}}$)
and 4-cycles ($S=\vcenter{\hbox{\includegraphics[scale=0.1250]{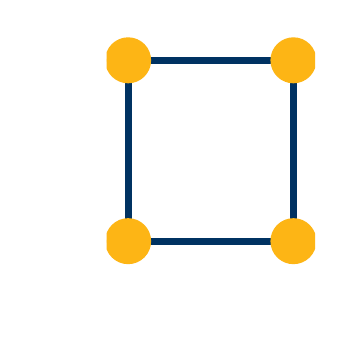}}}$).

Second, a \emph{tree} is a connected graph with no cycles. The number
of edges on a $p^{th}$ order tree is $p-1$; a feature which will
prove highly convenient. Important examples of trees are $p$-star
graphlets, such as two-stars ($S=\vcenter{\hbox{\includegraphics[scale=0.1250]{twostar}}}$)
and three-stars ($S=\vcenter{\hbox{\includegraphics[scale=0.1250]{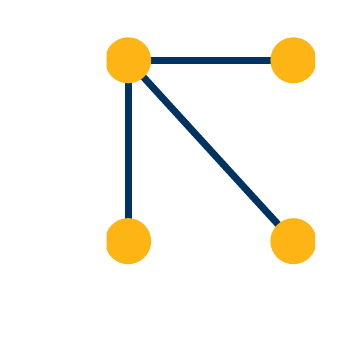}}}$).
Trees will feature in the analysis of the degree distribution given
below. Trees are also called connected acyclic graphs.

\subsubsection*{Induced subgraph density}

Using Definitions \ref{def: Induced-Subgraph} and \ref{def: Graph-Isomorphism}
we can formally introduce the induced subgraph density. This will
be our first measure of the frequency with which a specific low-order
local configuration of links appears within a network. Let $S$ be
a $p^{th}$-order graphlet of interest (e.g., $S=\vcenter{\hbox{\includegraphics[scale=0.1250]{onethreewheel}}}$
or $S=\vcenter{\hbox{\includegraphics[scale=0.1250]{triangle}}}$),
$\mathrm{iso}\left(S\right)$ the group of isomorphisms of $S$, and
$\left|\mathrm{iso}\left(S\right)\right|$ its cardinality. It is
helpful to observe that $\left|\mathrm{iso}\left(S\right)\right|$
equals the number of (partial) subgraphs of $K_{p}$ that are isomorphic
to $S$. For example, $\left|\mathrm{iso}\left(\vcenter{\hbox{\includegraphics[scale=0.1250]{twostar}}}\right)\right|=3$
since there are three ways to draw a two-star configuration on three
vertices. $G_{N}$ is the real world network under study.

Let $\mathbf{i}_{p}\subseteq\left\{ 1,2,\ldots,N\right\} $ be a set
of $p$ integers. If we require that $i_{1}<i_{2}<\cdots<i_{p}$,
then there are $\tbinom{N}{p}$ such integer sets; denote this set
of integer sets by $\mathcal{C}_{p,N}$. If all that is required is
that $i_{k}\neq i_{l}$ for $k\neq l$, then there are $\frac{N!}{\left(N-K\right)!}$
such integer sets; denote this set of integer sets by $\mathcal{A}_{p,N}.$

Let the vertex set of $S$ be $\left\{ 1,\ldots,p\right\} $. Let
$G\left[\mathbf{i}_{p}\right]$ denote the induced subgraph of $G$
associated with vertex set $\mathbf{i}_{p}$. Since we wish to compare
$S$ and $G\left[\mathbf{i}_{p}\right]$ it will be convenient to
relabel the latter. Let $\tilde{G}\left[\mathbf{i}_{p}\right]$ be
a relabelling of $G\left[\mathbf{i}_{p}\right]$ such that $i_{1}=1$,
$i_{2}=2,\ldots,i_{p}=p$ so that $kl\in\mathcal{E}\left(\tilde{G}\left[\mathbf{i}_{p}\right]\right)$
if $i_{k}i_{l}\in\mathcal{E}\left(G\left[\mathbf{i}_{p}\right]\right)$.
Let $\mathbf{i}_{p}\sim\text{Uniform\ensuremath{\left(\mathcal{A}_{p,N}\right)}}$;
the frequency with which $\tilde{G}_{N}\left[\mathbf{i}_{p}\right]$
equals $S$ is then
\begin{equation}
P_{N}\left(S\right)\overset{def}{\equiv}\Pr\left(S=\tilde{G}_{N}\left[\mathbf{i}_{p}\right]\right),\thinspace\thinspace\mathbf{i}_{p}\sim\text{Uniform\ensuremath{\left(\mathcal{A}_{p,N}\right)}}.\label{eq: induced_subgraph_density_1}
\end{equation}
Call (\ref{eq: induced_subgraph_density_1}) the \emph{induced subgraph
density} of $S$ in $G_{N}$. Alternatively we can write
\begin{equation}
P_{N}\left(S\right)=\frac{\Pr\left(S\cong G_{N}\left[\mathbf{i}_{p}\right]\right)}{\left|\mathrm{iso}\left(S\right)\right|},\thinspace\thinspace\mathbf{i}_{p}\sim\text{Uniform\ensuremath{\left(\mathcal{C}_{p,N}\right)}}\label{eq: induced_subgraph_density_2}
\end{equation}

The induced subgraph frequency of $S$ in $G_{N}$ equals the fraction
of injective mappings $\varphi\thinspace:\thinspace\mathcal{V}\left(S\right)\rightarrow\mathcal{V}\left(G_{N}\right)$
that preserve both edge adjacency \emph{and} non-adjacency. Direct
computation of this fraction yields the equalities
\begin{align}
P_{N}\left(S\right)= & \frac{N!}{\left(N-p\right)!}\sum_{\mathbf{i}_{p}\in A_{p,N}}\mathbf{1}\left(S=\tilde{G}_{N}\left[\mathbf{i}_{p}\right]\right)\label{eq: induced_subgraph_density}\\
= & \frac{1}{\tbinom{N}{p}\left|\mathrm{iso}\left(S\right)\right|}\sum_{\mathbf{i}_{p}\in\mathcal{C}_{p,N}}\mathbf{1}\left(S\cong G_{N}\left[\mathbf{i}_{p}\right]\right)\nonumber \\
\overset{def}{\equiv} & t_{\mathrm{ind}}\left(S,G_{N}\right)\nonumber 
\end{align}

In order to understand the mechanics of computing (\ref{eq: induced_subgraph_density})
it is useful to reformulate, one again, its definition. Let $\mathbf{D}_{\left[\mathbf{i}_{p},\mathbf{i}_{p}\right]}$
be the $p\times p$ sub-adjacency matrix constructed by removing all
rows and columns of $\mathbf{D}$ except those in $\mathbf{i}_{p}=\left\{ i_{1},\ldots,i_{p}\right\} .$
We can check for whether $G\left[\mathbf{i}_{p}\right]$ is an isomorphism
of $S$ by inspecting the elements of the $\mathbf{D}_{\left[\mathbf{i}_{p},\mathbf{i}_{p}\right]}$
sub-adjacency matrix. 

Consider the two star triad $S=\vcenter{\hbox{\includegraphics[scale=0.125]{twostar}}}$
, we can express $\mathbf{1}\left(S\cong G_{N}\left[\mathbf{i}_{p}\right]\right)$
in terms of $\mathbf{D}_{\left[\mathbf{i}_{p},\mathbf{i}_{p}\right]}$
as
\begin{equation}
\mathbf{1}\left(\vcenter{\hbox{\includegraphics[scale=0.125]{twostar}}}\cong G_{N}\left[\mathbf{i}_{p}\right]\right)=D_{i_{1}i_{2}}D_{i_{1}i_{3}}\left(1-D_{i_{2}i_{3}}\right)+D_{i_{1}i_{2}}\left(1-D_{i_{1}i_{3}}\right)D_{i_{2}i_{3}}+\left(1-D_{i_{1}i_{2}}\right)D_{i_{1}i_{3}}D_{i_{2}i_{3}}.\label{eq: two_star_indicator}
\end{equation}
We have $\left|\mathrm{iso}\left(\vcenter{\hbox{\includegraphics[scale=0.125]{twostar}}}\right)\right|=3$
with the three terms to the right of the equality in (\ref{eq: two_star_indicator})
equal to indicators for these three possible isomorphisms (on triad/vertex
set $\left\{ i_{1},i_{2},i_{3}\right\} $). In general $\mathbf{1}\left(S\cong G_{N}\left[\mathbf{i}_{p}\right]\right)$
may be defined in terms of $\mathbf{D}_{\left[\mathbf{i}_{p},\mathbf{i}_{p}\right]}$
with the number of components equal to the number of possible isomorphisms
of $S$. There is only one isomorphism of the $\vcenter{\hbox{\includegraphics[scale=0.125]{triangle}}}$
configuration, yielding a second example of 
\[
\mathbf{1}\left(\vcenter{\hbox{\includegraphics[scale=0.125]{triangle}}}\cong G_{N}\left[\mathbf{i}_{p}\right]\right)=D_{i_{1}i_{2}}D_{i_{1}i_{3}}D_{i_{2}i_{3}}.
\]

Recognizing that $t_{\mathrm{ind}}\left(S,G_{N}\right)$ is a functional
of the adjacency matrix of $G_{N}$ allows us to easily compute its
expectation when edges form according to the conditional edge independence
model (\ref{eq: Aldous-Hoover-DGP}). Once again consider the two
star configuration; iterated expectations and conditional independence
of edges given $\mathbf{U}=\left(U_{1},\ldots,U_{N}\right)'$ yield
\begin{align*}
\mathbb{E}\left[D_{i_{1}i_{2}}D_{i_{1}i_{3}}\left(1-D_{i_{2}i_{3}}\right)\right] & =\mathbb{E}\left[\mathbb{E}\left[\left.D_{i_{1}i_{2}}D_{i_{1}i_{3}}\left(1-D_{i_{2}i_{3}}\right)\right|\mathbf{U}\right]\right]\\
 & =\mathbb{E}\left[h\left(U_{i_{1}},U_{i_{2}}\right)h\left(U_{i_{1}},U_{i_{3}}\right)\left[1-h\left(U_{i_{2}},U_{i_{3}}\right)\right]\right]\\
 & =\int\int\int h\left(t,u\right)h\left(t,v\right)\left[1-h\left(u,v\right)\right]\mathrm{d}t\mathrm{d}u\mathrm{d}v
\end{align*}
(and also that the value of $\mathbb{E}\left[D_{i_{1}i_{2}}D_{i_{1}i_{3}}\left(1-D_{i_{2}i_{3}}\right)\right]$
is invariant to permutations of its indices). Finally we have, recalling
that $\left|\mathrm{iso}\left(\vcenter{\hbox{\includegraphics[scale=0.125]{twostar}}}\right)\right|=3$,
\[
\mathbb{E}\left[\mathbf{1}\left(\vcenter{\hbox{\includegraphics[scale=0.125]{twostar}}}\cong G_{N}\left[\mathbf{i}_{p}\right]\right)\right]=3\cdot\int\int\int h\left(t,u\right)h\left(t,v\right)\left[1-h\left(u,v\right)\right]\mathrm{d}t\mathrm{d}u\mathrm{d}v,
\]
for $\mathbf{i}_{p}\sim\text{Uniform\ensuremath{\left(\mathcal{C}_{p,N}\right)}}.$
For a generic graphlet configuration we have
\begin{align}
\mathbb{E}\left[t_{\mathrm{ind}}\left(S,G_{N}\right)\right] & =\left|\mathrm{iso}\left(S\right)\right|^{-1}\mathbb{E}\left[\mathbf{1}\left(S\cong G_{N}\left[\mathbf{i}_{p}\right]\right)\right]\label{eq: induced_subgraph_density_population}\\
 & =\mathbb{E}\left[\prod_{\left\{ i,j\right\} \in\mathcal{E}\left(S\right)}h\left(U_{i},U_{j}\right)\prod_{\left\{ i,j\right\} \in\mathcal{E}\left(\bar{S}\right)}\left[1-h\left(U_{i},U_{j}\right)\right]\right]\nonumber \\
 & \overset{def}{\equiv}P\left(S\right)\nonumber 
\end{align}
where $\bar{G}$ denotes the complement of the graph $G$: the graph
defined on the same nodes as $G$ with an edge present if, and only
if, it is not present in $G$. The graph sum of $G$ and $\bar{G}$
therefore coincides with the complete graph $K_{\left|\mathcal{V}\left(G\right)\right|}$.

Call the \emph{expectation} of $t_{\mathrm{ind}}\left(S,G_{N}\right)$
the induced subgraph density of $S$ in the \emph{graphon} $h\left(\cdot\right)$
and write it as, in an abuse of notation, $\mathbb{E}\left[t_{\mathrm{ind}}\left(S,G_{N}\right)\right]=t_{\mathrm{ind}}\left(S,h\right)=P\left(S\right)$.
Clearly $P_{N}\left(S\right)$ is an unbiased estimate of $t_{\mathrm{ind}}\left(S,h\right)=P\left(S\right)$
\emph{when the true network generating process is of the CID type}.
Notice how the graphon provides a language for connecting \emph{empirical}
graphlet counts, first studied by \citet{Holland_Leinhardt_AJS70},
with well-defined probabilistic objects. This connection will prove
useful for developing a procedure for conducting inference on $P\left(S\right)$
using the sample graph $G_{N}$. Since $P\left(S\right)$ generally
varies with the graphon $h\left(u,v\right)$, the idea is that by
identifying $P\left(S\right)$ for enough specific configurations
(e.g., $S=\vcenter{\hbox{\includegraphics[scale=0.125]{twostar}}},\vcenter{\hbox{\includegraphics[scale=0.125]{triangle}}},\vcenter{\hbox{\includegraphics[scale=0.125]{fourcycle}}},\vcenter{\hbox{\includegraphics[scale=0.125]{onethreewheel}}}$
etc.), we may be able to identify $h\left(u,v\right)$ itself \citep[cf.,][]{Bickel_et_al_AS11}.

\subsubsection*{Injective homomorphism density}

A second notion of subgraph density also appears in some of the results
which follow. Let $S\subseteq G$ denote that $S$ is a partial subgraph
of $G$. Using Definitions \ref{def: Partial-Subgraph} and \ref{def: Graph-Isomorphism},
we can also define what I will call, following \citet{Lovasz_AMS12},
the injective homomorphism density.\footnote{The \citet{Lovasz_AMS12} monograph presents several different notions
of a subgraph density. The two introduced here were chosen for their
connection to actual empirical practice. See also \citet{Diaconis_Janson_RM08}.} The homomorphism density gives the probability that \emph{a (partial)
subgraph of} $G_{N}\left[\mathbf{i}_{p}\right]$, for $\mathbf{i}_{p}$
chosen uniformly at random from $\mathcal{A}_{p,N}$, is equal to
$S$. Alternatively the homomorphism density equals the fraction of
injective mappings $\varphi\thinspace:\thinspace\mathcal{V}\left(S\right)\rightarrow\mathcal{V}\left(G_{N}\right)$
that preserve edge adjacency. These mappings do not need to preserve
non-adjacency.\footnote{In contrast the induced subgraph density requires preservation of
both adjacency and non-adjacency.} The \emph{injective homomorphism density} of $S$ in $G_{N}$ equals
\begin{align}
Q_{N}\left(S\right) & =\frac{1}{\tbinom{N}{p}\left|\mathrm{iso}\left(S\right)\right|}\sum_{R\subseteq K_{N},R\cong S}\mathbf{1}\left(R\subseteq G_{N}\right)\label{eq: homomorphism_density}\\
 & =\frac{1}{\tbinom{N}{p}\left|\mathrm{iso}\left(S\right)\right|}\sum_{R\subseteq K_{N},\left|V\left(R\right)\right|=p}\mathbf{1}\left(R\cong S\right)\prod_{\left\{ i,j\right\} \in\mathcal{E}\left(R\right)}D_{ij}\nonumber \\
 & \overset{def}{\equiv}t_{\mathrm{inj}}\left(S,G_{N}\right)\nonumber 
\end{align}
The two equivalent definitions are given to develop familiarity with
notation. To understand expression (\ref{eq: homomorphism_density})
it is helpful to calculate the injective homomorphism density of $S=\vcenter{\hbox{\includegraphics[scale=0.125]{twostar}}}$
in $G_{N}=\vcenter{\hbox{\includegraphics[scale=0.125]{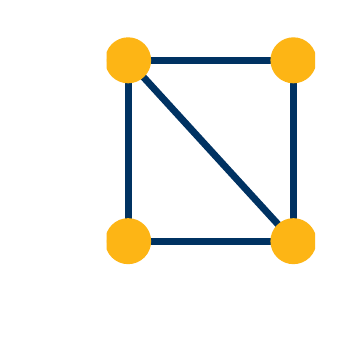}}}.$
There are three isomorphisms of the two star configuration such that
$\tbinom{4}{3}\left|\mathrm{iso}\left(\vcenter{\hbox{\includegraphics[scale=0.125]{twostar}}}\right)\right|=4\cdot3=12$.
Next consider the summation in the first line of (\ref{eq: homomorphism_density}).
This summation is over all $3^{rd}$ order partial subgraphs of $K_{4}$
which are isomorphic to $S=\vcenter{\hbox{\includegraphics[scale=0.125]{twostar}}}$.
There are exactly $12$ two star partial subgraphs in $K_{4}$ (three
for each of its four triads), a total of 8 of these configurations
are subgraphs of $G_{N}$ such that $t_{\mathrm{inj}}(\vcenter{\hbox{\includegraphics[scale=0.125]{twostar}}},\vcenter{\hbox{\includegraphics[scale=0.125]{chordalcycle}}})=\frac{8}{12}$.
Note that the induced subgraph density of $S=\vcenter{\hbox{\includegraphics[scale=0.125]{twostar}}}$
in $G_{N}=\vcenter{\hbox{\includegraphics[scale=0.125]{chordalcycle}}}$
is just $\frac{2}{12}$.

Under an Aldous-Hoover GGP we have
\begin{align*}
\mathbb{E}\left[t_{\mathrm{inj}}\left(S,G_{N}\right)\right] & =\frac{1}{\tbinom{N}{p}\left|\mathrm{iso}\left(S\right)\right|}\sum_{R\subseteq K_{N},\left|V\left(R\right)\right|=p}\mathbf{1}\left(R\cong S\right)\mathbb{E}\left[\mathbb{E}\left[\left.\prod_{\left\{ i,j\right\} \in\mathcal{E}\left(R\right)}D_{ij}\right|U_{1},\ldots,U_{N}\right]\right]\\
 & =\mathbb{E}\left[\prod_{\left\{ i,j\right\} \in\mathcal{E}\left(S\right)}h\left(U_{i},U_{j}\right)\right]\\
 & \overset{def}{\equiv}Q\left(S\right).
\end{align*}
Call the expectation of $t_{\mathrm{inj}}\left(S,G_{N}\right)$ the
injective homomorphism density of $S$ in the \emph{graphon} $h\left(\cdot\right)$
and write it as $\mathbb{E}\left[t_{\mathrm{inj}}\left(S,G_{N}\right)\right]=t_{\mathrm{inj}}\left(S,h\right)=Q\left(S\right)$.

\subsection{\label{subsec: Graph-limits}Graph limits}

Let $G_{N}$ be a finite exchangeable graph with adjacency matrix
$\mathbf{D}$. Let 
\[
h_{G_{N}}\left(u,v\right)=\begin{cases}
\begin{array}{c}
1\\
0
\end{array} & \begin{array}{c}
\text{if}\thinspace\left(\left\lceil uN\right\rceil ,\left\lceil vN\right\rceil \right)\in\mathcal{E}\left(G_{N}\right)\\
\text{otherwise}
\end{array}\end{cases}.
\]
Observe that $h_{G_{N}}\left(u,v\right)$ is a valid graphon and further
that
\[
t_{\mathrm{ind}}\left(S,G_{N}\right)=t_{\mathrm{ind}}\left(S,h_{G_{N}}\right)
\]
for any $S$ of order $K\leq N$ \citep[p. 28]{Chatterjee_LNM17}.
This equality connects the definition of the induced subgraph frequency
of $S$ in $G_{N}$, denoted by $P_{N}\left(S\right)$ in equation
(\ref{eq: induced_subgraph_density}), with its ``population'' counterpart
-- equation (\ref{eq: induced_subgraph_density_population}). It
also motivates the idea of the graphon as the appropriate limit object
for a sequence of graphs, $G_{N}$. If the subgraph frequency
\[
\underset{N\rightarrow\infty}{\lim}t_{\mathrm{ind}}\left(S,h_{G_{N}}\right)
\]
converges to some limit for \emph{all} fixed subgraphs $S$, then
we say that $G_{N}$ has a limit. \citet{Lovasz_Szegedy_JCT06} showed
the natural limiting object is a graphon (i.e, heuristically, $h_{G_{N}}\rightarrow h$
as $N\rightarrow\infty$). \citet{Diaconis_Janson_RM08} connect this
finding with the Aldous-Hoover representation theorem. Collectively
these results motivate an approach to summarizing a network by the
frequency of different low order subgraph configurations within it;
by its \emph{average} \emph{local structure}. \citet{Lovasz_AMS12}
provides a rigorous and compresensive introduction to theory of graph
limits.

\subsection{Sampling}

In this chapter I will adopt two perspectives on ``sampling''. In
the first we view the network in hand as the one induced by a random
sample of agents from some large (i.e., infinite) population. Let
$G_{\infty}$ be an (infinite) exchangeable random graph. Let $\mathcal{V}$
be a random sample of agents of size $N$ from $G_{\infty}$. We assume
that the observed network, $G_{N}$, coincides with the subgraph induced
by this random sample of vertices:
\begin{equation}
G_{N}=G_{\infty}\left[\mathcal{V}\right].\label{eq: random_induced_subgraph}
\end{equation}

Let $\mathbf{D}_{\infty}=\left[D_{ij}\right]$ with $i,j\geq1$ be
the adjacency matrix of $G_{\infty}$. Exchangeability implies the
characterization
\begin{equation}
D_{ij}=\mathbf{1}\left(h\left(\alpha,U_{i},U_{j}\right)\geq V_{ij}\right)\label{eq: aldous_hoover_DGP}
\end{equation}
with $\alpha$, $U_{i}$ and $V_{ij}=V_{ji}$ independent $\mathcal{U}\left[0,1\right]$
random variables \citep[cf.,][]{Aldous_JMA81,Hoover_WP79}. Here $h:\left[0,1\right]^{3}\rightarrow\left[0,1\right]$
is symmetric in its second and third arguments.

Under (\ref{eq: random_induced_subgraph}) the elements of $\mathbf{D}$,
the adjacency matrix for the network in hand, also obey the characterization
(\ref{eq: aldous_hoover_DGP}). The ``sampling distribution'' of
some statistic of $\mathbf{D}$, say $t_{N}\left(\mathbf{D}\right)$,
is simply the one induced by repeated random sampling from the underlying
infinite population. We calculate limit distributions by studying
the sampling distribution of $t_{N}\left(\mathbf{D}\right)$ as $N\rightarrow\infty$.

An advantage of this first perspective it that is allows the econometrician
to fully exploit the independence/dependence structure associated
with the Aldous-Hoover Theorem. If the graph in hand is the one induced
by a random sample of agents from some infinite exchangeable population,
then we can proceed ``as if''
\begin{equation}
\left.D_{ij}\right|U_{i},U_{j}\sim\mathrm{Bernoulli}\left(h\left(U_{i},U_{j}\right)\right)\label{eq: nonparametric_dgp}
\end{equation}

for $i=1,\ldots,N-1$ and $j=i+1,\ldots,N$. Although (\ref{eq: nonparametric_dgp})
is a nonparametric data generating process, it is a structured one.
We can use this structure to our advantage.

An unattractive feature of this perspective is that if the density
of the population graph is very low, then that of the sampled graph
may be zero with high probability. To see this point heuristically
assume that the population consists of $N^{*}$ agents, with $N^{*}$
very large. Assume that average degree, $\lambda$, is some small
positive constant that does not dependent on $N^{*}$. The probability
of observing an edge between the two independent random draws from
the population is thus
\[
\Pr\left(D_{12}=1\right)=\frac{\frac{1}{2}\lambda N^{*}}{\tbinom{N^{*}}{2}}\approx\frac{\lambda}{N^{*}}.
\]

Boole's inequality then gives a probability of observing at least
one edge in our sampled network no greater than $\tbinom{N}{2}\lambda/N^{*}$,
which will be close to zero when $N<<\sqrt{N^{*}}$. When the population
graph is ``sparse'', it is quite likely that the subgraph induced
by a random sample of agents from it will be empty and hence completely
uninformative. See \citet[Chapter 3]{Crane_PFSNA18} for more discussion
and examples.

This example raises two questions. First, how does one sample from
a large sparse graph in practice? I ignore this question here, but
flag it as an interesting one which merits thought. The monograph
by \citet{Crane_PFSNA18} surveys extant work in this area. Second,
if the sampling is fictitious (i.e., analysis is based upon the full
graph), what mistakes might be made by proceeding ``as if'' we had
randomly sampled from some (now entirely hypothetical) large graph?

To answer the second question is useful to return to an empirical
example. \citet{Atalay_et_al_PNAS11} study the supply chain network
of large publicly traded firms in the United States. Their network
is not sampled, but rather constructed from Securities and Exchange
Commission (SEC) reports filed by the entire universe of publically
trade firms. If the model of network formation of interest is a conditional
independent dyad (CID) one, then we are free to proceed ``as if''
the observed network were generated according to (\ref{eq: nonparametric_dgp}).
If, instead, we view the network in hand as, for example, an equilibrium
of a finite $N$-player supply chain formation game, then it may be
difficult to justify (\ref{eq: nonparametric_dgp}); strategic interaction
may induce dependence across links that cannot be conditioned away.
We cannot appeal directly to the Aldous-Hoover Theorem.

As in \citet{deFinetti_AN1931}, the Aldous-Hoover Theorem requires
that the agent indices constitute an infinite sequence. However, just
as the de Finetti result fails for finite sequences \citep[e.g., ][]{Diaconis_Syn77},
but approximately holds when the sequence is large enough \citep[e.g., ][]{Diaconis_Freedman_AP80},
the hope is that in \emph{large} (but finite) networks Theorem \ref{thm: Aldous-Hoover}
remains useful \citep[cf.,][]{Volfosky_Airoldi_SPL16}.

One possibility would be to assume that $N$ is large enough such
that a representation like (\ref{eq: nonparametric_dgp}) ``approximately''
holds. One could then conduct inference on model parameters by comparing
observed network moments with model generated ones. The sampling distribution
of the observed network moments would be calculated assuming an Aldous-Hoover
DGP (which is appropriate for $N$ large enough). I sketch this idea
in a bit more detail in Section \ref{sec: Strategic-models} below.
Many gaps in this discussion remain. Alternatively we could proceed
along the lines of \citet{Menzel_WP16}. In this approach we would
approximate our finite player network formation game, with a limit
game which is easier to deal with (see Section \ref{sec: Strategic-models}).

\subsection{Adding sparsity: the \citet{Bickel_Chen_PNAS09} model}

For any finite network of unlabelled agents, exchangeability is a
natural, indeed unavoidable, modeling assumption. Unfortunately its
extension to infinite exchangeability, as needed for Theorem \ref{thm: Aldous-Hoover},
has the unattractive implication that the network is either empty
or dense in the limit. Specifically a (random) agent will either never
form links or do so infinitely often as $N\rightarrow\infty$. Denseness
and sparseness are limit properties of infinite sequences of graphs.
Any empirical network is neither ``dense'' nor ``sparse'', it
just is what it is. However, in most real world networks the numbers
of agents and links are of similar magnitudes. This suggests that
approximation results based on sequences of graphs that are sparse
in the limit may be more useful than those with dense limits. Whether
this is, in fact, the case remains an open question \citep{Green_Shalizi_arXiv17}.

One way to model sequences of graphs with sparse limits, while still
preserving the analytic convenience of conditional independence across
edges, was proposed by \citet{Bickel_Chen_PNAS09}. The Bickel-Chen
model is the default one in the nonparametric statistics and machine
learning literatures on random graphs.

Let $G_{N}$ be a random network of order $N$ generated according
to (\ref{eq: Aldous-Hoover-DGP}). The expected average number of
links an agent has in this network, that is \emph{average degree},
equals
\begin{equation}
\lambda_{N}=\left(N-1\right)\rho_{\alpha}\label{eq: average_degree}
\end{equation}
for $\rho_{\alpha}=\int h\left(\alpha,u,v\right)\mathrm{d}u\mathrm{d}v$.
Average degree (\ref{eq: average_degree}) either tends toward infinity
or is zero, depending on whether $\rho_{\alpha}$ is greater than
or equal to zero.

To extend model (\ref{eq: Aldous-Hoover-DGP}) so that it can accommodate
sparse graph sequences \citet{Bickel_Chen_PNAS09} define the conditional
density
\begin{align*}
w_{\alpha}\left(u,v\right) & =f_{\left.U_{i},U_{j}\right|D_{ij},\alpha}\left(\left.u,v\right|D_{ij}=1,\alpha\right).
\end{align*}

Next observe that since $f_{\left.U_{i},U_{j}\right|\alpha}\left(\left.u,v\right|\alpha\right)=1$
on $\left[0,1\right]^{2}$ we get can decompose the graphon as
\begin{equation}
h\left(\alpha,u,v\right)=\rho_{\alpha}w_{\alpha}\left(u,v\right).\label{eq: Bickel_Chen_decomp}
\end{equation}
With this parameterization, \citet{Bickel_Chen_PNAS09} and \citet{Bickel_et_al_AS11}
argue that it is natural to let $\rho_{\alpha}=\rho_{\alpha,N}$,
but retain independence of $w_{\alpha}\left(u,v\right)$ from $N$.
Suppressing the $\alpha$ argument (it is never identifiable), they
write
\begin{equation}
\Pr\left(\left.D_{ij}=1\right|U_{i}=u,U_{j}=v\right)=h_{N}\left(u,v\right)=\rho_{N}w\left(u,v\right).\label{eq: Bickel_Chen_Model}
\end{equation}
The rate at which $\rho_{N}\rightarrow0$ then controls the rate of
average degree growth as $N$ grows large. If $\lambda_{N}=\left(N-1\right)\rho_{N}\rightarrow\lambda$
with $0<\lambda<\infty$ as $N\rightarrow\infty$, then the graph
is \emph{sparse. }If $\lambda_{N}=\Omega\left(N\right)$ we say the
graph is \emph{dense}, $\lambda_{N}=\Omega\left(\ln N\right)$ \emph{semi-dense}
etc. Many of the results presented below require that $\lambda_{N}=\Omega\left(N^{\alpha}\right)$
for some $0<\alpha\leq1$, despite the fact that $\lambda_{N}=\Omega\left(1\right)$
might best describe real world networks (where average degree is generally
low even when $N$ is very large). In what follows I will try to highlight
those few known results which can accommodate sparse graph sequences.

\subsection{Further reading}

\citet{Orbanz_Roy_IEEE15} provide a non-technical introduction to
the probability literature on exchangeable random arrays; the monograph
by \citet{Kallenberg_PSIP05} a more complete development. \citet{Crane_PFSNA18}
also surveys this material, at a fairly accessible level, and with
a somewhat contrarian point of view.

\citet{Lovasz_AMS12} provides an overview of the theory of graph
limits. \citet{Diaconis_Janson_RM08} connect much of this theory
to the older literature on exchangeable random arrays.

\section{\label{sec: dyadic_regression}Dyadic regression}

Jan Tinbergen's 1962 report \emph{Shaping the World Economy}, commissioned
by Twenty Century Fund, featured, along with its sculptural title,
a remarkable empirical analysis of trade flows \citep{Tinbergen_SWE62}.
Table VI-1 in that report presented the results of a least squares
fit of the logarithm of exports \emph{from} country $i$ \emph{to}
country $j$ onto a constant, the (log) Gross National Product (GNP)
of both countries $i$ and $j$, the (log) distance between $i$ and
$j$, and a variety of other covariates capturing different relationships
between $i$ and $j$. Tinbergen's \citeyearpar{Tinbergen_SWE62}
analysis was based upon a sample of $N=18$ countries, or $N\left(N-1\right)=306$
directed trading relationships.\footnote{A second analysis, based upon a larger sample of countries, was also
reported upon in Table VI-4 of the report.}

Table VI-1 of \citet{Tinbergen_SWE62} presents the results of what
I will call a dyadic regression analysis. This particular analysis
continues to serve as prototype for a substantial body of empirical
work in international trade \citep{Anderson_AR11}. Dyadic regression
analyses also appear in other areas of social science research. They
have been used, to give just a few recent examples, to study the onset
of war among nation states \citep[e.g.,][]{Russett_Oneal_Book01},
risk-sharing across households \citep[e.g.,][]{deWeerdt_IAP04,Fafchamp_Gubert_JDE07,Attanasio_AEJ12},
supply chain linkages across firms \citep[e.g.,][Table S3]{Atalay_et_al_PNAS11},
the formation of commercial R\&D collaborations \citep[Table 4]{Konig_et_al_RESTAT19},
and co-camping behavior among hunter-gathers \citep[Tables S2 to S49]{Apicella_et_al_Nat12}.

Familiar methods of econometric analysis appropriate for single agent
models, typically utilizing a random sample from the population of
interest, are ill-suited for dyadic settings \citep[cf.,][]{Cameron_Golotvina_WP05}.
Consequently, considerable confusion and controversy is associated
with dyadic analyses in practice \citep[e.g., ][]{Erikson_Pinto_Rader_PA14}.
It is remarkable that, over a half-century after Tinbergen's \citeyearpar{Tinbergen_SWE62}
pioneering analysis of trade flows across countries, and also given
the considerable empirical work that has followed, a textbook treatment
of estimation and inference methods for gravity and other dyadic regression
models remains unavailable.

\subsection{Population and sampling framework}

Let $i\in\mathbb{N}$ index agents in an infinite population of interest.
Associated with each agent is the observable attribute $X_{i}\in\mathbb{X}=\left\{ x_{1},\ldots,x_{L}\right\} $.
This attribute partitions the population into $L=\left|\mathbb{X}\right|$
subpopulations which I will refer to as ``types''. Let $\mathbb{N}\left(x\right)=\left\{ i\thinspace:\thinspace X_{i}=x_{l}\right\} $
be the index set for type $l$ agents. Although $L$ may be very large,
I assume that the size of each subpopulation, $\left|\mathbb{N}\left(x\right)\right|$,
is infinite with positive frequency (i.e., $\Pr\left(X_{i}=x_{l}\right)>0$
for $l=1,\ldots,L$).

When all observable agent attributes are discretely-valued, then $\mathbb{X}$
simply enumerates all distinct combinations of these attributes (e.g.,
$X=x_{l}$ might correspond to a Hispanic female, living in the Florida,
with 12 years of schooling and two college-educated parents). More
heuristically we can think of $\mathbb{X}$ as consisting of the support
points of a multinomial approximation to the support of a bundle of
attributes, some of which might be continuously-valued. The finite
support restriction is made in order to invoke a representation result
due to \citet{Crane_Towsner_JSL18}; I do not think it is essential.

Associated with each ordered pair of agents is the scalar directed
outcome $Y_{ij}\in\mathbb{Y}\subseteq\mathbb{R}$. I will refer to
agent $i$ as the ``ego'' of the directed dyad and agent $j$ as
its ``alter''. In the context of the trade example the ego agent
is the exporting country, the alter the importing one. The \emph{adjacency
matrix }$\left[Y_{ij}\right]_{i,j\in\mathbb{N}}$ collects all such
outcomes into an infinite random array.

From the standpoint of the econometrician, the indexing of agents
within subpopulations homogenous in $X_{i}$ is arbitrary: agents
of the same type are exchangeable. Exchangeability of agents within
subpopulations homogenous in $X_{i}$ induces a particular form of
exchangeability on the adjacency matrix. This form of exchangeability,
in turn, induces a particular form of dependence across the rows and
columns of $\left[Y_{ij}\right]_{i,j\in\mathbb{N}}$. The structure
of this dependence allows for the formulation of LLNs and CLTs.

Let $\sigma_{x}:\mathbb{N}\rightarrow\mathbb{N}$ be any permutation
of a finite number of the agent indices which satisfies the restriction
\begin{equation}
\left[X_{\sigma_{x}\left(i\right)}\right]_{i\in\mathbb{N}}=\left[X_{i}\right]_{i\in\mathbb{N}}.\label{eq: within_type_permutation}
\end{equation}
Condition (\ref{eq: within_type_permutation}) constrains index permutations
to occur among agents of the same type (i.e., we may permute the indices
in $\mathbb{N}\left(x\right)$, but not those within, for example,
$\mathbb{N}\left(x\right)\cup\mathbb{N}\left(x'\right)$). \citet{Crane_Towsner_JSL18}
call a network \emph{relatively exchangeable} with respect to $X$
(or $X$-exchangeable) if
\begin{equation}
\left[Y_{\sigma_{x}\left(i\right)\sigma_{x}\left(j\right)}\right]_{i,j\in\mathbb{N}}\overset{D}{=}\left[Y_{ij}\right]_{i,j\in\mathbb{N}}\label{eq: X-exchangeability}
\end{equation}
for all permutations $\sigma_{x}$ satisfying (\ref{eq: within_type_permutation}).
$X$-exchangeablility is a natural generalization of joint exchangeability,
as introduced in the context of the \citet{Aldous_JMA81} and \citet{Hoover_WP79}
Theorem earlier.

A insightful way to think about condition (\ref{eq: X-exchangeability})
is in terms of vertex colored graphs. Associate $X_{i}$ with the
color of a vertex; condition (\ref{eq: X-exchangeability}) states
that all colored graph isomorphisms are equally probable. Since, when
vertices of the same color are exchangeable, there is no reason to
attach more or less probability to particular isomorphisms of a given
vertex colored graph, any probability model for $\left[Y_{ij}\right]_{i,j\in\mathbb{N}}$
should be consistent with condition (\ref{eq: X-exchangeability}).
As long as $X_{i}$ encodes all the vertex-specific information available
to the econometrician, then $X$-exchangeability is a nature \emph{a
priori }modeling restriction.

Let $\alpha$, $\left\{ \left(U_{i},X_{i}\right)\right\} _{i\geq1}$
and $\left\{ \left(V_{ij},V_{ji}\right)\right\} _{i\geq1,j\geq1}$
be (sequences of ) i.i.d. random variables, additionally independent
of one another, and consider the random array $\left[Y_{ij}^{*}\right]_{i,j\in\mathbb{N}}$
generated according to the rule
\begin{equation}
Y_{ij}^{*}=\tilde{h}\left(\alpha,X_{i},X_{j},U_{i},U_{j},V_{ij}\right)\label{eq: dyadic_regression_DGP}
\end{equation}
with $\tilde{h}:\left[0,1\right]\times\mathbb{X}\times\mathbb{X}\times\left[0,1\right]^{3}\rightarrow\mathbb{Y}$
a measurable function (we normalize $\alpha$, $U_{i}$ and $V_{ij}$
to have support on the unit interval without loss of generality).
Clearly a graph generated according to (\ref{eq: dyadic_regression_DGP})
is $X$-exchangeable \citep[cf.,][Chapter 8]{Crane_PFSNA18}.

Here $\alpha$ is a mixing parameter analogous to the one appearing
in de Finetti's \citeyearpar{deFinetti_AN1931} original representation
theorem. Since this parameter is unidentified, and the focus here
is upon inference conditional on the realized data distribution, I
will depress the dependence of $\tilde{h}$ on $\alpha$, defining
the notation $h\left(X_{i},X_{j},U_{i},U_{j},V_{ij}\right)\overset{def}{\equiv}\tilde{h}\left(\alpha,X_{i},X_{j},U_{i},U_{j},V_{ij}\right)$.
Consistent with earlier terminology, the function $h:\mathbb{X}\times\mathbb{X}\times\left[0,1\right]^{3}\rightarrow\mathbb{Y}$
will be referred to as a graphon.

Because doing so is convenient for the discussion of causal inference
in dyadic settings which follows, (\ref{eq: dyadic_regression_DGP})
makes no presumption of independence between $X_{i}$ and $U_{i}$.
Of course we can always write 
\begin{align*}
Y_{ij}^{*} & =h\left(X_{i},X_{j},F_{\left.U_{1}\right|X_{1}}\left(\left.U_{i}\right|X_{i}\right),F_{\left.U_{1}\right|X_{1}}\left(\left.U_{j}\right|X_{j}\right),V_{ij}\right)\\
 & \overset{def}{\equiv}h^{*}\left(X_{i},X_{j},U_{i}^{*},U_{j}^{*},V_{ij}\right)
\end{align*}
with $U_{i}^{*}=F_{\left.U_{1}\right|X_{1}}\left(\left.U_{i}\right|X_{i}\right)$
equal to unit $i$'s rank among all those units of her type. The resulting
$\left\{ U_{i}^{*}\right\} _{i\geq1}$ sequence of 0-to-1 uniform
random variables is independent of $\left\{ X_{i}\right\} _{i\geq1}$
by construction \citep[cf.,][]{Graham_Imbens_Ridder_NBER10}.

Depending on the context, it is fine to work with either $h$ or $h^{*}$,
but, as explained below, the former is more useful for making causal
arguments; hence I allow for dependence between the \emph{observed}
covariate vector $X_{i}$ and the \emph{unobserved} unit-specific
effect $U_{i}$ in what follows (akin to a correlated random effects
panel data analysis). The nuances involved will become clear as we
proceed.

Networks generated by (\ref{eq: dyadic_regression_DGP}) exhibit a
very particular pattern of dependence across the rows and columns
of $\left[Y_{ij}\right]_{i,j\in\mathbb{N}}$. Consider, without loss
of generality, agents $1$, $2$, $3$ and $4$. The outcomes $Y_{12}$
and $Y_{34}$ are independent of one another; the outcomes $Y_{12}$
and $Y_{13}$ are, however, dependent. These two outcomes share agent
$1$ in common; the value of $X_{1}$ and $U_{1}$ influences both
$Y_{12}$ and $Y_{13}$, inducing dependence. But conditional on $\left(X_{1},X_{2},X_{3}\right)$
and $\left(U_{1},U_{2},U_{3}\right)$, $Y_{12}$ and $Y_{13}$ are
independent; if we condition on the observed covariates $\left(X_{1},X_{2},X_{2}\right)$
alone, however, they remain dependent. Finally $Y_{12}$ and $Y_{21}$
are dependent, this dependence holds even conditional on $\left(X_{1},X_{2}\right)$
and $\left(U_{1},U_{2}\right)$ because $V_{12}$ and $V_{21}$ may
covary.

In words we have independence across dyads sharing no agents in common
(exports from Japan to the United States and from Turkey to Germany),
dependence across those sharing at least one agent in common (exports
from Japan to the United States and from Japan to the United Kingdom),
and ``even more'' dependence across dyads sharing both agents in
common (e.g., exports from Japan to the United States and vice-versa).

Models with this type of dependence structure, as already noted, are
called conditionally independent dyad (CID) models. The ``conditionally
independent'' terminology reflects the fact that the outcomes $Y_{12}$
and $Y_{13}$, associated with a pair of dyads sharing one agent in
common, can be rendered independent of one another by conditioning
on the observed covariates $\left(X_{1},X_{2},X_{2}\right)$ \emph{as
well as} the unobserved latent attributes $\left(U_{1},U_{2},U_{3}\right)$.

\citet{Crane_Towsner_JSL18}, in an extension of the Aldous-Hoover
representation result described earlier, show that for any $X$-exchangeable
random array $\left[Y_{ij}\right]_{i,j\in\mathbb{N}}$ there exists
another array $\left[Y_{ij}^{*}\right]_{i,j\in\mathbb{N}}$ generated
according to (\ref{eq: dyadic_regression_DGP}) such that the two
arrays have the same distribution:
\begin{equation}
\left[Y_{ij}\right]_{i,j\in\mathbb{N}}\overset{D}{=}\left[Y_{ij}^{*}\right]_{i,j\in\mathbb{N}}.\label{eq: Crane_Towsner_Theorem}
\end{equation}
We can therefore use (\ref{eq: dyadic_regression_DGP}) as an `as
if' non-parametric data generating process for $\left[Y_{ij}\right]_{i,j\in\mathbb{N}}$;
this will facilitate a variety of probabilistic calculations (e.g.,
computing conditional expectations, variances and, especially, covariances).

Let $i=1,\ldots,N$ index a simple random sample from the target population.
For each of the $N$ sampled units the econometrician observes $X_{i}$
and for each of the $\tbinom{N}{2}$ sampled dyads she observes $\left(Y_{ij},Y_{ji}\right)$.
From hereon I will assume that $Y_{ii}$ is undefined (normalized
to zero for convenience). Adapting what follows to accommodate self-loops
is straightforward.

\subsection{Composite likelihood}

Let $\left\{ f_{\left.Y_{12}\right|X_{1},X_{2}}\left(\left.Y_{12}\right|X_{1},X_{2};\theta\right):\theta\in\Theta\subseteq\mathbb{R}^{\dim\left(\theta\right)}\right\} $
be a parametric family of distributions for the conditional distribution
of $Y_{12}$ given $X_{1}$ and $X_{2}$. For example, \citet{SantosSilva_Tenreyro_RESTAT06}
model trade from exporter $i$ to importer $j$ given covariates as
a Poisson random variable:
\begin{equation}
f_{\left.Y_{12}\right|X_{1},X_{2}}\left(\left.y_{ij}\right|X_{i},X_{j};\theta\right)=\exp\left[-\exp\left[W_{ij}'\theta\right]\right]\frac{\left\{ \exp\left[W_{ij}'\theta\right]\right\} ^{y_{ij}}}{y_{ij}!}\label{eq: poisson_marginal}
\end{equation}
with $y_{ij}=0,1,2,\ldots$ and $W_{ij}\overset{def}{\equiv}w\left(X_{i},X_{j}\right)$
a known $J\times1$ vector of functions of $X_{i}$ and $X_{j}$.
As an example, if $X_{i}=\left(\ln\mathtt{GDP_{i}},\mathtt{LAT_{i}},\mathtt{LONG_{i}}\right)'$,
then setting
\[
W_{ij}=\left(\begin{array}{c}
\ln\mathtt{GDP_{i}}\\
\ln\mathtt{GDP_{j}}\\
\ln\left[\left(\mathtt{LAT_{i}}-\mathtt{LAT_{j}}\right)^{2}+\left(\mathtt{LONG_{i}}-\mathtt{LONG_{j}}\right)^{2}\right]^{1/2}
\end{array}\right)
\]
results in a basic gravity trade model specification.\footnote{In practice distance is measured using the so-called great circle
formula; which accounts for the curvature of the Earth's surface.}

Similar to \citet{Russett_Oneal_Book01}, a researcher might model
the conditional probability that country $i$ attacks country $j$
using logistic regression such that
\begin{equation}
f_{\left.Y_{12}\right|X_{1},X_{2}}\left(\left.y_{ij}\right|X_{i},X_{j};\theta\right)=\left[F\left(W_{ij}'\theta\right)\right]^{y_{ij}}\left[1-F\left(W_{ij}'\theta\right)\right]^{1-y_{ij}}\label{eq: logit_marginal}
\end{equation}
with $y_{ij}=0,1$ and $F\left(W_{ij}'\theta\right)=\exp\left(W_{ij}'\theta\right)/\left[1+\exp\left(W_{ij}'\theta\right)\right]$.

An important feature of both (\ref{eq: poisson_marginal}) and (\ref{eq: logit_marginal})
is that they only specify the marginal distribution of $Y_{ij}$ given
$X_{i}$ and $X_{j}$. The econometrician is not asserting, for example,
that
\[
f_{\left.Y_{12},Y_{13}\right|X_{1},X_{2},X_{3}}\left(\left.y_{12},y_{13}\right|X_{1},X_{2},X_{3};\theta\right)=f_{\left.Y_{12}\right|X_{1},X_{2}}\left(\left.y_{12}\right|X_{1},X_{2};\theta\right)f_{\left.Y_{12}\right|X_{1},X_{2}}\left(\left.y_{13}\right|X_{1},X_{3};\theta\right),
\]
since doing so would imply independence of $Y_{12}$ and $Y_{13}$
given covariates; but such dependence is precisely the complication
under consideration. Formulating a conditional likelihood for the
entire adjacency matrix $\mathbf{Y}\overset{def}{\equiv}\left[Y_{ij}\right]_{1\leq i,j\leq N,i\neq j}$
given $\mathbf{X}\overset{def}{\equiv}\left[X_{i}\right]_{1\leq i\leq N}$
would require an explicit specification of the dependence structure
across dyads sharing agents in common. In contrast $f_{\left.Y_{12}\right|X_{1},X_{2}}\left(\left.Y_{12}\right|X_{1},X_{2};\theta\right)$,
which is a model for the marginal distribution of $Y_{12}$ alone,
does not require modeling this dependence.

Let $l_{ij}\left(\theta\right)=\ln f_{\left.Y_{12}\right|X_{1},X_{2}}\left(\left.Y_{ij}\right|X_{i},X_{j};\theta\right)$
and consider the estimator which chooses $\hat{\theta}$ to maximize:
\begin{equation}
\hat{\theta}=\arg\underset{\theta\in\Theta}{\max}\frac{1}{N\left(N-1\right)}\sum_{i=1}^{N}\sum_{j\neq i}l_{ij}\left(\theta\right).\label{eq: composite_loglikelihood}
\end{equation}
Because its summands are not independent of one another -- at least
those sharing indices in common are not -- (\ref{eq: composite_loglikelihood})
does not correspond to a log-likelihood function for $\mathbf{Y}$
given $\mathbf{X}$. Instead it corresponds to what is sometimes called
a composite log-likelihood \citep[e.g.,][]{Lindsay_CM88,Cox_Reid_BM04}.
A composite likelihood ``is an inference function derived by multiplying
a collection of component likelihoods'' \citep[p. 5]{Varin_et_al_SS11}.
Although (\ref{eq: composite_loglikelihood}) fails to correctly represent
the dependence structure across the elements of the adjacency matrix,
if it is based upon a correctly specified marginal density, $\hat{\theta}$
generally will be consistent for $\theta_{0}$. This follows because
the derivative of composite log-likelihood is mean zero at $\theta=\theta_{0}$
under correct specification of its components. While an appropriately
specified composite log-likelihood typically delivers a valid estimating
equation, accurate inference is more challenging, since the unmodeled
dependence structure in the data does need to be explicitly taken
into account at the inference stage.

\subsection{Limit distribution}

Consider a mean value expansion of the first order condition associated
with the maximizer of (\ref{eq: composite_loglikelihood}).\footnote{A general result on consistency of $\hat{\theta}$ could be constructed
by adapting the results of \citet{Honore_Powell_JOE94}.} Such an expansion yields, after some re-arrangement,
\[
\sqrt{N}\left(\hat{\theta}-\theta_{0}\right)=\left[-H_{N}\left(\bar{\theta}\right)\right]^{+}\sqrt{N}S_{N}\left(\theta_{0}\right)
\]
with $\bar{\theta}$ a mean value between $\hat{\theta}$ and $\theta_{0}$
which may vary from row to row and the $+$ superscript denoting a
Moore-Penrose inverse. Here $S_{N}\left(\theta_{0}\right)$ is the
``score'' vector
\begin{equation}
S_{N}\left(\theta\right)=\frac{1}{N}\frac{1}{N-1}\sum_{i}\sum_{j\neq i}s_{ij}\left(Z_{ij},\theta\right)\label{eq: S_N}
\end{equation}
with $s\left(Z_{ij},\theta\right)=\partial l_{ij}\left(\theta\right)/\partial\theta$
for $Z_{ij}=\left(Y_{ij},X_{i}',X_{j}'\right)'$ and $H_{N}\left(\theta\right)=\frac{1}{N}\frac{1}{N-1}\sum_{i}\sum_{j\neq i}\frac{\partial^{2}l_{ij}\left(\theta\right)}{\partial\theta\partial\theta'}$.
If the Hessian matrix $H_{N}\left(\bar{\theta}\right)$ converges
in probability to the invertible matrix $\Gamma_{0}$, as I will assume,
then
\[
\sqrt{N}\left(\hat{\theta}-\theta_{0}\right)=-\Gamma_{0}^{-1}\sqrt{N}S_{N}\left(\theta_{0}\right)+o_{p}\left(1\right)
\]
so that the asymptotic sampling properties of $\sqrt{N}\left(\hat{\theta}_{\mathrm{}}-\theta_{0}\right)$
will be driven by the behavior of $\sqrt{N}S_{N}\left(\theta_{0}\right)$.

As with the composite log-likelihood criterion function, the summands
of $\sqrt{N}S_{N}\left(\theta_{0}\right)$ are not independent of
one another \citep[cf.,][]{Cameron_Golotvina_WP05,Fafchamp_Gubert_JDE07}.
A standard central limit theorem cannot be used to demonstrate asymptotic
normality of $\sqrt{N}S_{N}\left(\theta_{0}\right)$. Fortunately
$S_{N}\left(\theta_{0}\right)$, although not a U-Statistic, has a
dependence structure similar to one. This insight can be used to derive
the limit properties of $\sqrt{N}\left(\hat{\theta}-\theta_{0}\right)$.

Begin by re-writing $S_{N}\left(\theta_{0}\right)$ as
\begin{equation}
S_{N}=\tbinom{N}{2}^{-1}\sum_{i=1}^{N-1}\sum_{j=i+1}^{N}\frac{s_{ij}+s_{ji}}{2},\label{eq: S_N_U_statistic_like}
\end{equation}
where $s_{ij}\overset{def}{\equiv}s\left(Z_{ij},\theta_{0}\right)$
and $S_{N}\overset{def}{\equiv}S_{N}\left(\theta_{0}\right).$ While
(\ref{eq: S_N_U_statistic_like}) has the cursory appearance of a
U-Statistic it is, in fact not one: $Y_{ij}$, which enters $s_{ij}$,
varies at the dyad level, hence $S_{N}$ is not a function of $N$
i.i.d. random variables.

Let $\mathbf{U}=\left[U_{i}\right]_{1\leq i\leq N}$; the projection
of $S_{N}$ onto the \emph{observed} covariate matrix $\mathbf{X}$
and the \emph{unobserved} vector of unit-specific effects $\mathbf{U}$
equals:
\begin{equation}
V_{N}\overset{def}{\equiv}\mathbb{E}\left[\left.S_{N}\right|\mathbf{X},\mathbf{U}\right]=\tbinom{N}{2}^{-1}\sum_{i<j}\frac{\bar{s}_{ij}+\bar{s}_{ji}}{2}\label{eq: first_projection}
\end{equation}
with $\bar{s}_{ij}\overset{def}{\equiv}\bar{s}\left(X_{i},U_{i},X_{j},U_{j}\right)$
and $\bar{s}\left(X_{i},U_{i},X_{j},U_{j}\right)\overset{def}{\equiv}\mathbb{E}\left[\left.s\left(Z_{ij},\theta_{0}\right)\right|X_{i},U_{i},X_{j},U_{j}\right].$
The expression to the right of the equality in (\ref{eq: first_projection})
follows from the \citet{Crane_Towsner_JSL18} representation (\ref{eq: dyadic_regression_DGP})
and independence of $V_{ij}$ from $\left(\mathbf{X},\mathbf{U}\right)$.

An important observation is that the projection (\ref{eq: first_projection})
\emph{is} a U-statistic of order two: specifically it is a summation
over all $\tbinom{N}{2}$ dyads that can be formed from the i.i.d.
sample $\left\{ \left(X_{i},U_{i}\right)\right\} _{1\leq i\leq N}$.
Unusually our U-statistic is defined in terms of a combination of
both \emph{observed} $\left\{ X_{i}\right\} _{1\leq i\leq N}$ and
\emph{unobserved} $\left\{ U_{i}\right\} _{1\leq i\leq N}$ random
variables.

The projection error $T_{N}=S_{N}-V_{N}$ consists of a summation
of $\tbinom{N}{2}$ conditionally uncorrelated summands; hence $\mathbb{V}\left(T_{N}\right)=\tbinom{N}{2}^{-1}\mathbb{E}\left(\mathbb{V}\left(\left.\frac{s_{12}+s_{21}}{2}\right|X_{1},U_{1},X_{2},U_{2}\right)\right)=O\left(N^{-2}\right)$
(as long as $\mathbb{V}\left(\left.\frac{s_{12}+s_{21}}{2}\right|X_{1},U_{1},X_{2},U_{2}\right)$
does not change as $N\rightarrow\infty$). We also have that $T_{N}$
and $V_{N}$ are uncorrelated by construction.

Although we cannot numerically compute $V_{N}$ -- even if $\theta_{0}$
is known -- because the $\left\{ U_{i}\right\} _{1\leq i\leq N}$
are unobserved, we can use the theory of U-statistics to characterize
its sampling properties as $N\rightarrow\infty$. Decomposing $V_{N}$
into a Hájek projection and a second remainder term yields \citep[e.g.,][]{Lehmann_LSTBook99,vanderVaart_ASBook00}:
\[
V_{N}=V_{1N}+V_{2N}
\]
where, \label{def: ego_alter_projections}defining $\bar{s}^{e}\left(x,u\right)=\mathbb{E}\left[\bar{s}\left(x,u,X_{1},U_{1}\right)\right]$
and $\bar{s}^{a}\left(x,u\right)=\mathbb{E}\left[\bar{s}\left(X_{1},U_{1},x,u\right)\right]$,
\begin{align}
V_{1N}= & \frac{2}{N}\sum_{i=1}^{N}\left\{ \frac{\bar{s}_{1}^{e}\left(X_{i},U_{i}\right)+\bar{s}_{1}^{a}\left(X_{i},U_{i}\right)}{2}\right\} \label{eq: hajek_projection}\\
V_{2N}= & \tbinom{N}{2}^{-1}\sum_{i<j}\left\{ \frac{\bar{s}_{ij}+\bar{s}_{ji}}{2}\right.\nonumber \\
 & \left.-\frac{\bar{s}_{1}^{e}\left(X_{i},U_{i}\right)+\bar{s}_{1}^{a}\left(X_{i},U_{i}\right)}{2}-\frac{\bar{s}_{1}^{e}\left(X_{j},U_{j}\right)+\bar{s}_{1}^{a}\left(X_{j},U_{j}\right)}{2}\right\} .\label{eq: hajek_projection_error}
\end{align}
The superscript `e' denotes `ego' since it is the ego unit's attributes
which are being held fixed in the average used to compute $\bar{s}^{e}\left(x,u\right)$.
Similarly the `a' denotes `alter', since it is that unit's attributes
which are held fixed when defining $\bar{s}^{a}\left(x,u\right)$.
Conveniently $V_{1N}$ is a sum of i.i.d. random variables to which,
after scaling by $\sqrt{N}$, a CLT may be applied. Furthermore it
can be shown that $\mathbb{V}\left(V_{2N}\right)=O\left(N^{-2}\right)$.

Putting these results together yields the asymptotically linear representation
\begin{align*}
\sqrt{N}\left(\hat{\theta}-\theta_{0}\right)= & -\Gamma_{0}^{-1}\sqrt{N}\left(V_{1N}+V_{2N}+T_{N}\right)+o_{p}\left(1\right)\\
= & -\Gamma_{0}^{-1}\sqrt{N}V_{1N}+o_{p}\left(1\right)\\
= & -\Gamma_{0}^{-1}\frac{2}{\sqrt{N}}\sum_{i=1}^{N}\left\{ \frac{\bar{s}_{1}^{e}\left(X_{i},U_{i}\right)+\bar{s}_{1}^{a}\left(X_{i},U_{i}\right)}{2}\right\} +o_{p}\left(1\right),
\end{align*}
and hence a limit distribution for $\sqrt{N}\left(\hat{\theta}-\theta_{0}\right)$
of
\begin{equation}
\sqrt{N}\left(\hat{\theta}-\theta_{0}\right)\overset{D}{\rightarrow}\mathcal{N}\left(0,4\left(\Gamma_{0}'\Sigma_{1}^{-1}\Gamma_{0}\right)^{-1}\right)\label{eq: dyadic_regression_limit_distribution}
\end{equation}
where $\Sigma_{1}=\mathbb{V}\left(\frac{\bar{s}_{1}^{e}\left(X_{1},U_{1}\right)+\bar{s}_{1}^{a}\left(X_{1},U_{1}\right)}{2}\right).$
Although $S_{N}$ is not a U-statistic, under the assumptions maintained
here, its limit distribution coincides with that of $V_{N}$ (which
is a U-statistic).

Before turning to practicalities of variance estimation I will present
a useful property of the kernel entering the Hájek Projection, $V_{1N}$
above. By the usual conditional mean zero property of the score function
we have that $\mathbb{E}\left[\left.s\left(Z_{12};\theta_{0}\right)\right|X_{1}=x_{1},X_{2}=x_{2}\right]=0$
as long as marginal density of $Y_{12}$ given $X_{1}$ and $X_{2}$
is correctly specified. This property can be used to show that the
averages, $\bar{s}_{1}^{e}\left(X_{1},U_{1}\right)$ and $\bar{s}_{1}^{a}\left(X_{1},U_{1}\right)$,
are also conditionally mean zero. Taking the former we have that
\begin{align*}
\mathbb{E}\left[\left.\bar{s}_{1}^{e}\left(X_{1},U_{1}\right)\right|X_{1}=x_{1}\right] & =\int\left[\int\int\bar{s}\left(x_{1},u_{1},x_{2},u_{2}\right)f_{X_{1},U_{1}}\left(x_{2},u_{2}\right)\mathrm{d}x_{2}\mathrm{d}u_{2}\right]f_{\left.U_{1}\right|X_{1}}\left(\left.u_{1}\right|x_{1}\right)\mathrm{d}u_{1}\\
 & =\int\left[\int\int\bar{s}\left(x_{1},u_{1},x_{2},u_{2}\right)f_{\left.U_{1}\right|X_{1}}\left(\left.u_{2}\right|x_{2}\right)f_{X_{1}}\left(x_{2}\right)\mathrm{d}x_{2}\mathrm{d}u_{2}\right]f_{\left.U_{1}\right|X_{1}}\left(\left.u_{1}\right|x_{1}\right)\mathrm{d}u_{1}\\
 & =\int\left[\int\int\bar{s}\left(x_{1},u_{1},x_{2},u_{2}\right)f_{\left.U_{1}\right|X_{1}}\left(\left.u_{1}\right|x_{1}\right)\mathrm{d}u_{1}f_{\left.U_{1}\right|X_{1}}\left(\left.u_{2}\right|x_{2}\right)\mathrm{d}u_{2}\right]f_{X_{1}}\left(x_{2}\right)\mathrm{d}x_{2}\\
 & =\int\mathbb{E}\left[\left.\bar{s}\left(X_{1},U_{1},X_{2},U_{2}\right)\right|X_{1}=x_{1},X_{2}=x_{2}\right]f_{X_{1}}\left(x_{2}\right)\mathrm{d}x_{2}\\
 & =\int\mathbb{E}\left[\left.s\left(Z_{12};\theta_{0}\right)\right|X_{1}=x_{1},X_{2}=x_{2}\right]f_{X_{1}}\left(x_{2}\right)\mathrm{d}x_{2}\\
 & =0
\end{align*}
where the first equality follows from the definition of $\bar{s}_{1}^{e}\left(X_{1},U_{1}\right)$,
the third from a change in the order of integration, and the second
to last from iterated expectations. These calculations imply that
\[
\mathbb{E}\left[\left.\bar{s}_{1}^{e}\left(X_{1},U_{1}\right)\right|X_{1},X_{2}\right]=\mathbb{E}\left[\left.\bar{s}_{1}^{a}\left(X_{1},U_{1}\right)\right|X_{1},X_{2}\right]=0
\]
and hence that $\left[\bar{s}_{1}^{e}\left(X_{1},U_{1}\right)+\bar{s}_{1}^{a}\left(X_{1},U_{1}\right)\right]/2$
is conditionally mean-zero given $X_{1}$ and $X_{2}$. This property
will be helpful for understanding the asymptotic precision of estimates
of various causal parameters introduced below.

\subsection{Variance estimation}

In order to conduct inference on the components of $\theta_{0}$,
an estimate of the variance of $\sqrt{N}\left(\hat{\theta}-\theta_{0}\right)$
is required. Although the distribution theory outlined above is novel\footnote{See \citet{Tabord-Meehan_JBES18}, \citet{Davezies_et_al_arXiv2019}
and, especially, \citet{Menzel_arXiv17} for related independent work.}, the history of variance estimation for ``dyadic statistics'' goes
back at least to \citet{Holland_Leinhardt_SM76}. In economics, a
variance estimator first proposed by \citet{Fafchamp_Gubert_JDE07},
is widely -- although not universally -- used for dyadic regression
analysis. In order to understand extant approaches to variance estimation,
as well as to propose new ones, it is helpful to examine the structure
of $S_{N}$'s variance in detail.

The arguments used to derive the limit distribution of $\sqrt{N}\left(\hat{\theta}-\theta_{0}\right)$
above suggest that it may be insightful to think about the variance
of $S_{N}$ in terms of the ANOVA decomposition
\begin{align}
\mathbb{V}\left(S_{N}\right) & =\mathbb{V}\left(\mathbb{E}\left[\left.S_{N}\right|\mathbf{X},\mathbf{U}\right]\right)+\mathbb{E}\left[\mathbb{V}\left(\left.S_{N}\right|\mathbf{X},\mathbf{U}\right)\right]\nonumber \\
 & =\mathbb{V}\left(V_{N}\right)+\mathbb{V}\left(T_{N}\right)\nonumber \\
 & =\mathbb{V}\left(V_{1N}\right)+\mathbb{V}\left(V_{2N}\right)+\mathbb{V}\left(T_{N}\right),\label{eq: Var_S_N}
\end{align}
where the second and third equalities follow from the decomposition
for $S_{N}$ developed in the previous subsection.

Let $p=1,2$ equal the number of agents dyads $\left\{ i_{1},i_{2}\right\} $
and $\left\{ j_{1},j_{2}\right\} $ share common and define the matrix
$\Sigma_{p}$ as
\begin{align}
\Sigma_{p} & \overset{def}{\equiv}\mathbb{C}\left(\frac{\bar{s}\left(X_{i_{1}},U_{i_{1}},X_{i_{2}},U_{i_{2}}\right)+\bar{s}\left(X_{i_{2}},U_{i_{2}},X_{i_{1}},U_{i_{1}}\right)}{2}\right.,\label{eq: SIGMAp}\\
 & \left.\frac{\bar{s}\left(X_{j_{1}},U_{j_{1}},X_{j_{2}},U_{j_{2}}\right)'+\bar{s}\left(X_{j_{2}},U_{j_{2}},X_{j_{1}},U_{j_{1}}\right)'}{2}\right)\nonumber \\
 & =\mathbb{C}\left(\mathbb{E}\left[\left.\frac{s_{i_{1}i_{2}}+s_{i_{2}i_{1}}}{2}\right|X_{i_{1}},U_{i_{1}},X_{i_{2}},U_{i_{2}}\right],\mathbb{E}\left[\left.\frac{s_{j_{1}j_{2}}+s_{j_{2}j_{1}}}{2}\right|X_{j_{1}},U_{j_{1}},X_{j_{2}},U_{j_{2}}\right]'\right).\nonumber 
\end{align}
When $p=1$ we have
\begin{align*}
\Sigma_{1} & =\mathbb{C}\left(\mathbb{E}\left[\left.\frac{s_{12}+s_{21}}{2}\right|X_{1},U_{1},X_{2},U_{2}\right],\mathbb{E}\left[\left.\frac{s_{13}+s_{31}}{2}\right|X_{1},U_{1},X_{3},U_{3}\right]'\right)\\
 & =\mathbb{C}\left(\frac{s_{12}+s_{21}}{2},\frac{s_{13}'+s_{31}'}{2}\right),
\end{align*}
with the second equality an implication of conditional independence
of $\frac{s_{12}+s_{21}}{2}$ and $\frac{s_{13}+s_{31}}{2}$ given
$\left(X_{1},X_{2},X_{3}\right)$ and $\left(U_{1},U_{2},U_{3}\right)$.
Hence $\Sigma_{1}$ equals the covariance between any pair of summands
in $S_{N}$ -- see equation (\ref{eq: S_N}) above -- sharing an
index in common. There are many such pairs of summands in $S_{N}$
(actually $\frac{1}{2}N\left(N-1\right)\left(N-2\right)$ such pairs-of-dyads).
It is the preponderance of these non-zero covariances that drives
their importance for understanding the sampling distribution of $\sqrt{N}\left(\hat{\theta}-\theta_{0}\right)$.

In a slight abuse of notation, additionally define the matrix
\begin{equation}
\Sigma_{3}\overset{def}{\equiv}\mathbb{E}\left[\mathbb{V}\left(\left.\frac{s_{12}+s_{21}}{2}\right|X_{1},U_{1},X_{2},U_{2}\right)\right].\label{eq: SIGMA3}
\end{equation}
Calculations analogous to those use in variance analyses for U-statistics
\citep[e.g.,][]{Hoeffding_AMS48,Lehmann_LSTBook99} yield
\begin{align}
\mathbb{V}\left(V_{1N}\right) & =\frac{4\Sigma_{1}}{N}\label{eq: Var_V1_N}\\
\mathbb{V}\left(V_{2N}\right) & =\frac{2}{N\left(N-1\right)}\left(\Sigma_{2}-2\Sigma_{1}\right)\label{eq: Var_V2_N}\\
\mathbb{V}\left(T_{N}\right) & =\frac{2}{N\left(N-1\right)}\Sigma_{3},\label{eq: Var_T_N}
\end{align}
such that, defining the notation $\Omega\overset{def}{\equiv}\mathbb{V}\left(\sqrt{N}S_{N}\right),$
from (\ref{eq: Var_S_N}), (\ref{eq: Var_V1_N}), (\ref{eq: Var_V2_N})
and (\ref{eq: Var_T_N}):
\begin{equation}
\Omega=4\Sigma_{1}+\frac{2}{N-1}\left(\Sigma_{2}+\Sigma_{3}-2\Sigma_{1}\right).\label{eq: OMEGA}
\end{equation}

Consistent with the form of the limit distribution given in (\ref{eq: dyadic_regression_limit_distribution}),
the variances of $V_{2N}$ and $T_{N}$ are of smaller order. Although
the contribution of these terms to the variance of $\sqrt{N}S_{N}$
is asymptotically negligible, their contribution for finite $N$ need
not be. As alluded to earlier, the appearance of the covariance $\Sigma_{1}$
as the leading term in (\ref{eq: OMEGA}) reflects the large number
of non-zero covariance terms that arise when the variance operator
is applied to the sum $S_{N}=\tbinom{N}{2}^{-1}\sum_{i=1}^{N-1}\sum_{j=i+1}^{N}\frac{s_{ij}+s_{ji}}{2}$.
In practice, especially if $h\left(x_{1},x_{2},u_{1},u_{2},v_{12}\right)$
is nearly constant in $u_{1}$ and $u_{2}$, $\Sigma_{1}$ may be
small in magnitude. In such cases it may be that $4\Sigma_{1}$ and
$\frac{2}{N-1}\left(\Sigma_{2}+\Sigma_{3}-2\Sigma_{1}\right)$ are
of comparable magnitude for modest $N$. Using a variance estimator
which includes estimates of both these terms may therefore result
in tests with better size and power properties \citep[cf.,][]{Hoeffding_AMS48,Graham_Imbens_Ridder_QE14,Cattaneo_et_al_ET14}.
To construct such an estimator I propose using analog estimates of
the terms appearing to the right of the equality in (\ref{eq: OMEGA}).

\subsubsection*{A benchmark analog variance estimate}

A natural analog estimate of $\Sigma_{1}$, the leading variance term,
is
\begin{align}
\hat{\Sigma}_{1}= & \tbinom{N}{3}^{-1}\sum_{i=1}^{N-2}\sum_{j=i+1}^{N-1}\sum_{k=j+1}^{N}\frac{1}{3}\left\{ \left(\frac{\hat{s}_{ij}+\hat{s}_{ji}}{2}\right)\left(\frac{\hat{s}_{ik}+\hat{s}_{ki}}{2}\right)'\right.\nonumber \\
 & \left.\left(\frac{\hat{s}_{ij}+\hat{s}_{ji}}{2}\right)\left(\frac{\hat{s}_{jk}+\hat{s}_{kj}}{2}\right)'+\left(\frac{\hat{s}_{ik}+\hat{s}_{ki}}{2}\right)\left(\frac{\hat{s}_{jk}+\hat{s}_{kj}}{2}\right)'\right\} ,\label{eq: SIGMA1_hat}
\end{align}
with $\hat{s}_{ij}\overset{def}{\equiv}s\left(Z_{ij},\hat{\theta}\right)$.
Equation (\ref{eq: SIGMA1_hat}) is a summation over all $\tbinom{N}{3}=\frac{1}{6}N\left(N-1\right)\left(N-3\right)$
triads in the dataset. Each triad $ijk$ can be further divided into
three pairs of dyads, $\left\{ ij,ik\right\} $, $\left\{ ij,jk\right\} $
and $\left\{ ik,jk\right\} $, with each such pair sharing exactly
one agent in common. Equation (\ref{eq: SIGMA1_hat}) corresponds
to the sample covariance of $\left(\hat{s}_{ij}+\hat{s}_{ji}\right)/2$
and $\left(\hat{s}_{ik}+\hat{s}_{ki}\right)/2$ across these $3\tbinom{N}{3}$
pairs of dyads.

To construct an estimate of $\mathbb{V}\left(\sqrt{N}S_{N}\right)$
separate estimates of $\Sigma_{2}$ and $\Sigma_{3}$ are not required,
only their sum is needed. Using an ANOVA decomposition we can express
this sum as
\begin{align*}
\Sigma_{2}+\Sigma_{3} & =\mathbb{V}\left(\mathbb{E}\left[\left.\frac{s_{12}+s_{21}}{2}\right|X_{1},U_{1},X_{2},U_{2}\right]\right)+\mathbb{E}\left[\mathbb{V}\left(\left.\frac{s_{12}+s_{21}}{2}\right|X_{1},U_{1},X_{2},U_{2}\right)\right]\\
 & =\mathbb{V}\left(\frac{s_{12}+s_{21}}{2}\right).
\end{align*}
This suggests the analog estimate
\begin{equation}
\widehat{\Sigma_{2}+\Sigma_{3}}=\tbinom{N}{2}^{-1}\sum_{i=1}^{N-1}\sum_{j=i+1}^{N}\left(\frac{\hat{s}_{ij}+\hat{s}_{ji}}{2}\right)\left(\frac{\hat{s}_{ij}+\hat{s}_{ji}}{2}\right)'.\label{eq: SIGMA2_SIGMA3_hat}
\end{equation}

From (\ref{eq: OMEGA}), (\ref{eq: SIGMA1_hat}) and (\ref{eq: SIGMA2_SIGMA3_hat})
we get the variance estimate
\begin{equation}
\mathbb{\hat{V}}\left(\sqrt{N}\left(\hat{\theta}-\theta_{0}\right)\right)=\left(\hat{\Gamma}'\hat{\Omega}^{-1}\hat{\Gamma}\right)^{-1}\label{eq: Var_theta_hat}
\end{equation}
where
\begin{align}
\hat{\Gamma}= & H_{N}\left(\hat{\theta}\right)\label{eq: GAMMA_hat}\\
\hat{\Omega}= & 4\hat{\Sigma}_{1}+\frac{2}{N-1}\left(\widehat{\Sigma_{2}+\Sigma_{3}}-2\hat{\Sigma}_{1}\right).\label{eq: OMEGA_hat}
\end{align}

\subsubsection*{\citet{Fafchamp_Gubert_JDE07} variance estimate}

Just over a decade ago, \citet{Fafchamp_Gubert_JDE07} presented a
variance-covariance estimator for $\hat{\theta}_{\mathrm{}}$ that
they informally argued leads to asymptotically correct inference.\footnote{This estimator has been further explored by \citet{Cameron_Miller_WP14},
\citet{Aronow_et_al_PA17} and \citet{Tabord-Meehan_JBES18}.} They proposed estimating the variance of $\sqrt{N}S_{N}$ by
\begin{align}
\hat{\Omega}_{\mathrm{FG}}= & \frac{1}{N\left(N-1\right)^{2}}\sum_{i_{1}}\sum_{i_{2}\neq i_{1}}\sum_{j_{1}}\sum_{j_{2}\neq j_{1}}C_{i_{1}i_{2}j_{1}j_{2}}\hat{s}_{i_{1}i_{2}}\hat{s}_{j_{1}j_{2}}',\label{eq: AVar_Fafchampfs_Gubert}
\end{align}
where $C_{i_{1}i_{2}j_{1}j_{2}}=1$ if $i_{1}=j_{1}$, $i_{2}=j_{2}$,
$i_{1}=j_{2}$ or $i_{2}=j_{1}$ and zero otherwise (here the `or'
is inclusive).\footnote{My definition of $\hat{\Omega}_{\mathrm{FG}}$ actually differs slightly
from the one given by \citet{Fafchamp_Gubert_JDE07}, due to a finite
sample correction term introduced in the latter. Their expression
also appears to include a notational inconsistency with $N$ (apparently)
denoting both the number of agents as well as the number of dyads
(here $n=\frac{1}{2}N\left(N-1\right)$) in different components of
the expression. Once these typos are corrected (\ref{eq: AVar_Fafchampfs_Gubert})
agrees with their expression up to a finite sample correction.} Equation (\ref{eq: AVar_Fafchampfs_Gubert}) is a summation across
$\tbinom{N}{2}\times\tbinom{N}{2}$ ``pairs-of-pairs'' or pairs
of dyads. As noted above, there are $3\tbinom{N}{3}=\frac{1}{2}N\left(N-1\right)\left(N-2\right)$
unique pairs of dyads sharing one agent in common; but each of these
pairs of dyads is counted eight different times in (\ref{eq: AVar_Fafchampfs_Gubert}).
Likewise there are $\tbinom{N}{2}=\frac{1}{2}N\left(N-1\right)$ pairs
of dyads sharing both agents in common (i.e., straight up dyads) and
each of these pairs is counted four different times in (\ref{eq: AVar_Fafchampfs_Gubert}).
From this observation we have that
\begin{align*}
\hat{\Omega}_{\mathrm{FG}} & =\frac{1}{N\left(N-1\right)^{2}}\left[8\times\frac{1}{2}N\left(N-1\right)\left(N-2\right)\hat{\Sigma}_{1}+4\times\frac{1}{2}N\left(N-1\right)\widehat{\Sigma_{2}+\Sigma_{3}}\right]\\
 & =4\hat{\Sigma}_{1}+\frac{2}{N-1}\left(\widehat{\Sigma_{2}+\Sigma_{3}}-2\hat{\Sigma}_{1}\right),
\end{align*}
which exactly coincides with expression (\ref{eq: OMEGA_hat}) above.
Not only does $\hat{\Omega}_{\mathrm{FG}}$ include a consistent estimate
of the leading term in $\mathbb{V}\left(\sqrt{N}S_{N}\right)$, but
it also includes an estimate of the asymptotically negligible higher
order component.

\citet{Fafchamp_Gubert_JDE07} is widely-cited in the context of covariance
estimation by empirical researchers, with a STATA implementation for
linear and logistic dyadic regression freely available \citep[cf., ][]{Cameron_Miller_WP14}.
Consequently considerable practical experience and Monte Carlo evidence
on the properties of standard error estimates based on (\ref{eq: AVar_Fafchampfs_Gubert})
exists. Among empirical researchers, the consensus is that such standard
errors are often much larger than those based on the (possibly erroneous)
assumption of independence across dyads.

\subsubsection*{\citet{Snijders_Borgatti_C99} jackknife variance estimate}

\citet{Snijders_Borgatti_C99}, inspired by the prior work of \citet{Frank_Snijder_JOS94},
suggest\footnote{They actually propose a jackknife estimate for the variance of $S_{N}$.
I have multiplied their expression by $N$ to get the corresponding
expression for the variance of $\sqrt{N}S_{N}$ (see Equation (2)
of \citet{Snijders_Borgatti_C99}).} a jackknife variance estimate for $\mathbb{V}\left(\sqrt{N}S_{N}\right)$
of
\begin{equation}
\hat{\Omega}_{\mathrm{SB}}=\left(\frac{N-2}{2}\right)\sum_{i}\left[S_{N,-i}\left(\hat{\theta}\right)-S_{N}\left(\hat{\theta}\right)\right]\left[S_{N,-i}\left(\hat{\theta}\right)-S_{N}\left(\hat{\theta}\right)\right]',\label{eq: AVar_Snijders_Borgatti}
\end{equation}

where $S_{N,-i}\left(\theta\right)$ is the average of the dyadic
scores over the $\tbinom{N-1}{2}$ dyads which do not include agent
$i$:
\begin{align*}
S_{N,-i}\left(\theta\right)\overset{def}{\equiv} & \tbinom{N-1}{2}^{-1}\left[\sum_{j=1}^{N-1}\sum_{k=j+1}^{N}\frac{s\left(Z_{jk};\theta\right)+s\left(Z_{kj};\theta\right)}{2}-\sum_{l\neq i}\frac{s\left(Z_{il};\theta\right)+s\left(Z_{li};\theta\right)}{2}\right].
\end{align*}
The \citet{Snijders_Borgatti_C99} proposal, the basis of which they
acknowledge was primarily intuitive, does not provide a consistent
estimate of $\mathbb{V}\left(\sqrt{N}S_{N}\right)$, but, as I will
now show, a slight modification of their proposal does.

With some manipulation we can write, defining $\hat{\bar{s}}_{1i}\left(\theta\right)\overset{def}{\equiv}\frac{1}{N-1}\sum_{j\neq i}\frac{s\left(Z_{ij};\theta\right)+s\left(Z_{ji};\theta\right)}{2}$
(in a slight abuse of notation),
\begin{align}
S_{N,-i}\left(\theta\right)-S_{N}\left(\theta\right) & =\tbinom{N-1}{2}^{-1}\left[\tbinom{N}{2}S_{N}\left(\theta\right)-\left(N-1\right)\hat{\bar{s}}_{1i}\left(\theta\right)\right]-S_{N}\left(\theta\right)\nonumber \\
 & =-\frac{2}{N-2}\left[\hat{\bar{s}}_{1i}\left(\theta\right)-S_{N}\left(\theta\right)\right].\label{eq: jackknifed_score}
\end{align}

Observe that $\hat{\bar{s}}_{1i}\overset{def}{\equiv}\hat{\bar{s}}_{1i}\left(\hat{\theta}\right)$
would be the usual estimate of the the $i^{th}$ summand in the Hájek
projection given in (\ref{eq: hajek_projection}) above (see, for
example, \citet{Callaert_Veeraverbeke_AS81} or \citet{Cattaneo_et_al_ET14}
and the references therein). Indeed, on the basis of the limit theory
outlined above, a natural estimate of $\Sigma_{1}$ would be
\begin{equation}
\tilde{\Sigma}_{1}\overset{def}{\equiv}\frac{1}{N}\sum_{i=1}^{N}\hat{\bar{s}}_{1i}\hat{\bar{s}}_{1i}'.\label{eq: SIGMA1_jackknife}
\end{equation}

After some tedious manipulation it is possible to show that
\begin{align}
\tilde{\Sigma}_{1} & =\hat{\Sigma}_{1}+\frac{\widehat{\Sigma_{2}+\Sigma_{3}}-\hat{\Sigma}_{1}}{N-1}\label{eq: SIGMA1_tilde}\\
 & =\hat{\Sigma}_{1}+O_{p}\left(N^{-1}\right)\nonumber 
\end{align}
with $\hat{\Sigma}_{1}$ and $\widehat{\Sigma_{2}+\Sigma_{3}}$ as
defined in (\ref{eq: SIGMA1_hat}) and (\ref{eq: SIGMA2_SIGMA3_hat})
above.

Equations (\ref{eq: jackknifed_score}), (\ref{eq: SIGMA1_tilde})
and the observation that $S_{N}\left(\hat{\theta}\right)=0$ implies
that the jackknife variance estimate
\begin{align}
\hat{\Omega}_{\mathrm{JK}} & \overset{def}{\equiv}\frac{\left(N-2\right)}{N}^{2}\sum_{i}\left[S_{N,-i}\left(\hat{\theta}\right)-S_{N}\left(\hat{\theta}\right)\right]\left[S_{N,-i}\left(\hat{\theta}\right)-S_{N}\left(\hat{\theta}\right)\right]'\label{eq: Corrected_Jackknife}\\
 & =4\tilde{\Sigma}_{1}\nonumber \\
 & =4\hat{\Sigma}_{1}+\frac{4\left(\widehat{\Sigma_{2}+\Sigma_{3}}-\hat{\Sigma}_{1}\right)}{N-1},\nonumber 
\end{align}
provides a consistent estimate of the asymptotic variance $\sqrt{N}S_{N}$.

Furthermore, inspired by \citet{Efron_Stein_AS81} and, especially,
\citet{Cattaneo_et_al_ET14}, we can bias correct (\ref{eq: Corrected_Jackknife}):
\begin{equation}
\hat{\Omega}_{\mathrm{JK-BC}}\overset{def}{\equiv}\hat{\Omega}_{\mathrm{JK}}-\frac{2}{N-1}\left(\widehat{\Sigma_{2}+\Sigma_{3}}\right)=\hat{\Omega}_{\mathrm{FG}}\label{eq: awesomeness}
\end{equation}
with $\widehat{\Sigma_{2}+\Sigma_{3}}$ as defined by (\ref{eq: SIGMA2_SIGMA3_hat})
and the equality an implication of (\ref{eq: SIGMA1_tilde}). Equation
(\ref{eq: awesomeness}) implies that the \citet{Fafchamp_Gubert_JDE07}
variance estimator, or equivalently the analog estimator proposed
above, coincides with a bias corrected jackknife variance estimate.
This is awesome.

\subsection{Bootstrap inference}

Relative to analytic variance estimation, the theory of the bootstrap
for dyadic regression is comparatively less well-understood. Rewriting
our dyadic regression coefficient estimate in pseudo-U-Process form
yields
\[
\hat{\theta}=\arg\underset{\theta\in\Theta}{\max}\frac{2}{N\left(N-1\right)}\sum_{i=1}^{N-1}\sum_{j=i+1}^{N}\left\{ \frac{l_{ij}\left(\theta\right)+l_{ji}\left(\theta\right)}{2}\right\} .
\]

Next let $\left\{ V_{i}^{b}\right\} _{i=1}^{N}$ be a sequence of
i.i.d. mean one random weights independent of the data. One such sequence
is drawn for each of $b=1,\ldots,B$ bootstrap replications. In the
$b^{th}$ such replication we compute
\begin{align*}
\hat{\theta}_{b} & =\arg\underset{\theta\in\Theta}{\max}\frac{2}{N\left(N-1\right)}\sum_{i=1}^{N-1}\sum_{j=i+1}^{N}V_{i}^{b}V_{j}^{b}\left\{ \frac{l_{ij}\left(\theta\right)+l_{ji}\left(\theta\right)}{2}\right\} .
\end{align*}
The bootstrap distribution $\left\{ \hat{\theta}_{b}\right\} _{b=1}^{B}$
can then be used to approximate the sampling distribution of $\hat{\theta}$.
Letting $V_{i}^{b}$ be an exponential random variable with rate parameter
$1$ results in a Bayesian bootstrap which is, of course, preferred.
The above algorithm was proposed in the context of U-statistics by
\citet{Janssen_JSPI94}. If we let $V_{i}^{b}$ equal the number of
times agent $i$ is sampled from the set $\left\{ 1,\ldots,N\right\} $
across $N$ draws with replacement, we get the proposal of \citet{Davezies_et_al_arXiv2019},
who show -- under certain assumptions -- validity for the dyadic
regression case considered here.

\citet{Snijders_Borgatti_C99} proposed a bootstrap procedure for
jointly exchangeable random arrays which is very close to the proposal
of \citet{Davezies_et_al_arXiv2019}. As with their jackknife variance
estimator, their development was intuitive and informal. For simplicity
consider the application of their proposal for inference on the dyadic
mean $\bar{Y}=\frac{2}{N\left(N-1\right)}\sum_{i=1}^{N-1}\sum_{j=i+1}^{N}\left\{ \frac{Y_{ij}+Y_{ji}}{2}\right\} $.
Let $i_{1}^{b},\ldots,i_{N}^{b}$ be $N$ indices drawn uniformly
at random (with replacement) from $\left\{ 1,\ldots,N\right\} $.
Let $\mathbf{Y}^{b}$ be the adjacency matrix induced by $\left\{ i_{1}^{b},\ldots,i_{N}^{b}\right\} $.
If agent $j$ is sampled twice, say $i_{1}^{b}=j$ and $i_{2}^{b}=j$
we face the practical problem that the outcome $Y_{i_{1}^{b}i_{2}^{b}}=Y_{jj}$
is undefined. \citet{Snijders_Borgatti_C99} propose filling in such
cells with independent random draws from $\left\{ Y_{12},Y_{21},\ldots,Y_{N-1N},Y_{NN-1}\right\} $;
they note that the expected fraction of bootstrap dyads constructed
from a single underlying agent in the original sample will be vanishingly
small as $N\rightarrow\infty$ (suggesting that this problem may not
matter for asymptotic properties). Snjiders and Borgatti's \citeyearpar{Snijders_Borgatti_C99}
algorithm essentially coincides with the pigeon-hole bootstrap proposed
by \citet{Owen_AAS07} for separately exchangeable random arrays (in
which the problem of ``zero diagonals'' does not arise).

A final bootstrap procedure is proposed by \citet{Menzel_arXiv17}.
He is particularly concerned with formulating a procedure that adaptively
handles the possibility that there is, in fact, no dyadic correlation
in the data (i.e., $\Sigma_{1}=0$). Degeneracy of this type occurs,
in our regression setting, when the graphon $h\left(x_{1},x_{2},u_{1},u_{2},v_{12}\right)$
is constant in both $u_{1}$ and $u_{2}$ (but also in more exotic
situations where there is dyadic dependence in higher order moments,
but no correlation). The arguments in \citet{Menzel_arXiv17} suggest
that the weighted bootstrap of \citet{Janssen_JSPI94} and \citet{Davezies_et_al_arXiv2019}
will be inconsistent under degeneracy.

\citet{Menzel_arXiv17} proposes several different bootstraps; what
I sketch here is a simplified version of his `BS-N' procedure (adapted
to the dyadic regression case). Let
\[
\hat{s}_{i}^{e}=\frac{1}{N-1}\sum_{j=1}^{N}\hat{s}_{ij},\thinspace\thinspace\hat{s}_{j}^{a}=\frac{1}{N-1}\sum_{i=1}^{N}\hat{s}_{ij}
\]
be estimates of the average dyadic score for `ego' $i$ and `alter'
$j$. Let
\[
\hat{e}_{ij}=\hat{s}_{ij}-\hat{s}_{i}^{e}-\hat{s}_{j}^{a}
\]
equal the residual for $\hat{s}_{ij}$ after subtracting off these
ego and alter means. \citet{Menzel_arXiv17} actually suggests subtracting
off rescaled versions of $\hat{s}_{i}^{e}$ and $\hat{s}_{j}^{a}$
when forming $\hat{e}_{ij}$. Rescaling improves the accuracy of his
procedure when dyadic correlation is, in fact, absent. I omit this
detail since describing it requires introducing substantial additional
notation. The stylized version sketched here will be conservative
under degeneracy (similar to the pigeonhole bootstrap).

Let $\left\{ V_{i}^{b}\right\} _{i=1}^{N}$ be a sequence of i.i.d.
mean \emph{zero} random weights with unit variance (and unit third
moment as well). Let $i_{1}^{b},\ldots,i_{N}^{b}$ be $N$ indices
drawn uniformly at random (with replacement) from $\left\{ 1,\ldots,N\right\} $.
For all $\tbinom{N}{2}$ dyads induced by the $b^{th}$ such bootstrap
sample construct the scores

\[
\hat{s}_{i_{j}^{b}i_{k}^{b}}=\hat{s}_{i_{j}^{b}}^{e}+\hat{s}_{i_{k}^{b}}^{a}+V_{i_{j}^{b}}^{b}V_{i_{k}^{b}}^{b}\hat{e}_{i_{j}^{b}i_{k}^{b}},\ j=1,\ldots,N-1\ \&\ k=j+1,\ldots,N
\]
and compute their mean
\[
\hat{S}_{N}^{b}=\frac{2}{N\left(N-1\right)}\sum_{j=1}^{N-1}\sum_{k=j+1}^{N}\left(\frac{\hat{s}_{i_{j}^{b}i_{k}^{b}}+\hat{s}_{i_{k}^{b}i_{j}^{b}}}{2}\right).
\]
The variance of $\sqrt{N}\hat{S}_{N}^{b}$ across the $b=1,\ldots,B$
bootstrap replications can be used to construct an estimate of $\Omega=\mathbb{V}\left(\sqrt{N}S_{N}\right)$.
Menzel's \citeyearpar{Menzel_arXiv17} preferred procedures involve
an additional ``model selection'' step, not described here, as well
as pivotizing using $\hat{\Omega}_{\mathrm{FG}}=\hat{\Omega}_{\mathrm{JK-BC}}$.

\subsection{Further reading and open questions}

A special case of the \citet{Fafchamp_Gubert_JDE07} variance estimator
was first proposed by \citet{Holland_Leinhardt_SM76} in the context
of inference on network density; the equivalent of the dyadic mean
$\mu_{Y_{12}}=\mathbb{E}\left[Y_{12}\right]$ here (estimated by $\bar{Y}=\frac{1}{N\left(N-1\right)}\sum_{i=1}^{N}\sum_{j\neq i}Y_{ij}$).
The \citet{Holland_Leinhardt_SM76} variance estimate was used with
some regularity in empirical social network analysis in the 1980s
and 1990s \citep[cf.,][]{Wasserman_Faust_Bk94}. The reference distribution
was assumed to be normal, but no proof for this was available. \citet{Bickel_et_al_AS11}
appear to be the first to have shown asymptotic normality of $\sqrt{N}\left(\bar{Y}-\mu_{Y_{12}}\right)$
under dyadic dependence. The double projection argument used to produce
the $S_{N}=V_{1N}+V_{2N}+T_{N}$ decomposition used above is implicit
in their work. A similar decomposition was used by \citet{Graham_EM17}
to show asymptotic normality of the Tetrad Logit estimator, which
is described further below. The bootstrap procedure of \citet{Menzel_arXiv17}
is also based upon this decomposition. \citet{Tabord-Meehan_JBES18}
demonstrates asymptotic normality of dyadic regression coefficients
estimated by ordinary least squares. His method of proof is very different
from the argument outlined here.

\citet{Cameron_Miller_WP14}, \citet{Aronow_et_al_PA17} and \citet{Tabord-Meehan_JBES18}
provide further results on variance estimation for dyadic regression;
each building upon the proposal of \citet{Fafchamp_Gubert_JDE07}.

\citet{Menzel_arXiv17} and \citet{Davezies_et_al_arXiv2019} provide
large sample theory in some generality -- including for cases not
covered here. Both these papers provide formal results on inference
using the bootstrap as well. The presentation here is based upon \citet{Graham_DyadLectureNotes18},
a revised and expanded version of which appears as a chapter in \citet{Graham_dePaula_Bk20}.

When dyadic correlation is weak limit theory can be non-standard.
\citet{Menzel_arXiv17} provides examples and discussion. Related
issues arise in \citet{Graham_Niu_Powell_WP2019}, who study nonparametric
density and regression estimation with dyadic data. Developing inference
procedures with good properties across a range of (dyadic) data generating
processes remains largely open.

Open research problems include extending the material summarized here
to accommodate regressor endogeneity and settings where the number
of regressors is comparable to, or even exceeds, the number of agents
(or dyads).

\section{\label{sec: policy_analysis}Policy analysis}

One motivation for Tinbergen's \citeyearpar{Tinbergen_SWE62} dyadic
regression analysis was to evaluate the effect of preferential trade
agreements on export flows. \citet{Rose_AER04} explores the related
question of whether membership in the General Agreement on Trade and
Tariffs (GATT) or its successor, the World Trade Organization (WTO),
promoted trade (see also \citet{Rose_RIE05}). \citet{Baldwin_Taglioni_JEI07}
and \citet{SantosSilva_Tenreyro_AR10} use gravity models to assess
whether common currency zones, such as the Eurozone, promote trade.
As with conventional regression analysis, a desire to assess different
programs or policies underlies many dyadic regression analyses.\footnote{Other examples of recent attempts to reason about causal questions
with dyadic data include Schwartz and Sommers' \citeyearpar{Schwartz_Sommers_HA14}
and Goodman's \citeyearpar{Goodman_JPAM17} analyses of whether Medcaid
expansion states experienced in-migration from neighboring states
which chose to forgo the Affordable Care Act\textquoteright s expansion
of Medicaid and Mayda's \citeyearpar{Mayda_JPopE10} and Oretega and
Peri's \citeyearpar{Ortega_Peri_MS13} studies of the relationship
between immigration entry tightness and cross-country migration.}

While the logic and mechanics of program evaluation are well understood
in the context of single agent models \citep[cf.,][]{Heckman_Vytlacil_HBE07,Imbens_Wooldridge_JEL09},
a comparable framework for causal reasoning is not, to my knowledge,
available in the dyadic setting considered here. In this section I
make a start at formulating such a framework. In doing so I will attempt
to follow the notation and language of the standard single agent causal
inference framework reviewed in, for example, \citet{Imbens_Wooldridge_JEL09}.
What follows are some initial ideas and results; much work remains
to be done.

\subsection{Dyadic potential response}

Let $W_{i}\in\mathbb{W}=\left\{ w_{1},\ldots,w_{K}\right\} $ and
$X_{i}\in\mathbb{X}=\left\{ x_{1},\ldots,x_{L}\right\} $ be a finite
set of \emph{ego} and \emph{alter} policies. For example $\mathbb{W}$
might enumerate different export promotion policies (e.g., tax subsidies
or preferential credit schemes for exporting firms), while $\mathbb{X}$
might enumerate different combinations of protectionist policies (e.g.,
tariff levels). The goal is to understand how different counterfactual
combinations of ego and alter policy pairs map into (distributions
of) outcomes.

I begin with an assumption about the form of the potential response
function for (directed) dyad $ij$.
\begin{assumption}
\textsc{(Dyadic Potential Response Function)}\label{ass: dyadic_potential_response}
For any ego-alter pair $i,j\in\mathbb{N}$ with $i\neq j$, the potential
(directed) outcome associated with adopting the pair of policies $W_{i}=w$
and $X_{j}=x$ is given by
\begin{equation}
Y_{ij}\left(w,x\right)=h\left(w,x,A_{i},B_{j},V_{ij}\right),\thinspace x\in\mathbb{X},\thinspace w\in\mathbb{W}\label{eq: dyadic_potential_outcome}
\end{equation}
with $\left\{ \left(A_{i},B_{i}\right)\right\} _{i\in\mathbb{N}}$
and $\left\{ \left(V_{ij},V_{ji}\right)\right\} _{i,j\in\mathbb{N},i<j}$
both i.i.d. sequences additionally independent of each other and $h:\mathbb{W}\times\mathbb{X}\times\mathbb{A}\times\mathbb{B}\times\mathbb{V}\rightarrow\mathbb{Y}$
a measurable function.
\end{assumption}
The ego and alter effects, respectively $A_{i}$ and $B_{i}$, induce
dependence across any pair of potential outcomes whose corresponding
dyads share at least one agent in common. This implies a type structured
``interference'' between units, and hence a violation of SUTVA \citep[cf.,][]{Rosenbaum_JASA07}.

Since assignment to treatment is at the `ego' or `alter' level, setting
$X_{i}=x$ and $W_{j}=w$ shapes not just the realized outcome for
dyad $ij$, but also those of all other dyads which include either
agent $i$ or agent $j$. It is because of its implications for dependence
across the rows and columns of $\left[Y_{ij}\left(w,x\right)\right]_{i,j\in\mathbb{N}}$
that I label Assumption \ref{ass: dyadic_potential_response} an ``assumption''.
More than just notation is involved.

It is possible that Assumption \ref{ass: dyadic_potential_response}
could be derived from a more primitive exchangeability type restriction;
for example by viewing $\left[Y_{ij}\left(w,x\right)\right]_{i,j\in\mathbb{N}}$
as a jointly exchangeable random array and appealing to the Aldous-Hoover
Theorem. There may be some deep subtleties involved in such an approach,
so I prefer to maintain (\ref{eq: dyadic_potential_outcome}) as an
explicit assumption in this initial exploration.

I could have also written $Y_{ij}\left(w,x\right)=h\left(w,x,\left(1,0\right)U_{i},\left(0,1\right)U_{j},V_{ij}\right)=h^{*}\left(w,x,U_{i},U_{j},V_{ij}\right)$
with $U_{i}=\left(A_{i},B_{i}\right)'$. Explicitly separating out
an `ego' and `alter' effect, however, is conceptually useful and also
facilitates, as will be demonstrated by example below, parametric
modeling.

In some cases of interest the support of the ego and alter policies
will coincide (i.e., $\mathbb{W=X}$). Following \citet{SantosSilva_Tenreyro_AR10},
for example, both $X_{i}$ and $W_{j}$ might be indicators for Eurozone
membership. This example implies the additional restriction that $X_{i}=W_{i}$
for all $i\in\mathbb{N}$, since a country belongs to the Eurozone
in both their exporter (ego) and importer (alter) role. These special
cases can be deduced from the more general results which follow.

\subsection{Average structural function (ASF)}

Dyad-level treatment effects are defined in the usual way. The effect
on $ij$'s outcome of adopting policy pair $(w',x')$ vs. $(w,x$)
is
\[
Y_{ij}(w',x')-Y_{ij}(w,x).
\]
As in the standard case, identification of such effects at the dyad-level
is infeasible. This is because the econometrician only observes the
outcome associated with the policy pair actually adopted. Specifically,
for each of $N$ randomly sampled units she observes the assigned
or chosen ego and alter policies, $\left\{ \left(W_{i},X_{i}\right)\right\} _{i=1}^{N}$
and the $N\left(N-1\right)$ realized (directed) outcomes $\left\{ \left(Y_{ij},Y_{ji}\right)\right\} _{i<j}$,
where 
\begin{equation}
Y_{ij}\overset{def}{\equiv}Y_{ij}\left(W_{i},X_{j}\right)\label{eq: dyadic_realized_outcome}
\end{equation}
equals (directed) dyad $ij$'s realized outcome. No counterfactual
outcomes are observed.

Although dyad-level treatment effects are not identified, averages
of such effects over agents and/or dyads are (under certain assumptions).
Here I will focus on identifying average treatment effect (ATE) type
parameters. Consider the following thought experiment: (i) draw an
ego unit at random from the target population and exogenously assign
it policy $W_{i}=w$, (ii) independently draw an alter unit at random
and assign it policy $X_{j}=x$. The (ex ante) expected outcome associated
with this directed dyad, so configured, is
\begin{align}
m^{\mathrm{ASF}}\left(w,x\right) & \overset{def}{\equiv}\mathbb{E}\left[Y_{12}\left(w,x\right)\right]\label{eq: ASF_definition}\\
 & =\int\int\int h\left(w,x,a,b,v\right)f_{A_{1}}\left(a\right)f_{B_{2}}\left(b\right)f_{V_{12}}\left(v\right)\mathrm{d}a\mathrm{d}b\mathrm{d}v\nonumber \\
 & \overset{def}{\equiv}\int\int\int\bar{h}\left(w,x,a,b\right)f_{A_{1}}\left(a\right)f_{B_{2}}\left(b\right)\mathrm{d}a\mathrm{d}b,\nonumber 
\end{align}
where the second `$\overset{def}{\equiv}$' in (\ref{eq: ASF_definition})
follows from defining $\bar{h}\left(w,x,a,b\right)\overset{def}{\equiv}\mathbb{E}\left[h\left(w,x,a,b,V_{12}\right)\right]\overset{def}{\equiv}\bar{Y}_{ij}\left(w,x\right)$.
Note also that $\mathbb{E}\left[\left.h\left(w,x,a,b,V_{12}\right)\right|A_{1}=a,B_{2}=b\right]=\mathbb{E}\left[h\left(w,x,a,b,V_{12}\right)\right]$
by independence of $A_{1}$, $B_{2}$ and $V_{12}$ (Assumption \ref{ass: dyadic_potential_response}).

Differences of the form $m^{\mathrm{ASF}}\left(w',x'\right)-m^{\mathrm{ASF}}\left(w,x\right)$
measure the expected effects of different combinations of policies
on the directed dyadic outcome. If $W_{i}\in\left\{ 0,1\right\} $
and $X_{i}\in\left\{ 0,1\right\} $ are both binary indicators for
GATT/WHO membership, as in \citet{Rose_AER04}, then the contrast
\begin{equation}
m^{\mathrm{ASF}}\left(1,1\right)-m^{\mathrm{ASF}}\left(0,0\right)\label{eq: ATE}
\end{equation}
gives differences in export flows between a random pair of countries
in the GATT/WHO vs. non-GATT/WHO states of the world. This is an average
treatment effect (ATE) type parameter, adapted to the dyadic setting. 

The dyadic setting also raises new questions. For example the double
difference
\begin{equation}
m^{\mathrm{ASF}}\left(1,1\right)-m^{\mathrm{ASF}}\left(0,1\right)-\left[m^{\mathrm{ASF}}\left(1,0\right)-m^{\mathrm{ASF}}\left(0,0\right)\right]\label{eq: complementarity}
\end{equation}
measures complementarity in a binary policy/treatment across the two
agents in the dyad.

Other estimands beside the ASF may be of interest. The difference
of sample means
\[
\frac{1}{N-1}\sum_{j\neq i}\left[Y_{ij}\left(1,X_{j}\right)-Y_{ij}\left(0,X_{j}\right)\right]
\]
measures the average effect -- for unit $i$ alone -- of adopting
ego policy $W_{i}=1$ versus $W_{i}=0$; the average is over the status
quo distribution of alter polices. Additionally averaging over ego
units gives
\[
\frac{1}{N}\frac{1}{N-1}\sum_{i}\sum_{j\neq i}\left[Y_{ij}\left(1,X_{j}\right)-Y_{ij}\left(0,X_{j}\right)\right].
\]
This equals the average effect, across all units in the sample, of
adopting ego policy $W_{i}=1$ versus $W_{i}=0$, again given the
status quo distribution of alter policies. The population counterparts
of these two sample averages may also be of interest.

For the purposes of illustration, assume that $\mathbb{W}=\mathbb{X}=\left\{ 0,1\right\} $.
A parametric form for $Y_{ij}\left(w,x\right)$ that will be helpful
for both understanding extant empirical work and interpreting some
of the assumptions which follow is:
\begin{equation}
Y_{ij}\left(w,x\right)=\alpha+w\beta+x\gamma+wx\delta+A_{i}+B_{j}+V_{ij}.\label{eq: linear_dyadic_potential_outcome}
\end{equation}
Response (\ref{eq: linear_dyadic_potential_outcome}) implies that
treatment effects are constant across units, for example,
\[
Y_{ij}\left(1,0\right)-Y_{ij}\left(0,0\right)=\beta,
\]
which is constant in $i\in\mathbb{N}$. Under (\ref{eq: linear_dyadic_potential_outcome})
we also have estimand (\ref{eq: ATE}) equaling $\beta+\gamma+\delta$
and (\ref{eq: complementarity}) equal to $\delta$.

The average structural function (ASF) estimand is a leading case and
will be emphasized here. However, as I hope the brief sketch above
makes clear, other estimands merit exploration and, I conjecture,
will involve interesting identification, estimation and inference
issues.

\subsection{Identification under exogeneity}

In order to identify the ASF I will assert the existence of the observable
proxy variables, $R_{i}$ and $S_{i}$, respectively for the ego and
alter effects $A_{i}$ and $B_{i}$. These proxy variables will satisfy
two key restrictions, the first of which is:
\begin{assumption}
\textsc{(Redundancy)}\label{ass: redundancy} For $R_{i}\in\mathcal{R}\subseteq\mathbb{R}^{\dim\left(R\right)}$
a proxy variable for $A_{i}$, and $S_{i}\in\mathcal{S}\subseteq\mathbb{R}^{\dim\left(S\right)}$
a proxy variable for $B_{i}$, we have that
\[
\mathbb{E}\left[\left.Y_{ij}\left(w,x\right)\right|W_{i},X_{j},A_{i},B_{j},R_{i},S_{j}\right]=\mathbb{E}\left[\left.Y_{ij}\left(w,x\right)\right|W_{i},X_{j},A_{i},B_{j}\right],
\]
for any $w\in\mathbb{W}$ and $x\in\mathbb{X}$.
\end{assumption}
Assumption \ref{ass: redundancy} is a redundancy assumption of the
type introduced by \citet{Wooldridge_IIEM05}; it simply asserts that
$R_{i}$ and $S_{j}$ have no predictive power (in the conditional
mean sense) for the dyadic potential outcome $Y_{ij}(w,x)$ conditional
on the latent ego and alter attributes $A_{i}$ and $B_{j}$. Adapting
Wooldridge's \citeyearpar{Wooldridge_IIEM05} example, it asserts
that ego and alter Armed Forces Qualification Test (AFQT) scores,
$R_{i}$ and $S_{j}$, do not predict $Y_{ij}$ conditional on the
unobserved cognitive abilities, $A_{i}$ and $B_{j}$. Assumption
\ref{ass: redundancy} is a weak requirement since we are free to
conceptualize the latent attributes, $A_{i}$ and $B_{j}$, such that
$R_{i}$ and $S_{j}$ are clearly redundant.
\begin{assumption}
\textsc{(Strict Exogeneity) }\label{ass: strict_exogeneity}The $ij$
ego-alter treatment assignment $\left(W_{i},X_{j}\right)$ is independent
of $V_{ij}$ conditional on the latent ego $A_{i}$ and alter $B_{j}$
effects:
\begin{equation}
\left.V_{ij}\perp\left(W_{i},X_{j}\right)\right|A_{i}=a,B_{i}=b,\thinspace a\in\mathbb{A},\thinspace b\in\mathbb{B}.\label{eq: strict_exogeneity}
\end{equation}
\end{assumption}
While conditional independence assumptions feature prominently in
the causal inference literature, Assumption \ref{ass: strict_exogeneity},
which involves conditioning on \emph{unobservables}, has no clear
analog in the standard program evaluation model. The closest analog
of this assumption I can think of is Chamberlain's \citeyearpar{Chamberlain_HBE84}
definition of strict exogeneity of a time-varying regressor conditional
on a latent (time-invariant) unit-specific effect in the context of
panel data. To see the parallel return to parametric potential response
function (\ref{eq: linear_dyadic_potential_outcome}) and note that
(\ref{eq: dyadic_realized_outcome}) and (\ref{eq: strict_exogeneity})
imply that
\begin{equation}
\mathbb{E}\left[\left.Y_{ij}\right|W_{i},X_{j},A_{i},B_{j}\right]=\alpha+W_{i}\beta+X_{j}\gamma+W_{i}X_{j}\delta+A_{i}+B_{j}\label{eq: stict_exogeneity_analogy}
\end{equation}
since Assumption \ref{ass: strict_exogeneity} gives $\mathbb{E}\left[\left.V_{ij}\right|W_{i},X_{j},A_{i},B_{j}\right]=\mathbb{E}\left[\left.V_{ij}\right|A_{i},B_{j}\right]$
and $\mathbb{E}\left[\left.V_{ij}\right|A_{i},B_{j}\right]=\mathbb{E}\left[V_{ij}\right]$
by independence of $\left\{ \left(A_{i},B_{i}\right)\right\} _{i=1}^{N}$
and $\left\{ \left(V_{ij},V_{ji}\right)\right\} _{i<j}$ (setting
$\mathbb{E}\left[V_{ij}\right]=0$ is a normalization). Equation (\ref{eq: stict_exogeneity_analogy})
looks a lot like the definition of strict exogeneity in \citet[Equation 1.2 on p. 1248]{Chamberlain_HBE84}.
Equation (\ref{eq: stict_exogeneity_analogy}) implies, for example,
that
\begin{align*}
\mathbb{E}\left[\left.Y_{ij}-Y_{il}-\left(Y_{kj}-Y_{kl}\right)\right|W_{i},X_{j},A_{i},B_{j}\right]= & \left(W_{i}-W_{k}\right)\left(X_{j}-X_{l}\right)\delta,
\end{align*}
such that ``within-tetrad'' variation identifies $\delta$. Similar
to how within-group variation in a strictly exogenous regressor identifies
its corresponding coefficient in the panel context.

Under Assumption \ref{ass: strict_exogeneity} we have the density
factorization
\begin{align*}
f_{V_{12},A_{1},W_{1},B_{2},X_{2}}\left(v_{12},a_{1},w_{1},b_{2},x_{2}\right)= & f_{\left.V_{12}\right|A_{1},W_{1},B_{2},X_{2}}\left(\left.v_{12}\right|a_{1},w_{1},b_{2},x_{2}\right)\\
 & \times f_{A_{1},W_{1}}\left(a_{1},w_{1}\right)f_{B_{2},X_{2}}\left(b_{2},x_{2}\right)\\
= & f_{\left.V_{12}\right|A_{1},B_{2}}\left(\left.v_{12}\right|a_{1},b_{2}\right)f_{A_{1},W_{1}}\left(a_{1},w_{1}\right)f_{B_{2},X_{2}}\left(b_{2},x_{2}\right)\\
= & f_{V_{12}}\left(v_{12}\right)f_{A_{1},W_{1}}\left(a_{1},w_{1}\right)f_{B_{2},X_{2}}\left(b_{2},x_{2}\right)
\end{align*}
with the first equality an implication of units $1$ and $2$ being
independent random draws, the second equality following from Assumption
\ref{ass: strict_exogeneity}, and the third from independence of
$\left\{ \left(A_{i},B_{i}\right)\right\} _{i=1}^{N}$ and $\left\{ \left(V_{ij},V_{ji}\right)\right\} _{i<j}$
(i.e., Assumption \ref{ass: dyadic_potential_response}).

This factorization clarifies that the effect of Assumption \ref{ass: strict_exogeneity}
is to ensure that all ``endogeneity'' in treatment choice is reflected
in dependence between $W_{i}$ and $A_{i}$ and/or $B_{j}$ and $X_{j}$.
Conditional on these two latent variables, variation in treatment
is ``idiosyncratic'' or exogenous.

To deal with dependence between $W_{i}$ and $A_{i}$, and $B_{j}$
and $X_{j}$, I make a familiar selection of observables type assumption.
\begin{assumption}
\textsc{(Conditional Independence) }\label{ass: selecton_on_observables}An
ego's (alter's) treatment choice varies independently of their latent
effect $A_{i}$ ($B_{j}$) given the observed proxy $R_{i}$ ($S_{j}$):
\begin{align}
\left.A_{i}\perp W_{i}\right|R_{i} & =r,\thinspace r\in\mathcal{R}\subseteq\mathbb{R}^{\dim\left(R\right)}\label{eq: selection_on_observables_a}\\
\left.B_{i}\perp X_{i}\right|S_{i} & =s,\thinspace s\in\mathcal{S}\subseteq\mathbb{R}^{\dim\left(S\right)}.\label{eq: selection_on_observables_b}
\end{align}
\end{assumption}
Assumption \ref{ass: selecton_on_observables} is a standard one in
the context of single agent program evaluation problems, asserting
-- for example -- that $A_{i}$ and $W_{i}$ vary independently
within subpopulations homogenous in the proxy variable $R_{i}$. Extensive
discussions of selection-on-observables type assumptions like these,
including assessments of their appropriateness in different settings
of interest to empirical researchers, can be found in \citet{Blundel_Powell_WC03},
\citet{Heckman_Vytlacil_HBE07}, \citet{Imbens_Wooldridge_JEL09}
and \citet{Imbens_Rubin_CIBook15}. Their invocation here can raise
new issues, but, for the most part familiar approaches to reasoning
apply; see \citet{Graham_Imbens_Ridder_JBES18} for a related discussion.

Assumptions \ref{ass: dyadic_potential_response} to \ref{ass: selecton_on_observables},
plus an additional support condition described below, are sufficient
to show identification of the ASF. To develop the argument first let
\begin{align}
q\left(w,x,r,s\right) & =\mathbb{E}\left[\left.Y_{ij}\right|W_{i}=w,X_{j}=x,R_{i}=s,S_{j}=s\right]\label{eq: dyadic_proxy_regression}
\end{align}
be the dyadic proxy variable regression (PVR). Under Assumptions \ref{ass: dyadic_potential_response}
through \ref{ass: selecton_on_observables} the PVR relates to $\bar{Y}_{12}\left(w,x\right)=\bar{h}\left(w,x,A_{1},B_{2}\right)$
as follows:
\begin{align}
q\left(w,x,r,s\right)= & \mathbb{E}\left[\left.h\left(W_{i},X_{j},A_{i},B_{j},V_{ij}\right)\right|W_{i}=w,X_{j}=x,R_{i}=r,S_{j}=s\right]\nonumber \\
= & \mathbb{E}\left[\mathbb{E}\left[\left.h\left(W_{i},X_{j},A_{i},B_{j},V_{ij}\right)\right|W_{i}=w,X_{j}=x,A_{i},B_{j},R_{i}=r,S_{j}=s\right]\right.\nonumber \\
 & \left.\left|W_{i}=w,X_{j}=x,R_{i}=r,S_{j}=s\right.\right]\nonumber \\
= & \mathbb{E}\left[\mathbb{E}\left[\left.h\left(W_{i},X_{j},A_{i},B_{j},V_{ij}\right)\right|W_{i}=w,X_{j}=x,A_{i},B_{j}\right]\right.\nonumber \\
 & \left.\left|W_{i}=w,X_{j}=x,R_{i}=r,S_{j}=s\right.\right]\nonumber \\
= & \mathbb{E}\left[\left.\bar{h}\left(w,x,A_{i},B_{j}\right)\right|W_{i}=w,X_{j}=x,R_{i}=r,S_{j}=s\right]\nonumber \\
= & \int_{a}\int_{b}\bar{h}\left(w,x,a,b\right)f_{\left.A\right|R}\left(\left.a\right|r\right)f_{\left.B\right|S}\left(\left.b\right|s\right)\mathrm{d}a\mathrm{d}b\nonumber \\
= & \mathbb{E}\left[\left.\bar{Y}_{12}\left(w,x\right)\right|R_{1}=r,S_{2}=s\right].\label{eq: proxy_to_asf_key1}
\end{align}
where the first equality follows from Assumption \ref{ass: dyadic_potential_response}
and equation (\ref{eq: dyadic_realized_outcome}), the second from
iterated expectations, the third from the redundancy condition (Assumption
\ref{ass: redundancy}), the fourth from Assumption \ref{ass: strict_exogeneity},
independence of $\left\{ \left(A_{i},B_{i}\right)\right\} _{i=1}^{N}$
and $\left\{ \left(V_{ij},V_{ji}\right)\right\} _{i<j}$ and the definition
of $\bar{h}$, and the fifth from selection on observables (Assumption
\ref{ass: selecton_on_observables}).

Equation (\ref{eq: proxy_to_asf_key1}) gives the identification result
\begin{align}
\mathbb{E}_{R}\left[\mathbb{E}_{S}\left[q\left(w,x,R_{i},S_{j}\right)\right]\right]= & \int_{r}\int_{s}\left[\int_{a}\int_{b}\bar{h}\left(w,x,a,b\right)f_{\left.A\right|R}\left(\left.a\right|r\right)f_{\left.B\right|S}\left(\left.b\right|s\right)\mathrm{d}a\mathrm{d}b\right]\label{eq: proxy_to_asf_key2}\\
 & \times f_{R}\left(r\right)f_{S}\left(s\right)\mathrm{d}r\mathrm{d}s\nonumber \\
= & \int_{a}\int_{b}\bar{h}\left(w,x,a,b\right)f_{A}\left(a\right)f_{B}\left(b\right)\mathrm{d}a\mathrm{d}b\nonumber \\
= & \mathbb{E}\left[\bar{Y}_{12}\left(w,x\right)\right]\nonumber \\
= & m^{\mathrm{ASF}}\left(w,x\right).\nonumber 
\end{align}
Since $q\left(w,x,r,s\right)$ is only identified at those points
where $f_{\left.R\right|W}\left(\left.r\right|x\right)f_{\left.S\right|X}\left(\left.s\right|x\right)>0$,
while the integration in (\ref{eq: proxy_to_asf_key2}) is over $\mathcal{R}\times\mathcal{S}$,
we require a formal support condition:
\begin{equation}
\mathbb{S}\left(w,x\right)\overset{def}{\equiv}\left\{ r,s\thinspace:\thinspace f_{\left.R\right|W}\left(\left.r\right|w\right)f_{\left.S\right|X}\left(\left.s\right|x\right)>0\right\} =\mathcal{R}\times\mathcal{S}.\label{eq: overlap_v1}
\end{equation}
When $W_{i}$ and $X_{j}$ are discretely-valued, with a finite number
of support points, as assumed here, (\ref{eq: overlap_v1}) can be
expressed in a form similar to the overlap condition familiar from
the program evaluation literature \citep[e.g., ][]{Heckman_Smith_Clements_ReStud97,Imbens_Wooldridge_JEL09}.
\begin{assumption}
\textsc{(Overlap)}\label{ass: overlap} For $\left(w,x\right)$ the
ego-alter treatment combination of interest
\[
p_{w}\left(r\right)p_{x}\left(s\right)\geq\kappa>0\thinspace\thinspace\text{for all \ensuremath{\left(r,s\right)\in\mathcal{R}\times\mathcal{S}}}
\]
where $p_{w}\left(r\right)\overset{def}{\equiv}\Pr\left(\left.W_{i}=w\right|R_{i}=r\right)$
and $p_{x}\left(s\right)\overset{def}{\equiv}\Pr\left(\left.X_{i}=x\right|S_{i}=s\right).$
\end{assumption}
We have shown.
\begin{thm}
\label{thm: dyadic_asf}Under Assumptions \ref{ass: dyadic_potential_response}
through \ref{ass: overlap} the ASF is identified by
\begin{equation}
m^{\mathrm{ASF}}\left(w,x\right)=\int\int q\left(w,x,r,s\right)f_{R}\left(r\right)f_{S}\left(s\right)\mathrm{d}r\mathrm{d}s.\label{eq: dyadic_asf_identification}
\end{equation}
\end{thm}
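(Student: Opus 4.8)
The plan is to assemble the identification result from two facts established in the discussion preceding the statement, combined with the overlap condition. The target $m^{\mathrm{ASF}}(w,x)=\mathbb{E}[Y_{12}(w,x)]$ averages the potential response over the \emph{population} marginals of the latent ego and alter effects, whereas what the data reveal is the dyadic proxy variable regression $q(w,x,r,s)=\mathbb{E}[Y_{ij}\mid W_i=w,X_j=x,R_i=r,S_j=s]$, which conditions on the observed proxies and the realized treatment. The whole argument reduces to (i) expressing $q$ as a conditional mean of the $V$-averaged potential outcome $\bar{Y}_{12}(w,x)$ given the proxies, and then (ii) integrating this against the marginal proxy densities to recover the unconditional mean.

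First I would establish the pointwise identity $q(w,x,r,s)=\mathbb{E}[\bar{Y}_{12}(w,x)\mid R_1=r,S_2=s]$ through a chain of conditional-expectation manipulations. Beginning from the definition of the realized outcome in (\ref{eq: dyadic_realized_outcome}), conditioning on the event $\{W_i=w,X_j=x\}$ replaces $Y_{ij}$ with $Y_{ij}(w,x)=h(w,x,A_i,B_j,V_{ij})$ via Assumption~\ref{ass: dyadic_potential_response}. An application of iterated expectations, inserting $(A_i,B_j)$ into the conditioning set, lets me invoke Redundancy (Assumption~\ref{ass: redundancy}) to drop the proxies $(R_i,S_j)$ from the inner expectation; Strict Exogeneity (Assumption~\ref{ass: strict_exogeneity}), together with the mutual independence of the $\{(A_i,B_i)\}$ and $\{(V_{ij},V_{ji})\}$ sequences, then lets me integrate out $V_{ij}$, collapsing $h$ to $\bar{h}(w,x,A_i,B_j)=\bar{Y}_{ij}(w,x)$. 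The remaining outer expectation of $\bar{h}(w,x,A_i,B_j)$ given $(W_i=w,X_j=x,R_i=r,S_j=s)$ is reduced using Conditional Independence (Assumption~\ref{ass: selecton_on_observables}): because units $1$ and $2$ are independent draws and treatment is ignorable at the agent level given the proxy, the conditional density of $(A_i,B_j)$ factors as $f_{A\mid R}(a\mid r)\,f_{B\mid S}(b\mid s)$, yielding exactly the claimed conditional mean. This is the content of (\ref{eq: proxy_to_asf_key1}). Second, I would take the iterated expectation of $q(w,x,R_i,S_j)$ over the marginal laws of $R$ and $S$: integrating the factored density against $f_R(r)f_S(s)$ marginalizes $A$ and $B$ back to $f_A(a)f_B(b)$, so the double integral collapses to $\int\int \bar{h}(w,x,a,b)f_A(a)f_B(b)\,da\,db=\mathbb{E}[\bar{Y}_{12}(w,x)]=m^{\mathrm{ASF}}(w,x)$, which is (\ref{eq: proxy_to_asf_key2}). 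Assembling the two steps delivers (\ref{eq: dyadic_asf_identification}).

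The step I expect to demand the most care is the density factorization underlying the reduction of the outer expectation. Conditional Independence is stated agent-by-agent ($A_i\perp W_i\mid R_i$ and $B_i\perp X_i\mid S_i$), yet identification requires a \emph{dyad-level} factorization of $f_{A_i,B_j\mid W_i,X_j,R_i,S_j}$; this is legitimate only because the ego and alter are distinct, independently drawn units, so the joint law factors across $i$ and $j$ before the per-agent ignorability is applied. A second point requiring attention is the role of Overlap (Assumption~\ref{ass: overlap}): the regression $q(w,x,r,s)$ is identified from the observed data only at proxy values $(r,s)$ for which the combination $(w,x)$ occurs with positive conditional probability, i.e. where $p_w(r)p_x(s)>0$. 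Since the outer integration in the final formula ranges over all of $\mathcal{R}\times\mathcal{S}$, the support restriction (\ref{eq: overlap_v1}), guaranteed by Assumption~\ref{ass: overlap}, is exactly what ensures that every point entering the integral corresponds to an identified value of $q$, so that the right-hand side of (\ref{eq: dyadic_asf_identification}) is well defined.
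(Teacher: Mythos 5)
Your proposal is correct and takes essentially the same approach as the paper: the identical chain of conditional-expectation manipulations (iterated expectations, Redundancy, Strict Exogeneity with the independence of the $\left\{ \left(A_{i},B_{i}\right)\right\}$ and $\left\{ \left(V_{ij},V_{ji}\right)\right\}$ sequences, then Conditional Independence) yielding $q\left(w,x,r,s\right)=\mathbb{E}\left[\left.\bar{Y}_{12}\left(w,x\right)\right|R_{1}=r,S_{2}=s\right]$ as in (\ref{eq: proxy_to_asf_key1}), followed by double marginal integration against $f_{R}\left(r\right)f_{S}\left(s\right)$ as in (\ref{eq: proxy_to_asf_key2}), with Assumption \ref{ass: overlap} ensuring $q$ is identified on all of $\mathcal{R}\times\mathcal{S}$. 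Your remarks on the dyad-level density factorization (valid because ego and alter are independent draws) and on the precise role of Overlap match the paper's own reasoning.
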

Theorem \ref{thm: dyadic_asf} shows that the ASF is identified by
double marginal integration over the dyadic proxy variable regression
function. Double marginal integration also features in \citet{Graham_Imbens_Ridder_JBES18},
in the context of identifying an average match function (AMF), and
\citet{Brown_Newey_EM98}, in their discussion of efficient expectation
estimation under independence restrictions. However the random array
structure present here is absent in both these examples, which accounts
for many of the differences in underlying arguments.

\subsection{Estimation of the average structural function}

Let $q\left(w,x,r,s;\gamma\right)$ be a (flexibly) parametric model
for the dyadic proxy variable regression function. For example, if
the outcome of interest is export flows, we might specify that
\[
q\left(w,x,r,s;\gamma\right)=\exp\left(t\left(Q_{i}\right)'\gamma\right),
\]

with $Q_{i}=\left(W_{i}',X_{i}',R_{i}',S_{i}'\right)'$ and $t\left(Q_{i}\right)$
a finite (and pre-specified) set of basis functions (preferably including
interactions of terms in the treatment variables -- $W,X$ -- and
proxy variables -- $R,S$). We can estimate $\gamma$ use the Poisson
dyadic regression estimator described in Section \ref{sec: dyadic_regression}.
Proceeding in this way delivers an asymptotically linear representation
for $\sqrt{N}\left(\hat{\gamma}-\gamma_{0}\right)$ of
\begin{equation}
\sqrt{N}\left(\hat{\gamma}-\gamma_{0}\right)=-\Gamma_{0}^{-1}\frac{2}{\sqrt{N}}\sum_{i=1}^{N}\left\{ \frac{\bar{s}_{1}^{e}\left(Q_{i},U_{i};\gamma_{0}\right)+\bar{s}_{1}^{a}\left(Q_{i},U_{i};\gamma_{0}\right)}{2}\right\} +o_{p}\left(1\right)\label{eq: PVR_asym_lin}
\end{equation}
with $U_{i}=\left(A_{i},B_{i}\right)'$, $\Gamma_{0}$ the probability
limit of the Hessian matrix associated with the dyadic Poisson composite
log-likelihood, and $\bar{s}_{1}^{e}\left(Q_{i},U_{i};\gamma_{0}\right)$
and $\bar{s}_{1}^{a}\left(Q_{i},U_{i};\gamma_{0}\right)$ as defined
\vpageref{def: ego_alter_projections} (with $Q_{i}$ playing the
role of $X_{i}$).

With an estimate of $\gamma$ in hand, form the fitted values $\left\{ q\left(w,x,R_{i},S_{j};\hat{\gamma}\right)\right\} _{i<j}$
and, invoking Theorem \ref{thm: dyadic_asf}, compute the analog estimate
\begin{equation}
\hat{m}^{\mathrm{ASF}}\left(w,x;\hat{\gamma}\right)=\tbinom{N}{2}^{-1}\sum_{i=1}^{N-1}\sum_{j=i+1}^{N}\frac{q\left(w,x,R_{i},S_{j};\hat{\gamma}\right)+q\left(w,x,R_{j},S_{i};\hat{\gamma}\right)}{2}.\label{eq: ASF_hat}
\end{equation}
To present the limit distribution of $\hat{m}^{\mathrm{ASF}}\left(w,x;\hat{\gamma}\right)$
I impose a regularity condition on the proxy variable regression function:
\begin{assumption}
\label{ass: PVR_regularity_condition} (i) $\gamma\in\mathbb{C}\subseteq\mathbb{R}^{\dim\left(\gamma\right)}$
with $\mathbb{C}$ compact, (ii) $q\left(w,x,r,s;\gamma\right)$ is
twice continuously differentiable in $\gamma$, and (iii) the expectations
$\mathbb{E}\left[\left|q\left(w,x,R_{1},S_{2};\gamma\right)+q\left(w,x,R_{2},S_{1};\gamma\right)\right|\right]$,
$\mathbb{E}\left[\left\Vert \frac{\partial q\left(w,x,R_{1},S_{2};\gamma\right)}{\partial\gamma}+\frac{\partial q\left(w,x,R_{2},S_{1};\gamma\right)}{\partial\gamma}\right\Vert _{2}\right]$
and $\mathbb{E}\left[\left\Vert \frac{\partial^{2}q\left(w,x,R_{1},S_{2};\gamma\right)}{\partial\gamma\partial\gamma'}+\frac{\partial^{2}q\left(w,x,R_{2},S_{1};\gamma\right)}{\partial\gamma\partial\gamma'}\right\Vert _{F}\right]$
are finite.
\end{assumption}
Under this assumption we have the following Lemma.
\begin{lem}
\textsc{(ASF Estimation)}\label{lem: ASF_with_estimated_PVR} Under
Assumption \ref{ass: PVR_regularity_condition}, with $\hat{\gamma}$
a $\sqrt{N}$ consistent estimate of $\gamma_{0}$, we have that
\begin{align}
\sqrt{N}\left(\hat{m}^{\mathrm{ASF}}\left(w,x;\hat{\gamma}\right)-m^{\mathrm{ASF}}\left(w,x;\gamma_{0}\right)\right)= & \frac{2}{\sqrt{N}}\sum_{i=1}^{N}\psi_{0}\left(w,x,R_{i},S_{i};\gamma_{0}\right)\label{eq: U_statistic_with_nuisance_parameter_expansion}\\
 & +M_{0}\left(w,x\right)\sqrt{N}\left(\hat{\gamma}-\gamma_{0}\right)+o_{p}\left(1\right)\nonumber 
\end{align}
where 
\begin{align*}
\psi_{0}\left(w,x,R_{1},S_{1};\gamma\right) & =\frac{q^{e}\left(w,x,R_{1};\gamma\right)+q^{a}\left(w,x,S_{1};\gamma_{0}\right)}{2}-m^{\mathrm{ASF}}\left(w,x;\gamma\right)\\
M_{0}\left(w,x\right)= & \frac{1}{2}\mathbb{E}\left[\frac{\partial q\left(w,x,R_{1},S_{2};\gamma_{0}\right)}{\partial\gamma'}+\frac{\partial q\left(w,x,R_{2},S_{1};\gamma_{0}\right)}{\partial\gamma'}\right]
\end{align*}
with
\begin{align*}
q^{e}\left(w,x,r;\gamma\right) & =\mathbb{E}_{S}\left[q\left(w,x,r,S;\gamma\right)\right]\\
q^{a}\left(w,x,s;\gamma\right) & =\mathbb{E}_{R}\left[q\left(w,x,R,s;\gamma\right)\right].
\end{align*}
\end{lem}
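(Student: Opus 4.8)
The plan is to treat $\hat m^{\mathrm{ASF}}(w,x;\hat\gamma)$ as a symmetric, order-two U-statistic in the i.i.d.\ sample $\{(R_i,S_i)\}_{i=1}^N$ whose kernel is evaluated at the first-stage estimate $\hat\gamma$, and to separate the intrinsic sampling variation of the U-statistic from the propagated estimation error in $\hat\gamma$. First I would apply an exact mean-value expansion in the parameter,
\[
\hat m^{\mathrm{ASF}}(w,x;\hat\gamma) = \hat m^{\mathrm{ASF}}(w,x;\gamma_0) + \frac{\partial \hat m^{\mathrm{ASF}}(w,x;\bar\gamma)}{\partial\gamma'}(\hat\gamma-\gamma_0),
\]
valid for some mean value $\bar\gamma$ on the segment joining $\hat\gamma$ and $\gamma_0$ by part (ii) of Assumption~\ref{ass: PVR_regularity_condition}. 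This reduces the claim to two pieces: (a) the behaviour of $\hat m^{\mathrm{ASF}}(w,x;\gamma_0)$ about its mean, and (b) the probability limit of the Jacobian at $\bar\gamma$.

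For piece (a), with $\gamma_0$ held \emph{fixed} the object $\hat m^{\mathrm{ASF}}(w,x;\gamma_0)$ is a genuine U-statistic with symmetric kernel $g(Z_i,Z_j)=\tfrac12\big[q(w,x,R_i,S_j;\gamma_0)+q(w,x,R_j,S_i;\gamma_0)\big]$, writing $Z_i=(R_i,S_i)$, whose expectation equals $m^{\mathrm{ASF}}(w,x;\gamma_0)$ because $R_1$ and $S_2$ are drawn from distinct, independent units (Theorem~\ref{thm: dyadic_asf}). The first Hoeffding projection is $g_1(r,s)=\mathbb E[g((r,s),Z_j)]=\tfrac12\big[q^e(w,x,r;\gamma_0)+q^a(w,x,s;\gamma_0)\big]$, so the centered projection kernel is exactly $\psi_0(w,x,r,s;\gamma_0)$. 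The Hájek decomposition then yields
\[
\sqrt N\big(\hat m^{\mathrm{ASF}}(w,x;\gamma_0)-m^{\mathrm{ASF}}(w,x;\gamma_0)\big) = \frac{2}{\sqrt N}\sum_{i=1}^N \psi_0(w,x,R_i,S_i;\gamma_0)+o_p(1),
\]
where the degenerate second-order remainder has variance $O(N^{-2})$ and so vanishes after scaling; the finite first and second moments imposed in Assumption~\ref{ass: PVR_regularity_condition}(iii) ensure square integrability of the kernel, which is all the projection requires.

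For piece (b), differentiating term by term gives another symmetric U-statistic, now in the derivatives $\partial q/\partial\gamma'$. Assumption~\ref{ass: PVR_regularity_condition}(iii) supplies the integrability needed for the weak law of large numbers for U-statistics, so at the fixed value $\gamma_0$ the Jacobian converges in probability to $M_0(w,x)=\tfrac12\,\mathbb E\big[\partial q(w,x,R_1,S_2;\gamma_0)/\partial\gamma'+\partial q(w,x,R_2,S_1;\gamma_0)/\partial\gamma'\big]$. Since $\hat\gamma\to_p\gamma_0$ forces $\bar\gamma\to_p\gamma_0$, I would upgrade this pointwise statement to convergence over a shrinking neighbourhood of $\gamma_0$, using twice-differentiability (part (ii)) together with the finite expected Frobenius norm of the Hessian of $q$ (part (iii)), so that the Jacobian evaluated at the random point $\bar\gamma$ also converges to $M_0(w,x)$. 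Combining pieces (a) and (b) with the maintained $\sqrt N$-consistency $\sqrt N(\hat\gamma-\gamma_0)=O_p(1)$ and Slutsky's theorem then delivers the stated representation.

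The hard part will be piece (b): justifying the replacement of the random $\bar\gamma$ by $\gamma_0$. Because $\bar\gamma$ is data-dependent and the Jacobian is a U-statistic rather than a sum of i.i.d.\ terms, a pointwise LLN is insufficient; I need a \emph{uniform} law over a neighbourhood of $\gamma_0$, and it is precisely the continuous differentiability and the Hessian envelope bound in Assumption~\ref{ass: PVR_regularity_condition} that secure this. Everything else---identifying the projection kernel as $\psi_0$, bounding the projection remainder, and the final Slutsky argument---is routine once the kernel's square integrability is established.
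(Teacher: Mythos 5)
Your proposal is correct and follows essentially the same route as the paper, whose proof of Lemma \ref{lem: ASF_with_estimated_PVR} is an application of the appendix Lemma \ref{lem: u-statistic_with_estimated_parameter}: an exact mean-value expansion in $\gamma$, the H\'{a}jek projection of the U-statistic at $\gamma_{0}$ with kernel $\tfrac{1}{2}\left[q\left(w,x,R_{i},S_{j};\gamma_{0}\right)+q\left(w,x,R_{j},S_{i};\gamma_{0}\right)\right]$, and a uniform law for the Jacobian U-statistic over the compact set $\mathbb{C}$ obtained from a stochastic Lipschitz bound (the second-derivative envelope in Assumption \ref{ass: PVR_regularity_condition}(iii)) together with Lemma 2.9 of Newey and McFadden, which justifies evaluating the Jacobian at the random mean value $\bar{\gamma}$. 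One minor caveat: Assumption \ref{ass: PVR_regularity_condition}(iii) as written bounds only \emph{first} absolute moments, so the square integrability you correctly flag as needed for the $O\left(N^{-2}\right)$ projection-remainder variance is implicit rather than explicit -- a looseness shared by the paper's own proof, which invokes Serfling under the same first-moment conditions.
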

\begin{proof}
The result follows from Assumption \ref{ass: PVR_regularity_condition}
and an application of Lemma \ref{lem: ASF_with_estimated_PVR} in
Appendix \ref{app: proofs_and_derivations}.
\end{proof}
Lemma \ref{lem: ASF_with_estimated_PVR} and equation (\ref{eq: PVR_asym_lin})
yields an asymptotically linear representation for $\sqrt{N}\left(\hat{m}^{\mathrm{ASF}}\left(w,x;\hat{\gamma}\right)-m^{\mathrm{ASF}}\left(w,x;\gamma_{0}\right)\right)$
of
\begin{align}
\sqrt{N}\left(\hat{m}^{\mathrm{ASF}}\left(w,x;\hat{\gamma}\right)-m^{\mathrm{ASF}}\left(w,x;\gamma_{0}\right)\right)= & \frac{2}{\sqrt{N}}\sum_{i=1}^{N}\psi_{0}\left(w,x,R_{1},S_{1};\gamma_{0}\right)\nonumber \\
 & -M_{0}\left(w,x\right)\Gamma_{0}^{-1}\nonumber \\
 & \times\frac{2}{\sqrt{N}}\sum_{i=1}^{N}\left\{ \frac{\bar{s}_{1}^{e}\left(Q_{i},U_{i};\gamma_{0}\right)+\bar{s}_{1}^{a}\left(Q_{i},U_{i};\gamma_{0}\right)}{2}\right\} \nonumber \\
 & +o_{p}\left(1\right).\label{eq: asf_asym_linear_rep}
\end{align}
Under correct (enough) specification of the composite likelihood,
which will typically follow if the parametric form of the the PVR
function is itself correctly specified, both $\bar{s}_{1}^{e}\left(Q_{1},U_{1};\gamma_{0}\right)$
and $\bar{s}_{1}^{a}\left(Q_{1},U_{1};\gamma_{0}\right)$ will be
conditional mean zero given $Q_{1}$, hence the first and second terms
in (\ref{eq: asf_asym_linear_rep}) will be uncorrelated with each
other such that a CLT will imply a limit distribution of
\[
\sqrt{N}\left(\hat{m}^{\mathrm{ASF}}\left(w,x;\hat{\gamma}\right)-m^{\mathrm{ASF}}\left(w,x;\gamma_{0}\right)\right)\overset{D}{\rightarrow}\mathcal{N}\left(0,4\Xi_{0}\left(w,x\right)+4M_{0}\left(w,x\right)\left(\Gamma_{0}'\Sigma_{1}^{-1}\Gamma_{0}\right)^{-1}M_{0}\left(w,x\right)'\right)
\]
with $\Xi_{0}\left(w,x\right)=\mathbb{V}\left(\psi_{0}\left(w,x,R_{1},S_{1};\gamma_{0}\right)\right)$
and $\Sigma_{1}=\mathbb{V}\left(\frac{\bar{s}_{1}^{e}\left(Q_{i},U_{i};\gamma_{0}\right)+\bar{s}_{1}^{a}\left(Q_{i},U_{i};\gamma_{0}\right)}{2}\right)$.

The first term in the asymptotic variance reflects the econometrician's
imperfect knowledge of the distribution of the proxy variables $\left(R_{i}',S_{i}'\right)'$.
The second term reflects the asymptotic penalty associated with not
knowing the conditional distribution of $Y_{12}$ given $W_{1},X_{2},R_{1},S_{2}$.
See \citet{Graham_EM11} and \citet{Graham_Imbens_Ridder_JBES18}
for more expansive discussions in related contexts (see also \citet{Chamberlain_EM92}).

In order to conduct inference an asymptotic variance estimate is required.
Estimation of covariance matrix $\mathbb{V}\left(\sqrt{N}\left(\hat{\gamma}-\gamma_{0}\right)\right)=\left(\Gamma_{0}'\Sigma_{1}^{-1}\Gamma_{0}\right)^{-1}$
can proceed using one of the methods described in Section \ref{sec: dyadic_regression}.
The $\Xi_{0}\left(w,x\right)$ term may be estimated by 
\[
\hat{\Xi}\left(w,x\right)=\frac{1}{N}\sum_{i=1}^{N}\hat{\psi}\left(w,x,R_{i},S_{i};\hat{\gamma}\right)\hat{\psi}\left(w,x,R_{i},S_{i};\hat{\gamma}\right)'
\]
where $\hat{\psi}\left(w,x,R_{i},S_{i};\hat{\gamma}\right)=\frac{1}{N-1}\sum_{j\neq i}\frac{q\left(w,x,R_{i},S_{j};\hat{\gamma}\right)+q\left(w,x,R_{j},S_{i};\hat{\gamma}\right)}{2}-\hat{m}^{\mathrm{ASF}}\left(w,x;\hat{\gamma}\right)$.
The Jacobian, $M_{0}\left(w,x\right)$, is naturally estimated by
\[
M_{0}\left(w,x\right)=\frac{2}{N\left(N-1\right)}\sum_{i=1}^{N-1}\sum_{j=i+1}^{N}\frac{1}{2}\left[\frac{\partial q\left(w,x,R_{i},S_{j};\hat{\gamma}\right)}{\partial\gamma'}+\frac{\partial q\left(w,x,R_{j},S_{i};\hat{\gamma}\right)}{\partial\gamma'}\right].
\]
In practice, for reasons analogous to those discussed in Section \ref{sec: dyadic_regression},
it may be preferable to replace the estimate of $\Sigma_{1}$ with
one for $\Omega$ (as defined in equation (\ref{eq: OMEGA})) and
use a ``Fafchampfs and Gubert'' type estimate of $\mathbb{V}\left(\sqrt{N}\hat{m}^{\mathrm{ASF}}\left(w,x;\gamma_{0}\right)\right)$
in place of $\hat{\Xi}\left(w,x\right)$.

\subsection{Further reading and open questions}

I am aware of no extant work on causal inference in the setting considered
here. There is a large, and rapidly growing, literature on causal
inference and interference, some of which makes connections to networks
\citep[e.g.,][]{Athey_Eckles_Imbens_JASA18}; \citet{VanderWeele_An_HCASR13}
provide a review of some relevant research.

The approach to estimation outlined above builds upon the dyadic regression
material already introduced. A natural extension would to replace
the parametric proxy variable regression function estimate with a
non-parametric one (perhaps estimated using machine learning procedures).
Inverse probability weighting (IPW) type estimators are also easily
constructed \citep[cf.,][]{Graham_Imbens_Ridder_JBES18}. I conjecture
that augmented inverse probability weighting estimators (AIPW), exhibiting
double robustness type properties, could also be constructed. The
maximal asymptotic precision with which $m^{\mathrm{ASF}}\left(w,x;\gamma_{0}\right)$
may be estimated under Assumptions \ref{ass: dyadic_potential_response}
through \ref{ass: overlap} is also unknown. This semiparametric efficiency
bound calculation, as in other network problems with likelihoods that
don't easily factor into independent components, does not appear to
be straightforward.

\section{\label{sec: heterogeneity}Incorporating unobserved heterogeneity}

In its natural to associate the agent-specific $U_{i}$ and $U_{j}$
terms appearing in the \citet{Crane_Towsner_JSL18} representation
result for $X$-exchangeable networks with unobserved correlated heterogeneity.
In Section \ref{sec: dyadic_regression} I introduced methods for
parametric estimation of the dyadic regression function $q\left(x,x'\right)\overset{def}{\equiv}\mathbb{E}\left[\left.Y_{ij}\right|X_{i}=x,X_{j}=x'\right]$.
The relationship between $q\left(x,x'\right)$ and the graphon $h\left(x,x',u,u',v\right)$
depends, of course, on the dependence structure between $X_{i}$ and
$U_{i}$. Assumptions about this dependence structure played a prominent
role in identifying the average structural function (ASF) in Section
\ref{sec: policy_analysis}. In both Sections \ref{sec: dyadic_regression}
and \ref{sec: policy_analysis}, however, the focus was on direct
modeling of the conditional mean of $Y_{ij}$ given observed covariates.

In this section I wish to explore the advantages of a modeling approach
which directly specifies a parametric form for the graphon. This idea,
at least implicitly, goes back to the work of \citet{Holland_Leinhardt_JASA81}
and \citet{vanDuijn_et_al_SN04}.

The analysis in Sections \ref{sec: dyadic_regression} and \ref{sec: policy_analysis}
requires that the researcher directly specify the correct parametric
form of the dyadic regression function. In contrast, the exact structure
of (conditional) dependence across dyads sharing agents in common
was left unspecified. To understand how such dependence might arise,
it is useful to specify a structural \emph{correlated} random effects
model, analogous to those familiar from single-agent discrete choice
panel data settings \citep[e.g., ][]{Chamberlain_ReStud80,Chamberlain_HBE84}.

\subsection{A parametric dyadic potential response function}

For the purposes of illustration, I will focus on modeling a directed
binary outcome variable. The generalization to non-binary outcomes
is straightforward. Refer to the dyadic potential response function
introduced in Assumption \ref{ass: dyadic_potential_response}. Consider
the following parametric form for this function
\begin{align}
Y_{12}\left(w_{1},x_{2}\right) & =\mathbf{1}\left(t^{e}\left(w_{1}\right)'\beta_{0}^{e}+t^{a}\left(x_{2}\right)'\beta_{0}^{a}+\omega\left(w_{1},x_{2}\right)'\gamma_{0}+A_{1}+B_{2}+V_{12}>0\right)\label{eq: dyadic_degree_heterogeneity}\\
 & =h\left(w_{1},x_{2},A_{1},B_{2},V_{12}\right)\nonumber 
\end{align}
with 
\begin{equation}
\left.\left(V_{12},V_{21}\right)\right|Q_{1},Q_{2},A_{1},B_{1},A_{2},B_{2}\sim\mathcal{N}\left(\left(\begin{array}{c}
0\\
0
\end{array}\right),\left(\begin{array}{cc}
1 & \zeta\\
\zeta & 1
\end{array}\right)\right)\label{eq: dyadic_shock}
\end{equation}
and independently distributed across dyads. As in Section \ref{sec: policy_analysis},
$X_{i}$ and $W_{j}$ correspond to the chosen ego and alter treatments;
$A_{i}$ and $B_{j}$ are unobserved ego and alter heterogeneity,
which may be correlated with these treatment choices, and $R_{i}$
and $S_{j}$ are proxy variables (recall that $Q_{i}=\left(W_{i}',X_{i}',R_{i}',S_{i}'\right)'$).
The vectors $t^{e}\left(w_{1}\right)$, $t^{a}\left(x_{2}\right)$
and $\omega\left(w_{1},x_{2}\right)$ consist of known basis functions
in the underlying treatment variables. In the case where both $W_{i}$
and $X_{j}$ are binary we would set $t^{e}\left(w_{1}\right)=w_{1}$,
$t^{a}\left(x_{2}\right)=x_{2}$ and $\omega\left(w_{1},x_{2}\right)=w_{1}x_{2}$.

Next posit the correlated random effects specification for the joint
distribution of the ego and alter heterogeneity
\begin{align}
\left.\begin{array}{c}
A_{i}\\
B_{i}
\end{array}\right|W_{i},X_{i},R_{i},S_{i} & \sim\mathcal{N}\left(\left(\begin{array}{c}
\alpha_{0}^{e}+k^{e}\left(R_{i}\right)'\delta_{0}^{e}\\
\alpha_{0}^{e}+k^{a}\left(S_{i}\right)'\delta_{0}^{a}
\end{array}\right),\left(\begin{array}{cc}
\sigma_{A}^{2} & \rho\sigma_{A}\sigma_{B}\\
\rho\sigma_{A}\sigma_{B} & \sigma_{B}^{2}
\end{array}\right)\right),\label{eq: dyadic_correlated_random_effects}
\end{align}
with $k^{e}\left(R_{i}\right)$ and $k^{a}\left(S_{i}\right)$ vectors
of known functions of the proxy variables. Note that (\ref{eq: dyadic_degree_heterogeneity})
and (\ref{eq: dyadic_correlated_random_effects}) jointly imply the
selection on observables, Assumption \ref{ass: selecton_on_observables}
introduced earlier. Redundancy and strict exogeneity, respectively
Assumptions \ref{ass: redundancy} and \ref{ass: strict_exogeneity},
also hold in this set-up.

Averaging over $A_{i}$ and $B_{j}$ gives a dyadic proxy variable
regression function of
\begin{equation}
q\left(W_{i},X_{j},R_{i},S_{j};\eta_{0}\right)=\Phi\left(T_{ij}'\eta_{0}\right)\label{eq: dyadic_re_probit}
\end{equation}
for $\eta_{0}=\left(1+\sigma_{A}^{2}+\sigma_{B}^{2}\right)^{-1/2}\left(\alpha_{0}^{e}+\alpha_{0}^{a},\left(\beta_{0}^{e}\right)',\left(\beta_{0}^{a}\right)',\gamma_{0}',\left(\delta_{0}^{e}\right)',\left(\delta_{0}^{a}\right)'\right)'$
and 
\[
T_{ij}=\left(1,t^{e}\left(W_{i}\right),t^{a}\left(X_{j}\right),\omega\left(W_{i},X_{j}\right)',k^{e}\left(R_{i}\right),k^{a}\left(S_{j}\right)\right)'.
\]

It is possible to estimate $\eta_{0}$ along the lines outlined in
Section \ref{sec: dyadic_regression} above. Alternatively one could
attempt to directly maximize the integrated likelihood implied by
(\ref{eq: dyadic_degree_heterogeneity}), (\ref{eq: dyadic_shock})
and (\ref{eq: dyadic_correlated_random_effects}). This would be computationally
non-trivial since the integral does not easily factor. \citet{vanDuijn_et_al_SN04}
and \citet{Zijlstra_et_al_BJMSP09} develop this approach using Markov
Chain Monte Carlo (MCMC) methods.

\subsection{Triad probit: a correlated random effects estimator}

An intermediate approach, which is more efficient than the basic dyadic
regression estimator introduced earlier, and additionally recovers
more features of the graph generation process, is what I will call
\emph{triad probit}. Triad probit is also a composite likelihood estimator.
Instead of modeling the dyadic outcome, $\left.Y_{12}\right|Q_{1},Q_{2}$,
marginally however, it is composed of component likelihoods for the
joint outcome $\left.\left(Y_{12},Y_{21},Y_{13},Y_{31}\right)\right|Q_{1},Q_{2},Q_{3}$.
That is I model the outcome configuration associated with a \emph{pair-of-dyads}
sharing one agent in common. An overall criterion function is constructed
by summing over the component log-likelihoods, so constructed, for
all $3\tbinom{N}{3}$ pairs-of-dyads sharing one agent in common.\footnote{This approach is related to the pairwise likelihood estimator for
models with crossed random effects discussed by \citet{Belio_Varin_SM05}
and \citet{Cattelan_Varin_ATAS13}.}

The probability of the event $Y_{12}=y_{12},Y_{21}=y_{21},Y_{13}=y_{13},Y_{31}=y_{31}$
given the parameters and regressors is

\begin{align}
\Pr\left(\left.Y_{12}=y_{12},Y_{21}=y_{21},Y_{13}=y_{13},Y_{31}=y_{31}\right|Q_{1},Q_{2},Q_{3}\right)= & \int_{\mathcal{A}_{12}}\int_{\mathcal{A}_{21}}\int_{\mathcal{A}_{13}}\int_{\mathcal{A}_{31}}\phi_{4}\left(\left.\mathbf{t}\right|\Sigma\right)\mathrm{d}\mathbf{t}\label{eq: triad_probit_contribution}
\end{align}
with $\phi_{4}\left(\left.t\right|\Sigma\right)$ the density of a
tetra-variate normal distribution with mean zero and covariance matrix
$\Sigma$. The intervals of integration are given by
\[
\mathcal{A}_{ij}=\left\{ \begin{array}{cc}
\left(-\infty,T_{ij}'\eta_{0}\right) & \text{if \ensuremath{y_{ij}=1}}\\
\left[T_{ij}'\eta_{0},\infty\right) & \text{if }\ensuremath{y_{ij}=0}
\end{array}\right.,
\]
with the covariance matrix, which is in correlation form (a scale
normalization), taking the form
\[
\Sigma=\Sigma\left(\zeta,\sigma_{A},\sigma_{B},\rho\right)=\left(\begin{array}{cccc}
1 & \frac{\zeta+2\rho\sigma_{A}\sigma_{B}}{1+\sigma_{A}^{2}+\sigma_{B}^{2}} & \frac{\sigma_{A}^{2}}{1+\sigma_{A}^{2}+\sigma_{B}^{2}} & \frac{\rho\sigma_{A}\sigma_{B}}{1+\sigma_{A}^{2}+\sigma_{B}^{2}}\\
\frac{\zeta+2\rho\sigma_{A}\sigma_{B}}{1+\sigma_{A}^{2}+\sigma_{B}^{2}} & 1 & \frac{\rho\sigma_{A}\sigma_{B}}{1+\sigma_{A}^{2}+\sigma_{B}^{2}} & \frac{\sigma_{B}^{2}}{1+\sigma_{A}^{2}+\sigma_{B}^{2}}\\
\frac{\sigma_{A}^{2}}{1+\sigma_{A}^{2}+\sigma_{B}^{2}} & \frac{\rho\sigma_{A}\sigma_{B}}{1+\sigma_{A}^{2}+\sigma_{B}^{2}} & 1 & \frac{\zeta+2\rho\sigma_{A}\sigma_{B}}{1+\sigma_{A}^{2}+\sigma_{B}^{2}}\\
\frac{\rho\sigma_{A}\sigma_{B}}{1+\sigma_{A}^{2}+\sigma_{B}^{2}} & \frac{\sigma_{B}^{2}}{1+\sigma_{A}^{2}+\sigma_{B}^{2}} & \frac{\zeta+2\rho\sigma_{A}\sigma_{B}}{1+\sigma_{A}^{2}+\sigma_{B}^{2}} & 1
\end{array}\right).
\]

The integral (\ref{eq: triad_probit_contribution}) does not have
a closed form expression. Fortunately a large econometrics and statistics
literature suggest various methods for its numerical evaluation; see,
for example, \citet{Keane_EM94} and \citet{Chib_Greenberg_BM98}.

Let $l_{123}^{*}\left(\theta\right)$ equal the logarithm of (\ref{eq: triad_probit_contribution})
with $\theta=\left(\eta',\zeta,\sigma_{A},\sigma_{B},\rho\right)'$.
To induce symmetry in the criterion function summands I form the average
\[
l_{ijk}\left(\theta\right)=\frac{1}{3}\left[l_{ijk}^{*}\left(\theta\right)+l_{jik}^{*}\left(\theta\right)+l_{kij}^{*}\left(\theta\right)\right].
\]
The triad probit estimate $\hat{\theta}_{\mathrm{TP}}$ of $\theta_{0}$
is the maximizer of the sum of the $l_{ijk}\left(\theta\right)$ kernels
over all $\tbinom{N}{3}$ triads in the network:
\begin{equation}
L_{N}\left(\theta\right)=\tbinom{N}{3}^{-1}\sum_{i<j<k}l_{ijk}\left(\theta\right).\label{eq: triad_probit}
\end{equation}
Note that (\ref{ass: selecton_on_observables}) sums over all $3\tbinom{N}{3}$
pairs-of-dyads sharing one agent in common. It does this by summing
over all $\binom{N}{3}$ triads in the network and, for each such
triad, summing over the three pairs-of-dyads sharing an agent in common
that can be constructed from it.

The criterion (\ref{eq: triad_probit}) is not a U-process-minimizer,
although, as in the other contexts introduced above, it shares similarities
with one. The results of \citet{Honore_Powell_JOE94} do not immediately
characterize the asymptotic sampling properties of $\hat{\theta}_{\mathrm{TP}}$.
Nevertheless arguments similar to those outlined in Sections \ref{sec: dyadic_regression}
and \ref{sec: policy_analysis} above can be applied to also analyze
$\hat{\theta}_{\mathrm{TP}}$. 

A quick outline of these arguments goes as follows. Let $S_{N}\left(\theta\right)=\tbinom{N}{3}^{-1}\sum_{i<j<k}s_{ijk}\left(\theta\right)$
with $s_{ijk}\left(\theta\right)=\frac{\partial l_{ijk}\left(\theta\right)}{\partial\theta}$.
Also define $\Gamma_{0}=\mathbb{E}\left[\frac{\partial^{2}l_{ijk}\left(\theta\right)}{\partial\theta\partial\theta'}\right]$
and, as earlier, $\Sigma_{q}=\mathbb{E}\left[s_{i_{1}i_{2}i_{3}}s_{j_{1}j_{2}j_{3}}'\right]$
to be the covariance of $s_{i_{1}i_{2}i_{3}}$ and $s_{j_{1}j_{2}j_{3}}$
when they share $q=0,1,2,3$ indices in common.

Calculation then gives
\begin{equation}
\mathbb{V}\left(\sqrt{N}S_{N}\left(\theta\right)\right)=9\Sigma_{1}+\frac{18}{N-1}\left(\Sigma_{2}-2\Sigma_{1}\right)+\frac{6}{\left(N-1\right)\left(N-2\right)}\left(\Sigma_{3}+3\Sigma_{1}\right)\label{eq: triad_probit_variance_of_score}
\end{equation}
which suggests, under regularity conditions, the limiting distribution
\begin{equation}
\sqrt{N}\left(\hat{\theta}_{\mathrm{TP}}-\theta_{0}\right)\overset{D}{\rightarrow}N\left(0,9\Gamma_{0}^{-1}\Sigma_{1}\Gamma_{0}^{-1}\right).\label{eq: triad_probit_asym_dis}
\end{equation}
Associated with the triad probit is a proxy variable regression function
estimate of
\[
q\left(W_{i},X_{j},R_{i},S_{j};\hat{\eta}_{\mathrm{TP}}\right)=\Phi\left(T_{ij}'\hat{\eta}_{\mathrm{TP}}\right)
\]
from which an estimate of the ASF (or differences thereof) can be
directly constructed according to equation (\ref{eq: ASF_hat}). This
corresponds (essentially) to a dyadic generalization of the average
partial effect (APE) estimator introduced by \citet{Chamberlain_HBE84}
in the context of a correlated random effects probit panel data model.

\subsection{Fixed effects approaches}

The models introduced above, while allowing for dependence in outcomes
across dyads sharing agents in common, restrict its structure. In
contrast, \citet{Graham_EM17} provides a fixed effects analysis of
a model where a undirected binary dyadic outcome is determined according
to
\begin{equation}
Y_{ij}=\mathbf{1}\left(\left[t\left(X_{i}\right)+t\left(X_{j}\right)\right]'\beta_{0}+\omega\left(X_{i},X_{j}\right)'\gamma_{0}+A_{i}+A_{j}-V_{ij}\leq0\right),\label{eq: graham_fe_model}
\end{equation}
with $V_{ij}$ standard logistic and independent across dyads. Specifically
he studies identification and estimation of $\gamma_{0}$, leaving
the joint distribution of $X_{i}$ and $A_{i}$ unrestricted (without
restrictions on this distribution $\beta_{0}$ is unidentified \citep[cf., ][]{Hausman_Taylor_EM81,Arellano_Bover_JE95}.
The parameter of interest, $\gamma_{0}$, indexes the strength of
any homophilous sorting on the observables agent attributes in $X_{i}$,
while $\left\{ A_{i}\right\} _{i=1}^{N}$ indexes unobserved\emph{
degree-heterogeneity}. Since real world network degree distributions
often have high variance (and in particular fat right tails), incorporating
degree heterogeneity may be important in practice \citep{Barabasi_Albert_Sc99,Barabasi_Bonabau_SciAmer03}.
\citet{Graham_EM17} shows how failing to accommodate degree heterogeneity
may attenuate measured homophily (i.e., bias estimates of $\gamma_{0}$).

Conditional on $\mathbf{X}=\left(X_{1},\ldots,X_{N}\right)'$ and
$\mathbf{A}=\left(A_{1},\ldots,A_{N}\right)'$, the likelihood for
the adjacency matrix $\mathbf{D}$ factors into $\binom{N}{2}$ conditionally
independent components. Absorbing $t\left(X_{i}\right)'\beta_{0}$
into the individual effect $A_{i}$, the model consists of the finite
dimensional parameter of interest, $\gamma_{0}$, and the $N$ incidental
heterogeneity parameters, $\mathbf{A}_{0}$. Let $K=\dim\left(\gamma_{0}\right)$;
in this model the number of parameters, $K+N$, is a function of the
order of the network. Since this number grows with $N$, the model
is non-standard \citep[cf., ][]{Holland_Leinhardt_JASA81,Chatterjee_et_al_AAP11}.

\citet{Graham_EM17} analyzes the large network properties of two
estimates of $\gamma_{0}$. The first estimate, leveraging the implicit
``large-N, large-T'' structure of dense networks, is the joint maximum-likelihood
one, which also simultaneously estimates the incidental parameters
$\mathbf{A}_{0}=\left(A_{01},\ldots,A_{0N}\right)'$. The second exploits
the exponential family structure of the model and conditions on a
sufficient statistic for $\mathbf{A}_{0}$. Both estimates have antecedents
in the literature on panel data.

\subsubsection*{Joint estimators}

Let $T_{ij}$ be an $N\times1$ vector with a one in the $i^{th}$
and $j^{th}$ elements and zeros elsewhere. The joint-MLE coincides
with the logit fit of $Y_{ij}$ onto $\omega\left(X_{i},X_{j}\right)$
and $T_{ij}$ for all $i<j$.\footnote{\citet{Graham_EM17} outlines a more convenient nested-fixed-point
approach to estimation based upon an insight due to \citet{Chatterjee_et_al_AAP11}.} Although this estimator involves $K+N$ parameters, it is based upon
a criterion function with $\tbinom{N}{2}=O\left(N^{2}\right)$ summands.
This feature is similar to joint maximum likelihood estimation in
a panel data setting where both $N$ and $T$ are allowed to grow.
Here each of $N$ agents make $N-1$ linking decisions; the latter
is analogous to ``$T$'' in the ``large-N, large-T'' panel data
setting. As the number of agents in the network grows, so to does
the number of link decisions observed for each of them. This feature
of the model allows for consistent estimation of both $\gamma_{0}$
and $\mathbf{A}_{0}$, although, as in the panel data case, there
is a bias in the limit distribution of $\hat{\gamma}$ which must
be corrected in order to undertake asymptotically valid inference
\citep{Hahn_Newey_EM04,Arellano_Hahn_WC07}.\footnote{A technical difficultly involving the inverse Hessian arises in the
network setting. A similar challenge is also present in panel data
models with time effects \citep{FernandezVal_Weidner_JOE16}.} 

Graham's \citeyearpar{Graham_EM17} assumptions imply that the limiting
network will be dense. \citet{Yan_et_al_JASA18} show that it is possible
to weaken his assumptions somewhat, but it appears impossible to accommodate
asymptotic sequences with sparse limits. In Monte Carlo experiments
the joint MLE works poorly in networks with low density. Researchers
are advised to be cautious when applying this estimator to low density
networks.

\citet{Dzemski_RESTAT18} and \citet{Yan_et_al_JASA18} study joint
estimation of a directed version of (\ref{eq: graham_fe_model}).
The former paper presents a method of testing for reciprocity in links
as well as for neglected transitivity.

\subsubsection*{Conditional estimators}

Under the logistic assumption, the likelihood associated with (\ref{eq: graham_fe_model})
is a member of the exponential family. It turns out that the degree
sequence of the network is a sufficient statistic for $\mathbf{A}_{0}$
\citep{Snijders_JSS02}. A conditional maximum likelihood estimator
could be constructed, however, unlike in the panel case considered
by \citet{Chamberlain_ReStud80}, the likelihood does not nicely factor
into independent components. It would also be non-trivial to evaluate
and maximize the conditional likelihood \citep[cf., ][]{Blitzstein_Diaconis_IM11}.

\citet{Graham_EM17} instead builds a criterion involving tetrads
-- quadruples of agents. A tetrad is the smallest subgraph that is
not completely determined by its degree sequence. For example, there
are three isomorphisms of the two edge graphlet $\vcenter{\hbox{\includegraphics[scale=0.125]{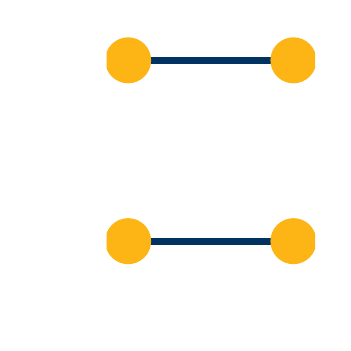}}}$
on four vertices, each with an identical subgraph degree sequence
of $\left(1,1,1,1\right)'$. If $\gamma_{0}=0$, then conditional
on the event that a randomly sampled tetrad takes one of these three
forms, any one of them occurs with an equal probability of one third.
Deviations from this benchmark are possible when $\gamma_{0}\neq0$,
depending on the configuration of covariates across agents in the
sampled tetrad. Graham's \citeyearpar{Graham_EM17} conditional estimator,
which he calls tetrad logit, is based upon this insight.

The large network properties of the tetrad logit estimate of $\gamma_{0}$
may be derived in a way roughly analogous to that of the dyadic regression
estimators introduced above. The analysis in \citet{Graham_EM17},
however, allows for sequences of graphs which are sparse in the limit.
This affects the rate-of-convergence of the tetrad logit estimate.
Conveniently its limit distribution remains normal under both dense
and sparse sequences.

\citet{Jochmans_JBES18} provides a conditional analysis, including
several worked empirical examples, of a directed analog of tetrad
logit. \citet{Nadler_WP15} proposes a related estimator for bipartite
networks and presents an empirical application.

\subsection{Further reading and open questions}

\citet{Varin_et_al_SS11} survey the statistics literature on composite
likelihoods. A standard reference on U-Process minimizers is \citet{Honore_Powell_JOE94}.
Many of the results presented in this section, as well as the previous
ones, utilize ideas coming from the theory of composite likelihood
and U-Process minimizers. Connections to panel data have also featured
prominently; here I recommend \citet{Chamberlain_ReStud80}, \citet{Chamberlain_HBE84},
\citet{Arellano_Honore_HBE01}, and \citet{Arellano_Hahn_WC07}.

The triad probit estimator introduced above has a rate of convergence
equal to $\sqrt{N}$. In the simplest setup the the tetrad logit estimator
has a faster $\sqrt{\tbinom{N}{2}}$ rate of convergence. This is
peculiar because, invoking intuitions familiar from panel data, one
would generally expect an estimate based upon an integrated/random
effects likelihood to be more efficient than one based upon a conditional/fixed
effects likelihood. Here the two estimators have different rates of
convergence with, perhaps, a ranking reverse of what one might expect
\emph{a priori}.

\citet{vanDuijn_et_al_SN04} use MCMC methods to (essentially) maximize
the network likelihood implied by (\ref{eq: dyadic_degree_heterogeneity}),
(\ref{eq: dyadic_shock}) and (\ref{eq: dyadic_correlated_random_effects}).
Their approach to inference is Bayesian; it would be interesting to
formally study the maximum integrated likelihood estimator proper
(as opposed to the triad probit composite likelihood estimator introduced
here). What is the rate of convergence associated with the true random
effects maximum likelihood estimator (MLE)? Likewise, tetrad logit,
while inspired by conditional likelihood ideas, is not a conditional
MLE (it is akin to a conditional composite MLE). \citet{Graham_EM17}
describes the conditional MLE, but does not formally analyze it. Such
a formal analysis could be insightful. More generally we know very
little about efficiency in even the simplest of network problems.

The introduction of heterogeneity in this section is restrictive in
nature. It allows for what \citet{Graham_EM17} calls degree heterogeneity.
Methods for incorporating assortative matching on latent agent-specific
attributes would also be useful. For inspiration see, for example,
\citet{Krivitsky_et_al_SN09}. Recent ideas from panel data may be
useful here too; especially the work on discrete heterogeneity done
by \citet{Bonhomme_Manresa_EM15}. Ideas from the stochastic block
literature -- which is not surveyed in this chapter -- might also
be useful for incorporating richer heterogeneity structure into econometric
models for dyadic outcomes.

\section{\label{sec: Statistics}Asymptotic distribution theory for network
statistics}

\citet{Wasserman_Faust_Bk94} exposit a large post World War II literature
on the computation and interpretation of different statistics of the
adjacency matrix. Researchers routinely report statistics like reciprocity,
transitivity, moments of the degree sequence, and diameter when presenting
real world network data. Measures of statistical uncertainty almost
never accompany these reports. The leading approach to assessing whether
a reported network statistic is unusual is to informally compare it
with its expected value under an Erdös-Renyi null or, alternatively,
a reference sample of real world networks \citep[e.g., ][]{Milo_el_al_Sci02,Newman_NetBook10,Graham_AR15}.\footnote{\citet{Blitzstein_Diaconis_IM11} present an elegant approach based
on comparing statistics of the network in hand to those of the reference
set of all graphs with the same degree sequence (i.e., a $\beta$-model
null).} Informal simulation-based approaches to ``inference'' abound. 

Large network approaches to hypothesis testing only recently emerged
\citep[e.g., ][]{Bobollas_et_al_RSA07,Picard_et_al_JCB08,Bickel_et_al_AS11}.
This is currently an active research area \citep[e.g., ][]{Gao_Lafferty_arXiv17,Green_Shalizi_arXiv17,Menzel_arXiv17},
with many open questions. To be fair, work on the distributional properties
of network statistics under specific graph generation processes, generally
the Erdös-Renyi one or close variants, was undertaken earlier. This
work arose largely in response to the seminal papers by \citet{Holland_Leinhardt_AJS70,Holland_Leinhardt_SM76}.
Examples include the work of Frank \citeyearpar{Frank_ANYAS79,Frank_JMS80,Frank_DM88},
\citet{Wasserman_JMS77} and \citet{Nowicki_SN91}. The last reference
is a useful survey of such analyses.

This section presents results on the large network distribution of
induced subgraph frequencies (and various statistics constructed from
them). I begin, in Subsection \ref{subsec: transitivity}, with a
detailed analysis of triad counts and their application to inference
on the transitivity index or global clustering coefficient \citep[e.g., ][p. 96]{Kolaczyk_NetBook09}.
This is a classic, practically important, and pedagogically valuable,
example. Results on counts of trees and cycles of any order are available
in the Appendix. In Subsection \ref{subsec: the-degree-sequences},
I turn to moments of the degree distribution, an area of intense focus
in applied work \citep[e.g., ][]{Barabasi_Bonabau_SciAmer03,Atalay_et_al_PNAS11,Acemoglu_etal_EM12}.

Not all common network statistics are covered by the results presented
in this section. Statistics such as diameter and average path length,
for example, have, to my knowledge, unknown sampling properties. Subsection
\ref{subsec: open-questions} discusses open questions.

The work surveyed in this section dates to the papers by \citet{Holland_Leinhardt_AJS70,Holland_Leinhardt_SM76}.
More recent contributions, generally by statisticians, were often
motivated by examples from computational biology \citep[e.g., ][]{Picard_et_al_JCB08}.
An especially important contribution is the paper by \citet{Bickel_et_al_AS11}.
This section draws heavily from the work by Bickel and coauthors.
Related ideas were used in the discussion of dyadic regression in
Section \ref{sec: dyadic_regression}. Recent work on strategic models
of network formation, where econometricians play the leading role,
arose separately. However, in Section \ref{sec: Strategic-models}
I argue that ideas from research on subgraph counts could be valuable
there as well. Specifically for structural estimation of strategic
network formation models.

The results in this section are based on the following hypothetical
repeated sampling experiment. Let $G_{\infty,N}$ be an infinite exchangeable
random graph of interest. The network in hand, $G_{N}$, is the one
induced by a random sample of $N$ vertices from $G_{\infty,N}$.
Let $h_{N}\left(u,v\right)$ denote the Aldous-Hoover graphon characterizing
the infinite graph $G_{\infty,N}$ from which the econometrician samples
$N$ agents independently at random. Note I suppress dependence of
this graphon on the mixing parameter, $\alpha$, since I seek to conduct
inference conditional on it (i.e., conditional on the empirical distribution
of $\left[D_{ij}\right]_{i,j\in\mathbb{N},i<j}$).

Using the observed network, $G_{N}$, we construct the statistic $t_{N}\left(G_{N}\right)$.
The sampling distribution of this statistic is the one induced by
repeated sampling of $N$ agents from the underlying infinite graph
$G_{\infty,N}$. To derive a limit distribution I assume there is
a sequence of infinite random graphs $\left\{ G_{\infty,N}\right\} $
-- indexed by $N$ -- such that
\[
h_{N}\left(u,v\right)=\rho_{N}w\left(u,v\right)
\]
with $\rho_{N}$ (possibly) approaching zero as $N\rightarrow\infty$.
In this way I pair a sequence of increasingly larger ``sampled''
networks with a corresponding sequence of infinite networks that are
allowed to become increasingly sparser. With this set-up we can study
the distribution of $t_{N}\left(G_{N}\right)$, appropriately scaled,
as $N\rightarrow\infty$.

As noted earlier, the above thought experiment does not mirror how
empirical networks are constructed in practice. Typically one of two
cases obtain. In the first, the network under study really is a very
large graph (e.g., the Facebook graph) and the econometrician really
does sample from it. However, due to spareness, sampling is rarely
conducted as described above. Instead snowball sampling, edge sampling,
path sampling etc. are typically used \citep{Crane_PFSNA18}. Understanding
how to consistently estimate network statistics and their sampling
distributions under these more exotic data collection schemes is an
interesting topic for future research. In the second case the econometrician
works with the complete graph on some finite population of vertices.
In this cases the idea of sampling from an infinite graph is a thought
experiment used to get results that are hopefully useful in practice.
It is this latter, rather commonplace case, which I have in mind here.

There is a subtlety in this second case, already touched upon in Section
\ref{sec: Basic-probability-tools} in the context of my discussion
of the Aldous-Hoover Theorem. A jointly exchangeable random graph
with a \emph{finite} number of agents need not have a probability
law with a conditionally independent dyad (CID) structure. The pattern
of dependence across links in such a network may be more complicated
than that implied by the Aldous-Hoover representation. I conjecture,
by speculative extrapolation based upon the example introduced in
Section \ref{sec: Basic-probability-tools}, that this is especially
the case when agents form links strategically. We know, however, that,
for $N$ large enough, joint exchangeability will deliver a probability
law for the network that is of the Aldous-Hoover form. This suggests
that, to derive limit theory, it is reasonable to proceed in the way
I do here; but there are missing steps in the argument. \citet{Menzel_WP16}
represents the only attempt I am aware of to struggle with these issues
in a disciplined way. A more rigorous pairing of the game theoretic
models of network formation of interest to many economists, with the
theory of graph limits would be a high priority topic for future research.

\subsection{\label{subsec: transitivity}Large network estimation of the transitivity
index}

In the social sciences, hypothesis formulation often involves graphlet
counts \citep[e.g., ][]{Holland_Leinhardt_AJS70,Bearman_et_al_AJS04,Choi_Wu_JSCM09,Jackson_et_al_AER12,Isakov_et_al_SS19}.\footnote{In practice it is easier to derive results for homomorphism frequencies
and, not coincidentally, the theory of graph limits generally works
with homomorphisms.} Graphlet counts are also used to construct important network statistics
like the transitivity index. It is this last statistic that is studied
in this subsection. 

After introducing some notation and definitions, I apply the basic
approach outlined by \citet[Proposition 6]{Bhattacharya_Bickel_AS15}
to calculate variance expressions for induced subgraph counts of two-stars
($\vcenter{\hbox{\includegraphics[scale=0.125]{twostar}}}$) and
triangles ($\vcenter{\hbox{\includegraphics[scale=0.125]{triangle}}}$).
While this is a relatively straightforward extension, it does require
some carefully constructed notation.\footnote{One could even argue that these expressions are already implicit in
\citet{Holland_Leinhardt_SM76}, although they did not explore the
properties of their expressions under sparse versus dense graph sequences,
nor did they analyze rates of convergence. Indeed, \citet[p. 580]{Wasserman_Faust_Bk94},
referring to the covariance calculations of \citet{Holland_Leinhardt_SM76},
comment that they ``can be time-consuming to calculate (and maybe
even difficult to comprehend)''.} Asymptotic normality of these counts, appropriately scaled, follows
from their results. An analysis of transitivity in the Nyakatoke risk-sharing
network studied by \citet{deWeerdt_IAP04} illustrates the practical
application of these ideas.

A special case of a CID model is the Erdös-Renyi graph generation
process (i.e., $h\left(u,v\right)=\rho$ for some $0<\rho<1$ and
all $\left(u,v\right)\in\left[0,1\right]^{2}$). The behavior of subgraph
counts under this GGP were studied by Nowicki and co-authors in the
late 1980s and early 1990s \citep{Nowicki_Wierman_DM88,Janson_Nowicki_PTRF91,Nowicki_SN91}.
It turns out that this case exhibits a form of degeneracy. Specifically,
the leading terms in the variance expressions presented below are
identically zero under the Erdös-Renyi graph generation process. Subgraph
frequencies remain asymptotically normal in this case, but with a
faster rate of convergence. A separate treatment of this case is provided
below.

\subsubsection*{Notation and estimation}

Recall from Section \ref{sec: Basic-probability-tools} that the induced
subgraph frequency of $S$ in $G_{N}$ is

\begin{equation}
P_{N}\left(S\right)=\frac{1}{\tbinom{N}{p}\left|\mathrm{iso}\left(S\right)\right|}\sum_{\mathbf{i}_{p}\in\mathcal{C}_{p,N}}\mathbf{1}\left(S\cong G_{N}\left[\mathbf{i}_{p}\right]\right).\label{eq: P(S)_hat}
\end{equation}
Under the maintained sampling scheme it is easy to see that (\ref{eq: P(S)_hat})
is an unbiased estimate of $P\left(S\right)=t_{\mathrm{ind}}\left(S,h\right)$,
the ``population'' induced subgraph density.

Consider the two-star ($\vcenter{\hbox{\includegraphics[scale=0.125]{twostar}}}$)
and triangle ($\vcenter{\hbox{\includegraphics[scale=0.125]{triangle}}}$)
triad configurations. Applying (\ref{eq: P(S)_hat}) gives the estimates
\begin{align}
P_{N}\left(\vcenter{\hbox{\includegraphics[scale=0.125]{twostar}}}\right)= & \tbinom{N}{3}^{-1}\frac{1}{3}\sum_{\mathbf{i}_{3}\in\mathcal{C}_{3,N}}\left[D_{i_{1}i_{2}}D_{i_{1}i_{3}}\left(1-D_{i_{2}i_{3}}\right)+D_{i_{1}i_{2}}\left(1-D_{i_{1}i_{3}}\right)D_{i_{2}i_{3}}\right.\label{eq: P(S)_hat_twostar}\\
 & \left.+\left(1-D_{i_{1}i_{2}}\right)D_{i_{1}i_{3}}D_{i_{2}i_{3}}\right]\nonumber \\
P_{N}\left(\vcenter{\hbox{\includegraphics[scale=0.125]{triangle}}}\right)= & \tbinom{N}{3}^{-1}\sum_{\mathbf{i}_{3}\in\mathcal{C}_{3,N}}D_{i_{1}i_{2}}D_{i_{1}i_{3}}D_{i_{2}i_{3}}.\label{eq: P(S)_hat_triangle}
\end{align}
From (\ref{eq: P(S)_hat_twostar}) and (\ref{eq: P(S)_hat_triangle})
we can construct an estimate of the \emph{transitivity index} or global
clustering coefficient:
\begin{equation}
\mathrm{TI_{N}}=\frac{3\times\left(\text{\# of triangles}\right)}{\left(\text{\# of two-stars}\right)+3\times\left(\text{\# of triangles}\right)}=\frac{P_{N}\left(\vcenter{\hbox{\includegraphics[scale=0.125]{triangle}}}\right)}{P_{N}\left(\vcenter{\hbox{\includegraphics[scale=0.125]{twostar}}}\right)+P_{N}\left(\vcenter{\hbox{\includegraphics[scale=0.125]{triangle}}}\right)}=\frac{Q_{N}\left(\vcenter{\hbox{\includegraphics[scale=0.125]{triangle}}}\right)}{Q_{N}\left(\vcenter{\hbox{\includegraphics[scale=0.125]{twostar}}}\right)}.\label{eq: transitivity_index}
\end{equation}
Under an Erdös-Renyi graph generation process it is easy to show that
(\ref{eq: transitivity_index}) should be close to the density of
the network \citep[e.g., ][]{Graham_AR15}. \citet{Gao_Lafferty_arXiv17}
develop a test based on this idea. If, suitably normalized, the limit
distribution of the vector $\left(P_{N}\left(\vcenter{\hbox{\includegraphics[scale=0.125]{twostar}}}\right),P_{N}\left(\vcenter{\hbox{\includegraphics[scale=0.125]{triangle}}}\right)\right)'$
can be characterized, then delta methods can be used to conduct large
network inference on transitivity. This idea is developed in detail
below.

Distribution theory for induced subgraph counts may also be useful
for structural model estimation via the method of (simulated) minimum
distance. In this approach model parameters are estimated by matching
model-implied values of subgraph counts with their empirical counterparts.
Sampling uncertainty in such estimates, stems from the corresponding
uncertainty about the reduced form subgraph counts being matched.
This idea is developed more completely in Section \ref{sec: Strategic-models}.

\subsubsection*{Graphlet Stitchings}

In developing an interpretable expression for the variance of graphlet
counts, it is helpful to introduce something I will call a \emph{graphlet
stitching}.\footnote{After completing the initial draft of this Chapter I discovered independent
work by \citet{Green_Shalizi_arXiv17} that develops a closely related
concept which they call ``merged copy sets''. Graphlet stitchings,
as I define them, are more suited to my specific needs; although both
approaches lead to the same answer in the end. The basic idea is already
implicit in \citet{Bhattacharya_Bickel_AS15} (and really even \citet{Holland_Leinhardt_SM76}).
Essentially the same idea is also used in \citet{Graham_EM17} to
derive large network theory for Tetrad Logit.}

Let $R$ and $S$ be two $p^{th}$ order subgraphs of interest to
the econometrician. Furthermore, let $\mathbf{i}_{p}$ and $\mathbf{j}_{p}$
be two p-tuples drawn independently at random from $\mathcal{C}_{p,N}$
(as defined in Section \ref{subsec: Network-moments} above). The
(scaled) covariance of the events ``$G_{N}\left[\mathbf{i}_{p}\right]$
is isomorphic to $R$'' and ``$G_{N}\left[\mathbf{j}_{p}\right]$
is isomorphic to $S$'', when there are $q$ integers/vertices common
to $\mathbf{i}_{p}=\left\{ i_{1},i_{2},\ldots,i_{p}\right\} $ and
$\mathbf{j}_{p}=\left\{ j_{1},j_{2},\ldots,j_{p}\right\} $, is
\begin{align}
\Sigma_{q}\left(R,S\right) & =\Xi\left(\mathcal{W}_{q,R,S}\right)-P\left(R\right)P\left(S\right)\label{eq: SIGMA_q_S_R}
\end{align}
where $P\left(R\right)$ is the induced subgraph density defined in
equation (\ref{eq: induced_subgraph_density}) and
\begin{align}
\Xi\left(\mathcal{W}_{q,R,S}\right)\overset{def}{\equiv} & \frac{\mathbb{E}\left[\mathbf{1}\left(R\cong G_{N}\left[\mathbf{i}_{p}\right]\right)\mathbf{1}\left(S\cong G_{N}\left[\mathbf{j}_{p}\right]\right)\right]}{\left|\mathrm{iso}\left(R\right)\right|\left|\mathrm{iso}\left(S\right)\right|}\label{eq: XI_multiset}
\end{align}
Here $\mathcal{W}_{q,R,S}$ is notation for a \emph{set} of what I
call \emph{graphlet stitchings}. In order to understand the structure
of $\Xi\left(\mathcal{W}_{q,S,R}\right)$ further we need a formal
definition.
\begin{defn}
\emph{\label{def: graphlet_stitching}}\textsc{(Graphlet Stitching)}
Let $W_{q,R,S}$ be the graph union of $R$ and $S$, labelled isomorphisms
of two graphlets of interest, if\\
(i) $\mathcal{V}\left(R\right)\subseteq\mathcal{V}\left(G\right)$
and $\mathcal{V}\left(S\right)\subseteq\mathcal{V}\left(G\right)$;\\
(ii) $\left|\mathcal{V}\left(R\right)\right|=\left|\mathcal{V}\left(S\right)\right|=p$
vertices each;\\
(iii) $\left|\mathcal{V}\left(R\right)\cap\mathcal{V}\left(S\right)\right|=q$
vertices in common;\\
(iv) identical structure across all vertices in common (i.e., $\left(i,j\right)\in\mathcal{E}\left(R\right)\Leftrightarrow\left(i,j\right)\in\mathcal{E}\left(S\right)\thinspace\forall\thinspace i,j\in\mathcal{V}\left(R\right)\cap\mathcal{V}\left(S\right)$),\\
then $W_{q,R,S}$ is a graphlet stitching of $R$ and $S$.
\end{defn}
Next define the \emph{set} of all feasible stitchings of $R$ and
$S$ which satisfy Definition \ref{def: graphlet_stitching} as $\mathcal{W}_{q,S,R}$.
When $R$ and $S$ belong to the same isomorphism class write $\mathcal{W}_{q,S,S}=\mathcal{W}_{q,S}$.

Requirement (iv) of Definition \ref{def: graphlet_stitching} is constraining.
It implies, for example, that some pairs of labelled two-stars cannot
be stitched together. For example $R=\left(\left\{ 1,2,3\right\} ,\left\{ \left(1,2\right),\left(1,3\right)\right\} \right)$
and $S=\left(\left\{ 1,2,4\right\} ,\left\{ \left(1,4\right),\left(2,4\right)\right\} \right)$
cannot be logically stitched together because the $\left(1,2\right)$
edge is present in $R$ but not $S$. This violates requirement (iv)
of Definition \ref{def: graphlet_stitching}. Note also that the set
$\mathcal{W}_{q,S,R}$ may contain elements which are isomorphic to
one another. 

For simplicity consider the vertices $1,2,\ldots,p,p+1,\ldots,2p-q$
in $G_{N}$.\footnote{Since $G_{N}$ is induced by a random sample of vertices, vertices
$1,2,\ldots,p,p+1,\ldots,2p-q$ correspond to a random $2p-q$ tuple.} If $R$, defined on vertices $1,2,\ldots,p$ , is isomorphic to the
subgraph of $G_{N}$ induced by vertices $\left\{ 1,\ldots,p\right\} $
and $S$, defined on vertices vertices $p-q,\ldots,2p-q$, is isomorphic
to the subgraph of $G_{N}$ induced by vertices $\left\{ p-q,\ldots2p-q\right\} $,
then it must be the case that the union of these two induced subgraphs
is an element of $\mathcal{W}_{q,S,R}$. This gives the equality
\begin{equation}
\Xi\left(\mathcal{W}_{q,S,R}\right)=\sum_{W\in\mathcal{W}_{q,S,R}}\frac{\Pr\left(W=G_{N}\left[\left\{ 1,\ldots,p\right\} \right]\cup G_{N}\left[\left\{ p-q,\ldots,2p-q\right\} \right]\right)}{\left|\mathrm{iso}\left(R\right)\right|\left|\mathrm{iso}\left(S\right)\right|}.\label{eq: stitching_probability}
\end{equation}
Note that the graph union of $G_{N}\left[\left\{ 1,\ldots,p\right\} \right]$
and $G_{N}\left[\left\{ p-q,\ldots,2p-q\right\} \right]$ may differ
from the subgraph induced by the union of the two overlapping vertex
sets: 
\[
G_{N}\left[\left\{ 1,\ldots,p\right\} \right]\cup G_{N}\left[\left\{ p-q,\ldots,2p-q\right\} \right]\neq G_{N}\left[\left\{ 1,\ldots,2p-q\right\} \right].
\]
This is because the union of $G_{N}\left[\left\{ 1,\ldots,p\right\} \right]$
and $G_{N}\left[\left\{ p-q,\ldots,2p-q\right\} \right]$ will not
include any edges between $\left\{ 1,\ldots,p-q-1\right\} $, the
vertices in $R$ alone, and $\left\{ p+1,\ldots,2p-q\right\} $, the
vertices in $S$ alone, while $G_{N}\left[\left\{ 1,\ldots,2p-q\right\} \right]$
may. By exchangeability the right-hand-side of (\ref{eq: stitching_probability})
is the same for any vertex sets $\mathbf{i}_{p}=\left\{ i_{1},i_{2},\ldots,i_{p}\right\} $
and $\mathbf{j}_{p}=\left\{ j_{1},j_{2},\ldots,j_{p}\right\} $ sharing,
as is implicitly assumed in what follows, $q$ vertices in common.

To check whether $R\cong G_{N}\left[\mathbf{i}_{p}\right]$ and $S\cong G_{N}\left[\mathbf{j}_{p}\right]$
we therefore check whether $G_{N}\left[\mathbf{i}_{p}\right]\cup G_{N}\left[\mathbf{j}_{p}\right]$
coincides with a particular (labeled) graphlet stitching of $R$ and
$S$. Doing so, in turn, requires us to check for the presence \emph{or}
absence of only $p\left(p-1\right)-\tbinom{q}{2}$ potential edges.
The presence or absence of the $\left(p-q\right)^{2}$ possible edges
from the vertices unique to $R$ to those unique to $S$ is immaterial.
Equation (\ref{eq: XI_multiset}) gives neither an induced or partial
subgraph frequency, but what I will call a \emph{graphlet stitching
frequency}.

\subsubsection*{Calculating graphlet stitching frequencies}

To understand how to calculate graphlet stitching frequencies in practice
it is helpful to work through a few examples. Figure \ref{fig: W_q1_twostar}
shows all the elements of $\mathcal{W}_{1,\vcenter{\hbox{\includegraphics[scale=0.125]{twostar}}}}$
on vertex set $\left\{ 1,2,3,4,5\right\} $, with vertex $1$ being
the vertex in common. The top row shows all isomorphisms of $\vcenter{\hbox{\includegraphics[scale=0.125]{twostar}}}$
on vertices $\left\{ 1,4,5\right\} $, while the left-most column
shows all such isomorphisms on vertices $\left\{ 1,2,3\right\} $.
The nine figures in the corresponding grid show all the associated
graphlet stitchings. 

A more complicated example is provide by $\mathcal{W}_{2,\vcenter{\hbox{\includegraphics[scale=0.125]{twostar}}}}$
, which is shown in Figure \ref{fig: W_q2_twostar}. The format of
the figure is the same as that of Figure \ref{fig: W_q1_twostar}.
The two vertices in common are $1$ and $2$. An interesting feature
of this example is that not all graphlet stitchings are feasible.

In evaluating $\Xi\left(\mathcal{W}_{q,S,R}\right)$ it is helpful
to observe that $\mathcal{W}_{q,S,R}$ may include multiple isomorphisms
of the same graph. Since the probabilities $\Pr\left(W=G_{N}\left[\left\{ 1,\ldots,p\right\} \right]\cup G_{N}\left[\left\{ p-q,\ldots,2p-q\right\} \right]\right)$
and $\Pr\left(W'=G_{N}\left[\left\{ 1,\ldots,p\right\} \right]\cup G_{N}\left[\left\{ p-q,\ldots,2p-q\right\} \right]\right)$
coincide when $W$ and $W'$ are isomorphic to one another, we can
also ``represent'' $\mathcal{W}_{q,S,R}$ as a multi-set, with one
(arbitrary) labelling of each of the non-isomorphic graphlet stitchings
retained as elements, but with multiplicities equal to the number
of isomorphic appearances. For example, the cardinality of $\mathcal{W}_{1,\vcenter{\hbox{\includegraphics[scale=0.125]{twostar}}}}$
is $\left|\mathrm{iso}\left(\vcenter{\hbox{\includegraphics[scale=0.125]{twostar}}}\right)\right|\times\left|\mathrm{iso}\left(\vcenter{\hbox{\includegraphics[scale=0.125]{twostar}}}\right)\right|=9$,
but with only three non-isomorphic elements. Inspecting Figure \ref{fig: W_q1_twostar}
we define the multi-set:
\[
\mathcal{W}_{1,\vcenter{\hbox{\includegraphics[scale=0.125]{twostar}}}}^{\mathrm{m}}=\left(\left\{ \vcenter{\hbox{\includegraphics[scale=0.125]{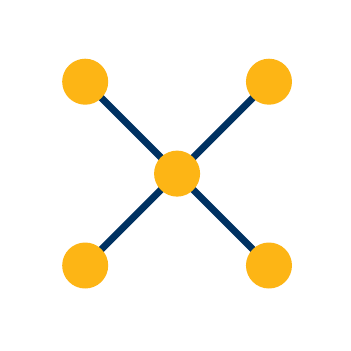}}},\vcenter{\hbox{\includegraphics[scale=0.125]{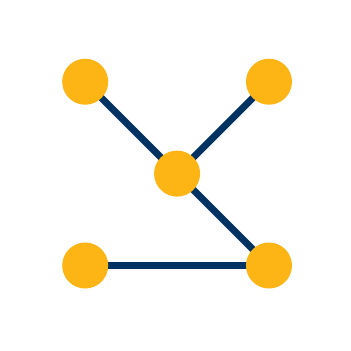}}},\vcenter{\hbox{\includegraphics[scale=0.125]{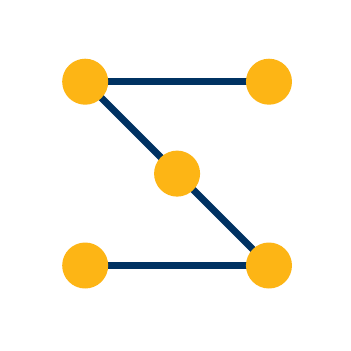}}}\right\} ,\left\{ \left(\vcenter{\hbox{\includegraphics[scale=0.125]{fourstar}}},1\right),\left(\vcenter{\hbox{\includegraphics[scale=0.125]{tailedthreestar}}},4\right),\left(\vcenter{\hbox{\includegraphics[scale=0.125]{fivepath}}},4\right)\right\} \right).
\]
Let $\nu_{q,R,S}\left(W\right)$ denote the multiplicity of $W$ in
$\mathcal{W}_{q,R,S}^{\mathrm{m}}$; for example the multiplicity
of $\vcenter{\hbox{\includegraphics[scale=0.125]{tailedthreestar}}}$
in $\mathcal{W}_{1,\vcenter{\hbox{\includegraphics[scale=0.125]{twostar}}}}^{\mathrm{m}}$
is $\nu_{1,\vcenter{\hbox{\includegraphics[scale=0.125]{twostar}}}}\left(\vcenter{\hbox{\includegraphics[scale=0.125]{tailedthreestar}}}\right)=4$.

We then have that, using equation (\ref{eq: stitching_probability}),
the equality $\Xi\left(\mathcal{W}_{1,\vcenter{\hbox{\includegraphics[scale=0.125]{twostar}}}}^{\mathrm{m}}\right)=\Xi\left(\mathcal{W}_{1,\vcenter{\hbox{\includegraphics[scale=0.125]{twostar}}}}\right)$.
Similarly, inspecting $\mathcal{W}_{2,\vcenter{\hbox{\includegraphics[scale=0.125]{twostar}}}}$
(see Figure \ref{fig: W_q2_twostar}) , we see that it also contains
three non-isomorphic elements, yielding
\[
\mathcal{W}_{2,\vcenter{\hbox{\includegraphics[scale=0.125]{twostar}}}}^{\mathrm{m}}=\left(\left\{ \vcenter{\hbox{\includegraphics[scale=0.125]{onethreewheel}}},\vcenter{\hbox{\includegraphics[scale=0.125]{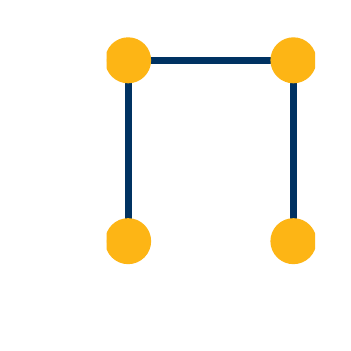}}},\vcenter{\hbox{\includegraphics[scale=0.125]{fourcycle}}}\right\} ,\left\{ \left(\vcenter{\hbox{\includegraphics[scale=0.125]{onethreewheel}}},2\right),\left(\vcenter{\hbox{\includegraphics[scale=0.125]{fourpath}}},2\right),\left(\vcenter{\hbox{\includegraphics[scale=0.125]{fourcycle}}},1\right)\right\} \right).
\]
Finally, it is easy to see that $\mathcal{W}_{3,\vcenter{\hbox{\includegraphics[scale=0.125]{twostar}}}}^{\mathrm{m}}=\left(\left\{ \vcenter{\hbox{\includegraphics[scale=0.125]{twostar}}}\right\} ,\left\{ \left(\vcenter{\hbox{\includegraphics[scale=0.125]{twostar}}},3\right)\right\} \right)$.
The reader may verify that
\[
\mathcal{W}_{1,\vcenter{\hbox{\includegraphics[scale=0.125]{triangle}}}}^{\mathrm{m}}=\left(\left\{ \vcenter{\hbox{\includegraphics[scale=0.125]{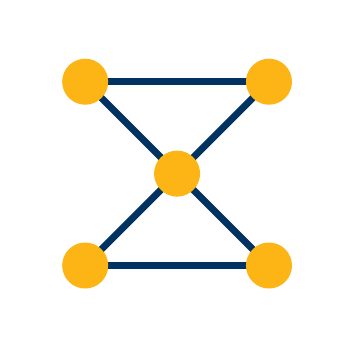}}}\right\} ,\left\{ \left(\vcenter{\hbox{\includegraphics[scale=0.125]{twotriangle}}},1\right)\right\} \right),\thinspace\mathcal{W}_{2,\vcenter{\hbox{\includegraphics[scale=0.125]{triangle}}}}^{\mathrm{m}}=\left(\left\{ \vcenter{\hbox{\includegraphics[scale=0.125]{chordalcycle}}}\right\} ,\left\{ \left(\vcenter{\hbox{\includegraphics[scale=0.125]{chordalcycle}}},1\right)\right\} \right),\thinspace\mathcal{W}_{3,\vcenter{\hbox{\includegraphics[scale=0.125]{triangle}}}}^{\mathrm{m}}=\left(\left\{ \vcenter{\hbox{\includegraphics[scale=0.125]{triangle}}}\right\} ,\left\{ \left(\vcenter{\hbox{\includegraphics[scale=0.125]{triangle}}},1\right)\right\} \right)
\]
as well as that
\[
\mathcal{W}_{1,\vcenter{\hbox{\includegraphics[scale=0.125]{twostar}}},\vcenter{\hbox{\includegraphics[scale=0.125]{triangle}}}}^{\mathrm{m}}=\left(\left\{ \vcenter{\hbox{\includegraphics[scale=0.125]{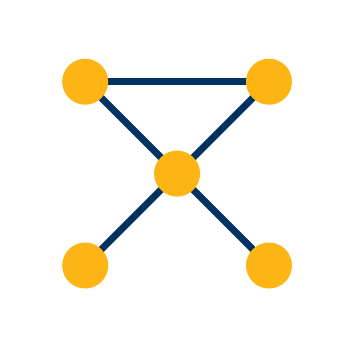}}},\vcenter{\hbox{\includegraphics[scale=0.125]{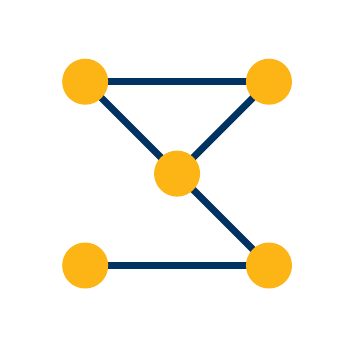}}}\right\} ,\left\{ \left(\vcenter{\hbox{\includegraphics[scale=0.125]{twotailedtriangle}}},1\right),\left(\vcenter{\hbox{\includegraphics[scale=0.125]{rattailedtriangle}}},2\right)\right\} \right),\thinspace\mathcal{W}_{2,\vcenter{\hbox{\includegraphics[scale=0.125]{twostar}}},\vcenter{\hbox{\includegraphics[scale=0.125]{triangle}}}}^{\mathrm{m}}=\left(\left\{ \vcenter{\hbox{\includegraphics[scale=0.125]{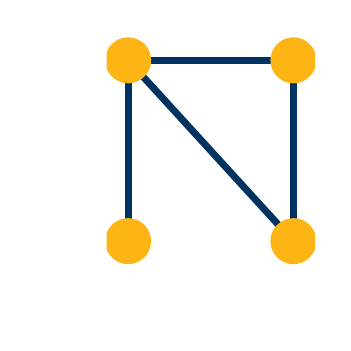}}}\right\} ,\left\{ \left(\vcenter{\hbox{\includegraphics[scale=0.125]{tailedtriangle}}},2\right)\right\} \right),\thinspace\mathcal{W}_{3,\vcenter{\hbox{\includegraphics[scale=0.125]{twostar}}},\vcenter{\hbox{\includegraphics[scale=0.125]{triangle}}}}^{\mathrm{m}}=\emptyset.
\]
These multi-sets will be used to study the covariance of $\left(P_{N}\left(\vcenter{\hbox{\includegraphics[scale=0.125]{twostar}}}\right),P_{N}\left(\vcenter{\hbox{\includegraphics[scale=0.125]{triangle}}}\right)\right)'$
as well as the variance of the transitivity index.

\begin{figure}
\caption{\label{fig: W_q1_twostar} Stitchings of two-star graphlets with one
common node}

\begin{centering}
\includegraphics{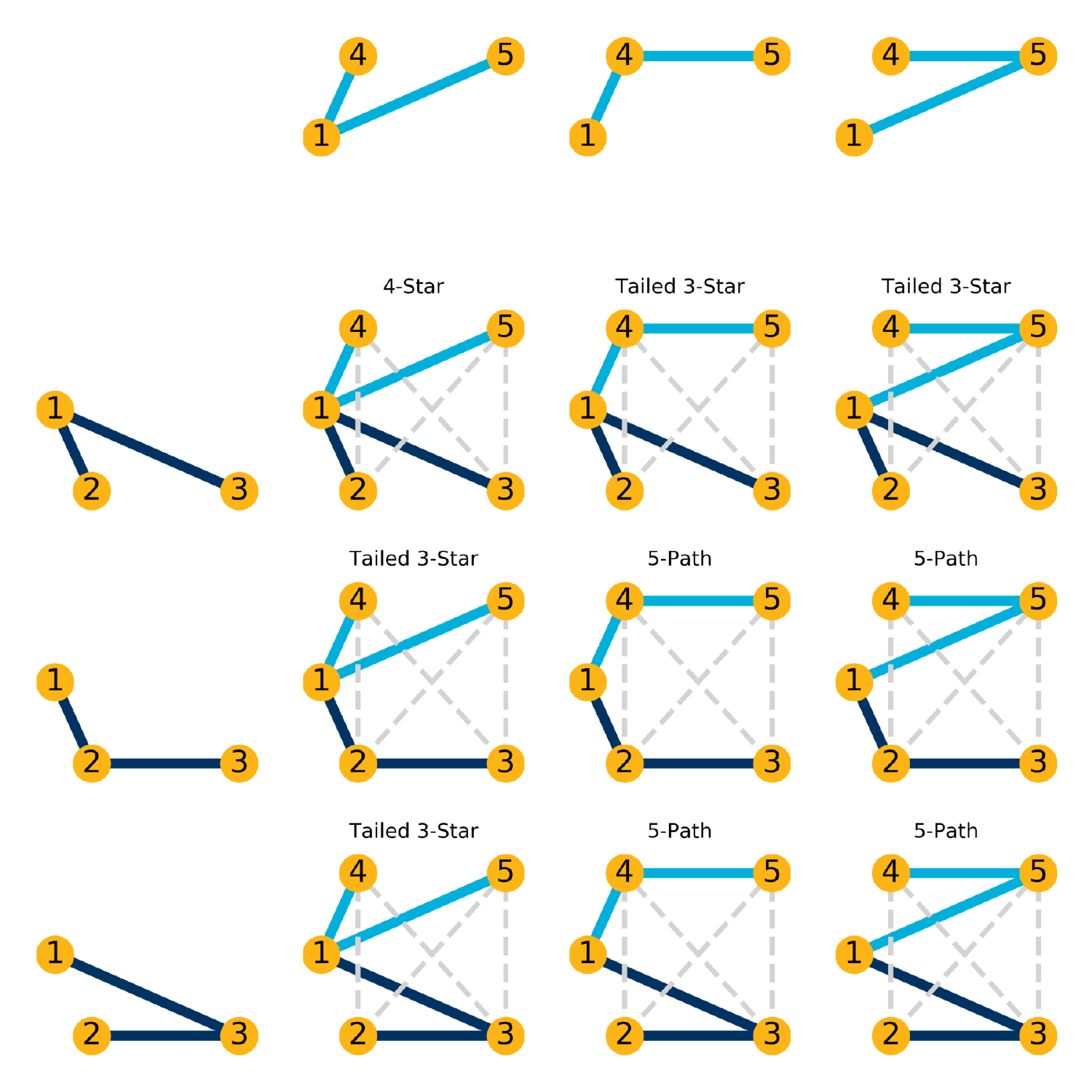}
\par\end{centering}
\textsc{\small{}\uline{Notes:}}{\small{} Depiction of all possible
ways to join (or ``stitch'') a pair of two-star ($\vcenter{\hbox{\includegraphics[scale=0.125]{twostar}}}$)
subgraphs together with one node in common. Each of the resulting
subgraphs is a pentad wiring. The dashed gray edges involve pairs
of nodes that are not common across the pair of two-stars. Hence the
subgraph induced by the five nodes in the pentad may or may not include
these edges. The set $\mathcal{W}_{1,\vcenter{\hbox{\includegraphics[scale=0.125]{twostar}}}}$
has $\left|\mathrm{iso}\left(\vcenter{\hbox{\includegraphics[scale=0.125]{twostar}}}\right)\right|\times\left|\mathrm{iso}\left(\vcenter{\hbox{\includegraphics[scale=0.125]{twostar}}}\right)\right|=9$
elements.}{\small\par}

\textsc{\small{}\uline{Source:}}{\small{} Author's calculations.}{\small\par}
\end{figure}

At the risk of overkill, the following calculations illustrate how
the two stitching probability definitions, equations (\ref{eq: XI_multiset})
and (\ref{eq: stitching_probability}), coincide. For the two-star
example, starting with equation (\ref{eq: XI_multiset}), I get
\begin{align}
\Xi\left(\mathcal{W}_{1,\vcenter{\hbox{\includegraphics[scale=0.125]{twostar}}}}\right)= & \frac{\Pr\left(\vcenter{\hbox{\includegraphics[scale=0.125]{twostar}}}\cong G_{N}\left[\left\{ 1,2,3\right\} \right]\thinspace\&\thinspace\vcenter{\hbox{\includegraphics[scale=0.125]{twostar}}}\cong G_{N}\left[\left\{ 1,4,5\right\} \right]\right)}{\left|\mathrm{iso}\left(\vcenter{\hbox{\includegraphics[scale=0.125]{twostar}}}\right)\right|^{2}}\nonumber \\
= & \frac{1}{\left|\mathrm{iso}\left(\vcenter{\hbox{\includegraphics[scale=0.125]{twostar}}}\right)\right|^{2}}\mathbb{E}\left[\left\{ D_{12}D_{13}\left(1-D_{23}\right)+D_{12}\left(1-D_{13}\right)D_{23}+\left(1-D_{12}\right)D_{13}D_{23}\right\} \right.\nonumber \\
 & \left.\times\left\{ D_{14}D_{15}\left(1-D_{45}\right)+D_{14}\left(1-D_{15}\right)D_{45}+\left(1-D_{14}\right)D_{15}D_{45}\right\} \right]\nonumber \\
= & \frac{1}{\left|\mathrm{iso}\left(\vcenter{\hbox{\includegraphics[scale=0.125]{twostar}}}\right)\right|^{2}}\left\{ \mathbb{E}\left[D_{12}D_{13}\left(1-D_{23}\right)D_{14}D_{15}\left(1-D_{45}\right)\right]\right|\nonumber \\
 & +4\mathbb{E}\left[D_{12}D_{13}\left(1-D_{23}\right)D_{14}\left(1-D_{15}\right)D_{45}\right]\nonumber \\
 & \left.+4\mathbb{E}\left[D_{12}\left(1-D_{13}\right)D_{23}D_{14}\left(1-D_{15}\right)D_{45}\right]\right\} \nonumber \\
= & \frac{1}{\left|\mathrm{iso}\left(\vcenter{\hbox{\includegraphics[scale=0.125]{twostar}}}\right)\right|^{2}}\left[\nu_{1,\vcenter{\hbox{\includegraphics[scale=0.125]{twostar}}}}\left(\vcenter{\hbox{\includegraphics[scale=0.125]{fourstar}}}\right)\Pr\left(\vcenter{\hbox{\includegraphics[scale=0.125]{fourstar}}}=G_{N}\left[\left\{ 1,2,3\right\} \right]\cup G_{N}\left[\left\{ 1,4,5\right\} \right]\right)\right.\nonumber \\
 & +\nu_{1,\vcenter{\hbox{\includegraphics[scale=0.125]{twostar}}}}\left(\vcenter{\hbox{\includegraphics[scale=0.125]{tailedthreestar}}}\right)\Pr\left(\vcenter{\hbox{\includegraphics[scale=0.125]{tailedthreestar}}}=G_{N}\left[\left\{ 1,2,3\right\} \right]\cup G_{N}\left[\left\{ 1,4,5\right\} \right]\right)\nonumber \\
 & \left.+\nu_{1,\vcenter{\hbox{\includegraphics[scale=0.125]{twostar}}}}\left(\vcenter{\hbox{\includegraphics[scale=0.125]{fivepath}}}\right)\Pr\left(\vcenter{\hbox{\includegraphics[scale=0.125]{fivepath}}}=G_{N}\left[\left\{ 1,2,3\right\} \right]\cup G_{N}\left[\left\{ 1,4,5\right\} \right]\right)\right]\nonumber \\
= & \sum_{W\in\mathcal{W}_{1,\vcenter{\hbox{\includegraphics[scale=0.125]{twostar}}}}^{\mathrm{m}}}\nu_{1,\vcenter{\hbox{\includegraphics[scale=0.125]{twostar}}}}\left(W\right)\Pr\left(W=G_{N}\left[\left\{ 1,2,3\right\} \right]\cup G_{N}\left[\left\{ 1,4,5\right\} \right]\right)\nonumber \\
= & \Xi\left(\mathcal{W}_{1,\vcenter{\hbox{\includegraphics[scale=0.125]{twostar}}}}^{\mathrm{m}}\right).\label{eq: XI_W_q1_twostar}
\end{align}

The third equality follows from relationships like $\mathbb{E}\left[D_{12}D_{13}\left(1-D_{23}\right)D_{14}\left(1-D_{15}\right)D_{45}\right]=\mathbb{E}\left[D_{12}D_{13}\left(1-D_{23}\right)\left(1-D_{14}\right)D_{15}D_{45}\right]$,
which allow for the grouping together of terms. The balance of the
equalities are consequences of the definitions given above. 

\begin{figure}
\caption{\label{fig: W_q2_twostar} Stitchings of two-star graphlets with two
common nodes}

\begin{centering}
\includegraphics{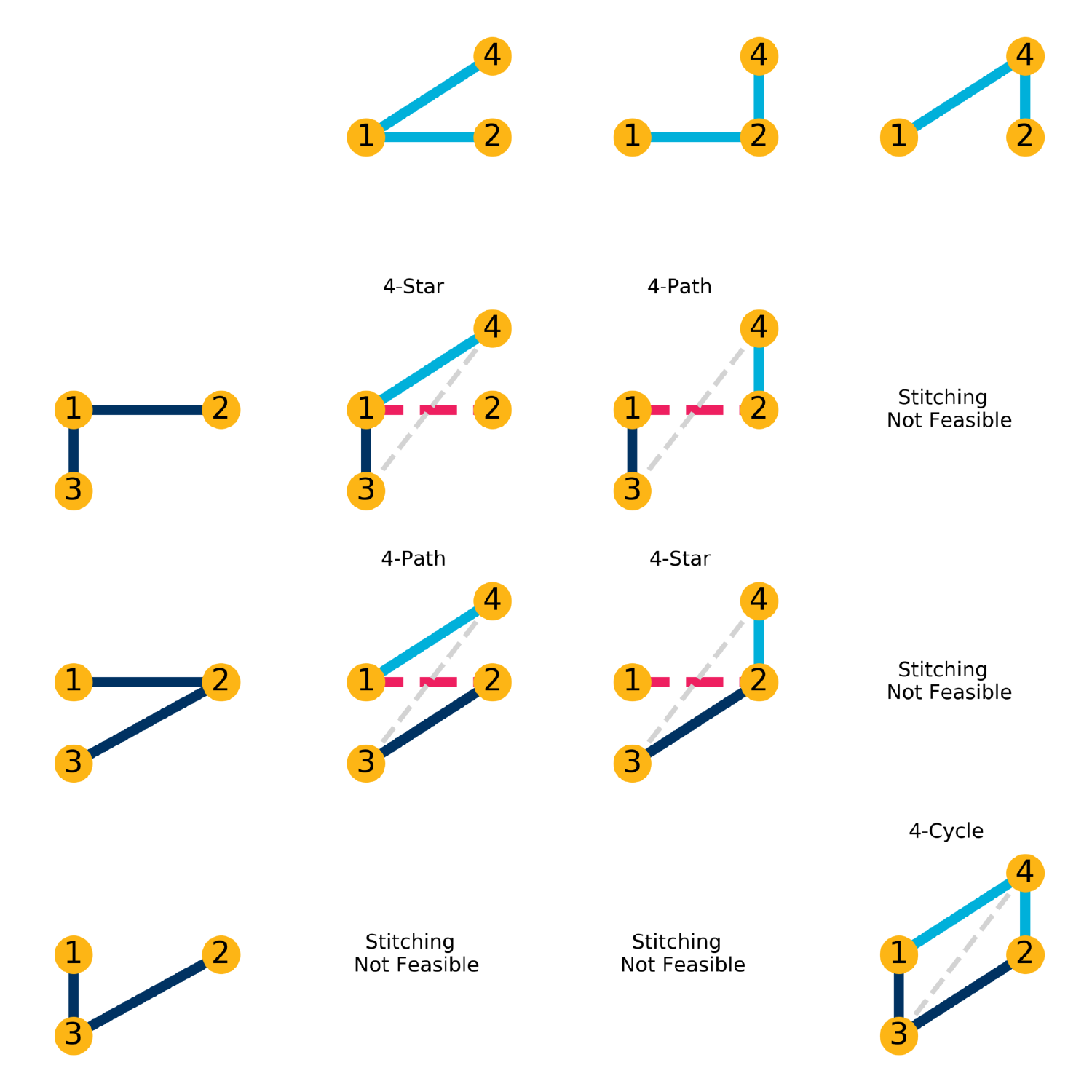}
\par\end{centering}
\textsc{\small{}\uline{Notes:}}{\small{} Depiction of all possible
ways to join (or ``stitch'') a pair of two-star ($\vcenter{\hbox{\includegraphics[scale=0.125]{twostar}}}$)
subgraphs together with two nodes in common. Each of the resulting
subgraphs is a tetrad wiring. The dashed gray edges involve the pair
of nodes that is not common across the pair of two-stars. Hence the
subgraph induced by the four nodes in the tetrad may or may not include
this edge.}{\small\par}

\textsc{\small{}\uline{Source:}}{\small{} Author's calculations.}{\small\par}
\end{figure}

\subsubsection*{Sampling variances}

With the above notations in hand, I now calculate the sampling variances
of $\hat{P}(\vcenter{\hbox{\includegraphics[scale=0.125]{twostar}}})$
and $P_{N}(\vcenter{\hbox{\includegraphics[scale=0.125]{triangle}}})$
as well as their covariance. \citet{Holland_Leinhardt_AJS70,Holland_Leinhardt_SM76}
were the first to derive variance expressions for subgraph counts.
The specific development presented here follows \citet{Bhattacharya_Bickel_AS15}.
A \citet{Hoeffding_AMS48} variance-composition gives
\[
\mathbb{V}\left(\left(\begin{array}{c}
P_{N}\left(\vcenter{\hbox{\includegraphics[scale=0.125]{triangle}}}\right)\\
P_{N}\left(\vcenter{\hbox{\includegraphics[scale=0.125]{twostar}}}\right)
\end{array}\right)\right)=\binom{N}{3}^{-2}\sum_{q=0}^{3}\binom{N}{3}\binom{3}{q}\binom{N-3}{3-q}\left(\begin{array}{cc}
\Sigma_{q}\left(\vcenter{\hbox{\includegraphics[scale=0.125]{triangle}}}\right) & \Sigma_{q}\left(\vcenter{\hbox{\includegraphics[scale=0.125]{triangle}}},\vcenter{\hbox{\includegraphics[scale=0.125]{twostar}}}\right)\\
\Sigma_{q}\left(\vcenter{\hbox{\includegraphics[scale=0.125]{triangle}}},\vcenter{\hbox{\includegraphics[scale=0.125]{twostar}}}\right) & \Sigma_{q}\left(\vcenter{\hbox{\includegraphics[scale=0.125]{twostar}}}\right)
\end{array}\right)
\]
with $\Sigma_{q}\left(\vcenter{\hbox{\includegraphics[scale=0.125]{triangle}}}\right)$,
$\Sigma_{q}\left(\vcenter{\hbox{\includegraphics[scale=0.125]{twostar}}}\right)$
and $\Sigma_{q}\left(\vcenter{\hbox{\includegraphics[scale=0.125]{triangle}}},\vcenter{\hbox{\includegraphics[scale=0.125]{twostar}}}\right)$
as defined by (\ref{eq: SIGMA_q_S_R}) above (using the shorthand
$\Sigma_{q}\left(S,S\right)=\Sigma_{q}\left(S\right)$ etc). Using
the fact that each of these variances and covariances is zero when
$q=0$ and reorganizing terms gives 
\begin{align*}
\mathbb{V}\left(\left(\begin{array}{c}
P_{N}\left(\vcenter{\hbox{\includegraphics[scale=0.125]{triangle}}}\right)\\
P_{N}\left(\vcenter{\hbox{\includegraphics[scale=0.125]{twostar}}}\right)
\end{array}\right)\right)= & \binom{N}{3}^{-2}\sum_{q=1}^{3}\binom{N}{3}\binom{3}{q}\binom{N-3}{3-q}\left[\begin{array}{cc}
\Xi\left(\mathcal{W}_{q,\vcenter{\hbox{\includegraphics[scale=0.125]{triangle}}}}\right) & \Xi\left(\mathcal{W}_{q,\vcenter{\hbox{\includegraphics[scale=0.125]{triangle}}},\vcenter{\hbox{\includegraphics[scale=0.125]{twostar}}}}\right)\\
\Xi\left(\mathcal{W}_{q,\vcenter{\hbox{\includegraphics[scale=0.125]{triangle}}},\vcenter{\hbox{\includegraphics[scale=0.125]{twostar}}}}\right) & \Xi\left(\mathcal{W}_{q,\vcenter{\hbox{\includegraphics[scale=0.125]{twostar}}}}\right)
\end{array}\right]\\
 & -\left[1-\frac{\left(N-3\right)!^{2}}{N!\left(N-6\right)!}\right]\left[\begin{array}{cc}
P\left(\vcenter{\hbox{\includegraphics[scale=0.125]{triangle}}}\right)^{2} & P\left(\vcenter{\hbox{\includegraphics[scale=0.125]{triangle}}}\right)P\left(\vcenter{\hbox{\includegraphics[scale=0.125]{twostar}}}\right)\\
P\left(\vcenter{\hbox{\includegraphics[scale=0.125]{triangle}}}\right)P\left(\vcenter{\hbox{\includegraphics[scale=0.125]{twostar}}}\right) & P\left(\vcenter{\hbox{\includegraphics[scale=0.125]{twostar}}}\right)^{2}
\end{array}\right].
\end{align*}

In what follows I assume that the network generating process is such
that, for each $N$, $\Sigma_{q}\left(\vcenter{\hbox{\includegraphics[scale=0.125]{triangle}}}\right)$
and $\Sigma_{q}\left(\vcenter{\hbox{\includegraphics[scale=0.125]{twostar}}}\right)$
are not identically equal to zero for $q\geq1$. This prevents $P_{N}\left(\vcenter{\hbox{\includegraphics[scale=0.125]{triangle}}}\right)$
and $P_{N}\left(\vcenter{\hbox{\includegraphics[scale=0.125]{twostar}}}\right)$
from exhibiting degenerate U-Statistic-like attributes \citep[c.f., ][Theorem 1]{Graham_EM17}.
The restriction is a real one, ruling out the Erdös-Renyi case. Separate
results for this special case are presented below.

As introduced earlier, in order to accommodate sequences of networks
with varying degrees of sparsity, we can index the underlying population
graphon by $N$, setting $h_{N}\left(u,v\right)=\rho_{N}w\left(u,v\right)$
with $w\left(u,v\right)=f_{\left.U_{i},U_{j}\right|D_{ij}}\left(\left.u,v\right|D_{ij}=1\right)$
and allowing $\rho_{N}\rightarrow0$ as $N\rightarrow\infty$. Under
such a sequence of GGPs $P\left(\vcenter{\hbox{\includegraphics[scale=0.125]{triangle}}}\right)$
and $P\left(\vcenter{\hbox{\includegraphics[scale=0.125]{twostar}}}\right)$
will tend to zero. In order to understand the properties of $P_{N}\left(\vcenter{\hbox{\includegraphics[scale=0.125]{triangle}}}\right)$
vis-a-vis $P\left(\vcenter{\hbox{\includegraphics[scale=0.125]{triangle}}}\right)$
we must normalize. It is natural to normalize according to the number
edges in the subgraph under consideration. 

Let $\tilde{P}\left(\vcenter{\hbox{\includegraphics[scale=0.125]{triangle}}}\right)=P\left(\vcenter{\hbox{\includegraphics[scale=0.125]{triangle}}}\right)/\rho_{N}^{3}$,
$\tilde{P}\left(\vcenter{\hbox{\includegraphics[scale=0.125]{twostar}}}\right)=P\left(\vcenter{\hbox{\includegraphics[scale=0.125]{twostar}}}\right)/\rho_{N}^{2}$,
$\tilde{P}_{N}\left(\vcenter{\hbox{\includegraphics[scale=0.125]{triangle}}}\right)=P_{N}\left(\vcenter{\hbox{\includegraphics[scale=0.125]{triangle}}}\right)/\rho_{N}^{3}$
and so on. So normalizing I get
\begin{align}
\mathbb{V}\left(\left(\begin{array}{c}
\tilde{P}_{N}\left(\vcenter{\hbox{\includegraphics[scale=0.125]{triangle}}}\right)\\
\tilde{P}_{N}\left(\vcenter{\hbox{\includegraphics[scale=0.125]{twostar}}}\right)
\end{array}\right)\right)= & \binom{N}{3}^{-2}\sum_{q=1}^{3}\binom{N}{3}\binom{3}{q}\binom{N-3}{3-q}\left[\begin{array}{cc}
\rho_{N}^{-6}\Xi\left(\mathcal{W}_{q,\vcenter{\hbox{\includegraphics[scale=0.125]{triangle}}}}\right) & \rho_{N}^{-5}\Xi\left(\mathcal{W}_{q,\vcenter{\hbox{\includegraphics[scale=0.125]{triangle}}},\vcenter{\hbox{\includegraphics[scale=0.125]{twostar}}}}\right)\\
\rho_{N}^{-5}\Xi\left(\mathcal{W}_{q,\vcenter{\hbox{\includegraphics[scale=0.125]{triangle}}},\vcenter{\hbox{\includegraphics[scale=0.125]{twostar}}}}\right) & \rho_{N}^{-4}\Xi\left(\mathcal{W}_{q,\vcenter{\hbox{\includegraphics[scale=0.125]{twostar}}}}\right)
\end{array}\right]\nonumber \\
 & -\left[1-\frac{\left(N-3\right)!^{2}}{N!\left(N-6\right)!}\right]\left[\begin{array}{cc}
\tilde{P}\left(\vcenter{\hbox{\includegraphics[scale=0.125]{triangle}}}\right)^{2} & \tilde{P}\left(\vcenter{\hbox{\includegraphics[scale=0.125]{triangle}}}\right)\tilde{P}\left(\vcenter{\hbox{\includegraphics[scale=0.125]{twostar}}}\right)\\
\tilde{P}\left(\vcenter{\hbox{\includegraphics[scale=0.125]{triangle}}}\right)\tilde{P}\left(\vcenter{\hbox{\includegraphics[scale=0.125]{twostar}}}\right) & \tilde{P}\left(\vcenter{\hbox{\includegraphics[scale=0.125]{twostar}}}\right)^{2}
\end{array}\right].\label{eq: Var_P_Triangle_Twostar}
\end{align}

Expression (\ref{eq: Var_P_Triangle_Twostar}) agrees with the corresponding
expression of \citet{Bhattacharya_Bickel_AS15} for injective homomorphism
frequencies (Equation (3.8), p. 2395).\footnote{See also \citet[Lemma 1]{Green_Shalizi_arXiv17}.}
The main difference is the analog of $\Xi\left(\mathcal{W}_{q,\vcenter{\hbox{\includegraphics[scale=0.125]{triangle}}}}\right)$
in their expression is itself an injective homomorphism density, whereas
here $\Xi\left(\mathcal{W}_{q,\vcenter{\hbox{\includegraphics[scale=0.125]{triangle}}}}\right)$
is neither an injective homomorphism nor an induced subgraph density
and instead involves checking for particular patterns of \emph{both}
adjacency and non-adjacency as described above.

\subsubsection*{Rates of convergence}

To understand the rate of convergence in mean square of, for example,
$\tilde{P}_{N}\left(\vcenter{\hbox{\includegraphics[scale=0.125]{triangle}}}\right)$
toward $\tilde{P}\left(\vcenter{\hbox{\includegraphics[scale=0.125]{triangle}}}\right)$,
we need to determine the order of each of the terms in (\ref{eq: Var_P_Triangle_Twostar}).
Let $e\left(R\right)\overset{def}{\equiv}\left|\mathcal{E}\left(R\right)\right|$
and $e\left(S\right)\overset{def}{\equiv}\left|\mathcal{E}\left(S\right)\right|$
denote the number of edges in graphlets $R$ and $S$. Next observe
that $\binom{N}{p}^{-1}\binom{p}{q}\binom{N-p}{p-q}=O\left(N^{-q}\right)$.
We therefore have that the terms in the summation indexed by $q$
in (\ref{eq: Var_P_Triangle_Twostar}) are $O\left(N^{-q}\rho_{N}^{-e\left(R\right)}\rho_{N}^{-e\left(S\right)}\right)O\left(\Xi\left(\mathcal{W}_{q,R,S}\right)\right)$
for $q=1,\ldots,p.$ I divide these terms, closely following \citet{Bhattacharya_Bickel_AS15},
into three cases:

\textbf{\uline{Case 1 (\mbox{$q=1$}):}}\textbf{ }when $q=1$ the
number of edges in all elements of $\mathcal{W}_{q,R,S}$ equals $e\left(R\right)+e\left(S\right)$
for any subgraphs $R$ and $S$. Hence $O\left(\Xi\left(\mathcal{W}_{1,R,S}\right)\right)=O\left(\rho_{N}^{e\left(R\right)}\rho_{N}^{e\left(S\right)}\right)$,
yielding
\[
O\left(N^{-1}\rho_{N}^{-e\left(R\right)}\rho_{N}^{-e\left(S\right)}\right)O\left(\Xi\left(\mathcal{W}_{1,R,S}\right)\right)=O\left(N^{-1}\right).
\]
The $q=1$ summand in (\ref{eq: Var_P_Triangle_Twostar}) is of order
$N^{-1}$. In general, from the theory of U-statistics, one would
expect this to be the leading variance term; however, the present
situation is more complicated.

\textbf{\uline{Case 2 (\mbox{$q=3$} or \mbox{$q=p$}):}} In this
case the order of $\Xi\left(\mathcal{W}_{3,\vcenter{\hbox{\includegraphics[scale=0.125]{triangle}}}}\right)$
is $O\left(\rho_{N}^{3}\right)$, $\Xi\left(\mathcal{W}_{3,\vcenter{\hbox{\includegraphics[scale=0.125]{twostar}}}}\right)$
is $O\left(\rho_{N}^{2}\right)$ and $\text{\ensuremath{\mathcal{W}_{3,\vcenter{\hbox{\includegraphics[scale=0.125]{triangle}}},\vcenter{\hbox{\includegraphics[scale=0.125]{twostar}}}}}}$
is empty so that $\Xi\left(\text{\ensuremath{\mathcal{W}_{3,\vcenter{\hbox{\includegraphics[scale=0.125]{triangle}}},\vcenter{\hbox{\includegraphics[scale=0.125]{twostar}}}}}}\right)=0$.
Therefore, recalling that $\lambda_{N}=\left(N-1\right)\rho_{N}$
equals average degree,
\begin{align*}
O\left(N^{-3}\rho_{N}^{-2e\left(\vcenter{\hbox{\includegraphics[scale=0.125]{triangle}}}\right)}\right)O\left(\Xi\left(\mathcal{W}_{3,\vcenter{\hbox{\includegraphics[scale=0.125]{triangle}}}}\right)\right) & =O\left(\lambda_{N}^{-3}\right)\\
O\left(N^{-3}\rho_{N}^{-2e\left(\vcenter{\hbox{\includegraphics[scale=0.125]{twostar}}}\right)}\right)O\left(\Xi\left(\mathcal{W}_{3,\vcenter{\hbox{\includegraphics[scale=0.125]{twostar}}}}\right)\right) & =O\left(N^{-1}\lambda_{N}^{-2}\right)\\
O\left(N^{-3}\rho_{N}^{-e\left(\vcenter{\hbox{\includegraphics[scale=0.125]{triangle}}}\right)}\rho_{N}^{-e\left(\vcenter{\hbox{\includegraphics[scale=0.125]{twostar}}}\right)}\right)O\left(\Xi\left(\mathcal{W}_{3,\vcenter{\hbox{\includegraphics[scale=0.125]{triangle}}},\vcenter{\hbox{\includegraphics[scale=0.125]{twostar}}}}\right)\right) & =o\left(1\right).
\end{align*}

\textbf{\uline{Case 3 (\mbox{$q=2$} or \mbox{$\left(2\leq q\leq p-1\right)$}):}}
Here the order of $\Xi\left(\mathcal{W}_{2,\vcenter{\hbox{\includegraphics[scale=0.125]{triangle}}}}\right)$
equals $O\left(\rho_{N}^{2e\left(\vcenter{\hbox{\includegraphics[scale=0.125]{triangle}}}\right)-\left(q-1\right)}\right)=O\left(\rho_{N}^{5}\right)$,
$\Xi\left(\mathcal{W}_{2,\vcenter{\hbox{\includegraphics[scale=0.125]{twostar}}}}\right)$
equals $O\left(\rho_{N}^{2e\left(\vcenter{\hbox{\includegraphics[scale=0.125]{twostar}}}\right)-\left(q-1\right)}\right)=O\left(\rho_{N}^{3}\right)$
and that of $\Xi\left(\text{\ensuremath{\mathcal{W}_{2,\vcenter{\hbox{\includegraphics[scale=0.125]{triangle}}},\vcenter{\hbox{\includegraphics[scale=0.125]{twostar}}}}}}\right)$
equals $O\left(\rho_{N}^{e\left(\vcenter{\hbox{\includegraphics[scale=0.125]{triangle}}}\right)+e\left(\vcenter{\hbox{\includegraphics[scale=0.125]{twostar}}}\right)-\left(q-1\right)}\right)=O\left(\rho_{N}^{4}\right)$
. Therefore
\begin{align*}
O\left(N^{-2}\rho_{N}^{-2e\left(\vcenter{\hbox{\includegraphics[scale=0.125]{triangle}}}\right)}\right)O\left(\Xi\left(\mathcal{W}_{2,\vcenter{\hbox{\includegraphics[scale=0.125]{triangle}}}}\right)\right) & =O\left(N^{-1}\lambda_{N}^{-1}\right)\\
O\left(N^{-2}\rho_{N}^{-2e\left(\vcenter{\hbox{\includegraphics[scale=0.125]{twostar}}}\right)}\right)O\left(\Xi\left(\mathcal{W}_{2,\vcenter{\hbox{\includegraphics[scale=0.125]{twostar}}}}\right)\right) & =O\left(N^{-1}\lambda_{N}^{-1}\right)\\
O\left(N^{-2}\rho_{N}^{-e\left(\vcenter{\hbox{\includegraphics[scale=0.125]{triangle}}}\right)}\rho_{N}^{-e\left(\vcenter{\hbox{\includegraphics[scale=0.125]{twostar}}}\right)}\right)O\left(\Xi\left(\mathcal{W}_{2,\vcenter{\hbox{\includegraphics[scale=0.125]{triangle}}},\vcenter{\hbox{\includegraphics[scale=0.125]{twostar}}}}\right)\right) & =O\left(N^{-1}\lambda_{N}^{-1}\right).
\end{align*}
For the two variance terms we have
\begin{align*}
\mathbb{V}\left(\tilde{P}_{N}\left(\vcenter{\hbox{\includegraphics[scale=0.125]{triangle}}}\right)\right) & =O\left(\frac{1}{N}\right)+O\left(\frac{1}{N\lambda_{N}}\right)+O\left(\frac{1}{\lambda_{N}^{3}}\right)\\
\mathbb{V}\left(\tilde{P}_{N}\left(\vcenter{\hbox{\includegraphics[scale=0.125]{twostar}}}\right)\right) & =O\left(\frac{1}{N}\right)+O\left(\frac{1}{N\lambda_{N}}\right)+O\left(\frac{1}{N\lambda_{N}^{2}}\right)
\end{align*}
indicating that the rate at which, for example, $\tilde{P}_{N}\left(\vcenter{\hbox{\includegraphics[scale=0.125]{triangle}}}\right)$
converges in mean square toward $\tilde{P}\left(\vcenter{\hbox{\includegraphics[scale=0.125]{triangle}}}\right)$,
depends on the behavior of average degree as the network grows large.
This reflects the fact that, depending on a combination of the nature
of the graphlet of interest and the rate at which $\lambda_{N}$ does,
or does not, grows with $N$, several of the terms in (\ref{eq: Var_P_Triangle_Twostar})
may be of equal order.

For any increasing sequence of average degree we have 
\begin{align*}
\mathbb{V}\left(\tilde{P}_{N}\left(\vcenter{\hbox{\includegraphics[scale=0.125]{triangle}}}\right)\right)= & O\left(\max\left(\frac{1}{N},\frac{1}{\lambda_{N}^{3}}\right)\right)\\
\mathbb{V}\left(\tilde{P}_{N}\left(\vcenter{\hbox{\includegraphics[scale=0.125]{twostar}}}\right)\right)= & O\left(\frac{1}{N}\right)\\
\mathbb{C}\left(\tilde{P}_{N}\left(\vcenter{\hbox{\includegraphics[scale=0.125]{triangle}}}\right),\tilde{P}_{N}\left(\vcenter{\hbox{\includegraphics[scale=0.125]{twostar}}}\right)\right)= & O\left(\frac{1}{N}\right).
\end{align*}
If $\lambda_{N}\geq CN^{1/3}$, then the rate of convergence is $\sqrt{N}$
for both $\tilde{P}_{N}\left(\vcenter{\hbox{\includegraphics[scale=0.125]{twostar}}}\right)$
and $\tilde{P}_{N}\left(\vcenter{\hbox{\includegraphics[scale=0.125]{triangle}}}\right)$.
In the sparse case, with $\lambda_{N}\rightarrow\lambda$, $\tilde{P}\left(\vcenter{\hbox{\includegraphics[scale=0.125]{twostar}}}\right)$,
due to the acyclic structure of the two-star graphlet, remains estimable
at the $\sqrt{N}$ rate. However in this case all three of its variance
terms are of equal order. In contrast $\tilde{P}\left(\vcenter{\hbox{\includegraphics[scale=0.125]{triangle}}}\right)$
is (evidently) not consistently estimable in the sparse case.

\subsubsection*{Asymptotic normality}

When average degree is $\lambda_{N}>CN^{1/3}$, such that both $\tilde{P}_{N}\left(\vcenter{\hbox{\includegraphics[scale=0.125]{twostar}}}\right)$
and $\tilde{P}_{N}\left(\vcenter{\hbox{\includegraphics[scale=0.125]{triangle}}}\right)$
converge at the $\sqrt{N}$ rate, an application of Theorem 1.c of
\citet{Bickel_et_al_AS11} establishes that, under some regularity
conditions,
\begin{equation}
\sqrt{N}\left(\begin{array}{c}
\tilde{P}_{N}\left(\vcenter{\hbox{\includegraphics[scale=0.125]{triangle}}}\right)-\tilde{P}\left(\vcenter{\hbox{\includegraphics[scale=0.125]{triangle}}}\right)\\
\tilde{P}_{N}\left(\vcenter{\hbox{\includegraphics[scale=0.125]{twostar}}}\right)-\tilde{P}\left(\vcenter{\hbox{\includegraphics[scale=0.125]{twostar}}}\right)
\end{array}\right)\overset{D}{\rightarrow}\mathcal{N}\left(\left(\begin{array}{c}
0\\
0
\end{array}\right),9\left(\begin{array}{cc}
\tilde{\Sigma}_{1}\left(\vcenter{\hbox{\includegraphics[scale=0.125]{triangle}}}\right) & \tilde{\Sigma}_{1}\left(\vcenter{\hbox{\includegraphics[scale=0.125]{triangle}}},\vcenter{\hbox{\includegraphics[scale=0.125]{twostar}}}\right)\\
\tilde{\Sigma}_{1}\left(\vcenter{\hbox{\includegraphics[scale=0.125]{triangle}}},\vcenter{\hbox{\includegraphics[scale=0.125]{twostar}}}\right) & \tilde{\Sigma}_{1}\left(\vcenter{\hbox{\includegraphics[scale=0.125]{twostar}}}\right)
\end{array}\right)\right),\label{eq: normality_triangle_twostar}
\end{equation}
where $\tilde{\Sigma}_{1}\left(\vcenter{\hbox{\includegraphics[scale=0.125]{triangle}}}\right)=\rho_{N}^{-6}\Sigma_{1}\left(\vcenter{\hbox{\includegraphics[scale=0.125]{triangle}}}\right)$,
$\tilde{\Sigma}_{1}\left(\vcenter{\hbox{\includegraphics[scale=0.125]{triangle}}},\vcenter{\hbox{\includegraphics[scale=0.125]{twostar}}}\right)=\rho_{N}^{-5}\Sigma_{1}\left(\vcenter{\hbox{\includegraphics[scale=0.125]{triangle}}},\vcenter{\hbox{\includegraphics[scale=0.125]{twostar}}}\right)$
etc. Proving (\ref{eq: normality_triangle_twostar}) is relatively
straightforward. I do not sketch the argument here, but note that
the main tools needed were already introduced in the analysis of dyadic
regression appearing in Section \ref{sec: dyadic_regression} above.

As noted previously, if $\lambda_{N}\rightarrow\lambda$ as $N\rightarrow\infty$
such that the network is sparse in the limit, then a general result
on $\sqrt{N}\left(\tilde{P}_{N}\left(\vcenter{\hbox{\includegraphics[scale=0.125]{triangle}}}\right)-\tilde{P}\left(\vcenter{\hbox{\includegraphics[scale=0.125]{triangle}}}\right)\right)$
is unavailable. In contrast, part (b) of Theorem 1 in \citet{Bickel_et_al_AS11}
implies that not only does $\tilde{P}_{N}\left(\vcenter{\hbox{\includegraphics[scale=0.125]{twostar}}}\right)$
remain $\sqrt{N}$ consistent for $\tilde{P}\left(\vcenter{\hbox{\includegraphics[scale=0.125]{twostar}}}\right)$
in this case, but also that $\sqrt{N}\left(\tilde{P}_{N}\left(\vcenter{\hbox{\includegraphics[scale=0.125]{twostar}}}\right)-\tilde{P}\left(\vcenter{\hbox{\includegraphics[scale=0.125]{twostar}}}\right)\right)$
remains asymptotically normal. The limiting variance in this case
differs from the one given in (\ref{eq: normality_triangle_twostar});
all terms in $\mathbb{V}\left(\tilde{P}_{N}\left(\vcenter{\hbox{\includegraphics[scale=0.125]{twostar}}}\right)\right)$
are of equal order (and hence should be retained). 

More generally the sampling properties of induced subgraph frequencies
under sparse graph limits remains relatively unexplored. The sensitivity
of rates of convergence and distributional properties to assumptions
about $\lambda_{N}$ raises concerns about uniformity of inference
procedures. A similar concern is suggested by the properties of these
statistics when the graphon is constant. This last case is considered
next.

\subsubsection*{Two-star and triangle counts in Erdös-Renyi networks}

The analysis above assumes that the graphon is such that $\mathbb{C}\left(\mathbf{1}\left(R\cong G_{N}\left[\mathbf{i}_{p}\right]\right),\mathbf{1}\left(S\cong G_{N}\left[\mathbf{j}_{p}\right]\right)\right)\neq0$
when $\mathbf{i}_{p}$ and $\mathbf{j}_{p}$ share exactly one index
in common (such that $\Xi\left(\mathcal{W}_{1,S}\right)-P\left(S\right)^{2}>0$).
This condition will generally hold for graphons which vary in $u$
and $v$ (such that the events $D_{12}=1$ and $D_{13}=1$ are not
independent), but it does rule out the Erdös-Renyi case (where links
form independently with constant probability $\rho$).\footnote{See \citet{Menzel_arXiv17} for more examples of degenerate graphons.}
This graph generation process has been extensively studied by probabilists
for over sixty years \citep[e.g., ][]{Janson_et_al_RG00}. 

In statistics, \citet{Janson_Nowicki_PTRF91} and \citet{Nowicki_SN91}
studied the sampling properties of induced and partial subgraph frequencies
when the network is an Erdös-Renyi one. They demonstrated asymptotic
normality of such frequencies with a $\sqrt{\tbinom{N}{2}}$ rate
of convergence. These earlier results, at first glance, appear to
be in tension with the more general results of \citet{Bickel_et_al_AS11},
who showed asymptotic normality with a $\sqrt{N}$ rate of convergence
under general graphons. It turns out, however, that the leading (i.e.,
$q=1$) term in (\ref{eq: Var_P_Triangle_Twostar}) is identically
equal to zero under the Erdös-Renyi GPP. The Erdös-Renyi GPP is a
``degenerate'' special case.

To see this, evaluate the stitching probabilities (\ref{eq: stitching_probability})
under the Erdös-Renyi GPP to get
\[
\Xi\left(\mathcal{W}_{1,\vcenter{\hbox{\includegraphics[scale=0.125]{twostar}}}}\right)=\rho^{4}\left(1-\rho\right)^{2},\thinspace\Xi\left(\mathcal{W}_{2,\vcenter{\hbox{\includegraphics[scale=0.125]{twostar}}}}\right)=\frac{4}{9}\rho^{3}\left(1-\rho\right)^{2}+\frac{1}{9}\rho^{4}\left(1-\rho\right),\thinspace\Xi\left(\mathcal{W}_{3,\vcenter{\hbox{\includegraphics[scale=0.125]{twostar}}}}\right)=\frac{1}{3}\rho^{2}\left(1-\rho\right)
\]
and

\[
\Xi\left(\mathcal{W}_{1,\vcenter{\hbox{\includegraphics[scale=0.125]{triangle}}}}\right)=\rho^{6},\thinspace\Xi\left(\mathcal{W}_{2,\vcenter{\hbox{\includegraphics[scale=0.125]{triangle}}}}\right)=\rho^{5},\thinspace\Xi\left(\mathcal{W}_{3,\vcenter{\hbox{\includegraphics[scale=0.125]{triangle}}}}\right)=\rho^{3}
\]
and
\[
\Xi\left(\mathcal{W}_{1,\vcenter{\hbox{\includegraphics[scale=0.125]{twostar}}},\vcenter{\hbox{\includegraphics[scale=0.125]{triangle}}}}\right)=\rho^{5}\left(1-\rho\right),\thinspace\Xi\left(\mathcal{W}_{2,\vcenter{\hbox{\includegraphics[scale=0.125]{twostar}}},\vcenter{\hbox{\includegraphics[scale=0.125]{triangle}}}}\right)=\frac{2}{3}\rho^{4}\left(1-\rho\right),\thinspace\Xi\left(\mathcal{W}_{3,\vcenter{\hbox{\includegraphics[scale=0.125]{twostar}}},\vcenter{\hbox{\includegraphics[scale=0.125]{triangle}}}}\right)=0.
\]

Under these graphlet stitching probabilities the $q=1$ variance term,
which is generally the leading variance term in \citet[Theorem 1]{Bickel_et_al_AS11},
instead equals
\[
\left(\begin{array}{cc}
\Sigma_{1}\left(\vcenter{\hbox{\includegraphics[scale=0.125]{triangle}}}\right) & \Sigma_{1}\left(\vcenter{\hbox{\includegraphics[scale=0.125]{triangle}}},\vcenter{\hbox{\includegraphics[scale=0.125]{twostar}}}\right)\\
\Sigma_{1}\left(\vcenter{\hbox{\includegraphics[scale=0.125]{triangle}}},\vcenter{\hbox{\includegraphics[scale=0.125]{twostar}}}\right) & \Sigma_{1}\left(\vcenter{\hbox{\includegraphics[scale=0.125]{twostar}}}\right)
\end{array}\right)=\mathbf{0}_{2}\mathbf{0}_{2}'.
\]
Hence, under the (dense) Erdös-Renyi GPP, the leading variance term
is instead the $q=2$ one, yielding for $0<\rho<1$ but $\rho\neq2/3$,
\begin{align}
\sqrt{\tbinom{N}{2}}\left(\begin{array}{c}
P_{N}\left(\vcenter{\hbox{\includegraphics[scale=0.125]{triangle}}}\right)-P\left(\vcenter{\hbox{\includegraphics[scale=0.125]{triangle}}}\right)\\
P_{N}\left(\vcenter{\hbox{\includegraphics[scale=0.125]{twostar}}}\right)-P\left(\vcenter{\hbox{\includegraphics[scale=0.125]{twostar}}}\right)
\end{array}\right) & \overset{D}{\rightarrow}\mathcal{N}\left(\left(\begin{array}{c}
0\\
0
\end{array}\right),9\left(\begin{array}{cc}
\Sigma_{2}\left(\vcenter{\hbox{\includegraphics[scale=0.125]{triangle}}}\right) & \Sigma_{2}\left(\vcenter{\hbox{\includegraphics[scale=0.125]{triangle}}},\vcenter{\hbox{\includegraphics[scale=0.125]{twostar}}}\right)\\
\Sigma_{2}\left(\vcenter{\hbox{\includegraphics[scale=0.125]{triangle}}},\vcenter{\hbox{\includegraphics[scale=0.125]{twostar}}}\right) & \Sigma_{2}\left(\vcenter{\hbox{\includegraphics[scale=0.125]{twostar}}}\right)
\end{array}\right)\right)\label{eq: jason_nowicki}
\end{align}
where
\[
\left(\begin{array}{cc}
\Sigma_{2}\left(\vcenter{\hbox{\includegraphics[scale=0.125]{triangle}}}\right) & \Sigma_{2}\left(\vcenter{\hbox{\includegraphics[scale=0.125]{triangle}}},\vcenter{\hbox{\includegraphics[scale=0.125]{twostar}}}\right)\\
\Sigma_{2}\left(\vcenter{\hbox{\includegraphics[scale=0.125]{triangle}}},\vcenter{\hbox{\includegraphics[scale=0.125]{twostar}}}\right) & \Sigma_{2}\left(\vcenter{\hbox{\includegraphics[scale=0.125]{twostar}}}\right)
\end{array}\right)=\rho^{3}\left(1-\rho\right)\left(\begin{array}{cc}
\rho^{2} & \frac{1}{3}\rho\left(2-3\rho\right)\\
\frac{1}{3}\rho\left(2-3\rho\right) & \frac{1}{9}\left(2-3\rho\right)^{2}
\end{array}\right).
\]
See Corollaries 2 and 4 of \citet{Nowicki_SN91} for additional context
and references to the primary literature.

\subsubsection*{Uniform Inference}

The analysis of the previous two subsections showed how the limiting
distributions of two-star and triangle frequencies are sensitive to
the form of the graphon, $h_{N}\left(u,v\right)=\rho_{N}w\left(u,v\right)$.
If $\rho_{N}$ approaches zero too quickly, or $w\left(u,v\right)$
is a constant, the rate of convergence of the estimator changes. This
raises concerns about how to conduct inference in settings where the
limiting graph is `close to sparse' and/or the graphon is `nearly'
constant, or equivalently, dependence across dyads sharing agents
in common is weak. In such settings an approach to inference based
on (\ref{eq: normality_triangle_twostar}), may have poor properties
when $N$ is finite. This is because the $q=1$, $q=2$ and $q=3$
terms in the variance expression (\ref{eq: Var_P_Triangle_Twostar})
may all be of similar order. For this reason, it seems advisable to
keep all terms when calculating variances for test statistics. Clearly,
there are open questions on how best to undertake testing in this
setting.

\subsubsection*{Application of results to inference on transitivity in Nyakatoke}

\begin{figure}
\caption{\label{fig: nyakatoke_risk_sharing_network}Nyakatoke risk-sharing
network}

\begin{centering}
\includegraphics[scale=0.8]{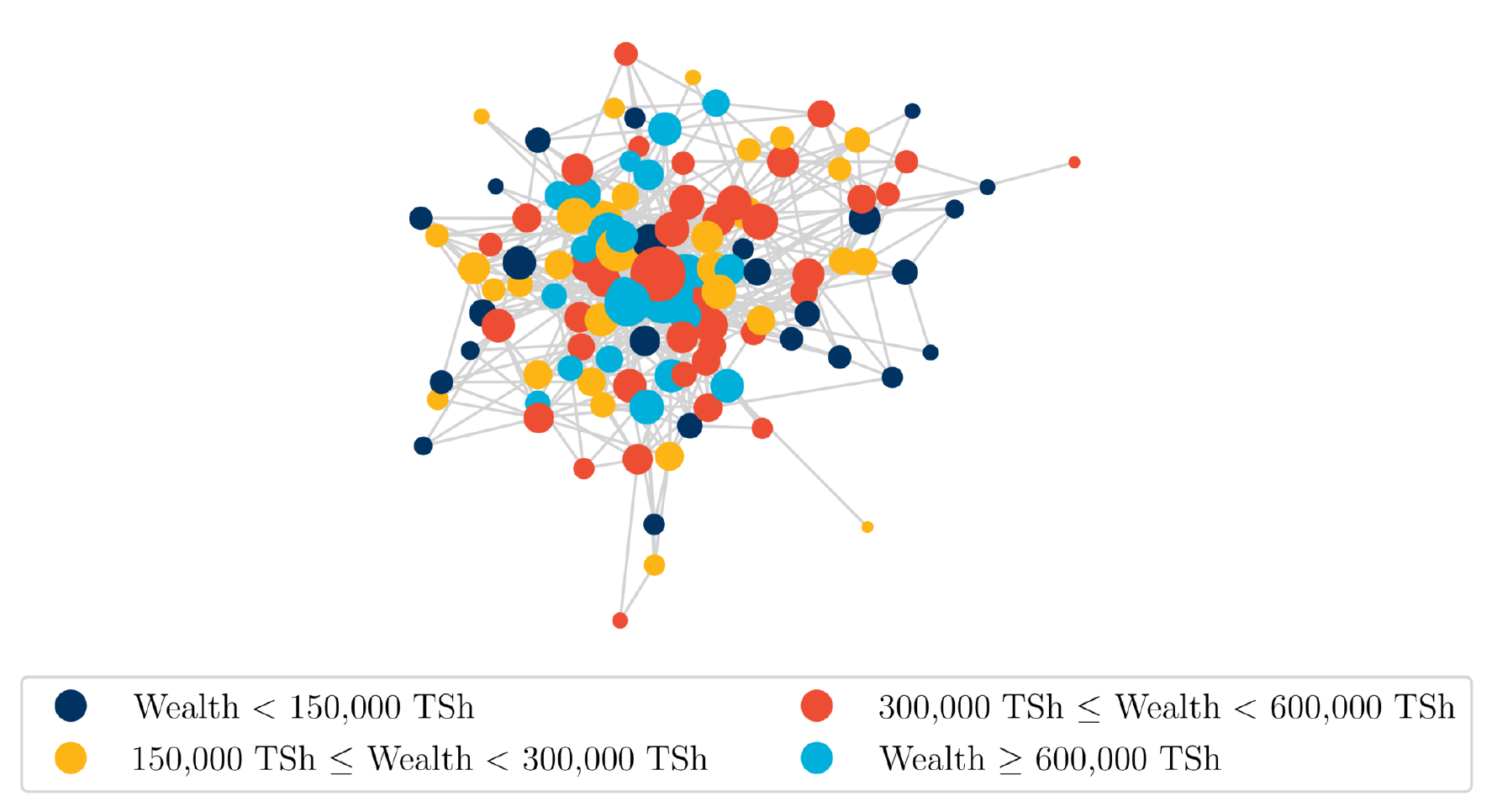}
\par\end{centering}
\uline{Sources:} \citet{deWeerdt_IAP04} and authors' calculations.
`TSh' is an abbreviation for Tanzanian Shillings.
\end{figure}

\citet{deWeerdt_IAP04} collected information of risk-sharing relationships
across 119 houses in Nyakatoke, a small village in Tanzania (see Figure
\ref{sec: Strategic-models}). The density of this network is 0.0698,
while its transitivity index is 0.1884, nearly three times a large.
A natural question is whether the high transitivity index simple reflects
``chance'' or is a real feature of Nyakatoke. To assess this I construct
a confidence interval for the transitivity index using the delta method
and the results outlined above. Other than the empirical illustration
included in \citet{Bhattacharya_Bickel_AS15}, I am aware of no other
published examples of large network inference on the transitivity
index.

The natural analog estimates of $\Xi\left(\mathcal{W}_{1,\vcenter{\hbox{\includegraphics[scale=0.125]{triangle}}}}\right)$,
$\Xi\left(\mathcal{W}_{1,\vcenter{\hbox{\includegraphics[scale=0.125]{twostar}}}}\right)$,
and $\Xi\left(\mathcal{W}_{1,\vcenter{\hbox{\includegraphics[scale=0.125]{triangle}}},\vcenter{\hbox{\includegraphics[scale=0.125]{twostar}}}}\right)$
involve summations over all $\tbinom{N}{3}\tbinom{3}{1}\tbinom{N-3}{3-1}=30\times\tbinom{N}{5}$
\emph{pairs} of triads sharing exactly one common agent. This requires
evaluating the configuration of all $\tbinom{N}{5}$ \emph{pentads}
in the network; a computationally non-trivial task even for medium-sized
networks.\footnote{For each pentad we look at the thirty pairs of triads that can be
constructed from it, such that the two triads share exactly one agent
in common.} It is for this reason that \citet{Bhattacharya_Bickel_AS15} suggest
a subsampling approach to variance estimation. 

For the Nyakatoke network we have a total of $\binom{119}{3}=273,819$
triad configurations to count and a total of $\binom{119}{5}=182,637,273$
pentads that need to be inspected in order to calculate variances.
These are large numbers, but nevertheless small enough enough for
a desktop computer to handle in a few minutes. Direct calculation
gives 
\[
P_{N}\left(\vcenter{\hbox{\includegraphics[scale=0.125]{triangle}}}\right)=\begin{array}{c}
0.00115\\
(0.00030)
\end{array},\thinspace\thinspace P_{N}\left(\vcenter{\hbox{\includegraphics[scale=0.125]{twostar}}}\right)=\begin{array}{c}
0.00496\\
(0.00100)
\end{array}
\]
These standard errors include estimates of both the first \emph{and}
last terms in (\ref{eq: Var_P_Triangle_Twostar}) above, although
the second of these is asymptotically negligible as long as average
degree grows fast enough (which is assumed for the asymptotic normality
result).

Applying the delta method I get an estimated standard error for the
transitivity index of 0.011; this suggests that transitivity is significantly
greater than what we would expect to observe under the Erdös-Renyi
random graph null.

\subsection{\label{subsec: the-degree-sequences}Moments of the degree distribution}

Networks are complex objects, making their analysis both conceptually
and technically challenging. One approach to simplification involves
looking only at the number of links each agent has, that is their
degree, $D_{i+}=\sum_{j\neq i}D_{ij}$, ignoring all other architectural
features of the network. Indeed, a substantial empirical literature
focuses on the degree sequence of a network as its primary object
of interest \citep{Barabasi_Albert_Sc99,Barabasi_Book16}.

Most real world networks exhibit substantial degree heterogeneity,
making the degree sequence an interesting statistic to study and model.
A network's degree sequence is also straightforward to measure. A
researcher need only ask about the number of friends, suppliers, or
partners each agent has, not their identity. Many general purpose
datasets collect such information. For example General Social Survey
(GSS) sometimes collects information on the number of close confidants
\citep[cf., ][]{Marsden_ASR87,McPherson_et_al_ASR06}, while demographers
routinely collect information on the number of lifetime and/or concurrent
sexual partners. Simplicity and data availability both drive the substantial
focus on degree distributions in empirical work.

It is possible for two graphs with the same degree sequence to be
topologically different; their diameters and transitivity indices,
for example, may differ substantially. At the same time a network's
degree sequence is an important summary statistic, constraining other
features of it, even local ones. This is shown by Theorem \ref{thm: degree_sequence_moments},
which I believe is an original result (albeit perhaps folk wisdom;
\citealp[cf., ][]{Snijders_SM06}).
\begin{thm}
\textsc{\label{thm: degree_sequence_moments}(Degree Sequence Moments)}
Let $G$ be an exchangeable random graph of order $N$. The $m^{th}$
moment of $D_{i+}=\sum_{j\neq i}D_{ij}$ equals
\begin{align*}
\mathbb{E}\left[D_{i+}^{m}\right] & =\sum_{k=1}^{m}C_{k,m}\times\mathbb{E}\left[D_{ij_{1}}\times\cdots\times D_{ij_{k}}\right]
\end{align*}
for $m=1,2,\ldots,N-1$ and $C_{k,m}=\tbinom{N-1}{k}\left(\sum_{\mathbf{p}\in\mathcal{P}_{k,m}}\frac{m!}{p_{1}!\times\cdots\times p_{k}!}\right)$
with 
\[
\mathcal{P}_{k,m}=\left\{ \left(p_{1},\ldots,p_{k}\right)\thinspace:\thinspace\sum_{j=1}^{k}p_{j}=m,\thinspace p_{j}\in\mathbb{N}\text{ for \ensuremath{j=1,\ldots,k}}\right\} 
\]
 and $\mathbb{N}$ the set of positive integers.
\end{thm}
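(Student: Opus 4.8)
The plan is to expand the $m$-th power of the degree directly and then exploit two structural facts: that each $D_{ij}$ is a $\{0,1\}$-valued (hence idempotent) random variable, and that joint exchangeability renders the expectation of a product of edge indicators emanating from $i$ invariant to the labels of the other endpoints. Writing $D_{i+}=\sum_{j\neq i}D_{ij}$ and using the multinomial expansion,
\[
D_{i+}^{m}=\sum_{j_{1}\neq i}\cdots\sum_{j_{m}\neq i}D_{ij_{1}}D_{ij_{2}}\cdots D_{ij_{m}},
\]
which is a sum over all $\left(N-1\right)^{m}$ ordered $m$-tuples of neighbor labels drawn from $\mathcal{V}\left(G\right)\setminus\left\{ i\right\} $.

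First I would collapse repeated labels. Because $D_{ij}\in\left\{ 0,1\right\} $ we have $D_{ij}^{p}=D_{ij}$ for every positive integer $p$, so if a tuple $\left(j_{1},\ldots,j_{m}\right)$ takes exactly $k$ distinct values $\ell_{1},\ldots,\ell_{k}$, then $D_{ij_{1}}\cdots D_{ij_{m}}=D_{i\ell_{1}}\cdots D_{i\ell_{k}}$. I would therefore partition the summation according to the number $k$ of distinct labels appearing, which ranges from $1$ (all positions share a label) up to $\min\left(m,N-1\right)=m$, using the hypothesis $m\leq N-1$. Taking expectations, I then invoke exchangeability: for any two sets of $k$ distinct labels in $\mathcal{V}\left(G\right)\setminus\left\{ i\right\} $ there is a permutation $\pi\in\Pi$ with $\pi\left(i\right)=i$ carrying one set to the other, so condition (\ref{eq: finite_joint_exchangeability}) gives $\mathbb{E}\left[D_{i\ell_{1}}\cdots D_{i\ell_{k}}\right]=\mathbb{E}\left[D_{ij_{1}}\cdots D_{ij_{k}}\right]$, a quantity depending only on $k$. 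Hence each of the $\binom{N-1}{k}$ choices of distinct-label set contributes this common value, weighted by the number of ordered $m$-tuples mapping surjectively onto a fixed $k$-element set.

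The combinatorial heart of the argument, and the step I expect to require the most care, is this surjection count. The number of surjections from the $m$ positions onto a fixed labeled set of size $k$ equals $\sum_{\mathbf{p}\in\mathcal{P}_{k,m}}\frac{m!}{p_{1}!\cdots p_{k}!}$: assigning $p_{j}\geq1$ of the $m$ positions to the $j$-th label, with $\sum_{j}p_{j}=m$, is counted by the multinomial coefficient $\binom{m}{p_{1},\ldots,p_{k}}$, and summing over all compositions $\mathbf{p}\in\mathcal{P}_{k,m}$ enumerates the surjective tuples exactly once. Multiplying by the $\binom{N-1}{k}$ choices of distinct-label set yields the coefficient $C_{k,m}$, and summing over $k=1,\ldots,m$ delivers the stated identity. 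The points to verify carefully are that the composition indexing matches $\mathcal{P}_{k,m}$ precisely (ordered tuples of strictly positive parts summing to $m$, so that surjectivity is enforced), that no tuple is double-counted across different values of $k$ (automatic, since the distinct-label count is a well-defined function of the tuple), and that the upper limit $k=m$ is legitimate, which follows from $m\leq N-1$ so that $\binom{N-1}{k}$ never vanishes over the stated range.
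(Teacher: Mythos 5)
Your proposal is correct and follows essentially the same route as the paper's proof: multinomial expansion of $D_{i+}^{m}$, idempotence of the binary indicators to collapse exponents, grouping by the number $k$ of distinct neighbor labels with the coefficient $\sum_{\mathbf{p}\in\mathcal{P}_{k,m}}\frac{m!}{p_{1}!\cdots p_{k}!}$, and exchangeability to equate the $\binom{N-1}{k}$ expectations. Your ordered-tuple/surjection-count framing is just the combinatorial reading of the paper's multinomial-theorem step, and your explicit use of a permutation fixing $i$ makes the exchangeability step slightly more careful than the paper's terse assertion.
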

\begin{proof}
See Appendix \ref{app: proofs_and_derivations}.
\end{proof}
Theorem \ref{thm: degree_sequence_moments} implies that the first
four uncentered moments of $D_{i+}$ equal
\begin{align}
\mathbb{E}\left[D_{i+}\right]= & \left(N-1\right)P\left(\vcenter{\hbox{\includegraphics[scale=0.125]{edge}}}\right)\label{eq: degree_sequence_m1}\\
\mathbb{E}\left[D_{i+}^{2}\right]= & \left(N-1\right)P\left(\vcenter{\hbox{\includegraphics[scale=0.125]{edge}}}\right)+\left(N-1\right)\left(N-2\right)Q\left(\vcenter{\hbox{\includegraphics[scale=0.125]{twostar}}}\right)\label{eq: degree_sequence_m2}\\
\mathbb{E}\left[D_{i+}^{3}\right]= & \left(N-1\right)P\left(\vcenter{\hbox{\includegraphics[scale=0.125]{edge}}}\right)+3\left(N-1\right)\left(N-2\right)Q\left(\vcenter{\hbox{\includegraphics[scale=0.125]{twostar}}}\right)\label{eq: degree_sequence_m3}\\
 & +\left(N-1\right)\left(N-2\right)\left(N-3\right)Q\left(\vcenter{\hbox{\includegraphics[scale=0.125]{onethreewheel}}}\right)\nonumber \\
\mathbb{E}\left[D_{i+}^{4}\right]= & \left(N-1\right)P\left(\vcenter{\hbox{\includegraphics[scale=0.125]{edge}}}\right)+7\left(N-1\right)\left(N-2\right)Q\left(\vcenter{\hbox{\includegraphics[scale=0.125]{twostar}}}\right)\label{eq: degree_sequence_m4}\\
 & +6\left(N-1\right)\left(N-2\right)\left(N-3\right)Q\left(\vcenter{\hbox{\includegraphics[scale=0.125]{onethreewheel}}}\right)\nonumber \\
 & +\left(N-1\right)\left(N-2\right)\left(N-3\right)\left(N-4\right)Q\left(\vcenter{\hbox{\includegraphics[scale=0.125]{fourstar}}}\right).\nonumber 
\end{align}

In dense networks it is natural to divide $D_{i+}$ by $N-1$. With
degrees so normalized, all terms in equations (\ref{eq: degree_sequence_m1})
to (\ref{eq: degree_sequence_m4}) are asymptotically dominated by
the last one as $N\rightarrow\infty$. Hence, in the limit, the $k^{th}$
moment of normalized degree equals the injective homomorphism density
of $k$-stars in the limiting graphon \citep[cf., ][Lemma 4.1]{Diaconis_Holmes_Janson_IM08}.
In the dense case we have, for example, that
\begin{equation}
\underset{N\rightarrow\infty}{\lim}\mathbb{V}\left(\frac{D_{i+}}{N-1}\right)=Q\left(\vcenter{\hbox{\includegraphics[scale=0.125]{twostar}}}\right)-P\left(\vcenter{\hbox{\includegraphics[scale=0.125]{edge}}}\right)P\left(\vcenter{\hbox{\includegraphics[scale=0.125]{edge}}}\right).\label{eq: degree_variance_dense}
\end{equation}

When the network is not dense, the natural normalization is instead
by average degree, $\lambda_{N}=\left(N-1\right)\rho_{N}$, which
may no longer be proportional to $N$ in the limit. In the sparse
case, $\lambda_{N}\rightarrow\lambda$, and all terms in equations
(\ref{eq: degree_sequence_m1}) to (\ref{eq: degree_sequence_m4})
are of equal order. For example, the fourth uncentered moment of $D_{i+}/\lambda_{N}$
equals, in a sparse limit,
\[
\underset{N\rightarrow\infty}{\lim}\mathbb{E}\left[\left(\frac{D_{i+}}{\lambda_{N}}\right)^{4}\right]=\frac{\tilde{P}\left(\vcenter{\hbox{\includegraphics[scale=0.125]{edge}}}\right)}{\lambda^{3}}+\frac{7\tilde{Q}\left(\vcenter{\hbox{\includegraphics[scale=0.125]{twostar}}}\right)}{\lambda^{2}}+\frac{6\tilde{Q}\left(\vcenter{\hbox{\includegraphics[scale=0.125]{onethreewheel}}}\right)}{\lambda}+\tilde{Q}\left(\vcenter{\hbox{\includegraphics[scale=0.125]{fourstar}}}\right),
\]
where a tilde above a subgraph/homomorphism density, as earlier, denotes
the density divided by $\rho_{N}$ raised to the power of the number
of edges in the subgraph under consideration (e.g., $\tilde{Q}\left(\vcenter{\hbox{\includegraphics[scale=0.125]{twostar}}}\right)=Q\left(\vcenter{\hbox{\includegraphics[scale=0.125]{twostar}}}\right)/\rho_{N}^{2}$).

From (\ref{eq: degree_sequence_m1}) and (\ref{eq: degree_sequence_m2})
have that, in the sparse case,
\begin{equation}
\underset{N\rightarrow\infty}{\lim}\mathbb{V}\left(\frac{D_{i+}}{\lambda_{N}}\right)=\left[\tilde{Q}\left(\vcenter{\hbox{\includegraphics[scale=0.125]{twostar}}}\right)-\tilde{P}\left(\vcenter{\hbox{\includegraphics[scale=0.125]{edge}}}\right)\tilde{P}\left(\vcenter{\hbox{\includegraphics[scale=0.125]{edge}}}\right)\right]+\frac{\tilde{P}\left(\vcenter{\hbox{\includegraphics[scale=0.125]{edge}}}\right)}{\lambda}.\label{eq: variance_degree_sparse}
\end{equation}

There are several peculiarities in these expressions. Returning to
the dense case, when the graph is an Erdös-Renyi homogenous random
graph, $Q\left(\vcenter{\hbox{\includegraphics[scale=0.125]{twostar}}}\right)=P\left(\vcenter{\hbox{\includegraphics[scale=0.125]{edge}}}\right)P\left(\vcenter{\hbox{\includegraphics[scale=0.125]{edge}}}\right)$
, and (\ref{eq: degree_variance_dense}) indicates that the distribution
of $D_{i+}/\left(N-1\right)$ is degenerate in the limit. In that
case normalizing $D_{i+}$ by $\sqrt{N-1}$ results in a random variable
with a non-degenerate variance in the limit since
\[
\mathbb{V}\left(D_{i+}\right)=\left(N-1\right)\left(N-2\right)\left[Q\left(\vcenter{\hbox{\includegraphics[scale=0.125]{twostar}}}\right)-P\left(\vcenter{\hbox{\includegraphics[scale=0.125]{edge}}}\right)P\left(\vcenter{\hbox{\includegraphics[scale=0.125]{edge}}}\right)\right]+\left(N-1\right)P\left(\vcenter{\hbox{\includegraphics[scale=0.125]{edge}}}\right)\left(1-P\left(\vcenter{\hbox{\includegraphics[scale=0.125]{edge}}}\right)\right).
\]
Observations such as these suggest that, as with subgraph frequencies,
it may be desirable to retain all terms -- including nominally asymptotically
dominated ones -- when calculating the variance and other moments
of the degree distribution.

\citet{Atalay_et_al_PNAS11} construct a theoretical model of supply
chain formation. They informally assess the plausibility of a calibrated
version of their model by comparing their model-predicted degree sequence
with the one observed in the US Buyer-Supplier network (see their
Figure 1). A formal minimum $\chi^{2}$ type specification test of
their model could be constructed on the basis of Theorem \ref{thm: degree_sequence_moments}.

\subsection{\label{subsec: open-questions}Further reading and open questions}

Subgraph frequencies are, in many ways, analogous to moments of a
distribution. Relatedly methods of estimation and inference for subgraph
frequencies have many applications, from attaching a measure of uncertainty
to statistics like the transitivity index, to facilitating specification
testing and model estimation. As the discussion here shows, the large
network properties of empirical subgraph frequencies depend on the
nature and magnitude of dependence across links induced by the graphon
as well as properties like sparsity. Formulating methods of inference
for subgraph frequencies that are adaptive to these features of the
GGP would be useful. \citet{Menzel_arXiv17} makes some progress in
this direction, but substantial work remains.

\citet{Volfosky_Airoldi_SPL16}, extending results due to \citet{Diaconis_Freedman_AP80},
present results relating finitely and infinitely exchangeable arrays.
Results of this type could be useful for understanding how best to
proceed when the network in hand corresponds to the equilibrium of
an $N$-player game where the conditional independence structure associated
Aldous-Hoover type GGPs may not formally hold, but where -- for $N$
large enough -- it should hold approximately.

\citet{Bhattacharya_Bickel_AS15}, \citet{Green_Shalizi_arXiv17}
and \citet{Menzel_arXiv17} discuss subsampling and bootstrap methods
for exchangeable random arrays. Adapting ideas introduced by, for
example, \citet{Menzel_arXiv17} to accommodate sparse networks would
be theoretically interesting and practically useful.

I have emphasized more recent work on subgraph frequencies, but the
earlier papers, beginning with the seminal one by \citet{Holland_Leinhardt_SM76}
are rewarding to read (or re-read) in the light of contemporary developments.
The survey paper by \citet{Jackson_et_al_JEL17} presents many real
work examples of degree distributions and other network statistics.
This paper also relates these measures to theoretical ideas in the
economic literature on network formation and network games.

A rather different approach to asymptotic analysis of network statistics
builds off the probability literature on random geometric graphs \citep{Penrose_Bk2003}.
These models posit a strong form of homophily such that agents which
are far apart from one another (in some, perhaps latent, space) link
infrequently (or not at all). The (latent) spatial structure renders
agents non-exchangeable. This mechanism generates sufficient independence
among distant units such that LLNs and CLTs can be proven. \citet{Leung_JOE19},
\citet{Kuersteiner_arXiv19}, and \citet{Leung_Moon_arXiv19} develop
these ideas to prove LLNs and CLTs for network statistics where the
observed network is assumed to be a strategic network equilibrium
configuration. A challenge of this approach is that valid inference
appears to require information on agents' positions (so that HAC type
variance estimators can be used). Unfortunately it is often most natural
to view such positions as latent \citep[e.g.,][]{Hoff_etal_JASA2002,Krivitsky_et_al_SN09}.
Nevertheless, this approach, by building on insights from the literature
of random geometric graphs, as well as spatial statistics, seems well
calibrated to some network applications. For example, sparseness seems
to be easier to handle in this framework \citep[cf.,][]{Graham_NBER16}.

Understanding the connections between approaches to large network
inference based upon random geometric graphs versus exchangeable random
graphs remains, to my knowledge, largely unexplored.

\section{\label{sec: Strategic-models}Strategic models of network formation}

The models of network formation introduced in Sections \ref{sec: dyadic_regression},
\ref{sec: policy_analysis} and \ref{sec: heterogeneity} are externality
free: the utility two agents create by forming a link is invariant
to the presence or absence of links elsewhere in the network. In contrast,
the theoretical literature on network formation, beginning with the
seminal paper by \citet{Jackson_Wolinsky_JET96}, is decidedly focused
on the study of models where agents' preferences are interdependent.
That is, the utility dyad $\left\{ i,j\right\} $ generates by forming
an edge may vary with the presence or absence of additional edges
elsewhere in the network. For example, if $i$ and $j$ share many
neighbors (``friends'') in common, they may reap utility gains from
`triadic closure' when linking; incidentally also forming many triangles
\citep{Simmel_Book1908,Coleman_AJS88,Jackson_et_al_AER12}. 

Models with interdependencies in preferences are typically called
\emph{strategic} models of network formation. The use of the word
strategic here stems from connections, both historical and substantive,
between recent theoretical research on networks and game theory. I
will comment on this nomenclature after first introducing the standard
notion of equilibrium used by theoretical network researchers in this
area: \emph{pairwise stability}. Pairwise stability is the equilibrium
concept introduced by \citet{Jackson_Wolinsky_JET96}. Here I introduce
the definition which excludes the possibility of transfers between
agents; the transferable utility case was introduced in Section \ref{sec: Basic-probability-tools}.

Let $\nu_{i}\thinspace:\thinspace\mathbb{D}_{N}\rightarrow\mathbb{R}$
be a utility function for agent $i$, which maps adjacency matrices
into utils. In order to define pairwise stability I need a definition
of \emph{marginal utility}. As earlier, the marginal utility for agent
$i$ associated with (possible) edge $\left(i,j\right)$ is
\begin{equation}
MU_{ij}\left(\mathbf{D}\right)=\left\{ \begin{array}{cc}
\nu_{i}\left(\mathbf{D}\right)-\nu_{i}\left(\mathbf{D}-ij\right) & \text{if}\thinspace D_{ij}=1\\
\nu_{i}\left(\mathbf{D}+ij\right)-\nu_{i}\left(\mathbf{D}\right) & \text{if}\thinspace D_{ij}=0
\end{array}\right.\label{eq: marginal_utility-1}
\end{equation}
recalling that $\mathbf{D}-ij$ is the adjacency matrix associated
with the network obtained after deleting edge $\left(i,j\right)$
and $\mathbf{D}+ij$ the one obtained via link addition. 
\begin{defn}
\label{def: Pairwise-stability-1}(\textsc{Pairwise stability without
Transfers}) The network $G$ is pairwise stable if (i) no agent wishes
to dissolve a link
\begin{equation}
\forall\left(i,j\right)\in\mathcal{E}\left(G\right),\thinspace MU_{ij}\left(\mathbf{D}\right)\geq0\thinspace\text{and}\thinspace MU_{ji}\left(\mathbf{D}\right)\geq0\label{eq: pairwise_stable_1}
\end{equation}
and (ii) no pair of agents wishes to form a link
\begin{equation}
\forall\left(i,j\right)\notin\mathcal{E}\left(G\right),\thinspace MU_{ij}\left(\mathbf{D}\right)>0\thinspace\Rightarrow\thinspace MU_{ji}\left(\mathbf{D}\right)<0.\label{eq: pairwise_stable_2}
\end{equation}
\end{defn}
Two features of Definition \ref{def: Pairwise-stability-1} merit
emphasis. First, an implication of the definition is that utility
is \emph{nontransferable }across agents. This differs from some of
the models introduced earlier. Second, the strategic moniker aside,
pairwise stability is a really non-strategic/myopic notion of equilibrium.
This point is elegantly made by \citet{Ostrovsky_AER08} in a related
context. Pairwise stability does not require agents to engage in any
``what if'' or forward-looking introspection. Specifically it does
not require agents to imagine what might happen to the rest of the
network were they to add or delete a link, rather it simply requires
them to behave optimally given the actions of all other agents in
the network. The key feature of so-called strategic models relative
to those in Sections \ref{sec: dyadic_regression}, \ref{sec: policy_analysis}
and \ref{sec: heterogeneity} is not behavioral, but in their different
assumptions about the nature of preferences. Here utility is interdependent;
this is the interesting complication.\footnote{It might also be interesting to consider estimation and inference
under different refinements of the pairwise stability concept; such
refinements might posit more sophisticated play by agents.}

For any profile of preferences $\left\{ \nu_{i}\right\} _{i=1}^{M}$
there many be many network configurations satisfying Definition \ref{def: Pairwise-stability-1}.\footnote{For results on the existence and uniqueness of pairwise stable networks
see \citet{Jackson_Watts_SJE01} and \citet{Hellmann_IJGT13}.} The potentially high cardinality of the set of pairwise stable network
configurations makes the direct application of econometric methods
designed for the analysis of games computationally prohibitive \citep[c.f., ][]{Bajari_et_al_EM10,Bajari_et_al_WC13}.
Nevertheless insights from research in this area is valuable for analyzing
empirical models of network formation.

\subsection{A fixed point approach with increasing preferences}

One example of this claim is provided by an elegant and interesting
paper by \citet{Miyauchi_JOE16}. This paper draws on insights from
the theory of supermodular games \citep[e.g., ][]{Topkis_Book98}
and their empirical analysis \citep{Jia_EM08,Uetake_Watanabe_AE13}
to formulate a tractable estimation strategy for a class of strategic
network formation models. Following \citet{Miyauchi_JOE16} consider
the mapping $\varphi\left(\mathbf{D}\right)\thinspace:\thinspace\mathbb{D}_{N}\rightarrow\mathbb{I}_{\tbinom{N}{2}}$:
\begin{equation}
\varphi\left(\mathbf{D}\right)\equiv\left[\begin{array}{c}
\mathbf{1}\left(MU_{12}\left(\mathbf{D}\right)\geq0\right)\mathbf{1}\left(MU_{21}\left(\mathbf{D}\right)\geq0\right)\\
\mathbf{1}\left(MU_{13}\left(\mathbf{D}\right)\geq0\right)\mathbf{1}\left(MU_{31}\left(\mathbf{D}\right)\geq0\right)\\
\vdots\\
\mathbf{1}\left(MU_{N-1N}\left(\mathbf{D}\right)\geq0\right)\mathbf{1}\left(MU_{NN-1}\left(\mathbf{D}\right)\geq0\right)
\end{array}\right].\label{eq: miyauchi_fixed_point}
\end{equation}
Observe that $\mathbf{1}\left(MU_{ij}\left(\mathbf{D}\right)\geq0\right)\mathbf{1}\left(MU_{ji}\left(\mathbf{D}\right)\geq0\right)$
equals $1$ if condition (\ref{eq: pairwise_stable_1}) of pairwise
stability holds (which implies edge $\left(i,j\right)$ is present)
and zero otherwise (which implies condition (\ref{eq: pairwise_stable_2})
and hence the absence of edge $\left(i,j\right)$). Under the maintained
assumption that the observed network is pairwise stable, its adjacency
matrix is therefore the fixed point 
\begin{equation}
\mathbf{D}=\mathrm{vech^{-1}}\left[\varphi\left(\mathbf{D}\right)\right].\label{eq: fixed_point}
\end{equation}
Here $\mathrm{vech}(\cdot)$ vectorizes the $\tbinom{N}{2}$ elements
in the lower triangle of an $N\times N$ matrix and I define its inverse
operator as creating a symmetric matrix with a zero diagonal. There
may, of course, be many $\mathbf{d}\in\mathbb{D}_{N}$ such that $\mathbf{d}=\mathrm{vech^{-1}}\left[\varphi\left(\mathbf{d}\right)\right]$.
\citet{Miyauchi_JOE16} notes, however, that if the preference profile
$\left\{ \nu_{i}\right\} _{i=1}^{N}$ satisfies what he calls a \emph{non-negative
externality} condition, namely that the marginal utilities $MU_{ij}\left(\mathbf{d}\right)$
are weakly increasing in $\mathbf{d}$ for all $i$ and $j$, then
one can characterize the set of pairwise stable networks with Tarski's
\citeyearpar{Tarski_PJM55} fixed point theorem \citep[Proposition 1]{Miyauchi_JOE16}.
The invocation of \citet{Tarski_PJM55} implies that the set of pairwise
stable networks corresponds to a complete lattice with a \emph{maximum}
and \emph{minimum equilibrium}. Furthermore any pairwise stable network
is a partial subgraph, defined on nodes $\left\{ 1,\ldots,N\right\} $
of the maximum equilibrium. And the minimum equilibrium is always
a partial subgraph, again defined on nodes $\left\{ 1,\ldots,N\right\} $
of any pairwise stable equilibrium. This has many useful implications.
Trivially, the set of equilibrium networks can be sorted according
to density; less trivially their degree sequences can also be ordered. 

Of course, the non-negative externality requirement is restrictive;
there are many settings where diminishing marginal utility in links
might be plausible (e.g., capacity constraints). At the same time,
many extant empirical models of network formation do satisfy the restriction,
so exploring estimation maintaining it is reasonable. \citet[Section 3.3]{Miyauchi_JOE16}
provides additional discussion. 

Again borrowing results from the theory of supermodular games, \citet{Miyauchi_JOE16}
shows that the minimum equilibrium, say $\mathbf{\underline{d}}$,
can be computed by fixed point iteration of (\ref{eq: miyauchi_fixed_point})
starting from the empty adjacency matrix, while the maximum equilibrium,
say $\overline{\mathbf{d}}$, may be computed by fixed point iteration
starting from the adjacency matrix associated with the complete graph
$K_{N}$. A similar computational insight, albeit in non-network settings,
features in \citet{Jia_EM08} and \citet{Uetake_Watanabe_AE13}.

At this stage, to show how the above insights can be used concretely,
it is helpful to parameterize the utility function, introducing both
explicit heterogeneity and a parameter vector. Adopting the random
utility approach pioneered by \citet{McFadden_FinE74}, assume, for
example, that
\begin{equation}
\nu_{i}\left(\mathbf{d},\mathbf{U};\theta_{0}\right)=\sum_{j}d_{ij}\left[\alpha_{0}+\beta_{0}\left[\sum_{k}d_{ik}d_{jk}\right]-U_{ij}\right],\label{eq: transitivity utility}
\end{equation}
with $\mathbf{U}=\left[U_{ij}\right]_{i,j\in\left\{ 1,\ldots,N\right\} ,i\neq j}$,
$\theta=\left(\alpha,\beta\right)'$ and the change in notation for
the utility function emphasizing that the econometrician does not
observe the matrix of random utility shifters $\mathbf{U}$. In practice
the elements of $\mathbf{U}$, as is common in discrete choice analysis,
are assumed to be i.i.d random draws from some known distribution
(e.g, the standard Normal or Logistic distribution).

Equation (\ref{eq: transitivity utility}) implies that the marginal
utility agent $i$ gets from a link with $j$ is
\begin{equation}
MU_{ij}\left(\mathbf{d},\mathbf{U};\theta_{0}\right)=\alpha_{0}+\beta_{0}\left[\sum_{k}d_{ik}d_{jk}\right]-U_{ij}\label{eq: transitivity_marginal_utility}
\end{equation}

This marginal utility is increasing in the number of links $i$ and
$j$ have in common, embodying a structural taste for transitive closure
(here I assume that $\beta_{0}>0$). Clearly (\ref{eq: transitivity_marginal_utility})
is weakly increasing in $\mathbf{d}\in\mathbb{D}_{N}$ and hence Tarski's
\citeyearpar{Tarski_PJM55} theorem applies. For a given draw of $\mathbf{U}$
and value of $\theta$ we can compute minimum and maximum equilibria,
respectively $\mathbf{\underline{d}}\left(\mathbf{U};\theta\right)$
and $\overline{\mathbf{d}}\left(\mathbf{U};\theta\right)$, by fixed
point iteration. Let $\underline{G}_{N}\left(\mathbf{U};\theta\right)$
and $\overline{G}_{N}\left(\mathbf{U};\theta\right)$ be the graphs
corresponding to these adjacency matrices. Using these graphs we can
compute, for example, the injective homomorphism frequencies $t_{\mathrm{inj}}\left(S,\underline{G}_{N}\left(\mathbf{U};\theta\right)\right)$
and $t_{\mathrm{inj}}\left(S,\overline{G}_{N}\left(\mathbf{U};\theta\right)\right)$
for $S=\vcenter{\hbox{\includegraphics[scale=0.125]{twostar}}},\vcenter{\hbox{\includegraphics[scale=0.125]{triangle}}}$
etc. These homomorphism frequencies correspond to model predictions
associated with specific draws of $\mathbf{U}$ and values of $\theta$.
Using simulation to integrate out the former, yields the vectors
\[
\underline{\pi}\left(\theta\right)=\frac{1}{B}\sum_{b=1}^{N}\left(\begin{array}{c}
t_{\mathrm{inj}}\left(S_{1},\underline{G}_{N}\left(\mathbf{U}^{\left(b\right)};\theta\right)\right)\\
\vdots\\
t_{\mathrm{inj}}\left(S_{J},\underline{G}_{N}\left(\mathbf{U}^{\left(b\right)};\theta\right)\right)
\end{array}\right),\thinspace\thinspace\thinspace\overline{\pi}\left(\theta\right)=\frac{1}{B}\sum_{b=1}^{N}\left(\begin{array}{c}
t_{\mathrm{inj}}\left(S_{1},\overline{G}_{N}\left(\mathbf{U}^{\left(b\right)};\theta\right)\right)\\
\vdots\\
t_{\mathrm{inj}}\left(S_{J},\overline{G}_{N}\left(\mathbf{U}^{\left(b\right)};\theta\right)\right)
\end{array}\right)
\]
for $\mathbf{U}^{\left(1\right)},\mathbf{U}^{\left(2\right)}\ldots,\mathbf{U}^{\left(B\right)}$
a sequence of independent simulated random utility shifter profiles
and $S_{1},\ldots,S_{J}$ a set of $J$ identifying motifs of interest.

\citet{Miyauchi_JOE16} works under the assumption that the econometrician
observes of $c=1,\ldots,C$ independent networks, with, in a slight
change relative to earlier notation, $G_{c}$ denoting the $c^{th}$
network/graph. Let $\pi\left(G_{c}\right)$ be the vector of $S_{1},\ldots,S_{J}$
injective homomorphism frequencies as observed in the $c^{th}$ network
and let $\underline{\pi}_{c}\left(\theta\right)$ and $\overline{\pi}^{c}\left(\theta\right)$
be the corresponding expected frequencies at the minimum and maximum
pairwise stable equilibria for that network at parameter $\theta$.
These frequencies are computed using simulation as described above.
Under preferences (\ref{eq: transitivity utility}) the only reason
these frequencies might vary with $c$ is if the networks observed
by the econometrician vary in the number of agents within them.\footnote{In more complicated models, with covariates, these minimums and maximums
will vary with $c$ due to differences in the distribution of covariates
across networks.}

\citet{Miyauchi_JOE16} focuses on assumptions which may only partially
identify $\theta$, but to begin with consider adding to the set-up
the assumption that agents select the maximum equilibrium \citep[cf., ][]{Jia_EM08}.
In that case
\begin{equation}
\mathbb{E}\left[\bar{\pi}_{c}\left(\theta_{0}\right)-\pi\left(G_{c}\right)\right]=0,\label{eq: miyauchi_moment_equality}
\end{equation}
is a valid moment condition. If the set of chosen motifs is sufficiently
rich so as to point identify $\theta$, then consistent estimation
of $\theta_{0}$ by the method of simulated moments is straightforward
\citep{McFadden_EM89,Pakes_Pollard_EM89,Gourieroux_et_al_JAE93}.

Because the asymptotic approximation involves $C\rightarrow\infty$,
this approach hinges upon the availability of a large number of independent
networks (each described by the same $\theta_{0}$). If, instead,
only a single large network is observed, then econometrician might
use the methods outlined in Section \ref{sec: Statistics} to form
an estimate of the variance of $\pi\left(G_{N}\right)$, say $\hat{\Omega}_{\pi}$.
An estimate of $\theta$ could then be formed by minimizing the simulated
minimum distance (SMD) criterion:
\[
\hat{\theta}_{\mathrm{SMD}}=\left(\bar{\pi}\left(\theta\right)-\pi\left(G_{N}\right)\right)'\hat{\Omega}_{\pi}^{-1}\left(\bar{\pi}\left(\theta\right)-\pi\left(G_{N}\right)\right).
\]
Note that $\hat{\Omega}_{\pi}$ is constructed under an Aldous-Hoover
dependence/independence structure; such a structure may not characterize
the finite $N$ structural model. An additional approximation argument
is involved; understanding and formalizing this argument is required
to (rigorously) derive the law of $\hat{\theta}_{\mathrm{SMD}}$ (suitably
scaled and centered).

When analyzing incomplete models, researchers are often reluctant
to make assumptions about equilibrium selection (which complete the
model). \citet{Miyauchi_JOE16} shows that if the chosen vector of
moments satisfies a certain monotonicity property (see his Property
1), then inference can be based upon the pair of moment inequalities
\begin{align}
\mathbb{E}\left[\bar{\pi}_{c}\left(\theta_{0}\right)-\pi\left(G_{c}\right)\right] & \geq0\label{eq: miyauchi_moment_inequalities}\\
\mathbb{E}\left[\underline{\pi}_{c}\left(\theta_{0}\right)-\pi\left(G_{c}\right)\right] & \leq0.\nonumber 
\end{align}
Confidence intervals which asymptotically cover $\theta_{0}$ with
probability at least $1-\alpha$ can be constructed using the approach
outlined by, for example, \citet{Andrews_Soares_EM10}. Injective
homomorphism frequencies appear to satisfy the needed property, although
some induced subgraph frequencies may not.

\subsection{Directed links with private information}

\citet{Leung_JOE15} studies a model of simultaneous \emph{directed
}network formation where agents have private information. In a directed
network agent $i$ may send a link to agent $j$ such that $D_{ij}=1$;
agent $j$ may or may not decide to reciprocate and send a link back
to $i$. The adjacency matrix is no longer symmetric, although it
retains a diagonal of structural zeros. The $i^{th}$ row of the adjacency
matrix records the set of links agent $i$ chooses to send to other
agents, while the $i^{th}$ column records the set of links that other
agents choose to send to $i$.

To describe Leung's \citeyearpar{Leung_JOE15} approach let $\mathbf{D}_{\left[-i,\cdot\right]}$
be the sub-adjacency matrix constructed by deleting the $i^{th}$
row from $\mathbf{D}$. The marginal utility agent $i$ receives when
she directs a link to agent $j$ is 
\begin{equation}
\mathrm{MU}_{ij}\left(\mathbf{D}_{\left[-i,\cdot\right]},\mathbf{X};\theta_{0}\right)-U_{ij}.\label{eq: Lueng_MU}
\end{equation}
An important implication of (\ref{eq: Lueng_MU}) is that while $i$'s
gain from sending a link to $j$ may vary with the presence or absence
of links elsewhere in the network, it \emph{does not }vary with the
presence or absence of other links which $i$ herself may or may not
direct. This restriction rules out interesting preference structures
(see below), but simplifies the analysis substantially. \citet{Ridder_Sheng_WP17}
develop an approach to relaxing this feature of Leung's \citeyearpar{Leung_JOE15}
setup. To describe the main ideas I work with a special case of (\ref{eq: Lueng_MU})
:
\begin{equation}
\mathrm{MU}_{ij}\left(\mathbf{D}_{\left[-i,\cdot\right]},\mathbf{X};\theta_{0}\right)=\alpha_{0}+\beta_{0}D_{ji}+\gamma_{0}\sum_{k\neq i,j}D_{ki}D_{kj}+t\left(X_{i},X_{j}\right)'\delta_{0},\label{eq: Leung_MU_ex}
\end{equation}
for $\theta_{0}=\left(\alpha_{0},\beta_{0},\gamma_{0},\delta_{0}'\right)'$
and $t\left(X_{i},X_{j}\right)$ a vector of possibly non-symmetric
functions of exogenous agent attributes. The parameter $\beta_{0}$
indexes the utility gain associated with \emph{reciprocity} in links,
while $\gamma_{0}$ captures the utility gain arising when a link
is \emph{supported}. A directed edge from $i$ to $j$ is supported
by agent $k$, if $k$ directs links to both $i$ and $j$ (this allows,
for example, $k$ to ``referee'' transactions between $i$ and $j$). 

Support, although related, differs from transitivity \citep[cf., ][]{Jackson_et_al_AER12};
replacing $\sum_{k}D_{ki}D_{kj}$ with $\sum_{k}D_{ik}D_{jk}$ in
(\ref{eq: Leung_MU_ex}) means that $\gamma_{0}$ would instead index
a structural taste for \emph{transitivity} (i.e., that a link to a
``friend of one of my friends'' generates more utility). However
a transitivity term of this type is ruled out by the restriction that
the marginal utility of an $i$ to $j$ link does not vary with the
presence or absence of other links which $i$ may or may not send
\citep[cf., ][]{Ridder_Sheng_WP17}.

\citet{Leung_JOE15} assumes that $\mathbf{U}_{i}=\left(U_{i1},\ldots,U_{ii-1},U_{ii+1},\ldots,U_{iN}\right)'$,
the idiosyncratic components of link utilities, are private information
to agent $i$, while all other features of the game are common knowledge
to all agents. Let $P_{ij}$ denote the common prior held by all players
other than $i$ regarding the probability that she directs a link
to $j$. Let $\mathbf{P}$ denote the $N\left(N-1\right)\times1$
vector of such common priors. In a Bayes-Nash equilibrium agent $i$
will best respond to the common prior by choosing to direct an edge
toward $j$ according to

\[
D_{ij}=\mathbf{1}\left(\alpha_{0}+\beta_{0}P_{ji}+\gamma_{0}\sum_{k\neq i,j}P_{ki}P_{kj}+t\left(X_{i},X_{j}\right)'\delta_{0}-U_{ij}\geq0\right);
\]
that is $i$ forms the directed edge only if the expected marginal
utility from doing so is positive. Assuming, for example, that the
$\left\{ U_{ij}\right\} _{i,j=1}^{N}$ are i.i.d. standard normals.
Let
\[
\varphi_{ij}\left(\mathbf{P},\mathbf{X};\theta_{0}\right)=\Phi\left(\alpha_{0}+\beta_{0}P_{ji}+\gamma_{0}\sum_{k\neq i,j}P_{ki}P_{kj}+t\left(X_{i},X_{j}\right)'\delta_{0}\right)
\]
with $\Phi\left(\cdot\right)$ the standard normal CDF. A Bayesian-Nash
equilibrium requires self-consistency of beliefs such that $\mathbf{P}$
corresponds to a fixed point of the mapping.

\begin{equation}
\varphi\left(\mathbf{P},\mathbf{X};\theta_{0}\right)=\left[\begin{array}{c}
\varphi_{12}\left(\mathbf{P},\mathbf{X};\theta_{0}\right)\\
\vdots\\
\varphi_{1N}\left(\mathbf{P},\mathbf{X};\theta\right)\\
\vdots\\
\varphi_{N1}\left(\mathbf{P},\mathbf{X};\theta\right)\\
\vdots\\
\varphi_{NN-1}\left(\mathbf{P},\mathbf{X};\theta_{0}\right)
\end{array}\right].\label{eq: Lueng_Fixed_Point}
\end{equation}
One approach would be to apply ideas analogous to those developed
in \citet{Miyauchi_JOE16} using (\ref{eq: Lueng_Fixed_Point}). \citet{Leung_JOE15},
instead creatively adapts the two-step approach familiar from the
wider econometrics literature on incomplete information games \citep{Bajari_et_al_JBES10,Bajari_et_al_WC13}.
Let $\hat{\mathbf{P}}$ be a nonparametric estimate of the belief
vector $\mathbf{P}$. With this estimate in hand, $\theta_{0}$, may
be estimated by finding the maximum of the criterion
\begin{equation}
\underset{\theta}{\max\thinspace\thinspace}\frac{1}{N\left(N-1\right)}\sum_{i=1}^{N}\sum_{j\neq i}D_{ij}\ln\varphi_{ij}\left(\mathbf{\hat{P}},\mathbf{X};\theta\right)+\left(1-D_{ij}\right)\ln\left[1-\varphi_{ij}\left(\mathbf{\hat{P}},\mathbf{X};\theta\right)\right],\label{eq: Leung_feasible_criterion}
\end{equation}
using a standard Probit MLE program. 

The challenge with this approach is that it is not obvious how one
can consistently estimate $\mathbf{P}$. Unlike in, for example, the
literature on entry games, where the same player is observed playing
independent replications of a game across different markets, in the
present set-up there is only a single game. Leung's \citeyearpar{Leung_JOE15}
key insight is to note that under an exchangeability assumption and
a focus on symmetric equilibria, estimation of $\mathbf{P}$ is possible
because it implies that (ordered dyads) with identical covariate configurations
have identical \emph{ex ante} linking probabilities. For the case
of discretely-valued $X_{i}$ \citet[Proposition 1]{Leung_JOE15}
shows, under assumptions, that for $\hat{P}_{ij}=\left[\sum_{k,l}\mathbf{1}\left(t\left(X_{k},X_{l}\right)=t\left(X_{i},X_{j}\right)\right)\right]^{-1}\times\left[\sum_{k,l}D_{kl}\mathbf{1}\left(t\left(X_{k},X_{l}\right)=t\left(X_{i},X_{j}\right)\right)\right]$
\[
\underset{i,j\in\mathcal{V}\left(G_{N}\right)}{\sup}\left|\hat{P}_{ij}-P_{ij}\right|\overset{p}{\rightarrow}0
\]
at rate $N^{1/2}$. Using these estimates in (\ref{eq: Leung_feasible_criterion})
results in a consistent and asymptotically normal estimate of $\theta_{0}$
under regularity conditions. The interesting features of these results
involve the need to account for first step estimation error as well
as for dependencies across dyads sharing one agent in common. \citet{Leung_JOE15}
also presents a variance estimator and an empirical illustration based
on the network data collected by \citet{Banerjee_et_al_Sci13}.

\subsection{Bounded degree and restricted heterogeneity}

Like \citet{Miyauchi_JOE16}, \citet{dePaula_et_al_EM18} study a
simultaneous-move complete information model of network formation.
They place three key restrictions on the graph generating process.
First, they assume that agents only wish to maintain a small number
of links. Second, that utility only varies with the addition or deletion
of links within a finite radius. For example an agent may care about
the friends of her friends, but not the friends of the friends of
her friends. Third, there are only a finite number of agent types
and, crucially, agents are indifferent among links of the same type.
There is some nuance to the last restriction since indirect connections
may matter. Consider two Black individuals, each with a Black and
White friend, the restriction is that any third agent is indifferent
between forming a link with either of these two individuals. Similar
restrictions feature prominently in one-to-one transferable utility
matching models \citep[e.g., ][]{Choo_Siow_JPE06,Graham_AinE13,Galichon_Salanie_AERPP17}.
The first and last of these assumptions are ``non-standard'', but
\citet{dePaula_et_al_EM18} show how they make identification analysis
tractable.

They begin by noting that, under their assumptions, any \emph{rooted
network} -- a configuration of links within a fixed distance about
a focal ``root'' node -- will take one of a finite number of configurations.
Identification of preference parameters comes from comparing model
predictions about the frequency of these configurations with their
empirical counterparts.

The operationalization of this intuition into a workable method of
inference is the main contribution of their paper. To describe this
contribution assume that the utility agent $i$ gets from network
configuration $\mathbf{D}=\mathbf{d}$ is, for example,
\[
\nu_{i}\left(\mathbf{d},\mathbf{U};\theta_{0}\right)=\sum_{j}d_{ij}\left[\alpha_{0}'R_{ij}+\beta_{0}\left[\sum_{k}d_{ik}d_{jk}\right]+U_{i}\left(X_{j}\right)\right]-\infty\cdot\mathbf{1}\left(\sum_{j}d_{ij}>L\right).
\]
Here $R_{ij}=r\left(X_{i},X_{j}\right)$ is a vector of known symmetric
functions of $X_{i}$ and $X_{j}$, $L$ is the maximum number links
an agent might desire (known by the econometrician), and $U_{i}\left(x\right)$
is an unobserved utility-shifter with known distribution. This utility
shifter varies with $i$, but only depends on $j$ via the covariate
$X_{j}$. The expression above suggests that associated with each
agent are just $\left|\mathbb{X}\right|$ shocks, one for each type
of agent. \citet{dePaula_et_al_EM18} actually attach $L\times\left|\mathbb{X}\right|$
shocks to each agent, but their main ideas can be conveyed under the
more restrictive set-up.

Let $U_{i}=\left(U_{i}\left(x_{1}\right),\ldots,U_{i}\left(x_{\left|\mathbb{X}\right|}\right)\right)'$
denote the vector of taste shocks associated with agent $i$. Since
agents maintain no more than $L$ links, and preferences are only
affected by network structure within a certain radius, the number
of logically observable rooted network configurations is finite. For
each of these configurations we can ask, for a given value of $\theta$
and draw of $U_{i}$, whether an agent will unilaterally reject it
(e.g., given the configuration's structure she may prefer to unilaterally
dissolve some links). \citet{dePaula_et_al_EM18} call the set of
acceptable rooted networks a \emph{preference class}. Since the distribution
of $U_{i}$ is known, the ex ante probability that any individual
falls into a particular preference class when $\theta$ takes a particular
value can be computed (typically via simulation).

A network can be generated by choosing the frequency with which agents
of (i) a particular type, and (ii) belonging to a particular preference
class, are assigned to specific rooted network configurations. Theorem
1 of \citet{dePaula_et_al_EM18} shows how to construct these frequencies
in a way which satisfies pairwise stability. A parameter value belongs
to the identified set, if there exists a feasible vector of allocation
probabilities such that the predicted frequencies with which the various
rooted network configurations occur match their corresponding empirical
frequencies. Theorem 2 of \citet{dePaula_et_al_EM18} shows how this
question may be answered by solving a particular quadratic program.

The identification analysis assumes there are continuum of agents.
Since their graph is sparse, the object they work with is not a graphon,
but its sparse graph analog, called a \emph{graphing} in the literature
\citep{Lovasz_AMS12}. For inference they assume that the econometrician
observes a random sample of rooted networks, perhaps collected via
snowball sampling.

\subsection{Many agent approximations}

\citet{Menzel_WP16} studies a class of large network formation models
with exchangeable agents. He characterizes the limiting network as
$N\rightarrow\infty$ and investigates how to use this limit to approximate
the finite network in hand. One example of the family of preference
structures accommodated by his set-up is
\begin{equation}
MU_{ij}\left(\mathbf{D},\mathbf{U};\theta_{0}\right)=\underset{\text{marginal benefit}}{\underbrace{\alpha_{0}'R_{ij}+\beta_{0}\left[\min\left(\sum_{k}D_{ik}D_{jk},1\right)\right]+\sigma U_{ij}}}-\underset{\text{marginal cost}}{\underbrace{\left(\ln J+\sigma U_{i0}\right)}}.\label{eq: konrad_preferences}
\end{equation}
The scale and location parameters, $\sigma$ and $J$, vary with $N$
in a particular way in order to obtain useful limits.

\citet{Menzel_WP16} observes that a network is pairwise stable if
and only if $D_{ij}=1$ when $MU_{ij}\geq0$ and $D_{ij}=0$ when
$MU_{ij}<0$ for all $j\in\mathcal{W}_{i}$. The set $\mathcal{W}_{i}$
includes all agents $j$ who are willing to form a link with agent
$i$ or, equivalently, who would not veto such a link:
\[
\mathcal{W}_{i}=\left\{ j\in\left.\mathcal{V}\left(G_{N}\right)\right\backslash \left\{ i\right\} \thinspace:\thinspace MU_{ji}\geq0\right\} .
\]
When $J$ grows with $N$ at the appropriate rate, the number of links
accepted by agent $i$, among those available in $\mathcal{W}_{i}$,
is stochastically bounded (ensuring that the limiting network is sparse).
Furthermore, using extreme value theory, \citet{Menzel_WP16} shows
that the effect of the endogenous choice set, $\mathcal{W}_{i}$,
on the probability of forming a particular link is completely summarized
by a conditional logit type inclusive value. 

Let
\[
Z_{ij}=\left(X_{i},X_{j},T_{ij}\right)'
\]
for $T_{ij}=\min\left(\sum_{k}D_{ik}D_{jk},1\right).$ Further, for
purposes of illustration, let $X_{i}\in\left\{ 0,1\right\} $ be a
binary indicator for, say, gender. In this case $Z_{ij}$ takes values
within the finite set $\mathbb{Z}$. For example woman $i$ may link
with woman $j$, with whom she shares at least one friend in common,
such that $Z_{ij}=\left(1,1,1\right)'$ . \citet{Menzel_WP16} demonstrates
that the probability that agent $i$'s highest utility link is of
type $Z_{ij}=z$, among all those available to her, takes a logit
form. Furthermore, when $N$ is large, the inclusive value in this
probability depends only on agent $i$'s exogenous attribute $X_{i}$
(for preferences structure different than (\ref{eq: konrad_preferences})
the argument may be a bit more complicated).

In setting up the sequence of network formation games appropriately,
and also in characterizing the resulting limit, \citet{Menzel_WP16}
demonstrates considerable ingenuity and technical skill. Stepping
back, the underlying intuition is quite simple. Under exchangeability
of agents, the link formation process for observationally identically
agents should be similar when $N$ is large.

Next consider the link frequency ``distribution''

\[
F_{N}\left(z\right)=\frac{1}{N}\sum_{i=1}^{N}\sum_{j\neq i}\mathbf{1}\left(D_{ij}=1,X_{i}\leq x_{1},X_{j}\leq x_{2},T_{ij}\leq t_{12}\right)
\]
for $z=\left(x_{1},x_{2},t_{12}\right)'$. This is not a proper measure,
it integrates to average degree, not one. Nevertheless these frequencies
have well-defined limits which \citet{Menzel_WP16} is able to relate
to the limiting choice probabilities associated with the infinite
agent network formation game. Note that $F_{N}\left(z\right)$ is
closely related to a network moment, as introduced in (\ref{subsec: Network-moments})
earlier. As in the other papers surveyed in this section, the identification/estimation
approach relates empirical frequencies with model-implied counterparts.
Characterizing these model-implied counterparts (in the limit) is
non-trivial. \citet{Menzel_WP16} shows that depending on the preference
structure considered, as well as researcher assumptions about equilibrium
selection, preference parameters may be point or set identified. For
point identified models \citet{Menzel_WP16} suggests a constrained
maximum likelihood estimator based on the form of the limiting model.

\subsection{Models with (unobserved) sequential meeting processes}

\citet{Miyauchi_JOE16}, \citet{Leung_JOE15}, \citet{dePaula_et_al_EM18}
and \citet{Menzel_WP16} all model network formation as a static game.
Any underlying dynamics governing link formation are left unmodeled.
This is in keeping with the agnosticism regarding equilibrium selection
maintained by these researchers. \citet{Mele_EM17} and \citet{Mele_Zhu_WP17},
in contrast, present models of network formation which make explicit
assumptions about how agents meet, form, dissolve and maintain links. 

In their model pairs of agents meet sequentially. Upon meeting a dyad
decides to either form, maintain, or dissolve a link. Although the
utility attached to any given link may depend on current network structure,
agents are not forward looking. Rather agents myopically add, maintain,
or subtract links in order to raise current utility without anticipating
the effects of their actions on the future decisions of other agents
in the network.

To discuss their results I work with the preference specification
featured in \citet{Mele_Zhu_WP17}. Let $\mathbf{d}_{t}$ be a particular
undirected network configuration in period $t$. The utility agent
$i$ gets from such a configuration is given by
\begin{equation}
\nu_{i}\left(\mathbf{d}_{t},\mathbf{X},\mathbf{U}_{t};\theta_{0}\right)=\sum_{j}d_{ijt}\left[\alpha_{0}'R_{ij}^{*}+\frac{\beta_{0}}{N}\left[\sum_{k}d_{jkt}\right]-U_{ijt}\right]\label{eq: Mele_Zhu_Preferences}
\end{equation}
with $\theta=\left(\alpha',\beta\right)'$. Here $R_{ij}^{*}=r^{*}\left(X_{i},X_{j}\right)$
is a vector of known functions of $X_{i}$ and $X_{j}$. This term
indexes, for example, the utility gains from homophilous sorting.
The second term in (\ref{eq: Mele_Zhu_Preferences}) captures the
benefits associated with indirect connections; that is, the return
agent $i$ receives from linking with $j$ may, in part, depend on
the number of links $j$ already has. If $\beta_{0}>0$ ($\beta_{0}<0$),
then there exist utility gains from linking with more (less) popular
agents. It is also possible to incorporate a transitivity, or mutual
friends, term into (\ref{eq: Mele_Zhu_Preferences}). 

The preference shock $U_{ijt}$ is a Type I extreme value random variable;
independently distributed across dyads and over time. I will return
to the implications of these assumptions for the interpretation and
identification of the model shortly.

Under (\ref{eq: Mele_Zhu_Preferences}) the marginal utility agent
$i$ gets from a link with $j$ is

\begin{equation}
MU_{ij}\left(\mathbf{d}_{t},\mathbf{X},\mathbf{U}_{t};\theta_{0}\right)=\alpha_{0}'R_{ij}^{*}+\frac{\beta_{0}}{N}\left[\sum_{k}d_{jkt}\right]-U_{ijt}.\label{eq: Mele_Zhu_MU}
\end{equation}
\citet{Mele_Zhu_WP17} assume utility is transferable. This implies
that if $i$ and $j$ meet in period $t$ they will form (or maintain)
a link if the net surplus from doing so is positive \citep[cf., ][]{Bloch_Jackson_JET07}:
\begin{align}
MU_{ij}\left(\mathbf{d}_{t},\mathbf{X},\mathbf{U}_{t};\theta_{0}\right)+MU_{ji}\left(\mathbf{d}_{t},\mathbf{X},\mathbf{U}_{t};\theta_{0}\right)\geq0\Longleftrightarrow & R_{ij}'\alpha_{0}+\frac{\beta_{0}}{N}\left[\sum_{k}\left(d_{ikt}+d_{jkt}\right)\right]\label{eq: Mele_Zhu_NetSurplus}\\
 & \geq U_{ijt}+U_{jit},\nonumber 
\end{align}
where $R_{ij}=R_{ij}^{*}+R_{ji}^{*}$. The $R_{ij}$ term is analogous
to the vector of regressors appearing in the dyadic regression model
discussed in Section \ref{sec: dyadic_regression}. Observe, in keeping
with the undirected nature of the network, that (\ref{eq: Mele_Zhu_NetSurplus})
is invariant to permutations of the agents' indices. 

Dyads meet one at a time (i.e., sequentially). In each period the
probability that a particular $ij$ dyad is chosen, say $\rho_{ij}$,
is greater than zero. Let $Z_{t}=ij$ if dyad $\left\{ i,j\right\} $
is chosen to meet in period $t$. This meeting variable equals one
of the $\binom{N}{2}$ possible dyad index pairs each period. Conditional
on $i$ and $j$ meeting, as well as the beginning-of-period-$t$
network structure, the probability that they form (or maintain) a
link is logistic:
\[
\Pr\left(\left.\mathbf{D}_{t+1}=\mathbf{D}_{t}+ij\right|\mathbf{D}_{t},\mathbf{X},Z_{t}=ij;\theta_{0}\right)=\frac{\exp\left(R_{ij}'\alpha_{0}+\frac{\beta_{0}}{N}\left[\sum_{k}\left(D_{ikt}+D_{jkt}\right)\right]\right)}{1+\exp\left(R_{ij}'\alpha_{0}+\frac{\beta_{0}}{N}\left[\sum_{k}\left(D_{ikt}+D_{jkt}\right)\right]\right)}.
\]
This link probability function augments the simple dyadic logistic
regression model introduced earlier with terms, in this case a popularity
effect, which arise due to interdependencies in preferences.

Under these assumptions the sequence of adjacency matrices $\mathbf{D}_{0},\mathbf{D}_{1},\ldots$
is a Markov chain with transition probabilities depending on the exact
specification of the meeting process and the logistic probabilities
specified above. This chain is irreducible and aperiodic. Therefore
the ergodic theorem implies that, in the limit, realized networks
will correspond to draws from a unique stationary distribution. \citet[Theorem 2.1]{Mele_Zhu_WP17}
show that this stationary distribution equals \citep[cf., ][]{Blume_GEB93}
\begin{equation}
\pi_{N}\left(\mathbf{d};\mathbf{X},\theta_{0}\right)=\frac{\exp\left(Q_{N}\left(\mathbf{d};\mathbf{X},\theta_{0}\right)\right)}{\sum_{\mathbf{v}\in\mathbb{D}_{N}}\exp\left(Q_{N}\left(\mathbf{v};\mathbf{X},\theta_{0}\right)\right)}\label{eq: ERGM}
\end{equation}
for $Q_{N}\left(\mathbf{d};\mathbf{X},\theta\right)=\sum_{i=1}^{N}\nu_{i}\left(\mathbf{d},\mathbf{X},\mathbf{0};\theta_{0}\right)$.
See also \citet[Theorem 1]{Mele_EM17}. Equation (\ref{eq: ERGM})
corresponds to what network researchers call an \emph{exponential
random graph model }(ERGM). \citet{Robins_et_al_SN07a} and \citet{Robins_et_al_SN07b}
provide an overview of ERGMs for social network analysis. \citet[Section 4.1]{dePaula_WC17}
provides an interesting overview from the vantage point of an econometrician.
The results of \citet{Mele_EM17} and \citet{Mele_Zhu_WP17} provide
a microeconomic foundation for (certain forms of) ERGMs. This is interesting,
especially in light of peculiarities of the ERGM modeling framework
emphasized by others \citep[e.g., ][]{Shalizi_Rinaldo_AS13}.

It turns out that $Q_{N}\left(\mathbf{d};\mathbf{X},\theta\right)$
is also the potential function, in the sense of \citet{Monderer_Shapley_GEB96},
associated with a particular network formation game. Consider preference
structure (\ref{eq: Mele_Zhu_Preferences}), but with all the pair-specific
preference shocks set identically equal to zero. The set of networks
which (locally) maximize $Q_{N}\left(\mathbf{d};\mathbf{X},\theta\right)$
correspond to the set of Nash equilibrium networks associated with
the simultaneous move network formation game under these zero heterogeneity
preferences. The stationary distribution (\ref{eq: ERGM}) clearly
has modes at these equilibria. \citet{Mele_Zhu_WP17} assume that
the econometrician observes a single draw from this stationary distribution.
This draw, loosely, can be viewed as a random perturbation of an equilibrium
network in the associated ``heterogeneity free'' simultaneous move
static game.

Unfortunately computing the maximum likelihood estimate of $\theta_{0}$
is not straightforward. This is because the denominator in (\ref{eq: ERGM})
involves a summation over all undirected networks of order $N$. It
is impossible to evaluate this summation directly except for trivially
small networks. Furthermore approximate computation of the MLE via,
for example, Markov Chain Monte Carlo (MCMC) methods, is also difficult
\citep[e.g., ][]{Bhamidi_et_al_AAP11}.

\citet{Mele_Zhu_WP17}, building on ideas in \citet{Chatterjee_Diaconis_AS13}
and \citet{Chatterjee_Dembo_AM16}, propose an approximate variational
estimate of $\theta_{0}$ \citep[cf., ][]{Daudin_et_al_SC08,Bickel_et_al_AS13}.
While their approximation does not generally coincide with the MLE,
they show that the difference between the two shrinks to zero as $N$
grows large.

At a high level they proceed as follows. First, consider the conditional
edge independence model introduced in (\ref{sec: Basic-probability-tools}):
\begin{equation}
\Pr\left(\mathbf{D}=\mathbf{d};q\right)=\prod_{i<j}q_{ij}^{d_{ij}}\left(1-q_{ij}\right)^{1-d_{ij}},\label{eq: ced_approx}
\end{equation}
with $q_{ij}$ equaling the probability that $i$ and $j$ link. In
this context the conditional edge independence model is sometimes
called the mean-field approximation. Exploiting ideas in \citet{Wainwright_Jordan_FnTML08}
and \citet{He_Zheng_IEEE13}, they observe that the (log of the) constant
of integration in (\ref{eq: ERGM}) is bounded below by
\begin{align*}
\frac{1}{N^{2}}\ln\left[\sum_{\mathbf{v}\in\mathbb{D}_{N}}\exp\left(Q_{N}\left(\mathbf{v};\theta\right)\right)\right] & \geq\mathbb{E}_{q}\left[Q_{N}\left(\mathbf{D};\theta\right)\right]+\frac{1}{N^{2}}\mathbb{S}\left(q\right)
\end{align*}
with $\mathbb{S}\left(q\right)$ denoting Shannon's Entropy and the
expectation with respect to the approximating mean field model (\ref{eq: ced_approx}).
Next choose the probabilities $q=\left(q_{12},q_{13},\ldots,q_{N-1N}\right)'$
to maximize the above lower bound. This is the variational problem.
Clearly, the optimal approximation will vary with $\theta$, the structural
parameter of interest. The approximation will also not be exact, since
conditional edge independence models represent only a restricted set
of all the possible probability distributions on $\mathbb{D}_{N}$.
The variational estimate of $\theta_{0}$, say $\hat{\theta}_{\mathrm{VE}}$,
is chosen to maximize (\ref{eq: ERGM}) after replacing its denominator
with the lower bound described above. 

\citet{Mele_Zhu_WP17}, using a result in \citet{Chatterjee_Dembo_AM16},
show that the lower bound approximation becomes tight as $N\rightarrow\infty$.
Furthermore the limit of the variational problem corresponds to finding
a graphon. More precisely, they find that as $N\rightarrow\infty$,
\begin{enumerate}
\item the stationary distribution associated with their strategic network
formation model is arbitrarily well approximated by a conditional
edge independence model with some graphon $h\left(u,v\right)$, or
a mixture of such models; 
\item these graphons correspond to local maximizers of a limiting version
of the variational problem; and
\item $\hat{\theta}_{\mathrm{VE}}$ coincides with a local maximizer of
(\ref{eq: ERGM}).
\end{enumerate}
The first finding is to be expected given the Aldous-Hoover Theorem
and associated discussion in Section \ref{sec: Basic-probability-tools}.
The second result is related to work by \citet{Chatterjee_Diaconis_AS13}.
It is of interest here since it provides a connection between a structural
model of strategic network formation and the exchangeable random graph
theory reviewed earlier.

While \citet{Mele_EM17} provides a nice microeconomic potential game
interpretation of ERGMs, and \citet{Mele_Zhu_WP17} make important
progress on methods of estimation, major challenges in the areas of
identification, estimation and inference in this class of models nevertheless
remain.

\citet{Christakis_et_al_NBER10} also model link formation as a sequential
process. Their approach differs from that of \citet{Mele_EM17}. They
assume the initial network is empty and that all $\tbinom{N}{2}$
dyads meet in a specific (unobserved) order. Upon meeting they myopically
decide whether to form a link or not. After all pairs of agents meet
once, further link revisions do not occur. In order to construct a
likelihood \citet{Christakis_et_al_NBER10} assigned a distribution
to the unobserved meeting sequence and integrate it out. For computation
they develop a Bayesian approach based on MCMC methods. One feature
of their set-up is that the model may place positive probability on
network configurations that are not pairwise stable. In contrast the
ergodic distribution associated with Mele's \citeyearpar{Mele_EM17}
model places most of its mass in the neighborhood of equilibrium network
configurations. While this may be viewed as undesirable, from a computational
standpoint the \citet{Christakis_et_al_NBER10} method appears attractive.
In principal their model could be extended to allow each pair of agents
to meet multiple (but still a finite number of) times.

\subsection{Further reading and open questions}

With exception of the paper by \citet{Miyauchi_JOE16}, all of the
papers surveyed above base estimation and inference on a single network.
To get workable LLNs and CLTs each of these authors deal with the
dependence across links induced by strategic interaction in interesting
ways. \citet{Leung_JOE15} introduces private information; in a resulting
Bayes-Nash equilibrium links are conditionally independent given common
information. The reduced form probability of a directed link from
$i$ to $j$ implied by his model is quite similar to the representation
result associated with $X$-exchangeability introduced in the context
of dyadic regression in Section \ref{sec: dyadic_regression}. To
a first approximation this probability depends only upon $X_{i}$
and $X_{j}$ (since the other sources of variation in $\sum_{k\neq i,j}P_{ki}P_{kj}$
should be rather modest when $N$ is large enough). Therefore, relative
to a simple dyadic probit model, the \citet{Leung_JOE15} model adds
an equilibrium constraint.

In \citet{dePaula_et_al_EM18} the key assumption appears to be that
preference heterogeneity is over \emph{types} of links alone, with
no dyad-specific component. As mentioned earlier, similar assumptions
have proved to be very powerful in the literature on matching. Although
\citet{Menzel_WP16} works with a model which generates a sparse graph
in the limit (with dependence across links vanishing), his use of
exchangeability arguments does suggest connections to the Aldous-Hoover
type representation results introduced earlier. \citet{Mele_EM17}
and \citet{Christakis_et_al_NBER10} posit sequential meeting processes
that effectively ``complete'' what would otherwise be an incomplete
simultaneous move $N$-player game. Each of these approaches have
pros and cons; a variety of computational and inference issues remain
unsolved. At the same time the creativity and diversity of them suggests
that forward progress on these types of models is possible. Better
understanding the connections between different modeling assumptions
would be useful.

Another approach, not surveyed here, but nevertheless promising, involves
working with subnetworks. A focus on subnetworks sidesteps some of
the computational challenges that arise when trying to apply methods
from the econometrics of games to network formation problems (where
there are typically many agents). \citet{Sheng_WP14} pioneered this
approach. \citet{Gualdani_JOE19} develops additional (related) results.

\section{The bright and happy future of network econometrics}

This chapter has surveyed a burgeoning literature on the econometrics
of networks. This literature -- combining insights from econometric
research on panel data and games, new tools in applied probability
and statistics, and original thinking -- now provides a basic set
of tools for the analysis of networks. Nevertheless substantial work
remains unfinished. As noted at the start of this chapter, datasets
with natural graph theoretic structure abound in economics, and increasingly
feature in published research. Each dataset exhibits its own peculiarities:
in some links are undirected, in others directed. The network may
be bipartite or even multi-partite\citep[e.g.,][]{Min_WP2019}. The
size and order of available network datasets vary immensely. In some
cases a network may be observed over multiple periods, in others just
once. For many of these settings there exist no extant econometric
modeling strategies, in all of them existing work could be improved
in a number of ways.

A defining feature of the econometric approach to modeling network
formation is its random utility foundation. When preferences are interdependent
-- where the utility two agents attach to a candidate link may vary
with the presence or absence of links elsewhere in the network --
multiple equilibrium network configurations are likely. The analysis
of incomplete models is an important recent accomplishment of econometrics.
The combinatoric complexity of large networks will require new developments
in this area. The set of papers surveyed in Section \ref{sec: Strategic-models}
gives some flavor of the key issues and possible solutions. 

Another defining feature of modern microeconometric research is the
incorporation of unobserved heterogeneity; heterogeneity that agents
observe and act upon, but which is unobserved by the researcher. In
the single agent setting panel data facilitates the identification
and estimation of models with rich heterogeneity structures. Networks
have natural panel-like aspects. In a dense network each agent decides
whether to (attempt to) form a link with all other agents. Multiple
decisions per agent are observed. Leveraging this panel-like structure
has been a key feature of some of the contributions surveyed in Sections
\ref{sec: dyadic_regression}, \ref{sec: policy_analysis} and \ref{sec: heterogeneity}
above.

Understanding the properties of the different methods surveyed above
under sequences of networks which are dense, sparse or somewhere in
between, remains incomplete. Uniformity of testing procedures across
these various cases would be desirable. Some preliminary work on bootstrapping
methods in the networks setting now exists \citep[e.g.,][]{Green_Shalizi_arXiv17,Menzel_arXiv17,Davezies_et_al_arXiv2019},
but this remains relatively unexplored. Semiparametric efficiency
bounds are yet to be characterized, let alone the development of estimators
attaining them. Computational advances will be important for spurring
real world application.

This chapter has focused on network formation. While the question
of how networks form is scientifically interesting, so is that of
what they do? This latter question was a key driver of the peer effects
literature which emerged after \citet{Manski_ReStud93}. Developing
methods for simultaneously modeling the formation and consequences
of social and economic networks remains an important open area \citep{Auerbach_JMP16,Johnson_Moon_INET17}.
Finally, although more and more empirical work with a network dimensions
appears each year, application of the methods outlined above in substantive
empirical work is a high priority. In addition to whatever subject
area insights such applications may produce, they will no doubt spur
further methodological innovations.

\appendix

\section{\label{app: proofs_and_derivations}Appendix}
\begin{lem}
\label{lem: u-statistic_with_estimated_parameter}\textsc{(U-Statistic
with Estimated Parameter) }Let $\left\{ Z_{i}\right\} _{i=1}^{N}$
be a simple random sample drawn from some population $F_{Z}$ and
$\phi\left(Z_{i},Z_{j};\beta,\gamma\right)$ be a function from $\mathbb{Z}\times\mathbb{Z}$
to $\mathbb{R}^{J}$ indexed by $\beta\in\mathbb{B}$ and $\gamma\in\mathbb{C}$
(with $\mathbb{B}$ and $\mathbb{C}$ compact subsets of $\mathbb{R}^{\dim\left(\beta\right)}$
and $\mathbb{R}^{\dim\left(\gamma\right)}$ respectively). Suppose
that $\phi\left(z_{1},z_{2};\beta,\gamma\right)$ is twice continuously
differentiable in $\gamma$ for all $z_{1},z_{2}\in\mathbb{Z}\times\mathbb{Z}$
with
\begin{align}
\mathbb{E}\left[\left\Vert \phi\left(Z_{1},Z_{2};\beta,\gamma\right)\right\Vert _{2}\right] & <\infty\label{eq: 2s_lem_bounded}\\
\mathbb{E}\left[\left\Vert \frac{\partial\phi\left(Z_{1},Z_{2};\beta,\gamma\right)}{\partial\gamma'}\right\Vert _{F}\right] & <\infty\label{eq: 2s_lem_bounded_1st_der}\\
\mathbb{E}\left[\left\Vert \frac{\partial}{\partial\gamma'}\left\{ \frac{\partial\phi\left(Z_{1},Z_{2};\beta,\gamma\right)}{\partial\gamma_{p}}\right\} \right\Vert _{F}\right] & <\infty,\ \ p=1,\ldots,\dim\left(\gamma\right).\label{eq: 2s_lem_bounded_2nd_der}
\end{align}
Then, for $\hat{\gamma}$ a $\sqrt{N}$-consistent estimate of $\gamma_{0}$,
and defining $\bar{\phi}_{N}\left(\beta,\gamma\right)\overset{def}{\equiv}\tbinom{N}{2}^{-1}\sum_{i=1}^{N}\sum_{j=i+1}^{N-1}\phi\left(Z_{i},Z_{j};\beta,\gamma\right)$
and $\Phi\left(\beta,\gamma\right)\overset{def}{\equiv}\mathbb{E}\left[\phi\left(Z_{1},Z_{2};\beta,\gamma\right)\right]$,
we have
\begin{align}
\sqrt{N}\left[\bar{\phi}_{N}\left(\beta,\hat{\gamma}\right)-\Phi\left(\beta,\gamma_{0}\right)\right]= & \frac{2}{\sqrt{N}}\sum_{i=1}^{N}\psi_{0}\left(Z_{1};\beta,\gamma_{0}\right)+\Gamma_{0,\beta\gamma}\left(\beta\right)\sqrt{N}\left(\hat{\gamma}-\gamma_{0}\right)+o_{p}\left(1\right)\label{eq: 2s_lem_u_statistic_expansion}
\end{align}
where $\phi_{1}\left(z;\beta,\gamma\right)=\mathbb{E}\left[\phi\left(z,Z_{1};\beta,\gamma\right)\right]$
and
\begin{align*}
\psi_{0}\left(Z_{1};\beta,\gamma\right)= & \phi_{1}\left(Z_{1};\beta,\gamma\right)-\Phi\left(\beta,\gamma\right)\\
\Gamma_{0,\beta\gamma}\left(\beta\right)= & \mathbb{E}\left[\frac{\partial\phi\left(Z_{1},Z_{2};\beta,\gamma_{0}\right)}{\partial\gamma'}\right].
\end{align*}
\end{lem}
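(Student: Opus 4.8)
The plan is to reduce the whole statement to a second-order Taylor expansion of $\bar\phi_N(\beta,\cdot)$ in its last argument, and then dispatch the three resulting pieces---the U-statistic evaluated at $\gamma_0$, the Jacobian-weighted sampling error in $\hat\gamma$, and a second-order remainder---one at a time. Writing the expansion componentwise, for each coordinate $\phi^{(k)}$ of $\phi$ and each dyad $\{i,j\}$ the mean-value form of Taylor's theorem gives
\[
\phi^{(k)}(Z_i,Z_j;\beta,\hat\gamma) = \phi^{(k)}(Z_i,Z_j;\beta,\gamma_0) + \frac{\partial\phi^{(k)}(Z_i,Z_j;\beta,\gamma_0)}{\partial\gamma'}(\hat\gamma-\gamma_0) + \tfrac12(\hat\gamma-\gamma_0)'\frac{\partial^2\phi^{(k)}(Z_i,Z_j;\beta,\tilde\gamma_{ij}^{(k)})}{\partial\gamma\partial\gamma'}(\hat\gamma-\gamma_0),
\]
with $\tilde\gamma_{ij}^{(k)}$ on the segment joining $\gamma_0$ and $\hat\gamma$. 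Averaging over the $\tbinom{N}{2}$ dyads decomposes $\bar\phi_N(\beta,\hat\gamma)$ into $\bar\phi_N(\beta,\gamma_0)$, a Jacobian U-statistic multiplied by $(\hat\gamma-\gamma_0)$, and a remainder $R_N$.

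For the leading term, $\bar\phi_N(\beta,\gamma_0)$ is a genuine order-two U-statistic in the i.i.d.\ sample $\{Z_i\}$ with integrable kernel by (\ref{eq: 2s_lem_bounded}); its H\'ajek projection (see \citet{Hoeffding_AMS48,Lehmann_LSTBook99,vanderVaart_ASBook00}) yields
\[
\sqrt N\big[\bar\phi_N(\beta,\gamma_0)-\Phi(\beta,\gamma_0)\big] = \frac{2}{\sqrt N}\sum_{i=1}^N\big[\phi_1(Z_i;\beta,\gamma_0)-\Phi(\beta,\gamma_0)\big]+o_p(1) = \frac{2}{\sqrt N}\sum_{i=1}^N\psi_0(Z_i;\beta,\gamma_0)+o_p(1),
\]
since the degenerate second-order projection has variance $O(N^{-1})$. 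This is precisely the first term on the right of (\ref{eq: 2s_lem_u_statistic_expansion}).

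For the Jacobian term, the coefficient $\tbinom{N}{2}^{-1}\sum_{i<j}\partial\phi(Z_i,Z_j;\beta,\gamma_0)/\partial\gamma'$ is \emph{itself} a U-statistic with integrable kernel by (\ref{eq: 2s_lem_bounded_1st_der}); the weak law of large numbers for U-statistics gives convergence in probability to $\Gamma_{0,\beta\gamma}(\beta)=\mathbb{E}[\partial\phi(Z_1,Z_2;\beta,\gamma_0)/\partial\gamma']$. Multiplying by $\sqrt N(\hat\gamma-\gamma_0)=O_p(1)$ and applying Slutsky replaces the random Jacobian by its limit at the cost of an $o_p(1)$ term, delivering the $\Gamma_{0,\beta\gamma}(\beta)\sqrt N(\hat\gamma-\gamma_0)$ piece.

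The hard part will be controlling $R_N$, because the mean value $\tilde\gamma_{ij}^{(k)}$ is data-dependent and varies across dyads, so the pointwise bound (\ref{eq: 2s_lem_bounded_2nd_der}) does not apply directly at the relevant argument. I would fix a compact neighborhood $\mathcal{N}\subseteq\mathbb{C}$ of $\gamma_0$ and establish a \emph{uniform} law of large numbers for the second-derivative U-process over $\mathcal{N}$---either assuming, or deducing from continuity together with an integrable envelope dominating the second-derivative norm on $\mathcal{N}$, a bound in the spirit of \citet{Honore_Powell_JOE94}. On the event $\{\hat\gamma\in\mathcal{N}\}$, whose probability tends to one by $\sqrt N$-consistency, the averaged second-derivative norm is then $O_p(1)$, so each component of $R_N$ is bounded by $\tfrac12\|\hat\gamma-\gamma_0\|^2\cdot O_p(1)=O_p(N^{-1})$. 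Consequently $\sqrt N R_N = \sqrt N\,O_p(N^{-1})=o_p(1)$, and summing the three contributions gives (\ref{eq: 2s_lem_u_statistic_expansion}).
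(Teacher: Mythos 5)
Your proposal is correct, but it routes the argument differently from the paper. The paper performs a \emph{first-order} mean-value expansion, so its Jacobian term $\Gamma_{N,\beta\gamma}\left(\beta,\bar{\gamma}\right)$ is evaluated at a random intermediate point $\bar{\gamma}$; replacing it by $\Gamma_{0,\beta\gamma}\left(\beta\right)$ then requires \emph{uniform} convergence of the Jacobian U-process over $\mathbb{C}$, which the paper obtains via an $L_{2,1}$-norm stochastic Lipschitz bound driven by the second-derivative U-statistics (condition (\ref{eq: 2s_lem_bounded_2nd_der}) plus Serfling's LLN) together with Lemma 2.9 of Newey and McFadden. You instead expand to \emph{second order}, which pins the Jacobian at the nonrandom point $\gamma_{0}$ so that a plain pointwise U-statistic WLLN under (\ref{eq: 2s_lem_bounded_1st_der}) plus Slutsky suffices for that term, and all the data-dependent mean values are quarantined in the quadratic remainder $R_{N}$, where you only need the sup-over-a-neighborhood average of second-derivative norms to be $O_{p}\left(1\right)$ rather than a full uniform LLN to a limit object. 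The treatments of the leading term coincide exactly (H\'{a}jek projection of the order-two U-statistic at $\gamma_{0}$). What each approach buys: yours localizes and weakens the uniformity requirement (stochastic boundedness on a neighborhood versus uniform convergence on all of $\mathbb{C}$) at the price of second-order differentiability being used directly in the expansion; the paper's keeps the expansion first order but then leans on the second derivatives anyway to manufacture the Lipschitz constant for Lemma 2.9. Note that the envelope issue you flag honestly --- that the pointwise moment condition (\ref{eq: 2s_lem_bounded_2nd_der}) does not literally dominate the second derivative at the dyad-varying mean values $\tilde{\gamma}_{ij}^{\left(k\right)}$ --- is present in the paper's proof as well, since its Lipschitz constant likewise involves second derivatives at intermediate points and implicitly needs the same uniform integrable-envelope control; so the two arguments are at a comparable level of rigor on this point. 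One harmless slip: the projection error of the order-two U-statistic has variance $O\left(N^{-2}\right)$, so that $\sqrt{N}$ times it has variance $O\left(N^{-1}\right)$; your phrasing attributes the $O\left(N^{-1}\right)$ to the unscaled projection, but the conclusion $o_{p}\left(1\right)$ after scaling is unaffected.
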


\subsubsection*{Proof of Lemma \ref{lem: u-statistic_with_estimated_parameter}}

A Taylor expansion of $\bar{\phi}_{N}\left(\beta,\hat{\gamma}\right)$
in $\hat{\gamma}$ about $\gamma_{0}$ yields, after some re-arrangement
and centering,
\begin{equation}
\sqrt{N}\left[\bar{\phi}_{N}\left(\beta,\hat{\gamma}\right)-\Phi\left(\beta,\gamma_{0}\right)\right]=\sqrt{N}\left[\bar{\phi}_{N}\left(\beta,\gamma_{0}\right)-\Phi\left(\beta,\gamma_{0}\right)\right]+\Gamma_{N,\beta\gamma}\left(\beta,\bar{\gamma}\right)\sqrt{N}\left(\hat{\gamma}-\gamma_{0}\right),\label{eq: 2s_lem_taylor_expansion}
\end{equation}
with $\bar{\gamma}$ a mean value between $\hat{\gamma}$ and $\gamma_{0}$
which may vary across the rows of the Hessian $\Gamma_{N,\beta\gamma}\left(\beta,\gamma\right)\overset{def}{\equiv}\frac{\partial\bar{\phi}_{N}\left(\beta,\gamma\right)}{\partial\gamma'}.$
Next recall the definition of the $L_{2,1}$ norm:

\begin{equation}
\left\Vert \mathbf{A}\right\Vert _{2,1}=\sum_{j=1}^{n}\left[\sum_{i=1}^{m}\left|a_{ij}\right|^{2}\right]^{1/2}.\label{eq: L_21_norm}
\end{equation}
The mean value theorem, as well as compatibility of the Frobenius
matrix norm with the Euclidean vector norm, gives for any $\gamma$
and $\gamma^{*}$ both in $\mathbb{C}$,
\begin{align}
\left\Vert \frac{\partial\bar{\phi}_{N}\left(\beta,\gamma\right)}{\partial\gamma'}-\frac{\partial\bar{\phi}_{N}\left(\beta,\gamma^{*}\right)}{\partial\gamma'}\right\Vert _{2,1} & \leq\sum_{p=1}^{\dim\left(\gamma\right)}\left\Vert \frac{\partial}{\partial\gamma'}\left\{ \frac{\partial\bar{\phi}_{N}\left(\beta,\gamma\right)}{\partial\gamma_{p}}\right\} \right\Vert _{F}\left\Vert \gamma-\gamma_{*}\right\Vert _{2}.\label{eq: 2s_lem_mean_value_inequality}
\end{align}
Observe that $\frac{\partial}{\partial\gamma'}\left\{ \frac{\partial\bar{\phi}_{N}\left(\beta,\gamma\right)}{\partial\gamma_{p}}\right\} $
is a matrix of U-statistics with kernels whose first moments are finite
(by condition \pageref{eq: 2s_lem_bounded_2nd_der} above). By \citet[Theorem 5.4A]{Serfling_ATMS80}
these U-statistics converge in probability and hence, from (\ref{eq: 2s_lem_mean_value_inequality})
\[
\left\Vert \frac{\partial\bar{\phi}_{N}\left(\beta,\gamma\right)}{\partial\gamma'}-\frac{\partial\bar{\phi}_{N}\left(\beta,\gamma^{*}\right)}{\partial\gamma'}\right\Vert _{2,1}\leq O_{p}\left(1\right)\cdot\left\Vert \gamma-\gamma_{*}\right\Vert _{2}.
\]
This condition, as well compactness of $\mathbb{C}$, continuity of
$\frac{\partial\bar{\phi}_{N}\left(\beta,\gamma\right)}{\partial\gamma}$
in $\gamma$, and condition (\ref{eq: 2s_lem_bounded_1st_der}), allow
for an application of Lemma 2.9 in \citet{Newey_McFadden_HBE94} such
that $\underset{\gamma\in\mathbb{C}}{\sup}\left\Vert \frac{\partial\bar{\phi}_{N}\left(\beta,\gamma\right)}{\partial\gamma'}-\Gamma_{\beta\gamma}\left(\beta,\gamma\right)\right\Vert _{F}\overset{p}{\rightarrow}0$
with $\Gamma_{\beta\gamma}\left(\beta,\gamma\right)=\mathbb{E}\left[\frac{\partial\phi\left(Z_{1},Z_{2};\beta,\gamma\right)}{\partial\gamma'}\right].$
This, along with consistency of $\hat{\gamma}$ for $\gamma_{0}$,
is enough to ensure that $\frac{\partial\bar{\phi}_{N}\left(\beta,\bar{\gamma}\right)}{\partial\gamma'}\overset{p}{\rightarrow}\Gamma_{0,\beta\gamma}\left(\beta\right)$.
Equation (\ref{eq: 2s_lem_u_statistic_expansion}) then follows by
observing that $\bar{\phi}_{N}\left(\beta,\gamma_{0}\right)-\Phi\left(\beta,\gamma_{0}\right)$
is a vector of mean zero U-Statistics with Hájek projections equal
to the corresponding components of the first term to the right of
the equality in (\ref{eq: 2s_lem_u_statistic_expansion}) (see, for
example, Theorem 5.3.3. of \citet{Serfling_ATMS80} and invoke condition
(\ref{eq: 2s_lem_bounded}) above). See \citet[Lemma S1]{Mao_BM18}
for a related Lemma.

\subsubsection*{Order of variances and covariances for $p^{th}$ order induced subgraph
frequencies}

Here I present the order of the covariance between empirical subgraph
frequencies, where the subgraph is of arbitrary order. For general
$p^{th}$-order graphlets $R$ and $S$ we have that
\begin{align}
\mathbb{C}\left(P_{N}\left(R\right),P_{N}\left(S\right)\right)= & \binom{N}{p}^{-2}\sum_{q=1}^{p}\binom{N}{p}\binom{p}{q}\binom{N-p}{p-q}\Sigma_{q}\left(R,S\right)\nonumber \\
= & \binom{N}{p}^{-2}\sum_{q=1}^{p}\binom{N}{p}\binom{p}{q}\binom{N-p}{p-q}\Xi\left(\mathcal{W}_{q,R,S}\right)\nonumber \\
 & -\left[1-\frac{\left(N-p\right)!^{2}}{N!\left(N-2p\right)!}\right]P\left(R\right)P\left(S\right).\label{eq: covariance_P(R)_P(S)}
\end{align}
Normalizing by $\rho_{N}$ raised to the number of edges in $R$ and
$S$, respectively $\rho_{N}^{e\left(R\right)}$ and $\rho_{N}^{e\left(S\right)}$,
yields
\begin{align}
\mathbb{C}\left(\tilde{P}_{N}\left(R\right),\tilde{P}_{N}\left(S\right)\right)= & \left\{ \underset{O\left(N^{-q}\rho_{N}^{-e\left(R\right)}\rho_{N}^{-e\left(S\right)}\right)O\left(\Xi\left(\mathcal{W}_{q,R,S}\right)\right)}{\underbrace{\binom{N}{p}^{-2}\sum_{q=1}^{p-1}\binom{N}{p}\binom{p}{q}\binom{N-p}{p-q}\left[\frac{\Xi\left(\mathcal{W}_{q,R,S}\right)}{\rho_{N}^{e\left(R\right)}\rho_{N}^{e\left(S\right)}}\right]}}\right.\nonumber \\
 & \left.-\left[1-\frac{\left(N-p\right)!^{2}}{N!\left(N-2p\right)!}\right]\tilde{P}\left(R\right)\tilde{P}\left(S\right)\right\} .\label{eq: covariance_P(R)_P(S)_normalized}
\end{align}

There are $2p-q$ vertices in each element of $\mathcal{W}_{q,R,S}$.

\subsubsection*{\uline{Case 1 (\mbox{$q=1$}):}}

If $q=1$, then $e\left(W\right)=e\left(R\right)+e\left(S\right)$
for all $W\in\mathcal{W}_{q,R,S}.$ This gives
\begin{align*}
O\left(N^{-q}\rho_{N}^{-e\left(R\right)}\rho_{N}^{-e\left(S\right)}\right)O\left(\Xi\left(\mathcal{W}_{1,R,S}\right)\right) & =O\left(N^{-1}\rho_{N}^{-e\left(R\right)}\rho_{N}^{-e\left(S\right)}\right)O\left(\rho_{N}^{e\left(R\right)}\rho_{N}^{e\left(S\right)}\right)\\
 & =O\left(N^{-1}\right).
\end{align*}

\subsubsection*{\uline{Case (\mbox{$q=p$}):}}

If $q=p$, then $\Xi\left(\mathcal{W}_{q,R,S}\right)=0$ unless $R=S$.
In that case, the ``variance case'', we have that $e\left(W\right)=p$
since $W=R=S.$ This gives
\begin{align*}
O\left(N^{-q}\rho_{N}^{-2e\left(R\right)}\right)O\left(\Xi\left(\mathcal{W}_{p,R}\right)\right) & =O\left(N^{-p}\rho_{N}^{-2e\left(R\right)}\right)O\left(\rho_{N}^{e\left(R\right)}\right)\\
 & =O\left(N^{-p}\rho_{N}^{-e\left(R\right)}\right).
\end{align*}

If $R$ is a p-cycle, then $p=e\left(R\right)$, yielding the simplification
$O\left(N^{-p}\rho_{N}^{-e\left(R\right)}\right)=O\left(\lambda_{N}^{-p}\right)$.

If $R$ is a tree, then $e\left(R\right)=p-1$, yielding the simplification
$O\left(N^{-p}\rho_{N}^{-e\left(R\right)}\right)=O\left(N^{-1}\lambda_{N}^{-\left(p-1\right)}\right)$.

\subsubsection*{\uline{Case (\mbox{$1<q<p$}):} }

For $q=2,\ldots,p-1$ we have that $e\left(W\right)=e\left(R\right)+e\left(S\right)-\left(q-1\right)$
if $R$ and $S$ are both p-cycles so that
\begin{align*}
O\left(N^{-q}\rho_{N}^{-e\left(R\right)}\rho_{N}^{-e\left(S\right)}\right)O\left(Q\left(\mathcal{W}_{q,R,S}\right)\right) & =O\left(N^{-q}\rho_{N}^{-e\left(R\right)}\rho_{N}^{-e\left(S\right)}\right)O\left(\rho_{N}^{e\left(R\right)+e\left(S\right)-\left(q-1\right)}\right)\\
 & =O\left(N^{-q}\rho_{N}^{-\left(q-1\right)}\right)\\
 & =O\left(N^{-1}\lambda_{N}^{-\left(q-1\right)}\right).
\end{align*}
Whereas we have that $e\left(W\right)\geq e\left(R\right)+e\left(S\right)-\left(q-1\right)$
if $R$ and $S$ are both trees, or one is a tree and the other a
p-cycle, so that
\begin{align*}
O\left(N^{-q}\rho_{N}^{-e\left(R\right)}\rho_{N}^{-e\left(S\right)}\right)O\left(\Xi\left(\mathcal{W}_{q,R,S}\right)\right) & \leq O\left(N^{-q}\rho_{N}^{-e\left(R\right)}\rho_{N}^{-e\left(S\right)}\right)O\left(\rho_{N}^{e\left(R\right)+e\left(S\right)-\left(q-1\right)}\right)\\
 & =O\left(N^{-1}\lambda_{N}^{-\left(q-1\right)}\right).
\end{align*}

\subsubsection*{Proof of Theorem \ref{thm: degree_sequence_moments}}

Without loss of generality set $i=1$. By the definition of degree
we have that 
\[
\mathbb{E}\left[D_{1+}^{m}\right]=\mathbb{E}\left[\left(\sum_{j=2}^{N}D_{1j}\right)^{m}\right],
\]
the multinomial theorem allows us to write the term inside the expectation
above as
\begin{align}
D_{1+}^{m}=\left(\sum_{j=2}^{N}D_{1j}\right)^{m} & =\sum_{q_{2}+\cdots+q_{N}=m}\binom{m}{q_{2},q_{3},\cdots,q_{N}}\prod_{j=2}^{N}D_{1j}^{q_{j}}\label{eq: degree_moments_multi_thm}
\end{align}
where $\binom{m}{q_{2},q_{3},\cdots,q_{N}}=\frac{m!}{q_{2}!q_{3}!\cdots q_{N}!}.$
Since $D_{1j}$ is binary $D_{1j}^{q_{j}}=D_{1j}$ for all $q_{j}=1,2,\ldots,m$
and zero when $q_{j}=0$. This implies that $\prod_{j=2}^{N}D_{1j}^{q_{j}}=D_{1j_{1}}\times\cdots\times D_{1j_{k}}$
for $D_{1j_{1}}$, $D_{1j_{2}},\ldots,D_{1j_{k}}$ the set of $1\leq k\leq m$
link indicators with $q_{j}\geq1$. Consider agents $j_{1},j_{2},\ldots,j_{k}$,
with, say, $q_{j_{1}}=p_{1}$, $q_{j_{2}}=p_{2},\ldots,q_{j_{k}}=p_{k}$
such that $\mathbf{p}\in\mathcal{P}_{k,m}$, it follows that
\begin{equation}
\prod_{j=2}^{N}D_{1j}^{q_{j}}=D_{1j_{1}}^{p_{1}}\times\cdots\times D_{1j_{k}}^{p_{k}}.\label{eq: degree_moments_ex_term}
\end{equation}
By the multinomial theorem the coefficient on (\ref{eq: degree_moments_ex_term})
equals $\frac{m!}{p_{1}!\times\cdots\times p_{k}!}$, but since 
\[
D_{1j_{1}}^{p_{1}}\times\cdots\times D_{1j_{k}}^{p_{k}}=D_{1j_{1}}^{p_{1}^{*}}\times\cdots\times D_{1j_{k}}^{p_{k}^{*}}=D_{1j_{1}}\times\cdots\times D_{1j_{k}}
\]
for any $\mathbf{p},\mathbf{p}^{*}\in\mathcal{P}_{k,m}$, the coefficient
on $D_{1j_{1}}\times\cdots\times D_{1j_{k}}$ after combining identical
terms in (\ref{eq: degree_moments_multi_thm}) equals $\sum_{\mathbf{p}\in\mathcal{P}_{k,m}}\frac{m!}{p_{1}!\times\cdots\times p_{k}!}.$
Putting these pieces together yields
\[
\mathbb{E}\left[D_{i+}^{m}\right]=\sum_{k=1}^{m}\left(\sum_{\mathbf{p}\in\mathcal{P}_{k,m}}\frac{m!}{p_{1}!\times\cdots\times p_{k}!}\right)\mathbb{E}\left[\sum_{j_{1}<\cdots<j_{k}}D_{ij_{1}}\times\cdots\times D_{ij_{k}}\right]
\]
The expectations of the summands in $\sum_{j_{1}<\cdots<j_{k}}D_{ij_{1}}\times\cdots\times D_{ij_{k}}$
are all identical with cardinality $\tbinom{N-1}{k}$. The assertion
follows.

\bibliographystyle{apalike2}
\bibliography{../../Networks_Book/Finished/Reference_BibTex/Networks_References}

\end{document}